\documentclass[journal,onecolumn]{IEEEtran}

\usepackage[utf8]{inputenc}
\usepackage[margin=1in]{geometry}
\usepackage{fancyhdr}

\usepackage{mathpazo}
\usepackage{amssymb}
\usepackage{amsmath}
\usepackage{mathtools}
\usepackage{relsize}
\usepackage[hidelinks]{hyperref}
\usepackage{amsthm}
\usepackage{physics}
\usepackage{enumitem}
\usepackage{mathrsfs}
\usepackage{stmaryrd}
\SetSymbolFont{stmry}{bold}{U}{stmry}{m}{n}

\usepackage{matlab-prettifier}

\usepackage{float} 
\allowdisplaybreaks 
\usepackage{hyperref} 
\usepackage{multirow} 
\usepackage{mwe}
\usepackage{listing}
\usepackage[en-US]{datetime2}
\usepackage{placeins}
\usepackage{capt-of}
\usepackage{tikz}
\usepackage[normalem]{ulem}


\usepackage{comment}
\usepackage[sort&compress,numbers]{natbib}
\bibliographystyle{unsrtnat}

\newcommand{\ve}{\varepsilon}
\newcommand{\mrm}{\mathrm}
\newcommand{\mbb}{\mathbb}
\newcommand{\mbf}{\mathbf}

\newcommand{\wt}{\widetilde}
\newcommand{\ol}{\overline}

\newcommand{\supp}{\mrm{supp}}

\DeclarePairedDelimiterX{\inner}[2]{\langle}{\rangle}{#1,\,#2}

\renewcommand{\abs}[1]{\lvert #1 \rvert}

\newcommand{\cD}{\mathcal{D}}
\newcommand{\cE}{\mathcal{E}}
\newcommand{\cF}{\mathcal{F}}

\newcommand{\cP}{\mathcal{P}}

\newcommand{\cS}{\mathcal{S}}
\newcommand{\cT}{\mathcal{T}}

\newcommand{\cW}{\mathcal{W}}
\newcommand{\cX}{\mathcal{X}}
\newcommand{\cY}{\mathcal{Y}}

\newcommand{\Lin}{\mathrm{L}}

\newcommand{\Pos}{\mathrm{Pos}}
\newcommand{\Herm}{\mathrm{Herm}}
\newcommand{\Channel}{\mathrm{C}}

\newcommand{\Density}{\mathrm{D}}

\newcommand{\Pd}{\mathrm{Pd}}

\newenvironment{mylist}[1]{\begin{list}{}{
	\setlength{\leftmargin}{#1}
	\setlength{\rightmargin}{0mm}
	\setlength{\labelsep}{2mm}
	\setlength{\labelwidth}{8mm}
	\setlength{\itemsep}{0mm}}}
	{\end{list}}

\newcounter{problemcounter}

\theoremstyle{definition}
\newtheorem{theorem}{Theorem}
\newtheorem{lemma}[theorem]{Lemma}

\newtheorem{definition}[theorem]{Definition}

\newtheorem{corollary}[theorem]{Corollary}

\newtheorem{proposition}[theorem]{Proposition}
\newtheorem{example}[theorem]{Example}
\newtheorem{fact}[theorem]{Fact}


\begin{document}

\title{Divergence Inequalities with Applications in Ergodic Theory}
\author{Ian George, Alice Zheng, and Akshay Bansal
\thanks{Ian George is with the Department of Electrical and Computer Engineering, National University of Singapore, Singapore.}
\thanks{Alice Zheng and Akshay Bansal are with the Department of Computer Science, Virginia Polytechnic Institute and State University, Blacksburg, VA, USA 24061.}
}

\maketitle

\begin{abstract}
    The data processing inequality is central to information theory and motivates the study of monotonic divergences. However, it is not clear operationally we need to consider all such divergences. We establish a simple method for Pinsker inequalities as well as general bounds in terms of $\chi^{2}$-divergences for twice-differentiable $f$-divergences. 
    These tools imply new relations for input-dependent contraction coefficients. We  use these relations to show for many $f$-divergences the rate of contraction of a time homogeneous Markov chain is characterized by the input-dependent contraction coefficient of the $\chi^{2}$-divergence.
    This is efficient to compute and the fastest it could converge for a class of divergences. We show similar ideas hold for mixing times. Moreover, we extend these results to the Petz $f$-divergences in quantum information theory, albeit without any guarantee of efficient computation. These tools may have applications in other settings where iterative data processing is relevant.
\end{abstract}

\section{Introduction}
A central idea in information theory is that of data processing. Conceptually, it captures the idea that if you have any object that encodes information and you alter it in any manner, you could only have lost information. This idea is so central because it is so general. To see that it is so general, consider the following two standard examples. First, if one has a file encoded in bits and they perform a function on the bit string, they can only have lost information contained in the original bit string. Second, if one has a physical system initialized in some known configuration and it then undergoes any allowed dynamics, one can only know less about the configuration of the physical system after the dynamics. The former of these results in the nuances of compression and communication \cite{Shannon-1948a} and the latter of these has been argued to be what leads to the second law of thermodynamics \cite{Merhav-2011a}. While these settings seem quite distinct, at the level of data processing, they are equivalent: there is an initial way things are that one has some (possibly partial) description of, and, without introducing new information from outside, any way things may change can only make that description less accurate. It is this generality that makes data processing so worthwhile to study.

To formally study data processing, one uses monotonic divergences. By a divergence $\mbb{D}$, we mean any measure of difference between a set of objects $\cS$, i.e. $\mbb{D}: \cS \times \cS \to \mbb{R}_{\geq 0}$. Such sets could be bit strings or the configuration of physical systems given the above examples. By monotonic, we mean that if $\cS$ represents a set of objects that encode information and $\Channel$ is a set of possible ways to process the information encoded in $\cS$, then $\mbb{D}(\cE(\rho) \Vert \cE(\sigma)) \leq \mbb{D}(\rho \Vert \sigma)$ for all $\rho,\sigma \in \Density$ and $\cE \in \Channel$. In this way, monotonic divergences are exactly the divergences that formally capture data processing--- indeed often being monotonic is referred to as `satisfying the data processing inequality' and we will use these terms interchangeably in this work. The standard example of these sets is in classical (resp.~quantum) information theory where $\cS$ is the set of distributions (resp.~quantum states) and $\Channel$ is the set of classical (resp.~quantum) channels, which will be the sets we primarily consider in this work.

Given the importance of data processing and monotonic divergences, people have studied the mathematical structure gives rise to monotonic divergences, e.g. \cite{Csiszar-1967a, Lindblad-1974a, Petz-1986a, Merhav-2011a, Beigi-2013a,Tomamichel-2016a,Gour-2021a} and references therein, as well as find various applications for them (see \cite{cover2006, Wilde-Book, Khatri-2020a, polyanskiy-2023a} for many such applications). In doing so, it has become of interest to relate the different divergences to each other in the sense of bounding one by another (See in particular \cite{Sason-2016-f-div-ineqs} and references therein). In this endeavour to better understand monotonic divergences, one of the most famous class of monotonic divergences is that of $f$-divergences introduced in \cite{Csiszar-1967a}, which inherit their monotonicity from convexity of the choice of $f$--- thereby making them a very large class of divergences. Much is known about how various $f$-divergences relate to each other under various settings and assumptions, e.g. \cite{Barnett-2002a,Dragomir-2016a, Sason-2016-f-div-ineqs,Sason-2018a, Dragomir-2019a} although this is by no means an exhaustive list of such results. Much of this research is interested in establishing results that generalize results known for the Kullback-Leibler (KL) divergence, $D(\mbf{p}\Vert \mbf{q}) = \sum_{x} p(x)\log(p(x)/q(x))$, which is obtained from the $f$-divergences via setting $f(t) = t\log(t)$. In particular, Pinsker-type inequalities, which lower bound $f$-divergences by the total variational distance, $\text{TV}(\mbf{p},\mbf{q}) \coloneq \frac{1}{2}\Vert \mbf{p} - \mbf{q}\Vert_{1}$ and reverse Pinsker inequalities that upper bound $f$-divergences by the total variational distance \cite{Pinsker-1960a, Vajda-1970a, Reid-2009, Gilardoni-2010a, Sason-2016-f-div-ineqs}. This research has been further generalized to relating arbitrary $f$-divergences to each other as the total variational distance is in fact also a specific $f$-divergence. From a practical standpoint, an application of studying monotonic divergences, including the $f$-divergences, has been contraction coefficients and `strong data processing inequalities,' which characterize when the monotonicity of a divergence is always strict for a given classical or quantum channel and possibly a specific initial distribution or quantum state, the latter case being called `input-dependent contraction coefficients.' These contraction coefficients have found uses in studying communication over networks, fault tolerance, and differential privacy \cite{Ahlswede-1976a,Cohen-1993a,Choi-1994a,Raginsky-2016a,Polyanskiy-2017a,Makur-2020a,Ordentlich-2021a,Hirche-2022a,Fawzi-2022a,Hirche-2023a,Dobrushin-1970a,Evans-2000a,Duchi-2013a,Polyanskiy-2015a,Hirche-2023-dp,Nuradha-2024a}.

In this work, we are interested in two basic questions within the framework of divergences and studying monotonicity:
\begin{enumerate}
    \item Are there simple general methods to relate divergences?
    \item Do we really need to consider all the $f$-divergences for information processing tasks limited by data processing?
\end{enumerate}
The former question is perhaps largely mathematically motivated, but it turns out to be quite useful. To answer this, we provide a simple method for establishing Pinsker inequalities for twice-differentiable $f$-divergences (Section~\ref{sec:f-div-Pinsker}) as well as an input-dependent method of relating arbitrary twice-differentiable $f$-divergences to the $\chi^{2}$-divergence (Section~\ref{sec:divergence-inequalities}). The latter of these implies reverse Pinsker inequalities and relations between arbitrary twice-differentiable $f$-divergences. These results will also serve as the technical tools for addressing the latter question.

The latter of our basic questions is more practically motivated. As mentioned earlier, contraction coefficients of $f$-divergences have been considered extensively and have applications in analyzing different information processing tasks. However, considering all $f$-divergences has clear disadvantages. First, when a problem has bounds for arbitrary $f$-divergences, it is not reasonable to compute the bounds for every $f$-divergence, so the applicability of such results is unclear. Similarly, it is not known how to efficiently compute contraction coefficients for many $f$-divergences. Furthermore, it is not even obvious that there is much information to be gained from considering a large set of $f$-divergences. Indeed, if we could instead relate many $f$-divergences to a single $f$-divergence whose properties are easy to compute with a manageable penalty, we could focus on that specific $f$-divergence. It is this proposed strategy that this work addresses by answering these two basic questions.

\paragraph{Summary of Results}
\begin{enumerate}
    \item In Section~\ref{sec:f-div-Pinsker} we establish a general method for obtaining Pinsker inequalities for twice-differentiable $f$-divergences using forms of Taylor's theorem (Theorems \ref{thm:pinsker-uni} and \ref{thm:pinsker-multi}). Moreover, the result can be tight, such as for the KL divergence. We remark that our method does not rely on unit normalization, which may be relevant in optimization theory.
    \item In Section~\ref{sec:divergence-inequalities} we establish a simple method for relating all twice-differentiable $f$-divergences to the $\chi^{2}$-divergence via Taylor's theorem (Theorem~\ref{thm:f-div-chi-squared-bounds}). This implies relations between all twice-differentiable $f$-divergences as well as reverse Pinsker inequalities. Moreover, it shows that, up to some penalty, working with the $\chi^{2}$-divergence could be sufficient. We also show how these methods work for Bregman divergences, which, while not generically monotonic, have uses in optimization theory and statistics.
    \item In Section~\ref{sec:classical-Input-Dependent-SDPI}, we apply our methods to studying input-dependent contraction coefficients to show that often times working with the $\chi^{2}$-divergence contraction coefficient is sufficient. In particular, we show that when considering the asymptotic behaviour of iterative applications of the same channel (equivalently a time-homogeneous Markov chain), most twice-differentiable $f$-divergence input-dependent contraction coefficients scale at a rate given by the $\chi^{2}$ input-dependent contraction coefficient. This is particularly convenient as this rate is known to be computable and is the fastest rate possible using contraction coefficients for this class of $f$-divergences \cite{Polyanskiy-2017a}. We also show how the related previous work \cite{Raginsky-2016a} implies bounds on computable mixing times for Markov chains both in terms of total variational distance, but also under more stringent conditions of measuring distance with $f$-divergences. In total, this strengthens our theory of ergodic Markov chains in terms of information-theoretic quantities. It is likely there are applications beyond ergodic systems for the tools we establish.
    \item In Section~\ref{sec:quantum-extensions}, we show how many of our results extend to quantum information theory, which we isolate to this section for ease of reading. First, we note that our $f$-divergence Pinsker inequalities trivially lift to arbitrary quantum $f$-divergences (Corollary~\ref{cor:quantum-f-divergence-pinsker}). We then extend our relations between $f$-divergences and the $\chi^{2}$-divergence and its applications to ergodicity to quantum systems in terms of Petz $f$-divergences \cite{Petz-1985a,Petz-1986a}. Unlike in the classical case, the Petz $\chi^{2}$ input-dependent contraction coefficient is not known to be efficient to compute, but our results nonetheless establish this as the key monotonic Petz $f$-divergence for studying time-homogeneous quantum Markov chains.
\end{enumerate}

\subsection{Relation to Previous Work}
A great deal of work has been done on $f$-divergences in both the classical and quantum settings. Here we summarize how our methods and results relate to previous work specifically.  
\begin{enumerate}
    \item \textit{Pinsker Inequalities:} Our basic approach will be to Taylor expand the $f$-divergence to second-order and then bound the expansion. We give a brief history to highlight this idea is not unique and compare it to previous work. The original Pinsker inequality was derived in \cite{Pinsker-1960a}, which was improved to its modern form, $D(\mbf{p}\Vert \mbf{q}) \geq \frac{1}{2}\Vert \mbf{p} - \mbf{q}\Vert^{2}$, independently in the works \cite{Csiszar-1967a,Kullback-1967a,Kemperman-1969}. Higher-order refinements for the KL divergence were obtained by Vajda and Fedotov et al. \cite{Vajda-1970a, Fedotov-2003a}. In \cite{Ordentlich-2005a}, the authors refined Pinsker's inequality in an input-dependent manner using univariate Taylor's theorem. This is in effect the methodology we use in deriving Theorem \ref{thm:pinsker-uni} except that they preserve the input-dependent aspect and only consider KL divergence. The first Pinsker inequalities for other $f$-divergences seem to have been considered by Gilardoni \cite{Gilardoni-2006a,Gilardoni-2010-corrigendum,Gilardoni-2010a}. The first work \cite{Gilardoni-2006a} finds a new integral representation of $f$-divergences such that it is reported in principle it finds the tightest Pinsker inequality for any $f$-divergence, but actually explicitly determining the constant is seemingly unmanageable. The second of his works \cite{Gilardoni-2010a} `simplifies' this to a set of infinite conditions on the function which, when satisfied, guarantee the tightest Pinsker inequality is given by $D_{f}(\mbf{p}\Vert \mbf{q}) \geq f''(1)/2 $. In effect, the idea is to Taylor expand the function $f$ to third-order and then select for the necessary conditions on the function to guarantee $f''(1)/2$ must be the tightest Pinsker inequality coefficient. In this respect, our method is very similar, although it is more flexible by not selecting for $f$ such that the optimal condition is $f''(1)/2$. On the other hand, Reid and Williamson established a method for obtaining Pinsker inequalities for $f$-divergences by approximating the curve of the variational distance, which often times can lead to tight Pinsker inequalities \cite{Reid-2009}. This is a very different method from ours, but also is derived from an integral representation--- though in terms of `statistical informations'  \cite{Liese-2006a}. We believe the method of \cite{Reid-2009} is generally tight as they show that, if one picks appropriate samples for constructing their approximating curve, they derive tight results. However, it does seem their method is restricted to considering only probability distributions unlike our methodology. 
    
    \item \textit{Divergence Inequalities:} Similar to our approach to Pinsker inequalities, our basic methodology is to use Taylor expansion to obtain a simple integral representation for our divergences and then bound our integral representations. While the explicit results we obtain are to the best of our knowledge new, the approach itself is not new and there exist similar results, which we wish to highlight. Obtaining $f$-divergence inequalities by first deriving integral representations is a common methodology. This has been done in terms of statistical information \cite{Gutenbrunner-1992a,Osterreicher-1993a,Liese-2006a}, hockey stick divergence \cite{Sason-2016-f-div-ineqs,Hirche-2023a}, and relative information spectrum \cite{Sason-2018a}. There has also been a long history of obtaining $f$-divergence inequalities via Taylor's expansions specifically by making various assumptions on the relative likelihood between the two measures and the convexity properties of the $f$-divergence, see e.g.~\cite{Barnett-2002a,Csiszar-2004a,Nielsen-2013a,Dragomir-2019a,polyanskiy-2023a}. Our results seem to primarily differ from the inequalities in these works due to our method of bounding the integral representation differing. Perhaps the closest to the method for establishing our Theorem~\ref{thm:f-div-chi-squared-bounds} is the derivation of \cite[Theorem 1]{Pardo-2003a}. The similarity to our methods is perhaps unsurprising as \cite{Pardo-2003a} studies the asymptotics of hypothesis testing as the distributions of the null and alternative hypothesis approach each other, which is what happens to distributions under iterative application of a contractive channel with a unique fixed point. We also note that our work and \cite{Pardo-2003a} are not alone in being interested in the importance of the $\chi^{2}$-divergence, see also \cite{Temme-2010a,Nielsen-2013a,Sason-2016-f-div-ineqs,Sason-2018a,Nishiyama-2020a,Hirche-2023a} and references therein. Lastly, we remark that within the theory of divergence inequalities is also that of reverse Pinsker inequalities, originally introduced for the KL divergence in \cite{Csiszar-2006a}. There exist reverse Pinsker inequalities for $f$-divergences established in \cite{Sason-2016-f-div-ineqs,Binette-2019a}, although these use different assumptions than the ones we derive, and in the quantum case in \cite{Hirche-2023a}.
    \item \textit{Contraction Coefficient Relations and Ergodic Theory:} While not a new topic \cite{Ahlswede-1976a,Cohen-1993a,Choi-1994a,Dobrushin-1970a,Evans-2000a} The study of contraction coefficients has received increased interest as of late given its application in network information theory, fault tolerance, and differential privacy \cite{Duchi-2013a,Polyanskiy-2015a,Raginsky-2016a,Polyanskiy-2017a,Makur-2020a,Ordentlich-2021a,Hirche-2022a,Fawzi-2022a, Hirche-2023a, Hirche-2023-dp, Nuradha-2024b,Cheng-2024a}. Our Theorem~\ref{thm:classical-contraction-rate} generalizes \cite[Proposition 7]{Makur-2020a} from the KL divergence to a large class of $f$-divergences, which follows from our general methods for $f$-divergence inequalities and Pinsker inequalities. With regards to our results on mixing times (Corollary~\ref{cor:Markov-chain-mixing-times} and Proposition~\ref{prop:f-div-mixing-time}), related results have been found previously in the ergodic literature \cite{Diaconis-1991a,Fill-1991a} and a related result for $f$-divergences was found in \cite{Raginsky-2016a}. The major difference of our result to that of \cite{Raginsky-2016a} is that ours is in terms of the $\chi^{2}$ input-dependent contraction coefficient and thus is efficient to compute.
    
    \item \textit{Quantum $f$-Divergences:} The study of divergences in the quantum setting becomes increasingly complex due to the non-commutative aspect of quantum probability theory. This results in a variety of quantum $f$-divergences e.g. \cite{Petz-1985a,Petz-1986a,Petz-1998a,Wilde-2018a,Hirche-2023a}. Contraction coefficients for quantum divergences have been considered in various works, e.g.~\cite{Ruskai-1994a,Lesniewski-1999a,Temme-2010a,Hirche-2022a, Fawzi-2022a, Hirche-2023-dp,Nuradha-2024a,Nuradha-2024b,Cheng-2024a,Belzig-2024a}. To the best of our knowledge, the contraction coefficients for Petz $f$-divergences have only been considered in \cite{Ruskai-1994a, Petz-1998a, Lesniewski-1999a}.\footnote{Technically, the authors consider a quantum extension of $D_{f}(Q \Vert P)$, but since for $\wt{f}(t) \coloneq tf(1/t)$, which also satisfies the conditions of an $f$-divergence, a direct calculation verifies $D_{\wt{f}}(P \Vert Q) = D_{f}(Q \Vert P)$, we don't take this as a significant distinction.} Mixing times measured in terms of trace distance have been considered in \cite{Temme-2010a} albeit under the assumption that the initial state is known. Our mixing times Proposition~\ref{prop:Petz-mixing-times} also consider other measures of dissimilarity under $f$-divergences. To the best of our knowledge, the only other quantum dissimilarity measure that has been considered is the 2-Sandwiched R\'{e}nyi Divergence in \cite{Muller-2018a}. We note that, unlike in the classical case, to the best of our knowledge, neither the results of \cite{Temme-2010a,Muller-2018a} nor ours (Theorem~\ref{thm:Petz-contraction-coefficient-rate} and Proposition~\ref{prop:Petz-mixing-times}) are known to be efficiently computable.
\end{enumerate}

\section{Notation and Background}
Here we briefly summarize notation and relevant facts for the majority of this work. We denote alphabets, $\cX,\cY,$ etc. Nearly this entire work focuses on finite dimensions. We will denote vectors with boldface, i.e. $\mbf{p} \in \mbb{R}^{|\cX|}$. We reference elements of a vector by its element in the finite alphabet, e.g. $p(x)$. We denote the simplex of probability distributions on an alphabet by $\cP(\cX)$. We denote the total variational distance via $\text{TV}(\mbf{p},\mbf{q}) \coloneq \frac{1}{2}\Vert \mbf{p} - \mbf{q}\Vert_{1}$. We denote a (classical) channel $\cW_{\cX \to \cY}$ and equivocate it with its matrix representation $W \in \mbb{R}^{\vert \cY \vert \times \vert \cX \vert}$. We stress to align with quantum information theory standards, the output of a channel, $\mbf{p}_{Y}$, is given by $\mbf{p}_{Y} = W\mbf{p}_{X}$, i.e. the matrix representation of the channel is defined by multiplying the vector on the left.

The two classes of divergences we focus on in this work are the Bregman and $f$-divergences.

\begin{definition}\label{def:BregDiv}
    Let $F:S \to \mbb{R}$ be a continuously differentiable function where $S \subset \mbb{R}^{n}$ is a convex set. Then the Bregman divergence is
    \begin{align}\label{eq:Bregman-Div}
        B_{F}(\mbf{x}||\mbf{y}) := F(\mbf{x}) - F(\mbf{y}) - \langle \grad F (\mbf{y}), \mbf{x} - \mbf{y} \rangle \ . 
    \end{align}
\end{definition}

\begin{definition}\label{def:f-divergence}
    Let $f:(0,+\infty) \to \mbb{R}$ be a convex function with $f(1)=0$. Let $\cX$ be a finite alphabet and $\mbf{p},\mbf{q} \in \mbb{R}^{|\cX|}_{\geq 0}$. The $f$-divergence of $\mbf{p}$ with respect to $\mbf{q}$ is given by
    \begin{align}\label{eq:f-div-def}
        D_{f}(\mbf{p}||\mbf{q}) := 
        \sum_{x \in \cX} q(x)f(p(x)/q(x))  \ ,
    \end{align}
    where we use the standard conventions: $0f(0/0) = 0$, $0f(a/0)= a f'(\infty)$ for $a>0$ where $f'(\infty) := \lim_{x \downarrow 0} xf(1/x)$, and we use the convention $f(0) := f(0^{+})$ as is standard in this definition. \\

    Moreover, if $f:(0+\infty) \to \mbb{R}$ is not guaranteed to satisfy the conditions above, we denote it as
    \begin{align}\label{eq:unrestricted-f-div}
        \cD_{f}(\mbf{p}||\mbf{q}) := \sum_{x \in \cX} q(x)f(p(x)/q(x)) \ , 
    \end{align}
    which may be thought of as an `unrestricted' $f$-divergence.
\end{definition}

A useful property of $f$-divergences over finite alphabets under our conventions is that we can restrict to summing over the support of the reference vector $\mbf{q}$.
\begin{proposition}\label{prop:classical-f-div-restrict-to-q-support}
    Let $\mbf{p},\mbf{q} \in \mbb{R}^{|\cX|}_{\geq 0}$ such that $\mbf{p} \ll \mbf{q}$. Then, by standard convention, without loss of generality one may restrict to the support of $\mbf{q}$ when calculating $\cD_{f}(\mbf{p}\Vert \mbf{q})$, i.e. $\cD_{f}(\mbf{p} \Vert \mbf{q}) = \sum_{x \in \supp(\mbf{q})} q(x) f(p(x)/q(x))$.
\end{proposition}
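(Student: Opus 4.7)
The plan is to split the defining sum in \eqref{eq:unrestricted-f-div} into two pieces, one indexed by $\supp(\mbf{q})$ and one by its complement $\cX \setminus \supp(\mbf{q})$, and then show that every term in the second piece vanishes under the stated conventions. Concretely, I would write
\begin{equation*}
\cD_{f}(\mbf{p}\Vert \mbf{q}) \;=\; \sum_{x \in \supp(\mbf{q})} q(x)\, f\!\bigl(p(x)/q(x)\bigr) \;+\; \sum_{x \in \cX \setminus \supp(\mbf{q})} q(x)\, f\!\bigl(p(x)/q(x)\bigr),
\end{equation*}
so that the claim reduces to showing the second sum is zero.

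The key step is to use the hypothesis $\mbf{p} \ll \mbf{q}$: by definition of absolute continuity for nonnegative vectors, whenever $q(x) = 0$ we also have $p(x) = 0$. Thus for every $x \in \cX \setminus \supp(\mbf{q})$, the corresponding term is literally of the form $0 \cdot f(0/0)$, which by the convention stated immediately after \eqref{eq:f-div-def} is set equal to $0$. Summing, the entire second piece vanishes and we obtain the claimed restriction of the sum to $\supp(\mbf{q})$.

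There is no real obstacle here — the result is essentially a bookkeeping consequence of how the $\mbf{p} \ll \mbf{q}$ case interacts with the convention $0 f(0/0) = 0$. The only subtlety worth flagging explicitly is that the other convention $0 f(a/0) = a f'(\infty)$ for $a > 0$ never needs to be invoked, precisely because absolute continuity rules out the case $p(x) > 0, q(x) = 0$. It is also worth noting that the statement applies to the unrestricted $\cD_f$ of \eqref{eq:unrestricted-f-div} and hence a fortiori to the genuine $f$-divergence $D_f$ of \eqref{eq:f-div-def}, so no convexity or normalization properties of $f$ enter the argument.
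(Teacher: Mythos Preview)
Your proposal is correct and takes essentially the same approach as the paper: split the sum over $\supp(\mbf{q})$ and its complement, use $\mbf{p} \ll \mbf{q}$ to conclude every off-support term is $0f(0/0)$, and apply the convention. Your remark that the convention $0f(a/0)=af'(\infty)$ is never invoked here is a nice clarification the paper leaves implicit.
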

\begin{proof}
    This simply follows from the convention that $0f(0/0) = 0$ and our assumption $\mbf{p} \ll \mbf{q}$: 
    \begin{align*}
        \cD_{f}(\mbf{p} \Vert \mbf{q}) = \sum_{x \in \supp(\mbf{q})} q(x)f(p(x)/q(x)) + \sum_{x \not \in \supp(\mbf{q})} 0f(0/0) = \sum_{x \in \supp(\mbf{q})} q(x)f(p(x)/q(x)) \ . 
    \end{align*}
\end{proof}

The most important properties of $f$-divergences for this work are the following (see \cite{polyanskiy-2023a} for a more in-depth summary).
\begin{fact}\label{fact:f-div-properties} ~
\begin{enumerate}
    \item (\textit{Data-Processing}) For all (classical) channels $\cW_{X \to Y}$ and distributions $\mbf{p},\mbf{q} \in \cP(\cX)$, $D_{f}(W\mbf{p} \Vert W\mbf{q}) \leq D_{f}(\mbf{p} \Vert \mbf{q})$, i.e. $f$-divergences are monotonic under classical channels.
    \item (\textit{Non-negativity}) For all distributions $\mbf{p},\mbf{q} \in \cP(\cX)$, $D_{f}(\mbf{p}\Vert \mbf{q}) \geq 0$. Moreover, if $f$ is strictly convex at unity, then $D_{f}(\mbf{p} \Vert \mbf{q}) = 0$ if and only if $\mbf{p} = \mbf{q}$. 
    \item (\textit{Invariance of generating functions to $c(t-1)$}) If $f$ induces an $f$-divergence, then for $c \in \mbb{R}$ and $\wt{f}(x) \coloneq f(x) + c(x-1)$, $D_{f}(\mbf{p}\Vert \mbf{q}) = D_{\wt{f}}(\mbf{p}\Vert \mbf{q})$ for all $\mbf{p},\mbf{q} \in \cP(\cX)$.
\end{enumerate}
\end{fact}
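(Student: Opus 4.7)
The plan is to dispatch all three properties using only convexity of $f$ and Jensen's inequality, with the main care going to the boundary conventions. First, for the data-processing inequality (item 1), I would unfold $D_f(W\mbf{p}\Vert W\mbf{q}) = \sum_y (W\mbf{q})(y)\, f\bigl((W\mbf{p})(y)/(W\mbf{q})(y)\bigr)$ and, for each $y$ in the support of $W\mbf{q}$, rewrite the argument of $f$ as a convex combination
\begin{equation*}
\frac{(W\mbf{p})(y)}{(W\mbf{q})(y)} \;=\; \sum_x \frac{W(y\mid x)\,q(x)}{(W\mbf{q})(y)} \cdot \frac{p(x)}{q(x)},
\end{equation*}
noting that the weights form a probability distribution in $x$. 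Jensen applied to convex $f$ bounds the $y$-th summand from above by $\sum_x W(y\mid x)\,q(x)\, f(p(x)/q(x))$; summing over $y$ and using stochasticity $\sum_y W(y\mid x)=1$ collapses the bound to $\sum_x q(x)\,f(p(x)/q(x)) = D_f(\mbf{p}\Vert\mbf{q})$.

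For non-negativity (item 2), I would apply Jensen's inequality directly to the probability distribution $\mbf{q}$: since $f$ is convex,
\begin{equation*}
D_f(\mbf{p}\Vert \mbf{q}) \;=\; \sum_x q(x)\, f\!\left(\tfrac{p(x)}{q(x)}\right) \;\geq\; f\!\left( \sum_x q(x)\cdot \tfrac{p(x)}{q(x)} \right) \;=\; f(1) \;=\; 0.
\end{equation*}
If $f$ is strictly convex at $1$, then equality in Jensen's inequality forces the ratios $p(x)/q(x)$ to be constant on the support of $\mbf{q}$, and combined with $\mbf{p},\mbf{q}\in\cP(\cX)$ that constant must be $1$, giving $\mbf{p}=\mbf{q}$. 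Item 3 is a one-line computation: substituting $\wt f(t) = f(t) + c(t-1)$ into the definition gives
\begin{equation*}
D_{\wt f}(\mbf{p}\Vert\mbf{q}) \;=\; D_f(\mbf{p}\Vert\mbf{q}) \;+\; c\sum_x q(x)\!\left(\tfrac{p(x)}{q(x)} - 1\right) \;=\; D_f(\mbf{p}\Vert\mbf{q}) + c(1-1) \;=\; D_f(\mbf{p}\Vert\mbf{q}).
\end{equation*}

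The only real obstacle is bookkeeping around the boundary conventions $0f(0/0)=0$, $0f(a/0) = a f'(\infty)$, and $f(0) = f(0^+)$. The cleanest way I would handle this is to first reduce to the absolutely continuous case $\mbf{p}\ll \mbf{q}$ via Proposition~\ref{prop:classical-f-div-restrict-to-q-support}, where all arguments above go through verbatim on $\supp(\mbf{q})$. For the data-processing inequality one must additionally handle outputs $y$ with $(W\mbf{q})(y)=0$, which force $(W\mbf{p})(y)=0$ whenever $\mbf{p}\ll\mbf{q}$, and atoms $x$ with $q(x)=0$ and $p(x)>0$, which contribute $p(x)\,f'(\infty)$ on both sides and hence do not affect the inequality. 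Because no machinery beyond Jensen and the stated conventions is needed, the three items are recorded as a Fact rather than as a theorem, and the proof sketch above is essentially what one finds in references such as \cite{polyanskiy-2023a}.
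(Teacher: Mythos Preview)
The paper does not actually prove this statement: it is recorded as a \emph{Fact} with a pointer to \cite{polyanskiy-2023a}, so there is no ``paper's own proof'' to compare against. Your sketch is the standard argument and is correct in spirit; it is exactly what one finds in the cited reference.

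One small gap worth noting: for item~2 you invoke Proposition~\ref{prop:classical-f-div-restrict-to-q-support} to ``reduce to the absolutely continuous case $\mbf{p}\ll\mbf{q}$,'' but that proposition already \emph{assumes} $\mbf{p}\ll\mbf{q}$, so it cannot effect such a reduction. When $\mbf{p}\not\ll\mbf{q}$ there is an extra term $\sum_{x:q(x)=0} p(x)\,f'(\infty)$ that you must argue is non-negative (and strictly positive for the ``moreover'' clause). The clean way to do this is to first use item~3 to replace $f$ by $f(t)-s(t-1)$ for a subgradient $s$ of $f$ at $1$, so that without loss of generality $f\geq 0$ on $(0,\infty)$; then $f'(\infty)\geq 0$ automatically, and strict convexity at unity forces $f'(\infty)>0$. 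With that adjustment your argument is complete.
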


The most common $f$-divergence is the KL divergence, which is obtained via $f(t) \coloneq t\log(t)$, i.e., when $\mbf{p} \ll \mbf{q}$, $D_{t\log(t)}(P \Vert Q) = \sum_{x} p(x)\log(p(x)/q(x))$. The most important $f$-divergence for this work will be the $\chi^{2}$ divergence, which is induced by $f(t) \coloneq t^{2}-1$ or $f(t) \coloneq (t-1)^{2}$:
\begin{align}\label{eq:chi-squared-div}
    \chi^{2}(\mbf{p}||\mbf{q}) = \sum_{x \in \cX} \frac{\left(p(x) -q(x) \right)^{2}}{q(x)} \ .
\end{align}
Other examples of $f$-divergences are later enumerated in Tables~\ref{tab:uni} and \ref{tab:multi}.

One critical property of the $f$-divergences is that they all `look' like the $\chi^{2}$-divergence locally or as $\mbf{p}$ approaches $\mbf{q}$. That is, if $\mathscr{P},\mathscr{Q}$ are probability measures, Csiszar and Shields \cite{Csiszar-2004a} showed
\begin{align}\label{eq:csiszar-loc-behavior}
    \lim_{\mathscr{P} \to \mathscr{Q}} \frac{D_{f}(\mathscr{P} \vert \mathscr{Q})}{\chi^{2}(\mathscr{P}\Vert \mathscr{Q})} \to \frac{f''(1)}{2} \ ,  \ 
\end{align}
and Sason \cite{Sason-2018a} showed for an $f$-divergence as defined in Definition~\ref{def:f-divergence} that also satisfies $f''$ is continuous at $1$, whenever probability measures $\text{ess} \sup \frac{d\mathscr{P}}{d\mathscr{Q}} < +\infty$,
\begin{align}
    \lim_{\lambda \downarrow 0} \frac{1}{\lambda^{2}} D_{f}(\lambda \mathscr{P} + (1-\lambda)\mathscr{Q} \Vert \mathscr{Q}) = \frac{f''(1)}{2}\chi^{2}(\mathscr{P}\Vert \mathscr{Q}) \ , 
\end{align}
where we omit the formal definitions of $f$-divergences for probability measures as we only reference them in these two previous results and in the appendix. This shows that $\chi^{2}$-divergence is in some sense particularly central and is what much of our work is further establishing.

\section{\texorpdfstring{$f$}{f}-Divergence Pinsker Inequalities}\label{sec:f-div-Pinsker}

In this section, we present a method for obtaining input-independent Pinsker inequalities for twice-continuously-differentiable $f$-divergences.
We first derive two special cases that follow from univariate Taylor's theorem, and subsequently generalize the procedure and obtain tighter bounds via multivariate Taylor's theorem.
We showcase several derived results, which in some cases (e.g. Kullback-Leibler and squared Hellinger divergences) match known sharp inequalities.
Our method is more general than the previous most general method for input-independent Pinsker inequalities \cite{Gilardoni-2010a}.

\subsection{\texorpdfstring{$f$}{f}-Divergence Pinsker Inequalities from Univariate Taylor's Theorem}

We now present a means of lower-bounding $f$-divergences by total variation distance via Taylor's theorem, compare to an existing method, and derive several consequent bounds summarized in Table~\ref{tab:uni}.
This basic methodology was effectively previously used to establish Pinsker's inequality in \cite{polyanskiy-2023a} as well as an input-dependent refinement in \cite{Ordentlich-2005a}, but was not considered more generally.

Recall the univariate Taylor's theorem with integral remainder (see, e.g., \cite{Apostol-1967a}).
\begin{lemma}[Univariate Taylor's theorem]\label{lem:univariate-Taylor}
    Let $f: [a,b] \to \mbb{R}$ such that $f \in C^{2}([a,b])$. Then
    $$ f(b) = f(a) + f'(a)(b-a) + \int_{a}^{b} f''(t)(b-t) dt \ . $$
\end{lemma}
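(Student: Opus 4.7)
The plan is to obtain this identity directly from the fundamental theorem of calculus together with a single integration by parts, exploiting the hypothesis $f \in C^{2}([a,b])$ so that $f''$ is continuous (hence integrable) on $[a,b]$.

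First I would note that since $f'$ is continuously differentiable on $[a,b]$ with derivative $f''$, the fundamental theorem of calculus gives
\begin{equation*}
    f(b) - f(a) = \int_{a}^{b} f'(t)\,dt.
\end{equation*}
Next I would integrate by parts, choosing $u(t) = f'(t)$ and $dv = dt$ but taking the particular antiderivative $v(t) = t - b$ rather than the naive $v(t) = t$. This is the only nontrivial step: selecting $v(t) = t-b$ is exactly what produces the factor $(b-t)$ appearing in the remainder, and simultaneously kills one of the two boundary terms because $v(b) = 0$. With $du = f''(t)\,dt$, integration by parts yields
\begin{equation*}
    \int_{a}^{b} f'(t)\,dt = \bigl[f'(t)(t-b)\bigr]_{a}^{b} - \int_{a}^{b} f''(t)(t-b)\,dt = f'(a)(b-a) + \int_{a}^{b} f''(t)(b-t)\,dt.
\end{equation*}
Combining these two displays and rearranging gives the claimed identity.

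There is essentially no obstacle here, as everything reduces to a careful application of integration by parts; the regularity hypothesis $f \in C^{2}([a,b])$ is used precisely to justify integration by parts (continuity of $f''$) and to apply the fundamental theorem of calculus to $f'$. The only expository point worth flagging is the nonstandard choice of antiderivative $v(t) = t-b$, which is what matches the remainder to the form used in subsequent sections for bounding $f$-divergences via their second derivative.
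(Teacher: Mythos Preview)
Your proof is correct; the integration-by-parts argument with the choice $v(t)=t-b$ is the standard derivation of the integral form of the remainder. The paper does not actually prove this lemma but simply cites a reference, so there is nothing further to compare.
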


We will also make use of the following identity.
\begin{proposition}\label{prop:tv}
    Let $c > 0$ and define
    \begin{equation}\label{eq:same-norm-binary-vectors}
        \mbf{p} = \begin{bmatrix} p \\ c - p \end{bmatrix}\ , \quad \mbf{q} = \begin{bmatrix} q \\ c -q \end{bmatrix} \ ,
    \end{equation}
    for some $p, q \in [0, c]$.
    Then,
    \begin{equation*}
        \mrm{TV}(\mbf{p},\mbf{q})^{2} = (p-q)^{2} \ .
    \end{equation*}
\end{proposition}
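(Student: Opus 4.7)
The plan is to compute $\mrm{TV}(\mbf{p},\mbf{q})$ directly from the definition $\mrm{TV}(\mbf{p},\mbf{q}) = \tfrac{1}{2}\norm{\mbf{p}-\mbf{q}}_{1}$ and then square. This is essentially a one-line unpacking: the key observation is that the two coordinates of $\mbf{p}-\mbf{q}$ are negatives of each other because both $\mbf{p}$ and $\mbf{q}$ have the same coordinate sum $c$.

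Concretely, I would first write
\begin{equation*}
    \mbf{p} - \mbf{q} = \begin{bmatrix} p - q \\ (c-p) - (c-q) \end{bmatrix} = \begin{bmatrix} p-q \\ q-p \end{bmatrix} \ ,
\end{equation*}
so that $\norm{\mbf{p}-\mbf{q}}_{1} = \abs{p-q} + \abs{q-p} = 2\abs{p-q}$. Substituting into the definition gives $\mrm{TV}(\mbf{p},\mbf{q}) = \abs{p-q}$, and squaring yields the claim. Note that the hypotheses $c > 0$ and $p,q \in [0,c]$ are used only implicitly (to ensure that the vectors have nonnegative entries so that they are valid arguments for $\mrm{TV}$); they play no role in the algebraic identity itself.

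There is no real obstacle here — the statement is a book-keeping lemma that will be invoked repeatedly in the Pinsker-inequality arguments to reduce the two-coordinate case to a scalar quantity $(p-q)^{2}$ that matches the quadratic remainder term of Taylor's theorem. The only thing to be careful about is writing the chain of equalities cleanly enough to make the cancellation $(c-p)-(c-q) = q-p$ visible.
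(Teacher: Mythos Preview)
Your proof is correct and essentially identical to the paper's: both compute $\mrm{TV}(\mbf{p},\mbf{q}) = \tfrac{1}{2}\bigl[\abs{p-q} + \abs{(c-p)-(c-q)}\bigr] = \abs{p-q}$ and then square. The only cosmetic difference is that you display the vector $\mbf{p}-\mbf{q}$ explicitly before taking the norm, whereas the paper writes the norm expansion directly.
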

\begin{proof}
    We have
    \begin{align*}
        \mrm{TV}(\mbf{p},\mbf{q}) = \frac{1}{2}\left[|p-q| + |(c-p)-(c-q)| \right] = |p-q| \ .
    \end{align*}
    Squaring both sides completes the proof.
\end{proof}

We are now in a position to establish our Pinsker inequalities.
\begin{theorem}\label{thm:pinsker-uni}
    Let $f:(0,+\infty) \to \mbb{R}$ with $f(1) = 0$ be convex and twice continuously differentiable. Suppose $L_{f}$ satisfies one of the following conditions
    \begin{align}
        L_f &\leq \frac{1}{y}\ f''\!\left(\frac{x}{y}\right) + \frac{1}{1-y}\ f''\!\left(\frac{1-x}{1-y}\right)\hspace*{-7.5em} &\forall\, x,y \in (0,1)\ ,\label{eq:lambda0}\\
        L_f &\leq \frac{x^2}{y^3} f''\!\left(\frac{x}{y}\right) + \frac{(1-x)^2}{(1-y)^3} f''\!\left(\frac{1-x}{1-y}\right)\hspace*{-7.5em} &\forall\, x,y \in (0,1)\label{eq:lambda1}\ .
    \end{align}
    Then, for all $\mbf{p},\mbf{q} \geq 0$ such that $\Vert\mbf{p}\Vert_{1} = \Vert\mbf{q}\Vert_{1} = c > 0$, we have
    \begin{equation*}
        D_{f}(\mbf{p}\Vert \mbf{q}) \geq \frac{L_f}{2c} \mrm{TV}(\mbf{p},\mbf{q})^2 \ .
    \end{equation*}
\end{theorem}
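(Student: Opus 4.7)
The plan is to reduce the claim to the binary case via data processing and then to handle the binary case with univariate Taylor's theorem around the diagonal $\mbf{p}=\mbf{q}$, where $D_{f}$ vanishes together with its first derivative. First, I would reduce to probability distributions by homogeneity: writing $\hat{\mbf{p}}=\mbf{p}/c$ and $\hat{\mbf{q}}=\mbf{q}/c$, one checks $D_{f}(\mbf{p}\Vert\mbf{q}) = c\,D_{f}(\hat{\mbf{p}}\Vert\hat{\mbf{q}})$ and $\mrm{TV}(\mbf{p},\mbf{q}) = c\,\mrm{TV}(\hat{\mbf{p}},\hat{\mbf{q}})$, so the claim becomes $D_{f}(\hat{\mbf{p}}\Vert\hat{\mbf{q}}) \geq \tfrac{L_{f}}{2}\mrm{TV}(\hat{\mbf{p}},\hat{\mbf{q}})^{2}$. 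Choosing the Jordan--Hahn event $A = \{x : \hat{p}(x) > \hat{q}(x)\}$ and coarsening $\hat{\mbf{p}},\hat{\mbf{q}}$ via the indicator channel of $A$ to the two-point vectors $\hat{\mbf{p}}' = (\hat{p}(A),\,1-\hat{p}(A))$ and $\hat{\mbf{q}}' = (\hat{q}(A),\,1-\hat{q}(A))$, data processing (Fact~\ref{fact:f-div-properties}) yields $D_{f}(\hat{\mbf{p}}\Vert\hat{\mbf{q}}) \geq D_{f}(\hat{\mbf{p}}'\Vert\hat{\mbf{q}}')$, while Proposition~\ref{prop:tv} together with the defining property of the Jordan--Hahn event gives $\mrm{TV}(\hat{\mbf{p}},\hat{\mbf{q}}) = \hat{p}(A) - \hat{q}(A) = \mrm{TV}(\hat{\mbf{p}}',\hat{\mbf{q}}')$. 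It therefore suffices to prove the binary inequality $D_{f}((x,1-x)\Vert(y,1-y)) \geq \tfrac{L_{f}}{2}(x-y)^{2}$ for all $x,y \in (0,1)$.

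For the binary case I would Taylor expand along one of two different slices, tailored to the two hypotheses. Under \eqref{eq:lambda0}, fix $y$ and set $g(x) \coloneq y\,f(x/y) + (1-y)\,f((1-x)/(1-y))$, which equals the binary $D_{f}$. A direct computation using $f(1)=0$ yields $g(y)=0$ and $g'(y) = f'(1)-f'(1) = 0$, while
\begin{equation*}
    g''(t) \;=\; \tfrac{1}{y}f''(t/y) \,+\, \tfrac{1}{1-y}f''((1-t)/(1-y)) \;\geq\; L_{f},
\end{equation*}
the last inequality being \eqref{eq:lambda0} applied at $(x',y')=(t,y)$. Lemma~\ref{lem:univariate-Taylor} at base point $a=y$ then gives $g(x) = \int_{y}^{x} g''(t)(x-t)\,dt \geq \tfrac{L_{f}}{2}(x-y)^{2}$, where the sign of the integrand is handled uniformly by splitting the cases $x>y$ and $x<y$. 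Under \eqref{eq:lambda1}, the symmetric slice fixes $x$ and sets $h(y) \coloneq y\,f(x/y) + (1-y)\,f((1-x)/(1-y))$; an analogous calculation produces $h(x)=h'(x)=0$ and
\begin{equation*}
    h''(y) \;=\; \tfrac{x^{2}}{y^{3}}f''(x/y) \,+\, \tfrac{(1-x)^{2}}{(1-y)^{3}}f''((1-x)/(1-y)) \;\geq\; L_{f}
\end{equation*}
by \eqref{eq:lambda1}, and Lemma~\ref{lem:univariate-Taylor} again delivers the desired $(x-y)^{2}$ lower bound.

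The main conceptual content is the quadratic vanishing of $D_{f}$ on the diagonal together with the explicit identification of the second derivative along each slice; the two distinct hypotheses on $L_{f}$ arise precisely as the two natural univariate Taylor expansions of the binary $D_{f}$ (in its first versus its second argument), which explains why the theorem offers two alternative sufficient conditions. I do not anticipate a serious obstacle beyond these routine computations, except for the mild boundary case in which $\hat{p}(A)$ or $\hat{q}(A)$ equals $0$ or $1$; this either reduces to the trivial equality case $\hat{\mbf{p}}=\hat{\mbf{q}}$ or can be recovered by a standard perturbation limit using continuity of $D_{f}$ on the interior.
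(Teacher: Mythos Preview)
Your proposal is correct and follows essentially the same approach as the paper: reduce to the binary case via data processing (the paper invokes a TV-preserving coarse-graining channel, which is exactly your Jordan--Hahn indicator), then Taylor-expand the binary $f$-divergence in its first argument for \eqref{eq:lambda0} and in its second argument for \eqref{eq:lambda1}, using the vanishing of the function and its first derivative on the diagonal. The only cosmetic difference is that you first normalize to $c=1$ via the homogeneity $D_{f}(c\hat{\mbf{p}}\Vert c\hat{\mbf{q}})=c\,D_{f}(\hat{\mbf{p}}\Vert\hat{\mbf{q}})$, whereas the paper carries $c$ through the binary computation and recovers the same scaling via the substitution $p=cx$, $q=cy$.
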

\begin{proof}

    We first prove Theorem~\ref{thm:pinsker-uni} for two-dimensional non-negative vectors and later provide an argument to extend it to any finite dimension.
    
    We restrict $\mbf{p}$ and $\mbf{q}$ to two-dimensional vectors of the form \eqref{eq:same-norm-binary-vectors} and define a mapping $g$ from $p$ and $q$ to the given binary $f$-divergence as
    \begin{align}\label{eq:binary-f-div-mapping}
        g(p, q) = q f\left(\frac{p}{q}\right) + (c-q) f\left(\frac{c-p}{c-q}\right) = D_{f}(\mbf{p}||\mbf{q}).
    \end{align}
    If $g$ is undefined at the boundary of $[0,c]$, we define it there via its limits, which exist since $f$ is continuous.
    We can also consider a mapping $g_q : p \mapsto g(p, q)$ based on a fixed choice of $q$.
    We determine its first derivative via the chain rule.
    \begin{equation*}
        \frac{d}{dp} g_{q}(p)
        = f'\left(\frac{p}{q}\right) - f'\left(\frac{c-p}{c-q}\right)\ .
    \end{equation*}
    Thus, $g_{q}(q) = c f(1) = 0$ and $\frac{d}{dp} g_{q}(p) \big \vert_{p=q} = f'(1) - f'(1) = 0$.
    Since $g_{q} \in C^2([0, c])$ and $p,q \in [0,c]$,
    \begin{equation}\label{eq:binary-f-div-int-rep-p}
        D_{f}(\mbf{p}||\mbf{q})
        = \int^{p}_{q} \frac{d^{2}}{dt^{2}} g_{q}(t) (p-t) \, dt
    \end{equation}
    by applying Lemma~\ref{lem:univariate-Taylor} with $f(t) = g_{q}(t)$, $a = q$ and $b = p$.
    We compute the second derivative of $g_{q}$ as
    \begin{equation*}
        \frac{d^{2}}{dp^{2}} g_{q}(p) = \frac{1}{q} f''\left(\frac{p}{q}\right) + \frac{1}{c-q} f''\left(\frac{c-p}{c-q}\right)\ .
    \end{equation*}
    Let $p = cx$ and $q = cy$ with $x, y \in [0, 1]$, we then additionally have
    \begin{equation*}
        c\ \frac{d^{2}}{dp^{2}} g_{q}(p) = \frac{1}{y} f''\left(\frac{x}{y}\right) + \frac{1}{1-y} f''\left(\frac{1-x}{1-y}\right)\ .
    \end{equation*}
    Supposing the bound \eqref{eq:lambda0} holds on $(0,1)$, we know it must hold on $[0,1]$ by the continuity of $f''$.
    It may be substituted into the integral representation \eqref{eq:binary-f-div-int-rep-p} to obtain
    \begin{equation*}
        D_{f}(\mbf{p} || \mbf{q})
        = \int_q^p \frac{d^2}{dt^2} g_q(t) (p-t) dt
        \geq \frac{L_{f}}{c} \int_q^p (p-t) dt
        = \frac{L_{f}}{2c}(p - q)^2
        = \frac{L_{f}}{2c} \mathrm{TV}(\mbf{p},\mbf{q})^2\ ,
    \end{equation*}
    where the identity $\mrm{TV}(\mbf{p},\mbf{q})^2 = (p-q)^2$ is by Proposition~\ref{prop:tv}.
    
    Repeat the above procedure, differentiating with respect to $q$.
    Fix $p$ in \eqref{eq:binary-f-div-mapping} to obtain $g_p : q \mapsto g(p,q)$.
    We determine the first derivative of this mapping via the chain rule and product rule.
    \begin{equation*}
        \frac{d}{dq} g_{p}(q)
        = f\left(\frac{p}{q}\right) - \frac{p}{q} f'\left(\frac{p}{q}\right) - f\left(\frac{c-p}{c-q}\right) + \frac{c-p}{c-q} f'\left(\frac{c-p}{c-q}\right)\ .
    \end{equation*}
    Thus, $g_{p}(p) = c f(1) = 0$ and $\frac{d}{dq} g_{p}(q) \big \vert_{q=p} = f(1) - f'(1) - f(1) + f'(1) = 0$.
    Since $g_p \in C^2([0,c])$ and $p,q \in [0,c]$,
    \begin{equation}\label{eq:binary-f-div-int-rep-q}
        D_{f}(\mbf{p}||\mbf{q})
        = \int^{q}_{p} \frac{d^{2}}{dt^{2}} g_{p}(t) (q-t) \, dt
    \end{equation}
    by Lemma~\ref{lem:univariate-Taylor}.
    We compute the second derivative of $g_p$ as
    \begin{equation*}
        \frac{d^{2}}{dq^{2}} g_{p}(q)
        = \frac{p^2}{q^3} f''\left(\frac{p}{q}\right) + \frac{(c-p)^2}{(c-q)^3} f''\left(\frac{c-p}{c-q}\right)\ .
    \end{equation*}
    Let $p = cx$ and $q = cy$ with $x, y \in [0, 1]$, we then additionally have
    \begin{equation*}
        c\ \frac{d^{2}}{dq^{2}} g_{p}(q) = \frac{x^2}{y^3} f''\left(\frac{x}{y}\right) + \frac{(1-x)^2}{(1-y)^3} f''\left(\frac{1-x}{1-y}\right)\ .
    \end{equation*}
    Supposing the bound \eqref{eq:lambda1} holds on $(0,1)$, we know it must hold on $[0,1]$ by the continuity of $f''$.
    It may be substituted into the integral representation \eqref{eq:binary-f-div-int-rep-q} to obtain
    \begin{equation*}
        D_{f}(\mbf{p} || \mbf{q})
        = \int_p^q \frac{d^2}{dt^2} g_p(t) (q-t) dt
        \geq \frac{L_{f}}{c} \int_p^q (q-t) dt
        = \frac{L_{f}}{2c}(p - q)^2
        = \frac{L_{f}}{2c} \mathrm{TV}(\mbf{p},\mbf{q})^2\ ,
    \end{equation*}
    where the identity $\mrm{TV}(\mbf{p},\mbf{q})^2 = (p-q)^2$ is by Proposition~\ref{prop:tv}.

    We now argue that Theorem~\ref{thm:pinsker-uni} holds any finite dimension vectors $\mbf{p}$ and $\mbf{q}$.
    Let $\cW$ be a classical-to-classical sum-preserving channel that maps $\mbf{p}$ and $\mbf{q}$ to two-dimensional vectors $\cW(\mbf{p})$ and $\cW(\mbf{q})$ such that $\mrm{TV}(\mbf{p},\mbf{q}) = \mrm{TV}(\cW(\mbf{p}),\cW(\mbf{q}))$. It is well-known that such a channel always exists.     
    Now under the application of $\cW$, $D_{f}(\mbf{p}\Vert \mbf{q})$ is lower bounded by $D_{f}(\cW(\mbf{p})\Vert \cW(\mbf{q}))$ (due to data-processing), and furthermore, as Theorem~\ref{thm:pinsker-uni} holds for two-dimensional vectors, $D_{f}(\cW(\mbf{p})\Vert \cW(\mbf{q}))$ is lower bounded by $\frac{L_f}{2c} \mrm{TV}(\cW(\mbf{p}),\cW(\mbf{q}))^2$, or equivalently to $\frac{L_f}{2c} \mrm{TV}(\mbf{p},\mbf{q})^2$ (due to the nature of the channel $\cW$), implying an overall lower bound of $\frac{L_f}{2c} \mrm{TV}(\mbf{p},\mbf{q})^2$ on $D_{f}(\mbf{p}\Vert \mbf{q})$.
\end{proof}

We first note that the above result only appeals to $f$ being twice continuously differentiable, and thus applies to any $f$-divergence induced by a twice continuously differentiable $f$.
This, however, is a rather standard property of $f$-divergences to consider -- for example, they are the set of $f$-divergences that admit an integral representation in terms of Hockey stick divergences \cite{Sason-2016-f-div-ineqs,Hirche-2023-dp}.
We also note that an alternate approach involves recovering similar inequalities via strong convexity.

In finite dimensions, Theorem \ref{thm:pinsker-uni} is more manageable than the previous most general case \cite{Gilardoni-2010a}, which we state here for comparison.
\begin{proposition}\label{prop:Gilardoni}
    \cite[Theorem 3]{Gilardoni-2010a} Suppose $f$ is convex, thrice differentiable at unity with $f''(1) > 0$, and
    \begin{align}\label{eq:Gilardoni-pinsker-requirement}
         (f(t)-f'(1)(t-1))\left[1-\frac{1}{3}\frac{f'''(1)}{f''(1)}(t-1) \right] \geq \frac{f''(1)}{2}(t-1)^{2} \ \quad \forall t \in (0,+\infty)\ .
    \end{align}
    Then, $D_{f}(\mbf{p}\Vert\mbf{q}) \geq \frac{f''(1)}{2}\mrm{TV}(\mbf{p},\mbf{q})^{2}$ where $\mbf{p},\mbf{q}$ are probability measures.
    Moreover, this bound is optimal.
\end{proposition}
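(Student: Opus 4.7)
The plan is to reduce to the binary case via data processing, normalize the generator so that $f'(1)=0$, and then apply the hypothesized pointwise inequality at the two natural evaluation points of the binary $f$-divergence.

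First, invoking the data-processing inequality (Fact~\ref{fact:f-div-properties}) with a sum-preserving channel $\cW:\mbb{R}^{|\cX|}\to\mbb{R}^{2}$ that preserves $\mrm{TV}(\mbf{p},\mbf{q})$, as used in the proof of Theorem~\ref{thm:pinsker-uni}, reduces the claim to two-dimensional probability vectors $\mbf{p}=(p,1-p)$ and $\mbf{q}=(q,1-q)$ with $p,q\in(0,1)$ and $\mrm{TV}(\mbf{p},\mbf{q})=|p-q|$. Second, by the shift invariance $D_{f}=D_{f+c(t-1)}$ (Fact~\ref{fact:f-div-properties}), which also leaves the hypothesized inequality unchanged, I would assume $f'(1)=0$ without loss of generality. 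Setting $\beta:=f'''(1)/(3f''(1))$, the hypothesis simplifies to
\begin{equation*}
    f(t)\bigl[1-\beta(t-1)\bigr] \,\geq\, \tfrac{f''(1)}{2}(t-1)^{2}.
\end{equation*}

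Third, I would instantiate this inequality at $t_{1}:=p/q$ and $t_{2}:=(1-p)/(1-q)$, multiply by the weights $q$ and $1-q$ respectively, and sum. The algebraic identities $q(t_{1}-1)=p-q=-(1-q)(t_{2}-1)$ collapse the $\beta$-dependent contributions into a single cross-term, yielding
\begin{equation*}
    D_{f}(\mbf{p}\|\mbf{q}) \,\geq\, \tfrac{f''(1)}{2}\,\tfrac{(p-q)^{2}}{q(1-q)} \,+\, \beta(p-q)\bigl[f(t_{1})-f(t_{2})\bigr].
\end{equation*}

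The main obstacle is to deduce the target $D_{f}(\mbf{p}\|\mbf{q})\geq \tfrac{f''(1)}{2}(p-q)^{2}$ from this estimate in the face of a potentially negative cross-term. The first summand alone exceeds the target by a factor of at least four since $q(1-q)\leq 1/4$, leaving a surplus of $\tfrac{3}{2}f''(1)(p-q)^{2}$ to absorb the cross-term. Controlling that cross-term is the delicate step: I would reapply the hypothesis to bound $|f(t_{i})|$ quadratically in $(t_{i}-1)$, then exploit $(t_{1}-1)-(t_{2}-1)=(p-q)/(q(1-q))$ to recast $\beta(p-q)[f(t_{1})-f(t_{2})]$ into a form dominated by the surplus. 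The exactness of the constant $f''(1)/2$ is dictated by the local identity~\eqref{eq:csiszar-loc-behavior}, which forces any universal Pinsker constant to asymptotically agree with $f''(1)/2$ in appropriate limits; this is exactly what the condition is engineered to deliver, by matching the Taylor expansion of the left-hand side of the hypothesis to $\tfrac{f''(1)}{2}(t-1)^{2}$ through third order in $(t-1)$.
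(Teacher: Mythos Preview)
The paper does not supply its own proof of this proposition; it is quoted from \cite{Gilardoni-2010a} purely for comparison with Theorem~\ref{thm:pinsker-uni}, so there is no in-paper argument to match against. That said, your proposed argument has a genuine gap at the cross-term step.

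Your reduction to the binary case and the derivation of
\begin{equation*}
D_{f}(\mbf{p}\Vert\mbf{q}) \;\geq\; \frac{f''(1)}{2}\,\frac{(p-q)^{2}}{q(1-q)} \;+\; \beta(p-q)\bigl[f(t_{1})-f(t_{2})\bigr]
\end{equation*}
are correct. But the plan to absorb the cross-term into the surplus $\tfrac{f''(1)}{2}(p-q)^{2}\bigl(\tfrac{1}{q(1-q)}-1\bigr)$ cannot succeed in general. The hypothesis~\eqref{eq:Gilardoni-pinsker-requirement} furnishes only a \emph{lower} bound on $f(t)$, not an upper bound, so ``reapplying the hypothesis to bound $|f(t_{i})|$ quadratically'' is not available. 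Concretely, take $f(t)=t\ln t - t + 1$ (the KL generator after normalization), which satisfies~\eqref{eq:Gilardoni-pinsker-requirement} with $\beta=-1/3$. Fix $p\in(0,1)$ and let $q\downarrow 0$. Then $t_{1}=p/q\to\infty$ with $f(t_{1})\sim (p/q)\ln(p/q)$, so the cross-term behaves like $-\tfrac{p^{2}}{3q}\ln(p/q)$, whereas the surplus behaves only like $\tfrac{p^{2}}{2q}$. The cross-term dominates by a logarithmic factor and your right-hand side tends to $-\infty$; the displayed inequality, while true, becomes vacuous in this regime and cannot yield the target bound.

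The difficulty is structural: by simply weighting and summing the two pointwise instances of~\eqref{eq:Gilardoni-pinsker-requirement} you decouple $t_{1}$ and $t_{2}$, discarding the constraint that links them through $q$ and $1-q$. Gilardoni's actual argument, as summarized in the paper's discussion of prior work, proceeds instead via a third-order Taylor expansion of the binary $f$-divergence in which condition~\eqref{eq:Gilardoni-pinsker-requirement} is engineered precisely so that the third-order remainder has the correct sign; this keeps the two evaluation points coupled through the expansion variable and avoids the blow-up you encounter.
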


To see this proposition is less general than Theorem~\ref{thm:pinsker-uni} in finite dimensions even when only considering probability mass functions, we may verify that \eqref{eq:Gilardoni-pinsker-requirement} is not satisfied for R\'{e}nyi's information gain of order $\alpha \not\in [-1,2]$.
Indeed, \cite{Gilardoni-2010a} only established a Pinsker inequality in the case of $\alpha \in [-1,2]$ for this divergence.
On the contrary, we establish non-trivial Pinsker inequalities for R\'{e}nyi's information gain for all values of $\alpha$ in Propositions~\ref{prop:pinsker-renyi-uni} and~\ref{prop:pinsker-renyi-multi}.

Moreover, our method is more general than any other that requires thrice-differentiability.
Consider the example below, in which we obtain a non-trivial lower bound on an $f$-divergence generated by a function that is not thrice differentiable.
\begin{proposition}\label{prop:example}
    Let $f : (0, +\infty) \to \mbb{R}$ be defined as
    \begin{equation*}
        f(t) = \begin{cases}
            \frac{1}{2} t (t-1) & \text{if}\ t \leq 1\ ,\\
            t \ln t - \frac{1}{2}(t-1) & \text{otherwise}\ .
        \end{cases}
    \end{equation*}
    Note that $f$ is twice continuously differentiable but not thrice differentiable, has $f(1) = 0$ and $f(0) < +\infty$, and $\frac{f(t)}{t}$ and $\frac{f(t) - f(0)}{t}$ are concave.
    Additionally, $f$ satisfies \eqref{eq:lambda0} with $L_f = 2$.
\end{proposition}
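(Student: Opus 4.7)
The plan is to verify the four claims in order, noting that the first three are essentially computational and the real content lies in checking \eqref{eq:lambda0}. First, I would compute one-sided derivatives of $f$ at the gluing point $t=1$. On $(0,1]$ we have $f'(t) = t - 1/2$, $f''(t) = 1$, $f'''(t) = 0$; on $[1,\infty)$ we have $f'(t) = \ln t + 1/2$, $f''(t) = 1/t$, $f'''(t) = -1/t^2$. Matching at $t=1$ gives $f(1)=0$, $f'(1)=1/2$, $f''(1)=1$ on both sides, so $f \in C^2((0,\infty))$; however $f'''(1^-)=0 \neq -1 = f'''(1^+)$, so $f$ is not thrice differentiable. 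Since $f(0) = \tfrac{1}{2}\cdot 0 \cdot (-1) = 0 < +\infty$, the second claim is immediate.

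For concavity of $h(t) \coloneq f(t)/t$ (and hence of $(f(t)-f(0))/t = h(t)$ since $f(0)=0$), I would simply compute: on $(0,1]$, $h(t) = (t-1)/2$ is linear, and on $[1,\infty)$, $h(t) = \ln t - 1/2 + 1/(2t)$ has $h''(t) = (1-t)/t^3 \leq 0$. At $t=1$, the one-sided derivatives of $h$ both equal $1/2$ and the one-sided second derivatives both equal $0$, so $h \in C^2((0,\infty))$ with $h'' \leq 0$ everywhere, hence concave.

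The main step is verifying condition \eqref{eq:lambda0}. The key observation is that the two formulas for $f''$ can be combined into the single expression $f''(t) = \min(1, 1/t)$ for all $t > 0$. Substituting this, the right-hand side of \eqref{eq:lambda0} becomes
\begin{equation*}
\frac{1}{y}\min\!\left(1, \frac{y}{x}\right) + \frac{1}{1-y}\min\!\left(1, \frac{1-y}{1-x}\right) = \frac{1}{\max(x,y)} + \frac{1}{\max(1-x,\,1-y)}\ .
\end{equation*}
Let $a = \max(x,y)$ and $b = \max(1-x, 1-y)$; regardless of which of $x,y$ is larger, we have $a + b = \max(x,y) + 1 - \min(x,y) \leq 2$ since $\max(x,y) - \min(x,y) \leq 1$ for $x,y \in (0,1)$. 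By the AM--HM inequality, $\tfrac{1}{a} + \tfrac{1}{b} \geq \tfrac{4}{a+b} \geq 2$, which is exactly \eqref{eq:lambda0} with $L_f = 2$.

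The only step that required a small idea rather than routine bookkeeping was unifying the piecewise $f''$ as $\min(1, 1/t)$, which collapses the left side of \eqref{eq:lambda0} into a form amenable to AM--HM; I expect this to be the only place where the reader could get stuck, but once it is pointed out the rest of the argument is immediate.
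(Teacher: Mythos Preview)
Your proof is correct, and for the main claim (condition \eqref{eq:lambda0} with $L_f=2$) it is substantially cleaner than the paper's argument. The paper writes out the right-hand side of \eqref{eq:lambda0} piecewise according to whether $x\leq y$ or $x>y$, computes its gradient, argues there are no interior critical points, and then evaluates along the boundary and the diagonal $x=y$ to locate the minimum value $2$. Your observation that $f''(t)=\min(1,1/t)$ collapses the expression to $1/\max(x,y)+1/\max(1-x,1-y)$, after which $a+b=1+|x-y|\leq 2$ together with AM--HM gives the bound in one line. This route avoids any calculus on the two-variable function entirely and makes it transparent why the infimum equals $2$ (approached as $|x-y|\to 1$). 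The verifications of $C^2$-but-not-$C^3$, $f(0)=0$, and concavity of $f(t)/t$ are essentially the same as the paper's, though you give a bit more detail on matching the one-sided derivatives.
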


The rest of this subsection provides example applications of Theorem~\ref{thm:pinsker-uni}.
For brevity, some related proofs (including that of Proposition~\ref{prop:example} above) are postponed to Appendix~\ref{appx:bounds}.
Results are summarized in Table~\ref{tab:uni}.

As a first step, we recover Pinsker's inequality via either KL divergence or reverse KL divergence.
\begin{proposition}[KL-divergence]\label{prop:pinsker-KL}
    Let $f(t) = t \ln t$.
    Then, $L_f = 4$ satisfies \eqref{eq:lambda0}.
\end{proposition}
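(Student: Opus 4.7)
The plan is to simply check the condition \eqref{eq:lambda0} directly for $f(t)=t\ln t$, exploiting the fact that $f''(t)=1/t$ is very well-behaved: the factors $1/y$ and $1/(1-y)$ on the two terms cancel perfectly against the arguments of $f''$, reducing the bivariate condition to a univariate optimization.

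First I would compute $f'(t)=\ln t+1$ and $f''(t)=1/t$, so that substituting into the right-hand side of \eqref{eq:lambda0} gives
\begin{equation*}
    \frac{1}{y}\, f''\!\left(\frac{x}{y}\right) + \frac{1}{1-y}\, f''\!\left(\frac{1-x}{1-y}\right) = \frac{1}{y}\cdot\frac{y}{x} + \frac{1}{1-y}\cdot\frac{1-y}{1-x} = \frac{1}{x} + \frac{1}{1-x} = \frac{1}{x(1-x)} \ .
\end{equation*}
Notice that the $y$ dependence has dropped out entirely, so the condition \eqref{eq:lambda0} becomes: find the largest $L_f$ such that $L_f \leq 1/(x(1-x))$ for all $x\in(0,1)$.

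Next I would minimize $x(1-x)$ over $x\in(0,1)$. A one-line calculus argument (or AM-GM) shows that $x(1-x)$ attains its maximum $1/4$ at $x=1/2$, and hence $1/(x(1-x))\geq 4$ on $(0,1)$, with equality at $x=1/2$. Therefore $L_f = 4$ is admissible in \eqref{eq:lambda0}, completing the proof. As a sanity check, plugging $L_f=4$ and $c=1$ into the conclusion of Theorem~\ref{thm:pinsker-uni} recovers the classical Pinsker bound $D(\mbf{p}\Vert\mbf{q})\geq 2\,\mrm{TV}(\mbf{p},\mbf{q})^{2}$, as expected.

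There is essentially no obstacle: the ``hard part'' is merely the observation that $f''(t)=1/t$ causes the $y$-factors to cancel, after which the bound is a textbook optimization. The same strategy will be used for reverse KL via \eqref{eq:lambda1}, where $\wt{f}(t)=-\ln t$ gives $\wt{f}''(t)=1/t^{2}$ and the weighting $x^{2}/y^{3}$ in \eqref{eq:lambda1} produces the same cancellation pattern.
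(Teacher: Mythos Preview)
Your proposal is correct and takes essentially the same approach as the paper: compute $f''(t)=1/t$, substitute into \eqref{eq:lambda0} so that the $y$-dependence cancels to leave $\frac{1}{x}+\frac{1}{1-x}$, and observe this is minimized at $x=1/2$ with value $4$. The paper's proof is the same two-line computation, so there is nothing to add.
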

\begin{proof}
    We have $f''(t) = t^{-1}$.
    Thus,
    $$ 4 \leq \frac{1}{y} \frac{y}{x} + \frac{1}{1-y} \frac{1-y}{1-x} = \frac{1}{x} + \frac{1}{1-x}\ ,$$
    where equality is attained at $x = 1/2$.
\end{proof}

\begin{proposition}[Reverse KL-divergence]\label{prop:pinsker-reverseKL}
    Let $f(t) = -\ln t$.
    Then, $L_f = 4$ satisfies \eqref{eq:lambda1}.
\end{proposition}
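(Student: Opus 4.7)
The plan is to mirror the calculation used for Proposition~\ref{prop:pinsker-KL}, but substituted into the condition \eqref{eq:lambda1} rather than \eqref{eq:lambda0}. For $f(t) = -\ln t$, the first two derivatives are $f'(t) = -1/t$ and $f''(t) = 1/t^{2}$. First, I would compute $f''(x/y) = y^{2}/x^{2}$ and $f''((1-x)/(1-y)) = (1-y)^{2}/(1-x)^{2}$, and plug these into the right-hand side of \eqref{eq:lambda1}.

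The factors of $x^{2}$ and $(1-x)^{2}$ in the numerators cancel with those appearing through $f''$, leaving
\begin{equation*}
    \frac{x^{2}}{y^{3}} f''\!\left(\frac{x}{y}\right) + \frac{(1-x)^{2}}{(1-y)^{3}} f''\!\left(\frac{1-x}{1-y}\right) = \frac{1}{y} + \frac{1}{1-y}\ .
\end{equation*}
So the bound becomes independent of $x$, and the task reduces to showing $1/y + 1/(1-y) \geq 4$ for all $y \in (0,1)$. This is an elementary one-variable calculus fact (or an application of AM-HM): the function is symmetric about $y = 1/2$, and differentiation shows its unique critical point on $(0,1)$ is $y = 1/2$, where it attains the value $4$. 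Equality at $y = 1/2$ shows that $L_{f} = 4$ is the largest constant consistent with \eqref{eq:lambda1}. There is no serious obstacle here; the only mild subtlety is observing that, unlike in Proposition~\ref{prop:pinsker-KL} where the bound is independent of $y$ and extremized in $x$, the role of the two variables is swapped when using \eqref{eq:lambda1}.
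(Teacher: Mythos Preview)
Your proposal is correct and essentially identical to the paper's proof: both compute $f''(t) = t^{-2}$, substitute into \eqref{eq:lambda1}, observe the cancellation yielding $\frac{1}{y} + \frac{1}{1-y}$, and note the minimum value $4$ is attained at $y = 1/2$.
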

\begin{proof}
    We have $f''(t) = t^{-2}$.
    Thus,
    $$ 4 \leq \frac{x^2}{y^3} \frac{y^2}{x^2} + \frac{(1-x)^2}{(1-y)^3} \frac{(1-y)^2}{(1-x)^2} = \frac{1}{y} + \frac{1}{1-y}\ ,$$
    where equality is attained at $y = 1/2$.
\end{proof}

With $\mbf{p},\mbf{q} \in \mbb{R}^{\cX}_{\geq 0}$ such that $\Vert\mbf{p}\Vert_{1} = \Vert\mbf{q}\Vert_{1} = c$, Propositions~\ref{prop:pinsker-KL} and~\ref{prop:pinsker-reverseKL} both imply
\begin{equation*}
    D(\mbf{p}||\mbf{q}) \geq \frac{2}{c \ln(2)}\mrm{TV}(\mbf{p},\mbf{q})^{2} \ ,
\end{equation*}
where $\ln(2)$ accounts for the conversion to base $2$.
In the case $c=1$, this recovers Pinsker's inequality.

We next consider R\'{e}nyi's information gain of order $\alpha$, also known as Vajda's relative information.\footnote{We note that our definition has an extra term $-\frac{\alpha}{\alpha-1}(t-1)$ from the definition in other works, e.g. \cite{Gilardoni-2010a}. Note this is irrelevant for the $f$-divergence by Item 3 of Fact \ref{fact:f-div-properties}.}
\begin{proposition}[R\'{e}nyi's information gain]\label{prop:pinsker-renyi-uni}
    Let $\alpha \in \mbb{R}$ and define $f_{\alpha} : (0, +\infty) \to \mbb{R}$ as
    \begin{equation}\label{eq:fa}
        f_{\alpha}(t) = \begin{cases}
            -\ln t + t - 1 & \text{if}\ \alpha = 0\ ,\\
            t \ln t - t + 1 & \text{if}\ \alpha = 1\ ,\\
            \frac{t^\alpha - 1 - \alpha(t-1)}{\alpha (\alpha-1)} & \text{otherwise}\ .
        \end{cases}
    \end{equation}
    Then, $L_{f_\alpha} = 1$ satisfies \eqref{eq:lambda0}.
    Moreover, $L_{f_\alpha} = 4$ satisfies \eqref{eq:lambda1} when $\alpha \in [-1, 0]$ and \eqref{eq:lambda0} when $\alpha \in [1, 2]$.
\end{proposition}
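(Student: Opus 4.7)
My plan is to compute $f_\alpha''$ once and then reduce each of the three assertions to an elementary estimate on the resulting expression. A direct calculation in each branch of \eqref{eq:fa} gives $f_\alpha''(t) = t^{\alpha-2}$ uniformly in $\alpha \in \mbb{R}$, so the right-hand side of \eqref{eq:lambda0} simplifies to
\[
R_\alpha(x,y) \coloneq y^{1-\alpha}x^{\alpha-2} + (1-y)^{1-\alpha}(1-x)^{\alpha-2},
\]
while plugging $f_\alpha''$ into \eqref{eq:lambda1} yields $x^{\alpha}y^{-\alpha-1} + (1-x)^{\alpha}(1-y)^{-\alpha-1}$, which is precisely $R_{1-\alpha}(y,x)$. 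This built-in symmetry reduces the bound claimed for \eqref{eq:lambda1} on $\alpha \in [-1,0]$ to the bound for \eqref{eq:lambda0} on $\alpha \in [1,2]$ (since $1 - [-1,0] = [1,2]$ and the domain $(0,1)^2$ is symmetric in its coordinates). Hence only two inequalities remain to verify: $R_\alpha \geq 1$ for all $\alpha \in \mbb{R}$, and $R_\alpha \geq 4$ whenever $\alpha \in [1,2]$.

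For the universal $L=1$ bound, I would change variables to $u \coloneq x/y$ and $v \coloneq (1-x)/(1-y)$, so that $R_\alpha(x,y) = u^{\alpha-2}/y + v^{\alpha-2}/(1-y)$ subject to the linear constraint $yu + (1-y)v = 1$. This constraint rules out both $u,v < 1$ and both $u,v > 1$, leaving only the three admissible configurations $u = v = 1$, $u > 1 > v$, and $v > 1 > u$. In the first case one has $R_\alpha = 1/y + 1/(1-y) \geq 4$. In the other two, exactly one of $u^{\alpha-2}$ or $v^{\alpha-2}$ is $\geq 1$: for $\alpha \geq 2$ it is the term whose base is $> 1$, and for $\alpha < 2$ it is the term whose base is $< 1$, since $t \mapsto t^{\alpha-2}$ is then decreasing through $1$ at $t=1$. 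Combined with $1/y,\,1/(1-y) > 1$, that single term already exceeds $1$, establishing $R_\alpha > 1$ for every real $\alpha$.

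For $R_\alpha \geq 4$ on $\alpha \in [1,2]$, applying AM-GM to the two summands of $R_\alpha$ gives
\[
R_\alpha(x,y) \geq 2\bigl[y(1-y)\bigr]^{(1-\alpha)/2}\bigl[x(1-x)\bigr]^{(\alpha-2)/2}.
\]
On $\alpha \in [1,2]$ both exponents are nonpositive, so the elementary bounds $y(1-y),\,x(1-x) \leq 1/4$ yield $[y(1-y)]^{(1-\alpha)/2} \geq 2^{\alpha-1}$ and $[x(1-x)]^{(\alpha-2)/2} \geq 2^{2-\alpha}$, whose product is $2$. Hence $R_\alpha \geq 4$ (tight at $x=y=1/2$), and the symmetry identity from the first paragraph immediately promotes this into the claimed $L=4$ bound for \eqref{eq:lambda1} on $\alpha \in [-1, 0]$. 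The step I expect to require the most care is the case analysis in the universal $L=1$ bound, as the sign of $\alpha - 2$ governs which of the two summands carries the weight; the other steps reduce to one-line calculations once the correct substitution and symmetry are in place.
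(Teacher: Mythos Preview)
Your proof is correct. The $L=1$ argument is essentially the same idea as the paper's (show that one of the two summands already exceeds $1$ by comparing the ratios $x/y$ and $(1-x)/(1-y)$ to $1$ and tracking the sign of the exponent), though your substitution $u=x/y$, $v=(1-x)/(1-y)$ with the convex-combination constraint $yu+(1-y)v=1$ packages it a bit more cleanly.

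The $L=4$ part is genuinely different and considerably more elementary than the paper's route. The paper handles $\alpha\in(1,2)$ by computing the full Hessian of $h_0$, factoring it as a Hadamard product, and invoking the Schur product theorem to certify convexity before locating the unique interior critical point at $(1/2,1/2)$; it then treats $\alpha=1,2$ and the $\eqref{eq:lambda1}$ range $\alpha\in[-1,0]$ by separate but parallel computations. Your AM--GM argument collapses all of this to two lines: the nonpositivity of both exponents $(1-\alpha)/2$ and $(\alpha-2)/2$ on $[1,2]$ turns the bound $x(1-x),\,y(1-y)\le 1/4$ directly into $R_\alpha\ge 4$, uniformly in $\alpha$. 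Your symmetry identity $h_1^{(\alpha)}(x,y)=R_{1-\alpha}(y,x)$ then dispatches the $\eqref{eq:lambda1}$ case for $\alpha\in[-1,0]$ without any additional work, whereas the paper repeats the Hessian analysis. The paper's approach does exhibit the global convexity of $h_0$ (potentially useful elsewhere), but for the stated proposition your argument is both shorter and entirely self-contained.
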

In the above, $L_{f_\alpha} = 1$ similarly satisfies \eqref{eq:lambda1}.
The inequalities for KL divergence in Propositions~\ref{prop:pinsker-KL} and~\ref{prop:pinsker-reverseKL} are equivalent to $L_{f_\alpha} = 4$ with $\alpha \in \{1,0\}$ in Proposition~\ref{prop:pinsker-renyi-uni}, and may alternatively be viewed as direct corollaries of it.
Similarly, $\chi^2$-divergence is a special case of R\'{e}nyi's information gain with $\alpha \in \{-1, 2\}$.

\begin{corollary}[Hellinger divergence]\label{cor:pinsker-hellinger-div-uni}
    Let $\alpha \in (0, +\infty)$ and define $f_{\alpha} : (0, +\infty) \to \mbb{R}$ as
    \begin{equation}\label{eq:fah}
        f_{\alpha}(t) = \begin{cases}
            t \ln t & \text{if}\ \alpha = 1\ ,\\
            \frac{t^\alpha - 1 }{\alpha - 1} & \text{otherwise}\ .
        \end{cases}
    \end{equation}
    Then, \eqref{eq:lambda0} is satisfied by $L_{f_\alpha} = \alpha$ and $L_{f_\alpha} = 4\alpha$ when $\alpha \in [1, 2]$.
\end{corollary}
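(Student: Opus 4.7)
The plan is to reduce this corollary to Proposition~\ref{prop:pinsker-renyi-uni} essentially by a one-line comparison of second derivatives. For $\alpha \in (0,+\infty) \setminus \{1\}$, the Hellinger generator $f^H_\alpha(t) = (t^\alpha - 1)/(\alpha - 1)$ and the Rényi information-gain generator $f^R_\alpha(t) = (t^\alpha - 1 - \alpha(t-1))/(\alpha(\alpha - 1))$ differ by a linear function of $t$, so a direct computation yields $(f^H_\alpha)''(t) = \alpha\,(f^R_\alpha)''(t) = \alpha\, t^{\alpha - 2}$. Consequently, the right-hand side of \eqref{eq:lambda0} evaluated at $f = f^H_\alpha$ equals exactly $\alpha$ times its value at $f = f^R_\alpha$.

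Since $\alpha > 0$, this positive scaling preserves inequalities: any lower bound valid for $f^R_\alpha$ transfers to a lower bound $\alpha$ times as large for $f^H_\alpha$. Proposition~\ref{prop:pinsker-renyi-uni} then supplies what is needed. The universal bound $L_{f^R_\alpha} = 1$ (valid for all real $\alpha$) gives $L_{f^H_\alpha} = \alpha$ for every $\alpha \in (0,+\infty)\setminus\{1\}$, and the sharper bound $L_{f^R_\alpha} = 4$ on $\alpha \in [1,2]$ gives $L_{f^H_\alpha} = 4\alpha$ on that interval.

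It remains to handle $\alpha = 1$ separately, because there the Hellinger generator is defined piecewise as $f^H_1(t) = t\ln t$, which is the KL-divergence generator. Here Proposition~\ref{prop:pinsker-KL} directly gives $L_{f^H_1} = 4 = 4\alpha$ satisfying \eqref{eq:lambda0}, and the weaker bound $L_{f^H_1} = 1 = \alpha$ is then immediate.

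I do not anticipate a genuine obstacle: the entire argument is a scaling observation plus citation of Propositions~\ref{prop:pinsker-renyi-uni} and~\ref{prop:pinsker-KL}. The only step warranting explicit verification is the identity $(f^H_\alpha)''(t) = \alpha\,(f^R_\alpha)''(t)$, equivalently the fact that $f^H_\alpha - \alpha f^R_\alpha$ is affine in $t$; this is a one-line calculation and, by Item~3 of Fact~\ref{fact:f-div-properties}, could alternatively be phrased as the statement that $f^H_\alpha$ and $\alpha f^R_\alpha$ generate the same $f$-divergence, which makes the reduction to Proposition~\ref{prop:pinsker-renyi-uni} conceptually natural.
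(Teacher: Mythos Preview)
Your proposal is correct and takes essentially the same approach as the paper: both reduce to Proposition~\ref{prop:pinsker-renyi-uni} by observing that $f^H_\alpha$ and $\alpha f^R_\alpha$ differ by an affine function (equivalently, that their second derivatives coincide up to the factor $\alpha$), so the condition \eqref{eq:lambda0} scales accordingly. Your separate treatment of $\alpha = 1$ via Proposition~\ref{prop:pinsker-KL} is fine but unnecessary, since Proposition~\ref{prop:pinsker-renyi-uni} already covers $\alpha = 1$ and the scaling argument goes through there as well (the generators $t\ln t$ and $t\ln t - t + 1$ differ by an affine term).
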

\begin{proof}
    This follows from Proposition~\ref{prop:pinsker-renyi-uni} by multiplying by $\alpha$ and noting the extra term $\alpha(t-1)$ is irrelevant for the $f$-divergence by Item 3 of Fact \ref{fact:f-div-properties}.
\end{proof}

\begin{corollary}[Pearson $\chi^2$-divergence]\label{cor:pinsker-chisq}
    Let $f(t) = t^2 - 1$.
    Then, $L_f = 8$ satisfies \eqref{eq:lambda0}.
\end{corollary}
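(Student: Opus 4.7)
The plan is to verify the bound \eqref{eq:lambda0} by direct computation, since $f(t)=t^{2}-1$ has the especially nice property that $f''$ is constant. Compute $f''(t) = 2$, so the right-hand side of \eqref{eq:lambda0} becomes
\begin{equation*}
    \frac{1}{y}f''\!\left(\frac{x}{y}\right) + \frac{1}{1-y}f''\!\left(\frac{1-x}{1-y}\right) = \frac{2}{y} + \frac{2}{1-y} = \frac{2}{y(1-y)},
\end{equation*}
which is independent of $x$. The right-hand side is therefore minimized over $(x,y)\in(0,1)^2$ exactly when $y(1-y)$ is maximized, i.e. at $y=1/2$, giving the minimum value $2/(1/4)=8$. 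This shows $L_f=8$ satisfies \eqref{eq:lambda0}, and in fact is the largest such constant.

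Equivalently, one can obtain the same conclusion as an immediate special case of Corollary~\ref{cor:pinsker-hellinger-div-uni} at $\alpha=2$: there $f_{2}(t) = (t^{2}-1)/(2-1) = t^{2}-1$ matches our $f$, and the corollary provides $L_{f_{2}} = 4\alpha = 8$ under \eqref{eq:lambda0} since $\alpha=2\in[1,2]$. There is essentially no obstacle here — the only thing to watch is recognizing that the tightness of the bound follows from the simple AM-GM style observation $y(1-y)\leq 1/4$, with equality at $y = 1/2$, so that $L_f = 8$ is both valid and optimal among constants satisfying \eqref{eq:lambda0} for this $f$.
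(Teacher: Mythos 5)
Your proof is correct. Your primary argument — direct verification of \eqref{eq:lambda0} by computing $f''(t)=2$, collapsing the right-hand side to $\tfrac{2}{y(1-y)}$, and minimizing over $y$ — is a genuinely more elementary route than the paper's, which obtains the corollary by specializing Proposition~\ref{prop:pinsker-renyi-uni} (R\'enyi's information gain) to $\alpha=2$ and rescaling by a factor of $2$. The direct computation buys self-containedness and makes tightness ($L_f=8$ is the best constant, attained at $y=1/2$ independent of $x$) transparent, a fact the paper does not remark upon in this corollary; the paper's derivation buys consistency with the unified treatment of the whole R\'enyi/Hellinger family and avoids re-deriving anything. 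Your secondary route via Corollary~\ref{cor:pinsker-hellinger-div-uni} at $\alpha=2$ is essentially the paper's argument (that corollary is itself a restatement of Proposition~\ref{prop:pinsker-renyi-uni}), so you have in effect given both.
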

\begin{proof}
    This follows from Proposition~\ref{prop:pinsker-renyi-uni} by setting $\alpha = 2$ and multiplying by $2$.
\end{proof}

\begin{corollary}[Neyman $\chi^2$-divergence]\label{cor:pinsker-neyman}
    Let $f(t) = t^{-1} - 1$.
    Then, $L_f = 8$ satisfies \eqref{eq:lambda1}.
\end{corollary}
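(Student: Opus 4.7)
The plan is to derive this as a direct corollary of Proposition~\ref{prop:pinsker-renyi-uni} at $\alpha = -1$, exactly parallel to how Corollary~\ref{cor:pinsker-chisq} was derived from $\alpha = 2$. The observation is that the Neyman $\chi^{2}$ generator $f(t) = t^{-1}-1$ is, up to an irrelevant linear term and an overall factor of $2$, the same function as $f_{-1}$ from \eqref{eq:fa}.

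Concretely, I would first evaluate \eqref{eq:fa} at $\alpha = -1$ to obtain
\begin{equation*}
f_{-1}(t) = \frac{t^{-1} - 1 + (t-1)}{2}\ ,
\end{equation*}
so that $2\tsp f_{-1}(t) = (t^{-1} - 1) + (t-1) = f(t) + (t-1)$. Differentiating twice, the linear piece $t-1$ drops out and we get $f''(t) = 2\tsp f_{-1}''(t)$.

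Next, I would invoke the fact that condition~\eqref{eq:lambda1} is \emph{linear} in $f''$: the right-hand side is a weighted sum of values of $f''$. Proposition~\ref{prop:pinsker-renyi-uni} already guarantees $L_{f_{-1}} = 4$ satisfies~\eqref{eq:lambda1} since $-1 \in [-1,0]$, i.e.\ for all $x,y \in (0,1)$,
\begin{equation*}
4 \leq \frac{x^{2}}{y^{3}} f_{-1}''\!\left(\frac{x}{y}\right) + \frac{(1-x)^{2}}{(1-y)^{3}} f_{-1}''\!\left(\frac{1-x}{1-y}\right)\ .
\end{equation*}
Multiplying both sides by $2$ and using $f'' = 2\tsp f_{-1}''$ converts this immediately into~\eqref{eq:lambda1} for $f$ with $L_{f} = 8$.

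There is no real obstacle here; the only thing worth flagging is making explicit that while Item~3 of Fact~\ref{fact:f-div-properties} (invariance under addition of $c(t-1)$) shows $D_{f}$ itself is unaffected by the linear term, what is actually used in the bound is the stronger pointwise fact that such a term vanishes under the second derivative, so that the inequality~\eqref{eq:lambda1} for $2\tsp f_{-1}$ coincides verbatim with~\eqref{eq:lambda1} for $f$.
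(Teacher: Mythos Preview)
Your proposal is correct and follows essentially the same approach as the paper: the paper's proof simply states that the result follows from Proposition~\ref{prop:pinsker-renyi-uni} by setting $\alpha = -1$ and multiplying by $2$, which is exactly what you do, with the added detail of spelling out why the linear term $(t-1)$ is harmless at the level of $f''$.
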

\begin{proof}
    This follows from Proposition~\ref{prop:pinsker-renyi-uni} by setting $\alpha = -1$ and multiplying by $2$.
\end{proof}

Corollaries~\ref{cor:pinsker-chisq} and~\ref{cor:pinsker-neyman} recover the input independent lower bound in \cite[Proposition 7.15]{polyanskiy-2023a} when $c = 1$.
\begin{equation*}
    \chi^{2}(\mbf{p}\Vert \mbf{q}) \geq \frac{4}{c}\, \mrm{TV}(\mbf{p},\mbf{q})^{2} \ .
\end{equation*}

We conclude this subsection with two more examples where Theorem~\ref{thm:pinsker-uni} is applied directly.
\begin{proposition}[Symmetric $\chi^2$-divergence]\label{prop:pinsker-symchisq}
    Let $f(t) = \frac{(t-1)^2(t+1)}{t}$.
    Then, $L_f = 16$ satisfies \eqref{eq:lambda0}.
\end{proposition}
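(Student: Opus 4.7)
The plan is to verify condition~\eqref{eq:lambda0} directly with $L_f = 16$. Polynomial division gives $f(t) = (t-1)^2(t+1)/t = t^2 - t - 1 + t^{-1}$, so $f''(t) = 2 + 2t^{-3}$. Substituting into~\eqref{eq:lambda0}, the claim reduces to
\[
h(x,y) \coloneq \frac{2}{y} + \frac{2y^2}{x^3} + \frac{2}{1-y} + \frac{2(1-y)^2}{(1-x)^3} \geq 16 \qquad \forall (x,y) \in (0,1)^2,
\]
and a direct check shows $h(1/2,1/2) = 4+4+4+4 = 16$, so the constant is tight.

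I next establish the inequality by a global minimization argument. Every summand of $h$ blows up as $(x,y)$ approaches the boundary of $(0,1)^2$, so $h$ is coercive and attains its global minimum at an interior critical point. Solving $\partial_x h = 0$ yields the curve $y = x^2/s(x)$, where $s(x) \coloneq x^2 + (1-x)^2$, so $1 - y = (1-x)^2/s(x)$. Substituting this relation into $\partial_y h = 0$ and using the identities $x^4 - (1-x)^4 = (2x-1)s(x)$ and $1/x - 1/(1-x) = -(2x-1)/[x(1-x)]$, the equation factors as
\[
(2x-1)\bigl[\,s(x)^4 - 2 x^3 (1-x)^3\,\bigr] = 0.
\]
The first factor vanishes only at $x = 1/2$, yielding the candidate $(1/2, 1/2)$. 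For the second factor, substitute $u = x(1-x) \in (0, 1/4]$ (so $s(x) = 1 - 2u$ and $x^3(1-x)^3 = u^3$) to obtain $g(u) \coloneq (1-2u)^4 - 2u^3$; since $g'(u) = -8(1-2u)^3 - 6u^2 < 0$ on $(0,1/2)$ while $g(1/4) = 1/16 - 1/32 = 1/32 > 0$, the bracket is strictly positive throughout $(0,1/4]$ and so has no zero. Hence $(1/2, 1/2)$ is the unique interior critical point and the global minimum, giving $h \geq 16$ as required.

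The main obstacle is the factorization step: one must substitute $y = x^2/s(x)$ into all four terms of $\partial_y h$ and extract the common $(2x-1)$ factor via the two identities above; once this is in hand, the remaining single-variable inequality $(1-2u)^4 > 2u^3$ on $(0,1/4]$ is routine. As a sanity check on the constant, note that $f(t) = (t-1)^2 + (t-1)^2/t$, so that $D_f(\mbf{p}\Vert\mbf{q}) = \chi^2(\mbf{p}\Vert\mbf{q}) + \chi^2(\mbf{q}\Vert\mbf{p})$; summing the Pinsker bounds of Corollaries~\ref{cor:pinsker-chisq} and~\ref{cor:pinsker-neyman} independently yields $D_f(\mbf{p}\Vert\mbf{q}) \geq (8/c)\,\mrm{TV}(\mbf{p},\mbf{q})^2$, which is consistent with $L_f = 16$.
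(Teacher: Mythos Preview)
Your proof is correct and follows essentially the same approach as the paper: both compute $f''(t)=2+2t^{-3}$, write out $h_0(x,y)$, note coercivity so that the minimum is at an interior critical point, solve $\partial_x h_0=0$ to get $y=x^2/s(x)$, substitute into $\partial_y h_0=0$, and factor out $(2x-1)$. The only substantive difference is that the paper simply asserts the remaining factor has no real roots, whereas you supply a clean argument via the substitution $u=x(1-x)$ and the monotonicity of $g(u)=(1-2u)^4-2u^3$; your sanity check via $D_f=\chi^2(\mbf{p}\Vert\mbf{q})+\chi^2(\mbf{q}\Vert\mbf{p})$ is also noted in the paper as an alternative route.
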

Alternatively, Proposition~\ref{prop:pinsker-symchisq} above may be recovered from Corollaries~\ref{cor:pinsker-chisq} and~\ref{cor:pinsker-neyman}.
\begin{proposition}[Arithmetic-geometric mean]\label{prop:pinsker-mean}
    Let $f(t) = \left( \frac{t+1}{2} \right) \ln \left( \frac{t+1}{2 \sqrt{t}} \right)$.
    Then, $L_f = 1$ satisfies \eqref{eq:lambda0}.
\end{proposition}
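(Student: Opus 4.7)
The plan is to reduce the claim to a polynomial inequality via direct computation, and then resolve that polynomial inequality with an explicit SOS-type identity. I would begin by differentiating $f$ twice to obtain
\[ f''(t) = \frac{t^{2}+1}{4\tsp t^{2}(t+1)} \ , \]
which in particular gives $f''(1) = 1/4$, consistent with the predicted tight constant $L_f = 4 f''(1) = 1$. Substituting $a = x/y$ and $b = (1-x)/(1-y)$ into the right-hand side of \eqref{eq:lambda0} and simplifying, the claim is equivalent to
\[ \frac{x^{2}+y^{2}}{4\tsp x^{2}(x+y)} + \frac{(1-x)^{2}+(1-y)^{2}}{4\tsp (1-x)^{2}(2-x-y)} \geq 1 \qquad \forall\,(x,y)\in(0,1)^{2} \ , \]
with equality anticipated at $x=y=1/2$.

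To get a more tractable form, I would apply the partial-fraction decomposition $\frac{t^{2}+1}{t^{2}(t+1)} = \frac{1}{t^{2}} - \frac{1}{t} + \frac{2}{t+1}$, turning the inequality into $\frac{y}{x^{2}} + \frac{1-y}{(1-x)^{2}} + \frac{4}{(x+y)(2-x-y)} \geq 4 + \frac{1}{x(1-x)}$. Using the algebraic identities $\frac{y}{x^{2}} + \frac{1-y}{(1-x)^{2}} - \frac{1}{x(1-x)} = \frac{(1-2x)(y-x)}{x^{2}(1-x)^{2}}$ and $\frac{4}{(x+y)(2-x-y)} - 4 = \frac{4(x+y-1)^{2}}{(x+y)(2-x-y)}$, and then changing variables to $u = 2x-1$, $v = y-x$ so that $x(1-x) = (1-u^{2})/4$ and $(x+y)(2-x-y) = 1-(u+v)^{2}$, the claim reduces after clearing the positive denominators to
\[ (u+v)^{2}(1-u^{2})^{2} \geq 4uv\bigl(1-(u+v)^{2}\bigr) \]
for $|u| < 1$ and $|u+v| < 1$. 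When $uv \leq 0$, the right side is non-positive while the left side is non-negative, so the inequality is trivial.

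The core of the argument is the algebraic identity
\[ (u+v)^{2}(1-u^{2})^{2} - 4uv\bigl(1-(u+v)^{2}\bigr) = \bigl[u(u^{2}-1) + v(u^{2}+1)\bigr]^{2} + 4uv^{2}(u+v) \ , \]
which is routine to verify by polynomial expansion. Assuming $uv \geq 0$, the values $u$ and $v$ share a sign, so $u+v$ has the same sign and therefore $uv^{2}(u+v) \geq 0$. Both summands on the right-hand side are thus non-negative, establishing the inequality and completing the proof.

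The main obstacle is guessing this SOS-type identity: the quartic in $u$ (with $v$ as a parameter) has no visible factorization, and splitting it into a square plus a manifestly non-negative monomial requires some experimentation. I would arrive at the decomposition by observing that $(1-u^{2})^{2}+4uv$ equals the perfect square $(1+u^{2})^{2}$ when $v = u$, which suggests $\bigl[u(u^{2}-1)+v(u^{2}+1)\bigr]^{2}$ as the natural dominant square term; the residual then collapses cleanly to $4uv^{2}(u+v)$.
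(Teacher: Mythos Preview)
Your proof is correct and takes a genuinely different route from the paper. The paper proceeds by brute-force critical-point analysis: it writes out $h_0(x,y)$, computes the gradient, verifies that the only interior critical point is $x=y=1/2$ (where $h_0=1$), and separately checks that $h_0$ is bounded below by $1$ along the boundaries $y\in\{0,1\}$ and blows up as $x\to 0,1$. Your approach instead reduces the claim to a single polynomial inequality in the centered variables $u=2x-1$, $v=y-x$, and then certifies it by the explicit decomposition
\[
(u+v)^2(1-u^2)^2 - 4uv\bigl(1-(u+v)^2\bigr) = \bigl[u(u^2-1)+v(u^2+1)\bigr]^2 + 4uv^2(u+v),
\]
whose non-negativity is manifest once $uv\ge 0$. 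This buys you a self-contained algebraic certificate with no boundary analysis and no solving of gradient systems; the price is that the decomposition has to be guessed, as you note. The paper's method is more mechanical and would generalize more readily to other $f$, while yours gives a cleaner and more informative proof for this particular case (one can read off, for instance, exactly where equality can occur).
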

In both Propositions~\ref{prop:pinsker-symchisq} and \ref{prop:pinsker-mean}, we have that \eqref{eq:lambda1} is additionally satisfied via a similar proof.

\begin{table}[H]
    \centering
    \bgroup
    \def\arraystretch{1.5}
    \begin{tabular}{rlrll}
        Divergence & $f(t)$ & $L_f$ & Conditions & Reference \\
        \hline
        KL-divergence & $t \ln t$ & $4$ && Proposition~\ref{prop:pinsker-KL} \\
        Reverse KL-divergence & $-\ln t$ & $4$ && Proposition~\ref{prop:pinsker-reverseKL}\\
        \multirow{2}{*}{R\'{e}nyi's information gain} & \multirow{2}{*}{$\frac{t^\alpha-1}{\alpha(\alpha-1)}$} & $4$ & $\alpha \in [-1, 0] \cup [1, 2]$ & \multirow{2}{*}{Proposition~\ref{prop:pinsker-renyi-uni}}\\
        && $1$ & otherwise &\\
        \multirow{2}{*}{Hellinger divergence} & \multirow{2}{*}{$\frac{t^\alpha-1}{\alpha-1}$} & $4\alpha$ & $\alpha \in [1, 2]$ & \multirow{2}{*}{Corollary~\ref{cor:pinsker-hellinger-div-uni}}\\
        && $\alpha$ & $\alpha \in (0,1) \cup (2, +\infty)$ &\\
        Pearson $\chi^2$-divergence & $t^2 - 1$ & $8$ && Corollary~\ref{cor:pinsker-chisq}\\
        Neyman $\chi^2$-divergence & $\frac{1}{t} - 1$ & $8$ && Corollary~\ref{cor:pinsker-neyman}\\
        Symmetric $\chi^2$-divergence & $\frac{(t-1)^2(t+1)}{t}$ & $16$ && Proposition~\ref{prop:pinsker-symchisq}\\
        Arithmetic-geometric mean & $\left( \frac{t+1}{2} \right) \ln \left( \frac{t+1}{2 \sqrt{t}} \right)$ & $1$ && Proposition~\ref{prop:pinsker-mean}\\
    \end{tabular}
    \egroup\vspace*{1em}
    \caption{Summary of bounds obtained by Theorem~\ref{thm:pinsker-uni}}
    \label{tab:uni}
\end{table}

\subsection{\texorpdfstring{$f$}{f}-Divergence Pinsker Inequalities from Multivariate Taylor's Theorem}

Despite the multitude of successful applications of Theorem~\ref{thm:pinsker-uni} showcased in Table~\ref{tab:uni}, there exist divergences for which it fails to obtain the tightest bound.
One simple example is Jeffrey's divergence, generated by $f(t) = (t-1)\ln t$.
This is a sum of KL and reverse KL divergences and as such its lower bound should be the sum of their bounds.
On the contrary, $L_f = 8$ satisfies neither \eqref{eq:lambda0} nor \eqref{eq:lambda1}.

To obtain tighter bounds in such cases, we generalize Theorem~\ref{thm:pinsker-uni} to accept a larger parametrized family of conditions, of which \eqref{eq:lambda0} and \eqref{eq:lambda1} are the extreme points.
In particular, we note that the proof of \ref{thm:pinsker-uni} considered differentiation in the direction along a single variable, whereas we can instead traverse in an arbitrary direction via multivariate Taylor's theorem.
We present the theorem and its proof, several example applications, and a discussion of the limitations of this method.

We state the first-order multivariate Taylor's theorem with the integral remainder form, which is all we will need. This form of multivariate Taylor's theorem may be found in \cite{Marsden-2003a} and, up to a change in parameterization, the more general case may be found in \cite{Hormander-2003a}.
\begin{lemma}\label{lem:first-order-Taylor's}
    Let $f: \mbb{R}^{n} \to \mbb{R}$ be $C^{2}$ on an open convex set $S$. If $\mbf{a} \in S$ and $\mbf{a} + \mbf{h} \in S$ then 
    \begin{align*}
        f(\mbf{a}+\mbf{h}) = f(\mbf{a}) + \sum_{i \in [n]} h_{i} \frac{\partial f(\mbf{a})}{x_{i}} + R_{1}(\mbf{a},\mbf{h}) \ , 
    \end{align*}
    where the remainder in integral form is
    $$ R_{1}(\mbf{a},\mbf{h}) = \sum_{i,j \in [n]} \int_{0}^{1} (1-t) h_{i}h_{j} \frac{\partial^{2}f(\mbf{a}+t\mbf{h})}{\partial x_{i} \partial x_{j}}  \, dt \ . $$
\end{lemma}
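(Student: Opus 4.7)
The plan is to reduce multivariate Taylor's theorem to the univariate version already recorded as Lemma~\ref{lem:univariate-Taylor}, via the standard trick of restricting $f$ to the line segment from $\mbf{a}$ to $\mbf{a}+\mbf{h}$. Concretely, I would define $g:[0,1] \to \mbb{R}$ by $g(t) \coloneq f(\mbf{a}+t\mbf{h})$. Since $S$ is convex and both endpoints lie in $S$, the whole segment $\{\mbf{a}+t\mbf{h} : t \in [0,1]\}$ is contained in $S$, so $g$ is well-defined. Because $f \in C^{2}(S)$ and the affine map $t \mapsto \mbf{a}+t\mbf{h}$ is smooth, the chain rule gives $g \in C^{2}([0,1])$.

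Next I would compute $g'$ and $g''$ explicitly by the chain rule. The first derivative is $g'(t) = \sum_{i \in [n]} h_{i}\,\partial_{x_{i}} f(\mbf{a}+t\mbf{h})$, and differentiating a second time yields $g''(t) = \sum_{i,j \in [n]} h_{i}h_{j}\,\partial_{x_{i}}\partial_{x_{j}} f(\mbf{a}+t\mbf{h})$, where the finiteness of the sums (since $S \subset \mbb{R}^{n}$ with $n$ finite) means no convergence issues arise. In particular, $g(0) = f(\mbf{a})$, $g(1) = f(\mbf{a}+\mbf{h})$, and $g'(0) = \sum_{i} h_{i}\,\partial_{x_{i}} f(\mbf{a})$.

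Then I would apply Lemma~\ref{lem:univariate-Taylor} to $g$ on the interval $[0,1]$, which gives
\begin{equation*}
    g(1) = g(0) + g'(0)\,(1-0) + \int_{0}^{1} g''(t)\,(1-t)\,dt\ .
\end{equation*}
Substituting the expressions for $g(0)$, $g(1)$, $g'(0)$, and $g''(t)$ into this identity and using linearity of the integral to pull the finite sum outside produces exactly the stated formula with the claimed remainder $R_{1}(\mbf{a},\mbf{h})$.

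The only nontrivial point — really more a bookkeeping matter than a genuine obstacle — is justifying the $C^{2}$ regularity of $g$ and the interchange of the finite double sum with the integral. Both are immediate: the chain rule transfers $C^{2}$ regularity from $f$ to $g$ since the parametrization is affine, and the sum-integral interchange is valid because the sum is over a finite index set (so it is really just linearity of the integral applied finitely many times). No hypothesis beyond convexity of $S$, the $C^{2}$ assumption on $f$, and $\mbf{a},\mbf{a}+\mbf{h} \in S$ is needed.
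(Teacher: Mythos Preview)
Your proposal is correct. The paper does not actually prove this lemma; it merely cites textbook references (Marsden and H\"ormander) for the result, so there is no paper proof to compare against. Your argument---restricting $f$ to the segment $t \mapsto \mbf{a}+t\mbf{h}$, invoking the univariate Lemma~\ref{lem:univariate-Taylor} on $[0,1]$, and reading off the multivariate expansion via the chain rule---is the standard derivation and is complete as written.
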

The previous lemma may be easier expressed in terms of the Hessian matrix of $f$, $H_{f(t)} := \sum_{i,j} \frac{\partial^{2} f(t)}{\partial x_{i}\partial x_{j}} E_{i,j}$. 
\begin{corollary}\label{cor:first-order-taylor}
     Let $F: \mbb{R}^{n} \to \mbb{R}$ be $C^{2}$ on an open convex set $S$. If $\mbf{a} \in S$ and $\mbf{a} + \mbf{h} \in S$ then 
     \begin{align*}
         & F(\mbf{a}+\mbf{h}) = F(\mbf{a}) + \langle \grad F (\mbf{a}), \mbf{h}\rangle  + \int^{1}_{0} (1-t) \mbf{h}^{T} H_{F}\vert_{\mbf{a}+t\mbf{h}} \mbf{h} \, dt \ ,
     \end{align*}
     where $x^{\odot n}$ applies power to the $n$ in an entry-wise fashion.
\end{corollary}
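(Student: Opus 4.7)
The plan is to derive Corollary~\ref{cor:first-order-taylor} directly from Lemma~\ref{lem:first-order-Taylor's} by repackaging the first-order term as a gradient inner product and the remainder term as a Hessian quadratic form. Nothing substantive needs to be proved beyond what the lemma already gives; the work is purely notational.

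First, I would observe that the linear part of the expansion in Lemma~\ref{lem:first-order-Taylor's} is, by definition of the gradient,
\begin{equation*}
    \sum_{i \in [n]} h_{i}\, \frac{\partial F(\mbf{a})}{\partial x_{i}} = \ip{\grad F(\mbf{a})}{\mbf{h}}\ ,
\end{equation*}
which immediately reproduces the second summand of the claimed identity.

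Next, I would rewrite the remainder term. Since the sum over $i,j \in [n]$ is finite and the integrand is continuous on $[0,1]$, I may interchange the sum and the integral to obtain
\begin{equation*}
    R_{1}(\mbf{a},\mbf{h}) = \int_{0}^{1}(1-t) \sum_{i,j \in [n]} h_{i} h_{j}\, \frac{\partial^{2} F(\mbf{a}+t\mbf{h})}{\partial x_{i}\partial x_{j}}\, dt\ .
\end{equation*}
Using the definition of the Hessian $H_{F}\vert_{\mbf{y}} = \sum_{i,j} \frac{\partial^{2} F(\mbf{y})}{\partial x_{i}\partial x_{j}} E_{i,j}$ from the statement, the bilinear form $(\mbf{h},\mbf{h}) \mapsto \mbf{h}^{\t} H_{F}\vert_{\mbf{y}} \mbf{h}$ is exactly $\sum_{i,j} h_{i}h_{j}\, \partial_{x_{i}}\partial_{x_{j}} F(\mbf{y})$, so the integrand identifies with $(1-t)\, \mbf{h}^{\t} H_{F}\vert_{\mbf{a}+t\mbf{h}}\, \mbf{h}$. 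Substituting back into Lemma~\ref{lem:first-order-Taylor's} yields the claimed identity.

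There is no real obstacle here; the only point to be careful about is justifying the interchange of the finite sum with the integral, which is immediate because $F \in C^{2}$ on the open convex set $S$ containing the segment $\{\mbf{a} + t\mbf{h} : t \in [0,1]\}$, ensuring continuity (hence integrability) of each of the finitely many second partial derivatives along that segment.
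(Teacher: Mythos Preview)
Your proposal is correct and matches the paper's approach: the paper presents this corollary as an immediate notational reformulation of Lemma~\ref{lem:first-order-Taylor's} in terms of the gradient and Hessian, without giving a separate proof. Your write-up simply makes explicit the two identifications the paper leaves implicit.
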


We are now in a position to generalize our Pinsker inequalities.
\begin{theorem}\label{thm:pinsker-multi}
    Let $f:(0,+\infty) \to \mbb{R}$ with $f(1) = 0$ be convex and twice continuously differentiable.
    Suppose there exists $\lambda \in [0, 1]$ such that $L_{f}$ satisfies for all $x,y \in (0, 1)$
    \begin{align}\label{eq:lambda}
        L_f &\leq \left[ (1-\lambda) + \lambda \frac{x}{y} \right]^2 \frac{1}{y} f''\left( \frac{x}{y} \right) + \left[ (1-\lambda) + \lambda \frac{1-x}{1-y} \right]^2 \frac{1}{1-y} f''\left( \frac{1-x}{1-y} \right) \ .
    \end{align}
    Then for all $\mbf{p},\mbf{q} \geq 0$ such that $\Vert\mbf{p}\Vert_{1} = \Vert\mbf{q}\Vert_{1} = c > 0$, we have
    \begin{equation}
        D_{f}(\mbf{p}\Vert \mbf{q}) \geq \frac{L_f}{2c} \mrm{TV}(\mbf{p},\mbf{q})^2 \ .
    \end{equation}
\end{theorem}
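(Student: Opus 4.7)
The plan is to mirror the structure of the proof of Theorem~\ref{thm:pinsker-uni}, but to replace the two separate univariate Taylor expansions (one along the $p$-axis and one along the $q$-axis) with a single multivariate Taylor expansion whose direction is controlled by the new parameter $\lambda$. By the sum-preserving channel argument at the end of the proof of Theorem~\ref{thm:pinsker-uni}, it suffices to handle the binary case $\mbf{p} = (p,\, c-p)^{T}$, $\mbf{q} = (q,\, c-q)^{T}$. As in that proof I set $g(p,q) := D_{f}(\mbf{p}\Vert\mbf{q})$ from \eqref{eq:binary-f-div-mapping}, and observe that the derivative computations already carried out there establish that both $g$ and $\nabla g$ vanish identically on the diagonal $p=q$.

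I would then apply Corollary~\ref{cor:first-order-taylor} to $g$ along the linear segment $\mbf{r}(t) = \mbf{a} + t\mbf{h}$, with the $\lambda$-dependent anchor $\mbf{a} := (a,a)$ where $a := (1-\lambda)q + \lambda p$ lies on the diagonal, and $\mbf{h} := (p,q) - \mbf{a} = \bigl((1-\lambda)(p-q),\, -\lambda(p-q)\bigr)$. Since $g(\mbf{a}) = 0$ and $\nabla g(\mbf{a}) = \mbf{0}$, the expansion collapses to the integral remainder
\begin{equation*}
    D_{f}(\mbf{p}\Vert\mbf{q}) = \int_{0}^{1}(1-t)\,\mbf{h}^{T}\,H_{g}\big\vert_{\mbf{a}+t\mbf{h}}\,\mbf{h}\,dt\,.
\end{equation*}
Note that $\lambda = 0$ gives $\mbf{a}=(q,q)$ and $\mbf{h}=(p-q,0)$, while $\lambda = 1$ gives $\mbf{a}=(p,p)$ and $\mbf{h}=(0,q-p)$, recovering precisely the two univariate directions used in Theorem~\ref{thm:pinsker-uni} and conditions \eqref{eq:lambda0} and \eqref{eq:lambda1}.

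The main technical step is the simplification of the quadratic form $\mbf{h}^{T}H_{g}\mbf{h}$ along the path. Writing $\tilde p := a + t h_{1}$ and $\tilde q := a + t h_{2}$, the three second partial derivatives of $g$ are straightforward extensions of the derivatives already computed in Theorem~\ref{thm:pinsker-uni}, and collecting terms by $f''(\tilde p/\tilde q)$ and $f''((c-\tilde p)/(c-\tilde q))$ shows that each coefficient factors as a perfect square of the form $(h_{1}\tilde q - h_{2}\tilde p)^{2}/\tilde q^{3}$ (and analogously for the other term). Substituting the specific $\mbf{h}$ above, the key algebraic identity
\begin{equation*}
    h_{1}\tilde q - h_{2}\tilde p = (p-q)\bigl[(1-\lambda)\tilde q + \lambda \tilde p\bigr]
\end{equation*}
reduces the first coefficient to $\frac{(p-q)^{2}}{\tilde q}\bigl[(1-\lambda) + \lambda\tilde p/\tilde q\bigr]^{2}$ and the second, symmetrically, to $\frac{(p-q)^{2}}{c-\tilde q}\bigl[(1-\lambda) + \lambda(c-\tilde p)/(c-\tilde q)\bigr]^{2}$.

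Finally, under the change of variables $x := \tilde p/c$ and $y := \tilde q/c$ (both in $(0,1)$), the integrand becomes exactly $(p-q)^{2}/c$ times the right-hand side of \eqref{eq:lambda}, so the hypothesis gives the pointwise lower bound $\mbf{h}^{T}H_{g}\mbf{h} \geq L_{f}(p-q)^{2}/c$ along the path. Using $\int_{0}^{1}(1-t)\,dt = 1/2$ together with Proposition~\ref{prop:tv} then yields $D_{f}(\mbf{p}\Vert\mbf{q}) \geq \frac{L_{f}}{2c}\mrm{TV}(\mbf{p},\mbf{q})^{2}$ in the binary case, and the extension to arbitrary finite dimensions follows by the same data-processing argument used at the end of Theorem~\ref{thm:pinsker-uni}. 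The only real obstacle is spotting the correct $\lambda$-dependent anchor on the diagonal; once it is in hand, the perfect-square factorization of the Hessian quadratic form makes the remainder of the argument essentially mechanical, and it also clarifies why $\lambda \in [0,1]$ is the natural parameter range interpolating between the two extremes of Theorem~\ref{thm:pinsker-uni}.
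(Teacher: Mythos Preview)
Your proposal is correct and follows essentially the same approach as the paper: the same $\lambda$-dependent diagonal anchor $\mbf{a}=\bigl((1-\lambda)q+\lambda p,\,(1-\lambda)q+\lambda p\bigr)$, the same direction $\mbf{h}=(p-q)\bigl((1-\lambda),\,-\lambda\bigr)$, the same multivariate Taylor expansion via Corollary~\ref{cor:first-order-taylor}, and the same reduction to binary vectors plus data-processing extension. Your explicit perfect-square factorization $(h_{1}\tilde q - h_{2}\tilde p)^{2}/\tilde q^{3}$ is a slightly cleaner way to organize the Hessian computation than the paper's direct expansion, but the argument is otherwise identical.
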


\begin{proof}
    We show Theorem~\ref{thm:pinsker-multi} for two-dimensional vectors and use the same argument discussed in the proof of Theorem~\ref{thm:pinsker-uni} to extend to it any finite dimensional vectors $\mbf{p}$ and $\mbf{q}$.
    
    We define a mapping $g$ from $p$ and $q$ to the given binary $f$-divergence.
    \begin{equation*}
        g(p,q) = q f\left(\frac{p}{q}\right) + (c-q) f\left(\frac{c-p}{c-q}\right) = D_{f}(\mbf{p}||\mbf{q})\ .
    \end{equation*}
    If $g$ is undefined at the boundary of $p,q \in [0,c]$, we define it there via its limits, which exist since $f$ is continuous.
    We may compute its gradient and Hessian as follows.
    \begin{align*}
        \grad g = \begin{bmatrix} \frac{\partial g}{\partial p} \\ \frac{\partial g}{\partial q} \end{bmatrix} &=
        \begin{bmatrix}
            f'\!\left(\frac{p}{q}\right) - f'\!\left(\frac{c-p}{c-q}\right) \\
            f\!\left(\frac{p}{q}\right) - \frac{p}{q}f'\!\left(\frac{p}{q}\right) - f\!\left(\frac{c-p}{c-q}\right) + \frac{c-p}{c-q}f'\!\left(\frac{c-p}{c-q}\right)
        \end{bmatrix} \\
        H_g = \begin{bmatrix} \frac{\partial^2 g}{\partial p^2} \!&\! \frac{\partial^2 g}{\partial p \partial q} \\ \frac{\partial^2 g}{\partial q \partial p} \!&\! \frac{\partial^2 g}{\partial q^2}\end{bmatrix} &=
        \begin{bmatrix}
            \frac{1}{q}f''\!\left(\frac{p}{q}\right) + \frac{1}{c-q}f''\!\left(\frac{c-p}{c-q}\right) &
            \frac{-p}{q^2}f''\!\left(\frac{p}{q}\right) - \frac{c-p}{(c-q)^2}f''\!\left(\frac{c-p}{c-q}\right) \\
            \frac{-p}{q^2}f''\!\left(\frac{p}{q}\right) - \frac{c-p}{(c-q)^2}f''\!\left(\frac{c-p}{c-q}\right) &
            \frac{p^2}{q^3}f''\!\left(\frac{p}{q}\right) + \frac{(c-p)^2}{(c-q)^3}f''\!\left(\frac{c-p}{c-q}\right)
        \end{bmatrix}
    \end{align*}
    For an arbitrary $\lambda \in [0, 1]$, define the vectors
    \begin{align*}
        \mbf{a} &= \left( (1-\lambda) q + \lambda p \right) \begin{bmatrix} 1 \\ 1 \end{bmatrix} \ ,\hspace*{-7.5em}&
        \mbf{h} &= (p-q) \begin{bmatrix} 1-\lambda \\ -\lambda \end{bmatrix}\ ,
    \end{align*}
    which ensures that both elements of $\mbf{a} + t\mbf{h}$ are between $p$ and $q$, and hence $g(\mbf{a} + t\mbf{h})$ is well-defined for $t \in [0, 1]$.
    Note also that $g(\mbf{a} + \mbf{h}) = D_{f}(\mbf{p}\Vert\mbf{q})$.
    Since $g \in C^2([0,c]^2)$ and $p,q \in [0,c]$, this lets us express $f$-divergence using multivariate Taylor's theorem (Corollary~\ref{cor:first-order-taylor}).
    \begin{equation*}
        D_{f}(\mbf{p}\Vert\mbf{q}) = (p-q)^2 \int_0^1 (1-t) \left[ (1-\lambda)^2 \frac{\partial^2 g}{\partial p^2} - 2(1-\lambda)\lambda \frac{\partial^2 g}{\partial p \partial q} + \lambda^2 \frac{\partial^2 g}{\partial q^2} \right]_{\mbf{a}+t\mbf{h}} dt\ ,
    \end{equation*}
    where $g(\mbf{a}) = c f(1) = 0$ and $\grad g(\mbf{a}) = 0$.
    We can write the mapping inside the square brackets as
    \begin{align}
        h_\lambda : \begin{bmatrix} p \\ q \end{bmatrix} \mapsto &\left[(1-\lambda)^2 + 2(1-\lambda)\lambda \frac{p}{q} + \lambda^2 \frac{p^2}{q^2} \right] \frac{1}{q}f''\!\left(\frac{p}{q}\right) \nonumber\\
        + &\left[ (1-\lambda)^2 + 2(1-\lambda)\lambda \frac{c-p}{c-q} + \lambda^2 \frac{(c-p)^2}{(c-q)^2} \right] \frac{1}{c-q}f''\!\left(\frac{c-p}{c-q}\right) \nonumber\\
        = &\left[(1-\lambda) + \lambda \frac{p}{q} \right]^2 \frac{1}{q}f''\!\left(\frac{p}{q}\right)
        + \left[ (1-\lambda) + \lambda \frac{c-p}{c-q} \right]^2 \frac{1}{c-q}f''\!\left(\frac{c-p}{c-q}\right)\ .\label{eq:binary-f-div-int-rep-multi}
    \end{align}
    Let $p = cx$ and $q = cy$ with $x, y \in [0, 1]$, the same mapping may now be written as
    \begin{equation*}
        h : \begin{bmatrix} p \\ q \end{bmatrix}
        \mapsto \left[(1-\lambda) + \lambda \frac{x}{y} \right]^2 \frac{1}{cy}f''\!\left(\frac{p}{q}\right)
        + \left[ (1-\lambda) + \lambda \frac{1-x}{1-y} \right]^2 \frac{1}{c(1-y)}f''\!\left(\frac{1-x}{1-y}\right)\ .
    \end{equation*}
    This is almost identical to the right-hand side of condition \eqref{eq:lambda}, which when satisfied guarantees
    \begin{equation*}
        \frac{L_f}{c} \leq h \left(\begin{bmatrix} p \\ q \end{bmatrix} \right) \qquad \forall\, p,q \in [0,c]\ .
    \end{equation*}
    due to the continuity of $f''$.
    Note that both elements of $\mbf{a} + t\mbf{h}$ are in $[0, c]$ for all $t \in [0, 1]$, and this condition holds.
    We can substitute this mapping back into the integral \eqref{eq:binary-f-div-int-rep-multi} to obtain the desired bound.
    \begin{equation*}
        D_{f}(\mbf{p}\Vert\mbf{q})
        = (p-q)^2 \int_0^1 (1-t)\, h(\mbf{a} + t\mbf{h})\, dt
        \geq \frac{L_{f}}{c} (p-q)^2 \int_0^1 (1-t) dt
        = \frac{L_{f}}{2c} (p-q)^2\ .
    \end{equation*}
    Using the identity $\mrm{TV}(\mbf{p},\mbf{q})^2 = (p-q)^2$ in Proposition~\ref{prop:tv} completes the proof.
\end{proof}

We first note that Theorem~\ref{thm:pinsker-multi} is a strict generalization of Theorem~\ref{thm:pinsker-uni}.
Choosing $\lambda \in \{0, 1\}$ recovers \eqref{eq:lambda0} and \eqref{eq:lambda1}, yet \eqref{eq:lambda} may alternatively be satisfied for other values of $\lambda \in (0, 1)$.
For instance, $\lambda = 1/2$ yields
\begin{equation}\label{eq:lambda12}
    L_f \leq \left[ 1 + \frac{x}{y} \right]^2 \frac{1}{4 y} f''\left( \frac{x}{y} \right) + \left[ 1 + \frac{1-x}{1-y} \right]^2 \frac{1}{4 (1-y)} f''\left( \frac{1-x}{1-y} \right) \ =: \, h_{1/2}(x,y)\ .
\end{equation}
This value of $\lambda$ typically suffices where Theorem~\ref{thm:pinsker-uni} fails.
We may group divergences based on which values of $\lambda$ satisfy their tightest bound.
\begin{itemize}
    \item Each of $0, 1/2, 1$ (symmetric $\chi^2$-divergence, algebraic-geometric mean, R\'{e}nyi information gain with $\alpha \not\in [-1, 2]$ and $L_{f_\alpha} = 1$),
    \item One of $0$ or $1$, and also $1/2$ (KL and reverse KL divergence),
    \item One of $0$ or $1$, but not $1/2$ (R\'{e}nyi information gain with $\alpha \in [-1,0] \cup [1,2]$ and $L_{f_\alpha} = 4$, and consequently Pearson and Neyman $\chi^2$-divergence),
    \item Only $1/2$ (the examples in this section).
\end{itemize}

We begin by showing that Theorem~\ref{thm:pinsker-multi} (unlike Theorem~\ref{thm:pinsker-uni}) is directly applicable to Jeffrey's divergence.
\begin{proposition}[Jeffrey's divergence]\label{prop:pinsker-jeffrey}
    Let $f(t) = (t-1) \ln t$.
    Then, $L_f = 8$ satisfies \eqref{eq:lambda} with $\lambda = 1/2$.
\end{proposition}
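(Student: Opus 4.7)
The plan is to specialize the multivariate condition \eqref{eq:lambda} to $\lambda = 1/2$, i.e., to verify \eqref{eq:lambda12} with $L_f = 8$. I would begin by computing $f''(t)$ for $f(t) = (t-1)\ln t$, which is a routine exercise yielding $f''(t) = 1/t + 1/t^{2}$. Substituting $t = x/y$ and multiplying by the prefactors in \eqref{eq:lambda12} collapses the first summand cleanly: using $\tfrac{y}{x} + \tfrac{y^{2}}{x^{2}} = \tfrac{y(x+y)}{x^{2}}$ gives
\[
    \left[1 + \frac{x}{y}\right]^{2} \frac{1}{4y}\, f''\!\left(\frac{x}{y}\right) = \frac{(x+y)^{3}}{4 x^{2} y^{2}}\ .
\]
By the $(x,y)\mapsto(1-x,1-y)$ symmetry built into \eqref{eq:lambda12}, the second summand simplifies to $(2-x-y)^{3}/(4(1-x)^{2}(1-y)^{2})$.

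After these simplifications the remaining task is to show
\[
    \frac{(x+y)^{3}}{4 x^{2} y^{2}} + \frac{(2-x-y)^{3}}{4 (1-x)^{2} (1-y)^{2}} \;\geq\; 8 \qquad \forall\, x,y \in (0,1)\ .
\]
My plan is to apply AM-GM in two stages. First, $xy \leq (x+y)^{2}/4$ gives $x^{2} y^{2} \leq (x+y)^{4}/16$, which reduces the first summand to at least $4/(x+y)$; applying the same bound to $(1-x)(1-y)$ reduces the second summand to at least $4/(2-x-y)$. Second, writing $s = x+y \in (0,2)$, the combined lower bound becomes $4/s + 4/(2-s) = 8/(s(2-s))$, and since $s(2-s)$ attains its maximum value $1$ at $s=1$, we have $8/(s(2-s)) \geq 8$.

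The equality conditions are consistent: the final step forces $s = 1$, and equality in both AM-GM applications forces $x = y$ and $1-x = 1-y$, so the minimum is attained at $x = y = 1/2$ — which also matches the symmetric form of Jeffrey's divergence. The only step requiring care is the algebraic packaging of $f''$ together with the $\lambda = 1/2$ prefactors into the clean cubic form above; this cancellation is precisely the reason Theorem~\ref{thm:pinsker-uni} (corresponding to $\lambda \in \{0,1\}$) fails, as in those cases the summands do not combine into a shape amenable to AM-GM.
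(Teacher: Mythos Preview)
Your proof is correct. Both you and the paper reduce the problem to showing that
\[
h_{1/2}(x,y) = \frac{(x+y)^{3}}{4 x^{2} y^{2}} + \frac{(2-x-y)^{3}}{4 (1-x)^{2} (1-y)^{2}} \geq 8
\]
on $(0,1)^{2}$, but the arguments diverge from there. The paper computes $\nabla h_{1/2}$, sets it to zero, cross-multiplies the two component equations, and factors the resulting polynomial to isolate $x=y$ as the only line of candidate critical points, then substitutes back to pin down $x=y=1/2$ (with a separate check that $h_{1/2}$ blows up at the boundary). Your two-stage AM-GM --- first $xy \leq (x+y)^{2}/4$ termwise, then $s(2-s)\leq 1$ for $s=x+y$ --- bypasses all of that calculus and gives a fully elementary, inequality-only proof. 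The paper's route has the minor virtue of confirming that $(1/2,1/2)$ is the \emph{unique} interior minimizer, but for the statement at hand your argument is shorter, cleaner, and self-contained.
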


Theorem~\ref{thm:pinsker-multi} also allows tightening a specific case in comparison to Proposition~\ref{prop:pinsker-renyi-uni}.
\begin{proposition}[R\'{e}nyi's information gain]\label{prop:pinsker-renyi-multi}
    Let $\alpha \in (0, 1)$ and $f_{\alpha}(t) = \frac{t^\alpha - 1}{\alpha (\alpha-1)}$, as in \eqref{eq:fa}.
    Then, $L_{f_\alpha} = 4$ satisfies \eqref{eq:lambda} with $\lambda = 1/2$.
\end{proposition}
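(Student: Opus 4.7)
The plan is to substitute $f''_\alpha(t)=t^{\alpha-2}$ into condition \eqref{eq:lambda} with $\lambda=1/2$ and $L_{f_\alpha}=4$, then verify the resulting inequality by a two-step AM-GM argument. After clearing the powers of $y$ and $1-y$ hiding inside $(1+x/y)^2\,f''_\alpha(x/y)$, and likewise for the $(1-x,1-y)$ term, the claim reduces to showing
\begin{equation*}
\frac{(x+y)^2}{x^{2-\alpha}\,y^{\alpha+1}} \;+\; \frac{(2-x-y)^2}{(1-x)^{2-\alpha}\,(1-y)^{\alpha+1}} \;\geq\; 16 \qquad \text{for all } x,y\in(0,1).
\end{equation*}
This is a purely algebraic two-variable inequality, and the remainder of the argument uses only elementary tools.

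First, I would peel off the squared numerators with the elementary AM-GM bounds $(x+y)^2\geq 4xy$ and $(2-x-y)^2\geq 4(1-x)(1-y)$. One power of $x$ cancels against $x^{2-\alpha}$ and one power of $y$ against $y^{\alpha+1}$ (and similarly for the second summand), so it suffices to prove
\begin{equation*}
\frac{1}{x^{1-\alpha}\,y^{\alpha}} \;+\; \frac{1}{(1-x)^{1-\alpha}\,(1-y)^{\alpha}} \;\geq\; 4.
\end{equation*}

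Second, I would invoke weighted AM-GM, which is available precisely because $\alpha\in(0,1)$ places the weights $1-\alpha,\alpha$ strictly inside $(0,1)$ and they sum to $1$. This yields $x^{1-\alpha}y^{\alpha}\leq (1-\alpha)x+\alpha y$; writing $A:=(1-\alpha)x+\alpha y\in(0,1)$, the companion bound $(1-x)^{1-\alpha}(1-y)^{\alpha}\leq 1-A$ follows in parallel. The sufficient inequality then collapses to $\tfrac{1}{A}+\tfrac{1}{1-A}=\tfrac{1}{A(1-A)}\geq 4$, which is immediate from $A(1-A)\leq 1/4$. Every inequality along this chain is saturated at $x=y=1/2$, which confirms that $L_{f_\alpha}=4$ is sharp for $\lambda=1/2$ and matches the diagonal calculation $F(y,y)=\tfrac{1}{y}+\tfrac{1}{1-y}\geq 4$.

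The main obstacle I anticipate is choosing the right intermediate target: applying weighted AM-GM directly to the original integrand does not work, since the squared prefactor $(1+x/y)^2$ and the curvature factor $(x/y)^{\alpha-2}$ contribute incompatible powers of $y$. The unweighted AM-GM in the first step is exactly what renormalizes these contributions so that the weights $(1-\alpha,\alpha)$ appear symmetrically in both summands, after which adding $A$ and $1-A$ restores the convex combination to which the one-dimensional bound $\tfrac{1}{A(1-A)}\geq 4$ applies.
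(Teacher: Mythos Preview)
Your proof is correct and takes a genuinely different route from the paper. The paper proceeds by multivariable calculus: it writes out $h_{1/2}(x,y)$, computes its gradient, argues that any interior critical point must lie on the diagonal $x=y$, reduces to a one-variable equation, and identifies $x=y=1/2$ as the unique critical point with value $4$; boundary behaviour is handled by noting that $h_{1/2}$ blows up there. Your argument instead peels the problem apart with two elementary inequalities: first $(x+y)^2\geq 4xy$ to remove the squared numerators, then weighted AM--GM (available exactly because $\alpha\in(0,1)$) to reduce to $\tfrac{1}{A}+\tfrac{1}{1-A}\geq 4$. This is both shorter and more transparent than the gradient computation, and it makes the role of the restriction $\alpha\in(0,1)$ explicit rather than implicit in a sign condition on a Hessian or a polynomial discriminant. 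The paper's approach, on the other hand, is more mechanical and reusable across the other $f$-divergences treated in the same section, where no analogous AM--GM reduction is evident.
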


\begin{corollary}[Hellinger divergence]\label{cor:pinsker-hellinger-div-multi}
    Let $\alpha \in (0, 1)$ and $f_{\alpha}(t) = \frac{t^\alpha - 1}{\alpha-1}$, as in \eqref{eq:fah}.
    Then, $L_{f_\alpha} = 4\alpha$ satisfies \eqref{eq:lambda} with $\lambda = 1/2$.
\end{corollary}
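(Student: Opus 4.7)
The plan is to derive this corollary directly from Proposition~\ref{prop:pinsker-renyi-multi}, imitating the derivation of Corollary~\ref{cor:pinsker-hellinger-div-uni} from Proposition~\ref{prop:pinsker-renyi-uni}. First I would record the algebraic relationship between the two generators: writing $f^{R}_{\alpha}(t) = \frac{t^{\alpha} - 1 - \alpha(t-1)}{\alpha(\alpha-1)}$ for the R\'{e}nyi information gain and $f^{H}_{\alpha}(t) = \frac{t^{\alpha} - 1}{\alpha-1}$ for the Hellinger divergence, we have
\begin{equation*}
    f^{H}_{\alpha}(t) \;=\; \alpha\, f^{R}_{\alpha}(t) \;+\; \frac{\alpha}{\alpha-1}(t-1)\ ,
\end{equation*}
so the Hellinger generator differs from the R\'{e}nyi generator by an overall factor of $\alpha$ and an additive multiple of $t-1$.

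Next I would invoke Item~3 of Fact~\ref{fact:f-div-properties} to discard the affine term $c(t-1)$: it neither changes the induced $f$-divergence nor contributes to the second derivative. After this reduction one has the pointwise identity $(f^{H}_{\alpha})''(t) = \alpha\, (f^{R}_{\alpha})''(t)$ on $(0,+\infty)$. Substituting this into the right-hand side of the multivariate condition \eqref{eq:lambda} shows that every term scales by $\alpha$; in particular, if some $L_{f^{R}_{\alpha}}$ certifies \eqref{eq:lambda} with a given $\lambda$ for $f^{R}_{\alpha}$, then $\alpha\, L_{f^{R}_{\alpha}}$ certifies it with the same $\lambda$ for $f^{H}_{\alpha}$.

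Finally, I would plug in Proposition~\ref{prop:pinsker-renyi-multi}, which for $\alpha \in (0,1)$ gives $L_{f^{R}_{\alpha}} = 4$ at $\lambda = 1/2$. Multiplying by $\alpha > 0$ yields $L_{f^{H}_{\alpha}} = 4\alpha$ at $\lambda = 1/2$, and Theorem~\ref{thm:pinsker-multi} then delivers the claimed Pinsker inequality. There is no real obstacle to surmount here: the argument is a one-line reduction that reuses the work already done for R\'{e}nyi's information gain, and in particular does not require revisiting the Taylor-expansion estimates. The only minor point to verify is the sign/positivity of the scaling factor, which is immediate from $\alpha \in (0,1)$.
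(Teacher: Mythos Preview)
Your proposal is correct and matches the paper's own argument: the paper's proof is the one-liner ``This follows from Proposition~\ref{prop:pinsker-renyi-multi} by multiplying by $\alpha$,'' which is exactly the reduction you spell out (affine shift irrelevant to $f''$, hence to \eqref{eq:lambda}; then scale by $\alpha>0$). Your write-up simply makes explicit the relation $(f^{H}_{\alpha})'' = \alpha\,(f^{R}_{\alpha})''$ that the paper leaves implicit.
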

\begin{proof}
    This follows from Proposition~\ref{cor:pinsker-hellinger-div-multi} by multiplying by $\alpha$.
\end{proof}

\begin{corollary}[Squared Hellinger distance]\label{cor:pinsker-squared-hellinger}
    Let $f(t) = \frac{1}{2}(\sqrt{t} - 1)^2$.
    Then, $L_f = 1$ satisfies \eqref{eq:lambda} with $\lambda = 1/2$.
\end{corollary}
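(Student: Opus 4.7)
The plan is to reduce the claim to a simple two-variable inequality on $(0,1)^2$ and verify it via two applications of AM--GM. Starting from $f(t) = \tfrac{1}{2}(\sqrt{t} - 1)^2 = \tfrac{1}{2}(t - 2\sqrt{t} + 1)$, I would compute $f''(t) = \tfrac{1}{4}\, t^{-3/2}$ and substitute into the $\lambda = 1/2$ condition \eqref{eq:lambda12}. After simplification, $\tfrac{1}{4y}(1 + x/y)^2\, f''(x/y)$ collapses to $\tfrac{(x+y)^2}{16\,(xy)^{3/2}}$, and the second summand is the symmetric expression in $(1-x, 1-y)$. The target inequality becomes
\begin{equation*}
    \frac{(x+y)^2}{16\,(xy)^{3/2}} + \frac{(2 - x - y)^2}{16\,\bigl((1-x)(1-y)\bigr)^{3/2}} \;\geq\; 1 \quad \text{for all } x, y \in (0,1)\ .
\end{equation*}

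To bound each summand I would chain two AM--GM steps. First, $(x+y)^2 \geq 4xy$ gives $\tfrac{(x+y)^2}{(xy)^{3/2}} \geq \tfrac{4}{\sqrt{xy}}$; then $\sqrt{xy} \leq (x+y)/2$ gives $\tfrac{4}{\sqrt{xy}} \geq \tfrac{8}{x+y}$. Thus the first summand is bounded below by $\tfrac{1}{2(x+y)}$, and the same argument applied to $1-x, 1-y$ yields $\tfrac{1}{2(2-x-y)}$ for the second. It then remains to check
\begin{equation*}
    \frac{1}{2}\left( \frac{1}{x+y} + \frac{1}{2-x-y} \right) \;\geq\; 1\ ,
\end{equation*}
which is the harmonic-mean inequality $\tfrac{1}{a} + \tfrac{1}{b} \geq \tfrac{4}{a+b}$ applied to $a = x+y$ and $b = 2-x-y$, whose sum is $2$.

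I do not anticipate a real obstacle. The point worth highlighting is that every AM--GM step and the harmonic-mean bound attain equality simultaneously at $x = y = 1/2$, so the bound is tight and $L_f = 1$ cannot be improved by this method at $\lambda = 1/2$; no slack is wasted despite the two-step AM--GM chain on each summand. A quick boundary check (e.g.\ $y \downarrow 0$ with $x = 1/2$ fixed, where the $\lambda = 0$ right-hand side tends to $\tfrac{\sqrt{2}}{2} < 1$, and symmetrically for $\lambda = 1$) confirms that neither \eqref{eq:lambda0} nor \eqref{eq:lambda1} would suffice for this constant, placing squared Hellinger distance in the fourth group of divergences listed after Theorem~\ref{thm:pinsker-multi} for which only the multivariate method succeeds.
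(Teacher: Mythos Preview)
Your proof is correct. The computation of $f''(t) = \tfrac14 t^{-3/2}$ and the reduction of \eqref{eq:lambda12} to the stated two-variable inequality are accurate, the chained AM--GM bounds go through, and the harmonic-mean step with $a+b=2$ closes the argument cleanly with equality at $x=y=1/2$. Your boundary check confirming that $\lambda\in\{0,1\}$ fails for $L_f=1$ is also right.

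The paper takes a different route: it observes that the squared Hellinger generator is $\tfrac14$ times the R\'enyi information gain generator $f_{1/2}$ from \eqref{eq:fa} (up to an irrelevant affine shift), and then simply invokes Proposition~\ref{prop:pinsker-renyi-multi}, which establishes $L_{f_\alpha}=4$ for all $\alpha\in(0,1)$. That proposition, in turn, is proved in the appendix by a critical-point analysis of $h_{1/2}$---computing the gradient, extracting the constraint $x=y$ from a cross-term identity, and locating the unique interior minimizer. So the paper's argument is a reduction to a more general but more laborious result, whereas your argument is self-contained and purely inequality-based, requiring no calculus at all. Your approach is more elementary and arguably more transparent for this specific $f$, while the paper's approach has the advantage of handling the entire family $\alpha\in(0,1)$ uniformly.
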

\begin{proof}
    This follows from Proposition~\ref{prop:pinsker-renyi-multi} by setting $\alpha = 1/2$ and dividing by $4$.
\end{proof}

In the case $c=1$, this immediately recovers a known lower bound on Hellinger distance.
\begin{equation*}
    \mrm{TV}(\mbf{p},\mbf{q}) \leq \sqrt{2} H(\mbf{p},\mbf{q})\ .
\end{equation*}

Below are a few more example applications of Theorem~\ref{thm:pinsker-multi}.
\begin{proposition}[Lin's measure]\label{prop:lins}
    Let $\theta \in [0, 1]$ and $f(t) = \theta t\ln t - (\theta t + 1 - \theta) \ln (\theta t + 1 - \theta)$.
    Then, $L_f = 4 \theta (1-\theta)$ satisfies $\eqref{eq:lambda}$ with $\lambda = 0.5$.
\end{proposition}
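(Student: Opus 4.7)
The plan is to invoke Theorem~\ref{thm:pinsker-multi} with $\lambda = 1/2$, i.e.\ verify condition \eqref{eq:lambda12} with $L_f = 4\theta(1-\theta)$. For $\theta \in \{0,1\}$ the generator $f$ is identically zero, so the bound is trivial; assume throughout that $\theta \in (0,1)$.

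The first step is a direct calculation of $f''$. One finds $f'(t) = \theta \ln\bigl(t/(\theta t + 1 - \theta)\bigr)$ and then $f''(t) = \theta(1-\theta)/\bigl(t(\theta t + 1 - \theta)\bigr)$. Plugging this into the right-hand side of \eqref{eq:lambda12} at $t = x/y$ and $t = (1-x)/(1-y)$, and factoring out the common $\theta(1-\theta)$, the desired condition reduces to showing that for every $x,y \in (0,1)$,
\begin{equation*}
4 \;\leq\; \frac{(x+y)^2}{4xy\bigl(\theta x + (1-\theta)y\bigr)} \;+\; \frac{(2-x-y)^2}{4(1-x)(1-y)\bigl(\theta(1-x) + (1-\theta)(1-y)\bigr)}\ .
\end{equation*}

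To finish, I apply AM-GM in the form $(x+y)^2 \geq 4xy$ and $(2-x-y)^2 \geq 4(1-x)(1-y)$ to dominate each numerator, and then introduce $u := \theta x + (1-\theta)y \in (0,1)$ so that $\theta(1-x) + (1-\theta)(1-y) = 1-u$. The right-hand side is then bounded below by $1/u + 1/(1-u) = 1/\bigl(u(1-u)\bigr) \geq 4$, where the last step is the elementary inequality $u(1-u) \leq 1/4$. The only real obstacle is the algebraic bookkeeping in the substitution step; once the factor $\theta(1-\theta)$ has been extracted cleanly, the closing chain of AM-GM bounds is automatic, and in particular the final constant $4\theta(1-\theta)$ arises precisely because the $\theta$-dependence in the integrand collapses into the convex combination $u$, whose range is insensitive to $\theta$.
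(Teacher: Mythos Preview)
Your proof is correct and takes a genuinely different route from the paper. The paper computes $h_{1/2}$ explicitly, observes that it blows up on the boundary of $(0,1)^2$, computes the gradient, and then shows (via the linear combination $(1-\theta)\partial_x h_{1/2} - \theta\,\partial_y h_{1/2} = 0$) that the only interior critical point is $x=y=1/2$, at which $h_{1/2}(1/2,1/2)=4\theta(1-\theta)$. Your argument bypasses all of the calculus: after factoring out $\theta(1-\theta)$, you apply $(x+y)^2 \geq 4xy$ and $(2-x-y)^2 \geq 4(1-x)(1-y)$ to reduce the right-hand side to $1/u + 1/(1-u)$ with $u = \theta x + (1-\theta)y$, and then $u(1-u)\leq 1/4$ finishes. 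This is strictly more elementary and also more transparent about \emph{why} the constant is $4\theta(1-\theta)$ --- the $\theta$-dependence is absorbed entirely into the convex combination $u$, whose extremal behavior is $\theta$-free. The paper's approach, by contrast, identifies the exact minimizer and so certifies that the bound $L_f = 4\theta(1-\theta)$ is the best one obtainable from condition \eqref{eq:lambda12}; your AM-GM chain happens to be tight at $x=y=1/2$ as well (both AM-GM steps and $u(1-u)\leq 1/4$ are equalities there), so you recover the same sharp constant, but this sharpness is implicit rather than argued.
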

Note also that the bound holds on the domain of $D_f$ for all $\theta$ once the sign flip is accounted for, e.g. via dividing $f$ by $\theta (1-\theta)$.
However, in this case $f$ does not form a proper $f$-divergence since its domain does not cover all of $(0, +\infty)$.

\begin{proposition}[Jensen-Shannon divergence]\label{prop:js}
    Let $f(t) = \frac{1}{2}(t \ln t - (t+1) \ln(\frac{t+1}{2}))$.
    Then, $L_f = 1$ satisfies \eqref{eq:lambda} with $\lambda = 1/2$.
\end{proposition}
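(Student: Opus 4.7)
The plan is to verify condition \eqref{eq:lambda12} (i.e., \eqref{eq:lambda} with $\lambda = 1/2$) directly by computing $f''$ and simplifying. First I would compute
\begin{equation*}
f'(t) = \tfrac{1}{2}\bigl[\ln t - \ln\bigl(\tfrac{t+1}{2}\bigr)\bigr] = \tfrac{1}{2}\ln\!\Bigl(\tfrac{2t}{t+1}\Bigr)\ , \qquad f''(t) = \tfrac{1}{2}\Bigl(\tfrac{1}{t} - \tfrac{1}{t+1}\Bigr) = \tfrac{1}{2t(t+1)}\ ,
\end{equation*}
which is indeed positive on $(0,+\infty)$, consistent with convexity of $f$. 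The normalization $f(1)=0$ is a direct check.

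Next I would substitute this $f''$ into the right-hand side $h_{1/2}(x,y)$ of \eqref{eq:lambda12}. Setting $u = x/y$, observe that
\begin{equation*}
\Bigl[1+u\Bigr]^2 \frac{1}{4y} f''(u) = \frac{(1+u)^2}{4y}\cdot\frac{1}{2u(u+1)} = \frac{1+u}{8yu} = \frac{x+y}{8xy}\ ,
\end{equation*}
and analogously, with $v = (1-x)/(1-y)$, the second summand of $h_{1/2}(x,y)$ equals $\frac{(1-x)+(1-y)}{8(1-x)(1-y)} = \frac{2-x-y}{8(1-x)(1-y)}$. Expanding both fractions gives
\begin{equation*}
h_{1/2}(x,y) = \frac{1}{8}\Bigl(\frac{1}{x} + \frac{1}{y} + \frac{1}{1-x} + \frac{1}{1-y}\Bigr)\ .
\end{equation*}

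Thus verifying $L_f = 1 \leq h_{1/2}(x,y)$ reduces to showing
\begin{equation*}
8 \leq \frac{1}{x} + \frac{1}{1-x} + \frac{1}{y} + \frac{1}{1-y} \qquad \forall\,x,y\in(0,1)\ .
\end{equation*}
This follows immediately from the elementary inequality $\frac{1}{z} + \frac{1}{1-z} \geq 4$ for $z \in (0,1)$ (a straightforward convexity/AM-HM calculation, with equality at $z = 1/2$) applied separately to $x$ and $y$. Applying Theorem~\ref{thm:pinsker-multi} then yields the claim. I expect no real obstacle here: the computation of $f''$ produces a convenient factorization $\frac{1}{t(t+1)}$ that cancels beautifully against the $(1+u)^2$ prefactor, and this is exactly what reduces the multivariate condition to a separable one-variable inequality. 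The only subtlety worth a brief remark is that equality in the final lower bound is achieved at $x=y=1/2$, so the constant $L_f=1$ is in fact the best constant obtainable from this method at $\lambda = 1/2$.
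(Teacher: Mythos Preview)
Your proof is correct and follows the same overall strategy as the paper: compute $f''(t) = \frac{1}{2t(t+1)}$, substitute into $h_{1/2}$, and arrive at
\[
h_{1/2}(x,y) = \frac{x+y}{8xy} + \frac{2-x-y}{8(1-x)(1-y)}\ .
\]
The difference is in how the minimization is carried out. The paper proceeds by computing the gradient of $h_{1/2}$, finding the unique interior critical point $x=y=1/2$, and checking boundary behaviour. You instead observe that the expression separates additively as $h_{1/2}(x,y) = \tfrac{1}{8}\bigl(\tfrac{1}{x}+\tfrac{1}{1-x}\bigr) + \tfrac{1}{8}\bigl(\tfrac{1}{y}+\tfrac{1}{1-y}\bigr)$, which reduces the problem to the one-variable inequality $\tfrac{1}{z}+\tfrac{1}{1-z}\geq 4$. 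This is a cleaner and more elementary finish than the gradient argument, and it makes the separable structure explicit; the paper's gradient computation implicitly reflects the same separability (each partial derivative depends on only one variable) but does not exploit it directly.
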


\begin{proposition}[Triangular discrimination]\label{prop:pinsker-triangle}
    Let $f(t) = \frac{(t-1)^2}{t+1}$.
    Then, $L_f = 4$ satisfies \eqref{eq:lambda} with $\lambda = 1/2$.
\end{proposition}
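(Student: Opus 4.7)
The plan is to verify the bound in Theorem~\ref{thm:pinsker-multi} with $\lambda = 1/2$ by direct computation, after which the claim reduces to an elementary one-variable estimate.

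First I would compute $f''(t)$ explicitly. Writing $(t-1)^2 = (t+1)(t-3) + 4$, we get the convenient form $f(t) = t - 3 + 4/(t+1)$, so $f'(t) = 1 - 4/(t+1)^2$ and $f''(t) = 8/(t+1)^3$. This manifestly verifies that $f$ is twice continuously differentiable and convex on $(0,+\infty)$, which is needed to invoke Theorem~\ref{thm:pinsker-multi}.

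Next I would substitute this into the right-hand side of \eqref{eq:lambda12}. The key observation is the telescoping: since $f''(x/y) = 8y^3/(x+y)^3$, the first summand becomes
\begin{equation*}
    \Bigl[1 + \tfrac{x}{y}\Bigr]^2 \cdot \frac{1}{4y} \cdot \frac{8 y^3}{(x+y)^3}
    = \frac{(x+y)^2}{y^2} \cdot \frac{2 y^2}{(x+y)^3}
    = \frac{2}{x+y}\ .
\end{equation*}
An identical computation with $x,y$ replaced by $1-x,1-y$ yields $2/(2-x-y)$ for the second summand. Thus condition \eqref{eq:lambda} with $\lambda=1/2$ is equivalent to
\begin{equation*}
    4 \leq \frac{2}{x+y} + \frac{2}{2-x-y} \qquad \forall\, x,y \in (0,1)\ .
\end{equation*}

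Finally I would collapse this to one variable by setting $u := x+y \in (0,2)$. Combining the fractions gives $2/u + 2/(2-u) = 4/\bigl(u(2-u)\bigr)$, and the elementary inequality $u(2-u) \leq 1$ on $(0,2)$ (AM-GM, with equality at $u=1$) immediately yields the required lower bound of $4$. The main obstacle, such as it is, is simply the algebraic simplification in the second step; once the two summands are written as $2/(x+y)$ and $2/(2-x-y)$, the rest is routine. No step requires more than twice continuous differentiability of $f$ and basic calculus, so the hypotheses of Theorem~\ref{thm:pinsker-multi} are satisfied and the claim follows.
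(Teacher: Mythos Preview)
Your proof is correct and follows essentially the same approach as the paper: compute $f''(t)=8/(t+1)^3$, substitute into \eqref{eq:lambda12} to obtain $h_{1/2}(x,y)=\tfrac{2}{x+y}+\tfrac{2}{2-x-y}$, reduce to the single variable $s=x+y$, and observe the minimum is $4$ at $s=1$. The only cosmetic differences are your partial-fraction derivation of $f''$ and the use of AM--GM in place of the paper's convexity remark for the final step.
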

\begin{proof}
    We have $f''(t) = \frac{8}{(t+1)^3}$, which implies
    \begin{equation*}
        h_{1/2}(x,y) = \frac{2}{x+y} + \frac{2}{2 - x - y}
        = \frac{4}{s (2 - s)}
        \quad \text{with} \quad
        s = x + y\ .
    \end{equation*}
    This function is convex on $[-2, 2]$ and its minimum is $4$ at $s = 1$.
\end{proof}

\begin{table}[H]
    \centering
    \bgroup
    \def\arraystretch{1.5}
    \begin{tabular}{rlrll}
        Divergence & $f(t)$ & $L_f$ & Conditions & Reference \\
        \hline
        Jeffrey's divergence & $(t-1) \ln t$ & $8$ && Proposition~\ref{prop:pinsker-jeffrey} \\
        \multirow{2}{*}{R\'{e}nyi's information gain} & \multirow{2}{*}{$\frac{t^\alpha-1}{\alpha(\alpha-1)}$} & $4$ & $\alpha \in [-1, 2]$ & \multirow{2}{*}{Propositions~\ref{prop:pinsker-renyi-uni} and \ref{prop:pinsker-renyi-multi}}\\
        && $1$ & otherwise &\\
        \multirow{2}{*}{Hellinger divergence} & \multirow{2}{*}{$\frac{t^\alpha-1}{\alpha-1}$} & $4\alpha$ & $\alpha \in (0, 2]$ & \multirow{2}{*}{Corollaries~\ref{cor:pinsker-hellinger-div-uni} and \ref{cor:pinsker-hellinger-div-multi}}\\
        && $\alpha$ & $\alpha \in (2, +\infty)$ &\\
        Squared Hellinger distance & $\frac{1}{2}(\sqrt{t} - 1)^2$ & $1$ && Corollary~\ref{cor:pinsker-squared-hellinger}\\
        Lin's measure & $\theta t \ln t - (\theta t + 1 - \theta) \ln (\theta t + 1 - \theta)$ & $4 \theta (1 - \theta)$ & $\theta \in [0,1]$ & Proposition~\ref{prop:lins}\\
        Jensen-Shannon divergence & $\frac{1}{2} \left( t \ln t - (t + 1) \ln \left( \frac{t+1}{2} \right) \right)$ & $1$ && Proposition~\ref{prop:js}\\
        Triangular discrimination & $\frac{(t-1)^2}{t+1}$ & $4$ && Proposition~\ref{prop:pinsker-triangle}\\
    \end{tabular}
    \egroup\vspace*{1em}
    \caption{Summary of additional bounds obtained by Theorem~\ref{thm:pinsker-multi}}
    \label{tab:multi}
\end{table}

We conclude by discussing the limitations of Theorem~\ref{thm:pinsker-multi}.
For one, the theorem only applies to $f$-divergences generated by $f$ that are twice-differentiable.
When the theorem does apply, it is not guaranteed to retrieve a non-trivial lower bound.
A prominent example of both cases is $\chi^\alpha$ divergence with $f_\alpha(t) = \abs{t-1}^\alpha$ and $\alpha \geq 1$, which is not twice-differentiable for $\alpha \in [1, 2)$ and Theorem~\ref{thm:pinsker-multi} yields the trivial bound when $\alpha > 2$ (though notably a non-trivial lower bound does not appear possible here). Even when Theorem~\ref{thm:pinsker-multi} attains a non-trivial bound, it is not guaranteed to be the tightest possible of the kind.
An example of such a mismatch is the $f$-divergence in Proposition~\ref{prop:example} -- although a tighter bound is possible with $\lambda = 1/2$, it still falls short of the numerical evidence of the best constant.

On the other hand, the approach still appears to yield meaningful bounds in cases where it is not technically applicable.
An example of this is
\begin{equation*}
    f(t) = \begin{cases}
        (t-1)^2 & \text{if}\ t \leq 1\ ,\\
        0 & \text{otherwise}\ ,
    \end{cases}
\end{equation*}
for which the method, if applicable, would give $L_f = 2$ via $\lambda = 0$.
This suggests there may be a means of replicating this approach that does not rely on $f$ being twice continuously differentiable, though this would likely require an alternative to Taylor's theorem.
One such candidate is given for arbitrary convex functions in \cite{Liese-2006a}.

\section{Divergence Inequalities}\label{sec:divergence-inequalities}
In this section, we derive new divergence inequalities. In particular, we derive input-dependent bounds on $f$-divergences in terms of the $\chi^{2}$-divergence. This allows us to establish new reverse Pinsker inequalities for $f$-divergences. The key technical lemma is a simple integral representation for Bregman divergences which recovers an integral representation for $f$-divergences as a special case. At the end of the section, we also show our methodology can obtain reverse Pinsker inequalities for Bregman divergences, which may have applications in learning theory and statistics.

\subsection{Integral Representation for Bregman Divergences and \texorpdfstring{$f$}{}-Divergences} We begin by establishing our main technical lemma--- a multivariate integral representation for Bregman and $f$-divergences obtained via the multivariate Taylor's theorem in integral form.

Recall multivariate Taylor's theorem with integral remainder, as stated in Lemma~\ref{lem:first-order-Taylor's} and Corollary~\ref{cor:first-order-taylor}. The latter nearly immediately implies an integral representation for Bregman divergences defined in terms of a twice-differentiable $F$.
\begin{lemma}\label{lem:integral-rep-for-breg-div}
    Let $S \subset \mbb{R}^{n}$ be an open convex set and $F: S \to \mbb{R}$ be convex and twice-differentiable over $S$. Define $\pmb{\lambda}_{t} := (1-t)\mbf{q} + t\mbf{p}$. Then,
    \begin{equation}\label{eq:bregman-div-int-rep} 
    \begin{aligned}
    B_{F}(\mbf{p}||\mbf{q})
    =& \int^{1}_{0} (1-t) \left[(\mbf{p}-\mbf{q})^{T} H_{F\vert_{\pmb{\lambda}_{t}}} (\mbf{p} -\mbf{q}) \right] \, dt \ . 
    \end{aligned}
    \end{equation}
\end{lemma}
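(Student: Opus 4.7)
The plan is to apply the multivariate Taylor's theorem with integral remainder, which the authors have already stated as Corollary~\ref{cor:first-order-taylor}, directly to the function $F$ at base point $\mathbf{q}$ with displacement $\mathbf{h} = \mathbf{p} - \mathbf{q}$, and then recognize the resulting identity as a rearrangement of Definition~\ref{def:BregDiv}. This is essentially a one-line proof: the formula \eqref{eq:bregman-div-int-rep} is just the integral remainder term of the first-order Taylor expansion of $F$ about $\mathbf{q}$, evaluated at $\mathbf{p}$.

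First I would check that the hypotheses of Corollary~\ref{cor:first-order-taylor} are met. By assumption $F$ is $C^2$ on the open convex set $S$. The corollary requires that both $\mathbf{a}$ and $\mathbf{a} + \mathbf{h}$ lie in $S$; here $\mathbf{a} = \mathbf{q}$ and $\mathbf{a} + \mathbf{h} = \mathbf{p}$, so it suffices that $\mathbf{p}, \mathbf{q} \in S$ (which is implicit in the statement, since otherwise $B_F(\mathbf{p}\|\mathbf{q})$ is not defined). Convexity of $S$ then guarantees that $\pmb{\lambda}_t = (1-t)\mathbf{q} + t\mathbf{p}$ lies in $S$ for every $t \in [0,1]$, so the Hessian $H_F\vert_{\pmb{\lambda}_t}$ in the integrand is well-defined.

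Next I would invoke Corollary~\ref{cor:first-order-taylor} with $\mathbf{a} = \mathbf{q}$ and $\mathbf{h} = \mathbf{p} - \mathbf{q}$, noting that $\mathbf{q} + t(\mathbf{p} - \mathbf{q}) = \pmb{\lambda}_t$, to obtain
\begin{equation*}
F(\mathbf{p}) \;=\; F(\mathbf{q}) \,+\, \langle \nabla F(\mathbf{q}),\, \mathbf{p} - \mathbf{q}\rangle \,+\, \int_0^1 (1-t)\,(\mathbf{p} - \mathbf{q})^{\mathsf{T}} H_F\vert_{\pmb{\lambda}_t}\, (\mathbf{p} - \mathbf{q})\, dt.
\end{equation*}
Moving the first two terms on the right to the left and applying Definition~\ref{def:BregDiv} gives exactly \eqref{eq:bregman-div-int-rep}.

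There is really no main obstacle here: the lemma is a direct translation of Taylor's theorem into the Bregman-divergence language, with the only subtlety being the implicit domain condition $\mathbf{p}, \mathbf{q} \in S$ that ensures the segment $\pmb{\lambda}_t$ stays within $S$. If desired, one can also observe that the $f$-divergence analogue follows by specializing to $F(\mathbf{x}) = \sum_x q(x) f(x(x)/q(x))$-type functions (or, more directly, by running the same Taylor argument on the generating function $f$ componentwise), but for the lemma as stated, the proof is complete once the Taylor expansion is invoked.
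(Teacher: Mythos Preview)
Your proposal is correct and is essentially identical to the paper's own proof: both apply Corollary~\ref{cor:first-order-taylor} with $\mathbf{a}=\mathbf{q}$ and $\mathbf{h}=\mathbf{p}-\mathbf{q}$, note that $\mathbf{a}+t\mathbf{h}=\pmb{\lambda}_t$, and then rearrange the resulting Taylor expansion into the form of Definition~\ref{def:BregDiv}. Your added remark about the implicit domain condition $\mathbf{p},\mathbf{q}\in S$ is a reasonable clarification that the paper leaves tacit.
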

\begin{proof}
    Choose $\mbf{a} = \mbf{q}$, $\mbf{p} = \mbf{a} + \mbf{h}$, which means $\mbf{h} = \mbf{p}-\mbf{q}$. This means $\mbf{a}+t\mbf{h} = \mbf{q} + t(\mbf{p}-\mbf{q}) = (1-t)\mbf{q} + t\mbf{p}$. Applying Corollary~\ref{cor:first-order-taylor},
    \begin{align*}
        & F(\mbf{p}) = F(\mbf{q}) + \langle \grad F(\mbf{q}), \mbf{h} \rangle + \int^{1}_{0} (1-t) \left[\mbf{h}^{T} H_{F}\vert_{\mbf{a}+t\mbf{h}} \mbf{h} \right] dt \ .
    \end{align*}
    Re-arranging the linear terms onto the left hand side and using $\mbf{h} = \mbf{p}-\mbf{q}$ so that it is in the form of \eqref{eq:Bregman-Div} establishes \eqref{eq:bregman-div-int-rep}.
\end{proof}
First, note the above proof only used the convexity of $F$ to align with the definition of Bregman divergence (Definition~\ref{def:BregDiv}). We also remark the above result is in retrospect obvious: the Bregman divergence is defined as the the difference between $F(p)$ and the first-order Taylor expansion of $F$ around $\mbf{q}$ evaluated at $\mbf{p}$. It follows that so long as $F$ is twice-differentiable, $B_{F}(\mbf{p} \Vert \mbf{q})$ should admit an integral representation that is the integral remainder in Taylor's theorem. 

\paragraph{\texorpdfstring{$f$}{f}-Divergence}
The above result simplifies when we consider $f$-divergences. To do this, we define the following functions,
\begin{definition}
    Let $f:S \to \mbb{R}$ for some open interval $(0,+\infty) \subseteq S \subset \mbb{R}$. For any $\mbf{r} \geq \mbf{0}$, we define the function
\begin{align}
    f_{\mbf{r}}(\mbf{x}) := \sum_{i \in \cX} r_{i}f(x_{i}/r_{i}) = \cD_{f}(\mbf{x}||\mbf{r}) \ ,
\end{align}
where we remind the reader that $\cD_{f}(\cdot \Vert \cdot)$ is the `unconstrained' $f$-divergence (See Definition~\ref{def:f-divergence}).
\end{definition}

In the subsequent results, we will require that $\mbf{q} > 0$ and $\mbf{p} \ll \mbf{q}$. However, note by Proposition~\ref{prop:classical-f-div-restrict-to-q-support}, the assumption $\mbf{q} > 0$ is effectively irrelevant so long as $\mbf{p} \ll \mbf{q}$.

\begin{lemma}\label{lem:f-div-first-order-taylor}
    Let $\mbf{q}>\mbf{0}$ and $\mbf{0} \leq \mbf{p} \ll \mbf{q}$. Let $(0,+\infty) \subseteq S \subseteq \mbb{R}$ be an interval and $f:S \to \mbb{R}$ be twice continuously differentiable on open interval $(0,+\infty) \subseteq I \subseteq \mbb{R}$, $f(1) = 0$, and either (i) $f'(1) = 0$ or (ii) $\langle \mbf{p} \rangle = \langle \mbf{q} \rangle$, then
    \begin{equation}\label{eq:f-div-taylor-first-order-exp}
    \begin{aligned}
        \cD_{f}(\mbf{p}||\mbf{q}) =  \int_{0}^{1} (1-t)\langle \pmb{\partial}^{2}f_{\mbf{q}}((1-t)\mbf{q}+t\mbf{p}), (\mbf{p}-\mbf{q})^{\odot 2} \rangle  dt \ ,
    \end{aligned}
    \end{equation}
    where $\pmb{\partial}^{2}f(\mbf{x}) := \sum_{i} \frac{\partial^{2}f(\mbf{x})}{\partial x_{i}^{2}} e_{i}$.
\end{lemma}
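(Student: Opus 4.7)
The strategy is to realize $\cD_f(\mbf{p}\Vert\mbf{q})$ as a Bregman divergence of the separable potential $f_{\mbf{q}}$ and then invoke Lemma~\ref{lem:integral-rep-for-breg-div}. First I would compute the first two derivatives of $f_{\mbf{q}}:(0,+\infty)^{|\cX|}\to\mbb{R}$. Because $f_{\mbf{q}}$ is separable in the coordinates of $\mbf{x}$, one obtains $\partial_i f_{\mbf{q}}(\mbf{x}) = f'(x_i/q_i)$, and the Hessian is the diagonal matrix
\begin{equation*}
    H_{f_{\mbf{q}}}(\mbf{x}) = \sum_{i \in \cX} \tfrac{1}{q_i} f''(x_i/q_i)\, E_{i,i}\ .
\end{equation*}
In particular $f_{\mbf{q}}(\mbf{q}) = \sum_i q_i f(1) = 0$ and $\nabla f_{\mbf{q}}(\mbf{q}) = f'(1)\mbf{1}$, so plugging into Definition~\ref{def:BregDiv} yields
\begin{equation*}
    B_{f_{\mbf{q}}}(\mbf{p}\Vert\mbf{q}) = \cD_f(\mbf{p}\Vert\mbf{q}) - f'(1)\bigl(\langle\mbf{p}\rangle - \langle\mbf{q}\rangle\bigr)\ .
\end{equation*}
Either hypothesis (i) $f'(1)=0$ or (ii) $\langle\mbf{p}\rangle=\langle\mbf{q}\rangle$ kills the linear correction and delivers the clean identity $\cD_f(\mbf{p}\Vert\mbf{q}) = B_{f_{\mbf{q}}}(\mbf{p}\Vert\mbf{q})$.

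Next I would apply Lemma~\ref{lem:integral-rep-for-breg-div} with $F=f_{\mbf{q}}$ on the open convex set $S=(0,+\infty)^{|\cX|}$. Since $H_{f_{\mbf{q}}}$ is diagonal, the quadratic form in the Bregman integral representation collapses:
\begin{equation*}
    (\mbf{p}-\mbf{q})^{T} H_{f_{\mbf{q}}}\!\vert_{\pmb{\lambda}_t}(\mbf{p}-\mbf{q}) = \sum_{i} \tfrac{1}{q_i} f''(\lambda_{t,i}/q_i)(p_i-q_i)^2 = \bigl\langle \pmb{\partial}^{2}f_{\mbf{q}}(\pmb{\lambda}_t),\, (\mbf{p}-\mbf{q})^{\odot 2}\bigr\rangle\ ,
\end{equation*}
which is precisely the integrand appearing in \eqref{eq:f-div-taylor-first-order-exp}. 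Combined with the Bregman identification in the previous paragraph, this proves the lemma.

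The one technical wrinkle I foresee is that Lemma~\ref{lem:integral-rep-for-breg-div} requires both endpoints to lie in the open set $S$ on which $f_{\mbf{q}}$ is twice continuously differentiable, whereas we only assume $\mbf{p}\geq\mbf{0}$, so $\mbf{p}$ may sit on the boundary of $(0,+\infty)^{|\cX|}$ at coordinates where $p_i=0$. The remedy is a two-step argument: first establish the identity under the strict positivity assumption $\mbf{p}>\mbf{0}$, where the whole path $\pmb{\lambda}_t = (1-t)\mbf{q}+t\mbf{p}$ stays inside $S$ and the derivation above applies verbatim; then extend to $\mbf{p}\geq\mbf{0}$ with $\mbf{p}\ll\mbf{q}$ by invoking the identity for the perturbation $\mbf{p}_{\epsilon} := (1-\epsilon)\mbf{p} + \epsilon\mbf{q}>\mbf{0}$ and letting $\epsilon\downarrow 0$. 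The left-hand side converges to $\cD_f(\mbf{p}\Vert\mbf{q})$ via the convention $f(0)=f(0^{+})$ fixed in Definition~\ref{def:f-divergence}, while on the right-hand side the integrand is nonnegative by the convexity of $f$, so monotone convergence (or dominated convergence on any subinterval $[0,1-\delta]$, where $\lambda_{t,i}^{\epsilon}/q_i$ is uniformly bounded away from $0$) closes the argument.
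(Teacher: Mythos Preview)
Your proposal is correct and follows essentially the same approach as the paper: compute the separable derivatives of $f_{\mbf{q}}$, observe the Hessian is diagonal, apply the multivariate Taylor expansion (you route through Lemma~\ref{lem:integral-rep-for-breg-div} while the paper applies Corollary~\ref{cor:first-order-taylor} directly, which is the same thing one step removed), and use either hypothesis to kill the linear term. Your limiting argument for the boundary case $p_i=0$ is in fact more careful than the paper's own proof, which simply works under the assumption $\mbf{p}>\mbf{0}$ without explicitly passing to the limit.
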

\begin{proof}
    We first get a simplified expression for $f_{\mbf{r}}(\mbf{p})$ from Corollary~\ref{cor:first-order-taylor} under the conditions $\mbf{r},\mbf{p},\mbf{q} > \mbf{0}$. Note the following holds for all $i,j \in [n]$,
    \begin{align}\label{eq:f-div-partials}
        \frac{\partial f_{\mbf{r}}}{\partial x_{i}} = f'\left(\frac{x_{i}}{r_{i}}\right)\ ,\quad \quad \frac{\partial^{2} f_{\mbf{r}}}{\partial x_{i} \partial x_{j}} = \delta_{i,j} r_{i}^{-1} f''\left(\frac{x_{i}}{r_{i}}\right) \ .
    \end{align}
    This implies all terms in the Hessian are zero except the diagonal terms:
    $$\frac{\partial^{2}f_{\mbf{r}}(\mbf{x})}{\partial^{2}x_{i}}(p_{i}-q_{i})^{2} \ . $$
    Applying Corollary~\ref{cor:first-order-taylor} and simplifying with $\mbf{r} = \mbf{q}$ establishes
    \begin{align*}
        f_{\mbf{q}}(\mbf{p}) =& f_{\mbf{q}}(\mbf{q}) + \langle \grad f_{\mbf{q}}(\mbf{q}), \mbf{p}-\mbf{q} \rangle + \int_{0}^{1}  (1-t) \langle\pmb{\partial}^{2}f_{\mbf{q}}((1-t)\mbf{q}+t\mbf{p}), (\mbf{p}-\mbf{q})^{\odot 2} \rangle  dt.
    \end{align*}
    The first term goes away by assumption $f(1) = 0$. Since $\frac{\partial f_{\mbf{r}}}{\partial x_{i}}(\mbf{r}) = f'\left(\frac{r_{i}}{r_{i}}\right) = f'(1) =: c_{1}$ for all $i$, then $\langle \grad f_{\mbf{q}}, \mbf{p}-\mbf{q} \rangle = c_{1} \langle \mbf{p} - \mbf{q} \rangle$, which is zero by either of our conditions. Finally, we note $\cD_{f}(\mbf{p}||\mbf{q}) = f_{\mbf{q}}(\mbf{p})$, which completes the proof.
\end{proof}

\subsection{Inequalities from Integral Representations}
We now derive inequalities from our integral representations.

\subsubsection{Bounding \texorpdfstring{$f$}{}-Divergences in terms of \texorpdfstring{$\chi^{2}$}{}-Divergences}
The first set of inequalities shows that under many conditions we can bound $f$ divergences in terms of the $\chi^{2}$-divergence, the second derivative of $f$, and the ratios of the vector's elements, which are the likelihood ratios in the case that the vectors are probability distributions.

We remind the reader that the $\chi^{2}$-divergence is known to approximate other $f$-divergences when two distributions are sufficiently similar \cite{Csiszar-2004a} and that all $f$-divergences behave like the $\chi^{2}$-divergence `locally' \cite{Sason-2018a} as do a class of quantum $f$-divergences with respect to the quantum $\chi^{2}$-divergence \cite{Hirche-2023a}. In effect the following theorem provides a quantitative method of extending the behaviour in \eqref{eq:csiszar-loc-behavior} to the case where the distributions are no longer necessarily close. We note that there has been special cases of $f$-divergences that have upper bounds in terms of $\chi^{2}$-divergence (See \cite{Raginsky-2016a} and a particularly clear proof of the same result in \cite{Makur-2019a}), but the conditions on $f$ in that case are much more stringent than ours.

\begin{theorem}\label{thm:f-div-chi-squared-bounds}
    Let $\mbf{p} \ll \mbf{q}$. Let $f$ be twice continuously differentiable on open interval $(0,+\infty)$ and assume either  (i) $\vert f''(0) \vert < +\infty$ or (ii)  $\mbf{p} > 0$. Define
    \begin{align} 
        \kappa_{f}^{\uparrow}(\mbf{p}, \mbf{q}) &:= \max_{i \in \supp(\mbf{q}), t \in [0,1]} f''\left(1+t\left(\frac{p_{i}}{q_{i}}-1\right)\right) \label{eq:kappa-up-arrow-defn} \\
        \kappa_{f}^{\downarrow}(\mbf{p}, \mbf{q}) &:= \min_{i \in \supp(\mbf{q}), t \in [0,1]} f''\left(1+t\left(\frac{p_{i}}{q_{i}}-1\right)\right)  
         \ . \label{eq:kappa-down-arrow-defn}
    \end{align}
    Then, we have
    \begin{align}\label{eq:chi-squared-upper-bound}
        \frac{\kappa_{f}^{\downarrow}(\mbf{p}, \mbf{q})}{2}\chi^{2}(\mbf{p}||\mbf{q}) \leq \cD_{f}(\mbf{p}||\mbf{q}) \leq \frac{\kappa_{f}^{\uparrow}(\mbf{p}, \mbf{q})}{2}\chi^{2}(\mbf{p}||\mbf{q}) \ .
    \end{align} Moreover, if $f$ is strictly convex, the lower bound is strictly greater than zero unless $\mbf{p} = \mbf{q}$.
\end{theorem}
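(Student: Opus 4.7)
The plan is to apply the integral representation of Lemma~\ref{lem:f-div-first-order-taylor} to $\cD_{f}(\mbf{p}\Vert\mbf{q})$ and then bound the resulting integrand pointwise by $\kappa_{f}^{\downarrow}$ and $\kappa_{f}^{\uparrow}$. Concretely, I would compute that the only nonzero entries of $\pmb{\partial}^{2}f_{\mbf{q}}$ are the diagonal ones $q_{i}^{-1}f''(x_{i}/q_{i})$ (see \eqref{eq:f-div-partials}), and then evaluate at $\mbf{x} = (1-t)\mbf{q} + t\mbf{p}$, so that $x_{i}/q_{i} = 1 + t(p_{i}/q_{i}-1)$, turning the representation into
\begin{equation*}
\cD_{f}(\mbf{p}\Vert\mbf{q}) \;=\; \int_{0}^{1}(1-t)\sum_{i\in\supp(\mbf{q})}\frac{(p_{i}-q_{i})^{2}}{q_{i}}\, f''\!\left(1 + t\!\left(\frac{p_{i}}{q_{i}} - 1\right)\right) dt.
\end{equation*}
The restriction of the sum to $\supp(\mbf{q})$ is Proposition~\ref{prop:classical-f-div-restrict-to-q-support} together with the assumption $\mbf{p}\ll\mbf{q}$, and the two alternative hypotheses (i) $\lvert f''(0)\rvert<+\infty$ or (ii) $\mbf{p}>0$ are exactly what is needed to ensure $f''$ is evaluated at a point of its domain with finite value, since they handle the boundary case $p_{i}=0$.

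With the representation in hand, the two bounds become one-liners. For every $i\in\supp(\mbf{q})$ and every $t\in[0,1]$, the argument $1 + t(p_{i}/q_{i}-1)$ lies inside precisely the region over which $\kappa_{f}^{\downarrow}$ and $\kappa_{f}^{\uparrow}$ are defined, so
\begin{equation*}
\kappa_{f}^{\downarrow}(\mbf{p},\mbf{q}) \;\leq\; f''\!\left(1 + t\!\left(\frac{p_{i}}{q_{i}}-1\right)\right) \;\leq\; \kappa_{f}^{\uparrow}(\mbf{p},\mbf{q}).
\end{equation*}
Multiplying by the nonnegative factor $(1-t)(p_{i}-q_{i})^{2}/q_{i}$, summing over $i\in\supp(\mbf{q})$, and integrating via $\int_{0}^{1}(1-t)\,dt = 1/2$, the sum $\sum_{i\in\supp(\mbf{q})}(p_{i}-q_{i})^{2}/q_{i}$ collapses to $\chi^{2}(\mbf{p}\Vert\mbf{q})$ and delivers the stated sandwich \eqref{eq:chi-squared-upper-bound}.

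For the moreover clause, suppose $f$ is strictly convex on $(0,+\infty)$ and $\mbf{p}\neq\mbf{q}$. Strict convexity gives $f''\geq 0$, and continuity of $f''$ rules out $f''$ vanishing on any non-degenerate subinterval of $(0,+\infty)$ (otherwise $f$ would be affine there). Picking an index $i^{*}\in\supp(\mbf{q})$ with $p_{i^{*}}\neq q_{i^{*}}$, as $t$ sweeps $[0,1]$ the point $1 + t(p_{i^{*}}/q_{i^{*}}-1)$ traces out a non-degenerate subinterval, so the one-dimensional integral $\int_{0}^{1}(1-t)\,f''(1+t(p_{i^{*}}/q_{i^{*}}-1))\,dt$ is strictly positive; combined with $(p_{i^{*}}-q_{i^{*}})^{2}/q_{i^{*}} > 0$ and the nonnegativity of all other summands, this already forces $\cD_{f}(\mbf{p}\Vert\mbf{q})>0$. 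The only real bookkeeping is threading hypotheses (i)/(ii) through the boundary case $p_{i}=0$ to keep the integrand well-defined; beyond that, the argument is a direct application of the integral representation followed by a pointwise sandwich, which is why I do not expect any genuine obstacle.
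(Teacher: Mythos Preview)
Your argument for the main sandwich inequality is correct and is essentially the paper's proof: both start from the integral representation of Lemma~\ref{lem:f-div-first-order-taylor}, identify the diagonal second partials $q_{i}^{-1}f''(\cdot)$, rewrite the argument as $1+t(p_i/q_i-1)$, and bound pointwise by $\kappa_f^{\uparrow},\kappa_f^{\downarrow}$ before integrating $\int_0^1(1-t)\,dt=1/2$.

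One small discrepancy in the ``moreover'' clause: the statement asks that \emph{the lower bound} $\tfrac{\kappa_f^{\downarrow}}{2}\chi^2(\mbf{p}\Vert\mbf{q})$ be strictly positive when $\mbf{p}\neq\mbf{q}$, which requires $\kappa_f^{\downarrow}>0$. The paper obtains this by asserting that strict convexity forces $f''(y)>0$ for all $y$, hence $\kappa_f^{\downarrow}>0$. Your argument instead shows $\cD_f(\mbf{p}\Vert\mbf{q})>0$ directly from the integral representation, which is a correct (and arguably more careful) statement, but it is not literally the same conclusion: since $t=0$ always lands the argument at $1$, one has $\kappa_f^{\downarrow}\le f''(1)$, and your observation that strict convexity only gives $f''\ge 0$ with no vanishing on intervals does not by itself rule out $f''(1)=0$. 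If you want to match the paper's stated conclusion, you should either adopt its claim that $f''>0$ pointwise, or note explicitly that you are proving the (slightly different) positivity of $\cD_f$ itself.
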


\begin{proof}
First, by Proposition~\ref{prop:classical-f-div-restrict-to-q-support}, we may restrict the $f$-divergence to being calculated on the support of $\mbf{q}$ without loss of generality. Now, starting from \eqref{eq:f-div-taylor-first-order-exp} and using \eqref{eq:f-div-partials},
    \begin{align*}
        \cD_{f}(\mbf{p}||\mbf{q}) 
        =&  \int_{0}^{1} (1-t) \left[\sum_{i \in \supp(\mbf{q})} q_{i}^{-1}f''\left(\frac{(1-t)q_{i} + tp_{i}}{q_{i}}\right) (p_{i}-q_{i})^{2} \right] dt \\
        =&  \int_{0}^{1} (1-t) \left[\sum_{i \in \supp(\mbf{q})}f''\left(1+t\left(\frac{p_{i}}{q_{i}}-1\right)\right)    q_{i}^{-1}(p_{i}-q_{i})^{2} \right] dt \\
        =& \sum_{i \in \supp(\mbf{q})} \int_{0}^{1} (1-t)f''\left(1+t\left(\frac{p_{i}}{q_{i}}-1\right)\right)    q_{i}^{-1}(p_{i}-q_{i})^{2}  dt \ ,
    \end{align*}
where the second equality is just rearranging terms and the third uses our assumptions to interchange the integral and sum, which, since we consider a finite sum, holds so long as the individual integrals exist by Fubini's theorem. To see the integrals exist, note we need to consider the closed interval $\{1+t\left(\frac{p_{i}}{q_{i}}-1\right) \}_{t \in [0,1]}$ for each $i \in \supp(\mbf{q})$. This interval is contained in $(0,+\infty)$ when $\mbf{p} > 0$, which $f''$ is continuous over by assumption, so the integrals exist. In the case $\mbf{p} \geq 0$, the integral still exists under the assumption $\vert f''(0) \vert < +\infty$ as we then may apply the Riemann-Lebesgue theorem. Note that the intervals being closed justifies the $\max$ and $\min$ in \eqref{eq:kappa-up-arrow-defn},\eqref{eq:kappa-down-arrow-defn}. 

Continuing from the above equality,
    \begin{align*}
        \cD_{f}(\mbf{p}||\mbf{q}) 
         =& \sum_{i \in \supp(\mbf{q})} \int_{0}^{1} (1-t)f''\left(1+t\left(\frac{p_{i}}{q_{i}}-1\right)\right)    q_{i}^{-1}(p_{i}-q_{i})^{2}  dt  \ ,\nonumber \\
        \leq& \kappa_{f}^{\uparrow}(\mbf{p}, \mbf{q}) \sum_{i \in \supp(\mbf{q})} \int_{0}^{1} (1-t) q_{i}^{-1} (p_{i}-q_{i})^{2}  dt \\
        =& \kappa_{f}^{\uparrow}(\mbf{p}, \mbf{q}) \int_{0}^{1} (1-t) \left[ \sum_{i \in \supp(\mbf{q})} q_{i}^{-1} (p_{i}-q_{i})^{2} \right]  dt \\
        =& \frac{\kappa_{f}^{\uparrow}(\mbf{p}, \mbf{q})}{2}\chi^{2}(\mbf{p}||\mbf{q}) \ ,
    \end{align*}
    where we used the definition of $\kappa_{f}^{\uparrow}(\mbf{p}, \mbf{q})$ in the inequality and the definition of $\chi^{2}(\mbf{p}||\mbf{q})$ (See Eq.~\eqref{eq:chi-squared-div}).The lower bound is by an identical argument using $\kappa_{f}^{\downarrow}(\mbf{p}, \mbf{q})$.
    
    Finally, note that if $f$ is strictly convex, by the twice-differentiable condition for strict convexity, $f''(y) > 0$ for all $y \in (0,+\infty)$, so the lower bound is zero if and only if $\chi^{2}(\mbf{p}||\mbf{q})$, which is zero if and only if $\mbf{p}=\mbf{q}$ (See e.g. \cite{polyanskiy-2023a}). This completes the proof.
\end{proof}

Before continuing on, we make some remarks about our result. First, we note that Theorem~\ref{thm:f-div-chi-squared-bounds} implies \eqref{eq:csiszar-loc-behavior} in the finite-dimensional setting because 
\begin{align}\label{eq:kappa-limits}
\lim_{\mbf{p} \to \mbf{q}}\kappa_{f}^{\uparrow}(\mbf{p},\mbf{q}) = f''(1) = \lim_{\mbf{p} \to \mbf{q}}\kappa_{f}^{\downarrow}(\mbf{p},\mbf{q}) 
\end{align}
as may be verified via \eqref{eq:kappa-up-arrow-defn},\eqref{eq:kappa-down-arrow-defn}. This is a good general sanity check. 

Second, a standard presentation for $f$-divergence inequalities is to make some assumption on the likelihood ratio of the distributions, e.g. \cite{Barnett-2002a,Taneja-2004a, Dragomir-2016a,Dragomir-2019a}. We state our versions of such results as a corollary.
\begin{corollary}
    Consider interval $[m,M]$ where $0 < m \leq 1 \leq M$. Let $f$ be twice continuously differentiable such that $f(1) = 0$ and $f'(1) = 0$. Let $\mbf{p},\mbf{q}$ such that $m \leq p(x)/q(x) \leq M$ for all $x \in \cX$. Then,
    \begin{align}
        \frac{l(m,M)}{2} \chi^{2}(\mbf{p} \Vert \mbf{q}) \leq \cD_{f}(\mbf{p}\Vert \mbf{q}) \leq \frac{u(m,M)}{2}\chi^{2}(\mbf{p} \Vert \mbf{q}) \ , 
    \end{align}
    where $l(m,M) \coloneq \min_{a \in [m,M]} f''(a)$ and $u(m,M) \coloneq \max_{a \in [m,M]} f''(M)$.
\end{corollary}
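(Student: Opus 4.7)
The plan is to derive this corollary as an essentially immediate consequence of Theorem~\ref{thm:f-div-chi-squared-bounds}, by bounding the input-dependent quantities $\kappa_f^{\uparrow}(\mbf{p},\mbf{q})$ and $\kappa_f^{\downarrow}(\mbf{p},\mbf{q})$ uniformly in terms of the given interval $[m, M]$. First, I would verify the hypotheses of Theorem~\ref{thm:f-div-chi-squared-bounds}: since $m > 0$, the condition $p(x)/q(x) \geq m$ on $\supp(\mbf{q})$ forces $\mbf{p} \ll \mbf{q}$ and $\mbf{p}(x) > 0$ for $x \in \supp(\mbf{q})$, so after restricting the sum to $\supp(\mbf{q})$ via Proposition~\ref{prop:classical-f-div-restrict-to-q-support}, case (ii) of the theorem applies.

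The key observation is that every argument appearing inside $f''$ in the definitions of $\kappa_f^{\uparrow}(\mbf{p},\mbf{q})$ and $\kappa_f^{\downarrow}(\mbf{p},\mbf{q})$ lies in the interval $[m, M]$. Indeed, for any $i \in \supp(\mbf{q})$ and $t \in [0,1]$, one has $1 + t\bigl(p_i/q_i - 1\bigr) = (1-t)\cdot 1 + t \cdot (p_i/q_i)$, which is a convex combination of $1$ and $p_i/q_i$. By the hypotheses $m \leq 1 \leq M$ and $m \leq p_i/q_i \leq M$, both endpoints lie in $[m, M]$, and since $[m, M]$ is itself convex, the combination lies in $[m, M]$ as well.

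Consequently, the maximum and minimum defining $\kappa_f^{\uparrow}$ and $\kappa_f^{\downarrow}$ are taken over subsets of $[m,M]$, which yields $\kappa_f^{\uparrow}(\mbf{p},\mbf{q}) \leq u(m,M)$ and $\kappa_f^{\downarrow}(\mbf{p},\mbf{q}) \geq l(m,M)$. Substituting these into \eqref{eq:chi-squared-upper-bound} gives the stated inequalities. I do not anticipate any real obstacle: the result is essentially a repackaging of Theorem~\ref{thm:f-div-chi-squared-bounds} in which the input-dependent quantities are replaced by uniform bounds depending only on the likelihood-ratio interval, and the only piece of content is the convex-combination observation above.
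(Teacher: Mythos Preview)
Your proposal is correct and matches the paper's intended argument: the paper states this corollary immediately after Theorem~\ref{thm:f-div-chi-squared-bounds} without an explicit proof, treating it as a direct consequence obtained exactly as you describe, by observing that the arguments $1 + t(p_i/q_i - 1)$ lie in $[m,M]$ and hence $\kappa_f^{\downarrow}$ and $\kappa_f^{\uparrow}$ are bounded by $l(m,M)$ and $u(m,M)$ respectively.
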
 
To the best of our ability, we have not been able to find a work that explicitly states these specific bounds, though we remark they are morally similar to results in \cite{Taneja-2004a,Dragomir-2016a,Dragomir-2019a}.

\subsubsection{Reverse Pinsker Inequalities and Different Lower bounds in terms of \texorpdfstring{$\chi^{2}$}{}-Divergence}The same proof method as the previous theorem allows us to establish reverse Pinsker inequalities for the twice continuously differentiable $f$-divergences.
\begin{corollary}\label{cor:Reverse-Pinsker}
    Let $0 \leq \mbf{p}$ and $\mbf{p} \ll \mbf{q}$. Let $(0,+\infty) \subseteq S \subseteq \mbb{R}$ be an interval and $f:S \to \mbb{R}$ be twice continuously differentiable on open interval $(0,+\infty) \subseteq I \subseteq \mbb{R}$. Then,
    \begin{align}
        \cD_{f}(\mbf{p}||\mbf{q}) \leq& \frac{\kappa_{f}^{\uparrow}(\mbf{p},\mbf{q})}{2\wt{q}_{\min}} \Vert\mbf{p}-\mbf{q}\Vert_{2}^{2} \label{eq:Lipschitz-like} \\
        \leq& \frac{2\kappa_{f}^{\uparrow}(\mbf{p},\mbf{q})}{\wt{q}_{\min}}\mrm{TV}(\mbf{p},\mbf{q})^{2} \ , \label{eq:Reverse-Pinsker}
    \end{align}
    where $\kappa_{f}^{\uparrow}(\mbf{p},\mbf{q})$ is defined in Eq.~\eqref{eq:kappa-up-arrow-defn}, $\wt{q}_{\min} := \min_{i: q_{i} > 0} q_{i}$, and this bound is finite under the same conditions as given in Theorem~\ref{thm:f-div-chi-squared-bounds}.
\end{corollary}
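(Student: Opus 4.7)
The plan is to chain the upper bound from Theorem~\ref{thm:f-div-chi-squared-bounds} with two elementary norm inequalities. First, I would invoke Theorem~\ref{thm:f-div-chi-squared-bounds} to get
\begin{equation*}
\cD_{f}(\mbf{p}\Vert\mbf{q}) \leq \frac{\kappa_{f}^{\uparrow}(\mbf{p},\mbf{q})}{2}\chi^{2}(\mbf{p}\Vert\mbf{q}),
\end{equation*}
noting that the hypotheses of that theorem (namely $\mbf{p} \ll \mbf{q}$ together with twice continuous differentiability of $f$ on $(0,+\infty)$, plus the appropriate boundary condition giving finiteness of $\kappa_{f}^{\uparrow}$) are exactly what we assume here.

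Next, I would bound the $\chi^{2}$-divergence by the squared $\ell_{2}$ distance. Using Proposition~\ref{prop:classical-f-div-restrict-to-q-support} to restrict the $\chi^{2}$ sum to $\supp(\mbf{q})$, and observing that $\mbf{p}\ll\mbf{q}$ makes $(p_{i}-q_{i})^{2}=0$ outside $\supp(\mbf{q})$, I get
\begin{equation*}
\chi^{2}(\mbf{p}\Vert\mbf{q}) = \sum_{i \in \supp(\mbf{q})} \frac{(p_{i}-q_{i})^{2}}{q_{i}} \leq \frac{1}{\wt{q}_{\min}}\sum_{i}(p_{i}-q_{i})^{2} = \frac{1}{\wt{q}_{\min}}\Vert\mbf{p}-\mbf{q}\Vert_{2}^{2},
\end{equation*}
which establishes \eqref{eq:Lipschitz-like} after multiplication by $\kappa_{f}^{\uparrow}(\mbf{p},\mbf{q})/2$.

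Finally, for \eqref{eq:Reverse-Pinsker}, I would apply the standard monotonicity $\Vert \cdot \Vert_{2} \leq \Vert \cdot \Vert_{1}$ to obtain $\Vert\mbf{p}-\mbf{q}\Vert_{2}^{2} \leq \Vert\mbf{p}-\mbf{q}\Vert_{1}^{2} = 4\,\mrm{TV}(\mbf{p},\mbf{q})^{2}$, where the last equality is just the definition of total variation distance. Substituting this in picks up the factor of $4$, which combines with the $1/(2\wt{q}_{\min})$ to give precisely $2/\wt{q}_{\min}$, as required.

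There is no real obstacle here, since the work of bounding $\cD_{f}$ by $\chi^{2}$ has already been done in Theorem~\ref{thm:f-div-chi-squared-bounds}; the rest is routine. The only point requiring minor care is the finiteness of $\kappa_{f}^{\uparrow}$, which is inherited from the hypotheses of Theorem~\ref{thm:f-div-chi-squared-bounds} (condition (i) $|f''(0)| < +\infty$ or (ii) $\mbf{p} > 0$), so the final bound is finite under exactly those same conditions as claimed.
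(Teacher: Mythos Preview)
Your proposal is correct and follows essentially the same route as the paper: bound $\cD_{f}$ by $\tfrac{\kappa_{f}^{\uparrow}}{2}\chi^{2}$, then bound $\chi^{2}$ by $\wt{q}_{\min}^{-1}\Vert\mbf{p}-\mbf{q}\Vert_{2}^{2}$, then pass to $\Vert\cdot\Vert_{1}$. The only cosmetic difference is that the paper restarts from the integral representation in the proof of Theorem~\ref{thm:f-div-chi-squared-bounds} and pulls out $q_{i}^{-1}\le\wt{q}_{\min}^{-1}$ and $f''\le\kappa_{f}^{\uparrow}$ in the opposite order, whereas you invoke the theorem's conclusion directly; the two orderings commute and yield the identical bound.
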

\begin{proof}
    As we assume $\mbf{p} \ll \mbf{q}$, we restrict to the support of $\mbf{q}$  without loss of generality. Then, as established at the start of the proof of Theorem~\ref{thm:f-div-chi-squared-bounds}, we have
    \begin{align*}
        \cD_{f}(\mbf{p}||\mbf{q})
        = \sum_{i \in \supp(\mbf{q})} \int_{0}^{1} (1-t)f''\left(1+t\left(\frac{p_{i}}{q_{i}}-1\right)\right)    q_{i}^{-1}(p_{i}-q_{i})^{2}  dt \ .
    \end{align*}
    Noting that on the support of $\mbf{q}$, $q_{i}^{-1} \leq \wt{q}^{-1}_{\min}$, we have
    \begin{align*}
        \cD_{f}(\mbf{p}||\mbf{q})
        \leq & \wt{q}_{\min}^{-1} \sum_{i \in \supp(\mbf{q})} \int_{0}^{1} (1-t)f''\left(1+t\left(\frac{p_{i}}{q_{i}}-1\right)\right) (p_{i}-q_{i})^{2}  dt \\
        \leq & \frac{\kappa_{f}^{\uparrow}(\mbf{p},\mbf{q})}{\wt{q}_{\min}}  \int^{1}_{0} (1-t) \left[ \sum_{i \in \supp(\mbf{q})} (p_{i} - q_{i})^{2} \right] dt \\
        =& \frac{\kappa_{f}^{\uparrow}(\mbf{p},\mbf{q})}{2\wt{q}_{\min}} \Vert\mbf{p}-\mbf{q}\Vert_{2}^{2} \ ,
    \end{align*}
    where the second inequality is our definition of $\kappa_{f}^{\uparrow}(\mbf{p},\mbf{q})$ and the equality uses the definition of the $\ell^{2}$-norm and that $\mbf{p} \ll \mbf{q}$. Recalling that $\Vert\mbf{x}\Vert_{2} \leq \Vert\mbf{x}\Vert_{1}$ and using $\mrm{TV}(\mbf{p},\mbf{q}) := \frac{1}{2}\Vert \mbf{p} - \mbf{q}\Vert_{1}$ establishes \eqref{eq:Reverse-Pinsker} in the theorem statement.
\end{proof}
We note that there exist reverse Pinsker inequalities for $f$-divergences established in \cite{Sason-2016-f-div-ineqs}. In our case we require $f$ to be twice continuously differentiable, whereas there they require $f$ to be $L$-Lipschitz over the interval they consider.

To get a better gauge on these inequalities, we provide two special cases of the above: the KL divergence and the $\chi^{2}$-divergence. We note our reverse Pinsker inequality is not restricted to probability distributions like other known reverse Pinsker inequalities for the KL divergence, e.g. \cite{Csiszar-2006a,Sason-2016-f-div-ineqs}.
\begin{corollary}
    For $\mbf{p} \ll \mbf{q}$,
    \begin{align}
        \chi^{2}(\mbf{p} \Vert \mbf{q}) \leq& \frac{4}{\wt{q}_{\min}}\mrm{TV}(\mbf{p},\mbf{q})^{2} \label{eq:chi-squared-reverse-pinsker} \ .
    \end{align}
    Similarly, if $\mbf{p} \ll \gg \mbf{q}$,
    \begin{align}\label{eq:reverse-Pinsker-non-normalized}
        D(\mbf{p} \Vert \mbf{q}) \leq& \frac{2\ln(2) r}{q_{\min}}\mrm{TV}(\mbf{p},\mbf{q})^{2} , 
    \end{align}
    where $r := \max\{1,\max_{i \in \supp(\mbf{q})} q_{i}/p_{i}\}$.
\end{corollary}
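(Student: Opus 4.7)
The plan is to treat both inequalities as direct specializations of Corollary~\ref{cor:Reverse-Pinsker}, which already provides the general reverse-Pinsker upper bound
\begin{equation*}
    \cD_{f}(\mbf{p}\Vert\mbf{q}) \leq \frac{2\kappa_{f}^{\uparrow}(\mbf{p},\mbf{q})}{\wt{q}_{\min}}\mrm{TV}(\mbf{p},\mbf{q})^{2}\ .
\end{equation*}
All that is required is to pick the appropriate generator $f$ in each case, verify the finiteness hypotheses of Theorem~\ref{thm:f-div-chi-squared-bounds} that underlie Corollary~\ref{cor:Reverse-Pinsker}, and then explicitly evaluate $\kappa_{f}^{\uparrow}(\mbf{p},\mbf{q})$ from the definition \eqref{eq:kappa-up-arrow-defn}.

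For the $\chi^{2}$ bound, take $f(t)=(t-1)^{2}$, so that $f''\equiv 2$. In particular $|f''(0)|<+\infty$, so Theorem~\ref{thm:f-div-chi-squared-bounds} is applicable under the single-sided hypothesis $\mbf{p}\ll\mbf{q}$. Since $f''$ is constant, $\kappa_{f}^{\uparrow}(\mbf{p},\mbf{q})=2$ for every admissible pair, and substituting into Corollary~\ref{cor:Reverse-Pinsker} gives the claim $\chi^{2}(\mbf{p}\Vert\mbf{q})\leq \tfrac{4}{\wt{q}_{\min}}\mrm{TV}(\mbf{p},\mbf{q})^{2}$ immediately.

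The KL case requires a little more care. Take $f(t)=t\log t$, so that $f''(t)\propto 1/t$, with the proportionality constant determined by the logarithm base. Now $|f''(0)|=+\infty$, so neither alternative in Theorem~\ref{thm:f-div-chi-squared-bounds} is available under $\mbf{p}\ll\mbf{q}$ alone; we need $\mbf{p}>\mbf{0}$ on $\supp(\mbf{q})$, which is precisely the content of the upgraded hypothesis $\mbf{p}\ll\gg\mbf{q}$. This explains the stronger assumption for the second inequality. To evaluate $\kappa_{f}^{\uparrow}$, note that for each $i\in\supp(\mbf{q})$ the argument $1+t(p_{i}/q_{i}-1)$ traces the closed segment between $1$ and $p_{i}/q_{i}$ as $t$ ranges over $[0,1]$, so
\begin{equation*}
    \max_{t\in[0,1]} f''\!\left(1+t\!\left(\tfrac{p_{i}}{q_{i}}-1\right)\right) \;\propto\; \max\!\left(1,\,\tfrac{q_{i}}{p_{i}}\right)\ .
\end{equation*}
Maximizing over $i\in\supp(\mbf{q})$ yields exactly the factor $r$ appearing in the statement, and plugging into Corollary~\ref{cor:Reverse-Pinsker} produces the claimed reverse Pinsker bound, with the $\ln 2$ constant arising from the base-$2$ convention under which the paper states its divergence inequalities.

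Neither step contains a genuine obstacle beyond this bookkeeping. The one conceptual point worth emphasising is the asymmetry between the two bounds: because $f''_{\chi^{2}}$ is bounded on $[0,+\infty)$, only one-sided absolute continuity is needed and the resulting constant is purely geometric; whereas the blow-up of $f''_{D}$ at the origin forces mutual absolute continuity and produces a constant that depends on the worst-case inverse likelihood ratio $r$. Making this dichotomy explicit, via the split of $\kappa_{f}^{\uparrow}$ into the regimes $p_{i}/q_{i}\geq 1$ and $p_{i}/q_{i}<1$, is the only step that demands any real attention.
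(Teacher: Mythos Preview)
Your proposal is correct and follows essentially the same route as the paper: both inequalities are obtained as direct specializations of Corollary~\ref{cor:Reverse-Pinsker} by choosing the appropriate generator, computing $\kappa_{f}^{\uparrow}$ (constant $2$ for $\chi^{2}$, and $r$ up to the base-change factor for KL), and noting that the condition $|f''(0)|=+\infty$ for $t\log t$ forces the upgraded hypothesis $\mbf{p}\ll\gg\mbf{q}$. Your explanation of the $p_i/q_i\ge 1$ versus $p_i/q_i<1$ split when maximizing $1/(1+t(p_i/q_i-1))$ is exactly the case analysis the paper carries out.
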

\begin{proof}
    For the $\chi^{2}$-divergence, recall it is defined via $f(t) := t^{2} -1$, so $f''(t) = 2$ for all $t \in \mbb{R}$. Thus, $\kappa_{f}^{\uparrow}(\mbf{p},\mbf{q}) = 2$.
    
    For the KL divergence, recall it is defined via $f(t) := t\log(t)$, so $f''(t) = \frac{1}{t \ln(2)}$. Thus, we are interested in maximizing $\frac{1}{1+t(p_{i}/q_{i}-1)}$ over $i \in \supp(\mbf{q})$ as we may restrict to its support. In this case, the maximum is achieved by either choosing $i^{\star} := \text{argmin}_{i \in \supp(\mbf{q})} \frac{p_{i}}{q_{i}} < 1$ if such a point exists and letting $t = 1$ and otherwise setting $t = 0$. Note we assume $\mbf{q} \ll \mbf{p}$ as otherwise the bound becomes trivial.
\end{proof}

It is not hard to see that the our reverse Pinsker for KL divergence is incomparable for probability distributions to the previous best method \cite{Csiszar-2006a}. To see this, we may rewrite our result as $D(\mbf{p} \Vert \mbf{q}) \leq \frac{r}{4q_{\min} \log(e)}\Vert \mbf{p} - \mbf{q} \Vert_{1}^{2}$ whereas for probability measures on a finite alphabet, the known result \cite{Csiszar-2006a} is $D(\mbf{p} \Vert \mbf{q}) \leq \frac{\log(e)}{q_{\min}}\Vert \mbf{p} - \mbf{q} \Vert_{1}^{2}$. It follows that when $r \leq 4\log^{2}(e) \approx 8.325$, our result is tighter. In other words, when $\mbf{p},\mbf{q}$ are rather similar in an entry-wise manner, our result is tighter, but when they are further apart, our result is more loose. 

In contrast to our discussion on our reverse Pinsker for KL divergence, the $\chi^{2}$ reverse Pinsker can be universally tightened using extra structure from its definition. For completeness, and to use it later, we note the following which is a direct generalization of a claim in \cite{Makur-2020a} and a known inequality from \cite{Sason-2014a}.
\begin{proposition}(\cite{Sason-2014a,Makur-2020a})\label{prop-chi-squared-lower-bound-tv}
    Let $0 \leq \mbf{p} \ll \mbf{q}$.
    \begin{align}
        \chi^{2}(\mbf{p}\Vert \mbf{q}) \leq \frac{\Vert \mbf{p} - \mbf{q} \Vert_{\infty}}{\wt{q}_{\min}} \Vert \mbf{p} - \mbf{q} \Vert_{1} \ . 
    \end{align}
    Moreover, if $\mbf{p}, \mbf{q} \in \cP(\cX)$ such that $\mbf{p} \neq \mbf{q}$ as well, then
    \begin{align}\label{eq:chi-squared-reverse-Pinsker-Sason}
        \chi^{2}(\mbf{p} \Vert \mbf{q}) \leq \frac{\Vert \mbf{p} - \mbf{q}\Vert_{1}^{2}}{2 \wt{q}_{\min}}
    \end{align}
\end{proposition}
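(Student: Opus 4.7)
The plan is to prove the two bounds in turn, both by elementary manipulation of the defining sum for $\chi^2(\mathbf{p}\Vert\mathbf{q})$ on the support of $\mathbf{q}$.

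For the first inequality, I would start from the definition \eqref{eq:chi-squared-div} and, using $\mathbf{p}\ll\mathbf{q}$, restrict the sum to $\supp(\mathbf{q})$ (as in Proposition~\ref{prop:classical-f-div-restrict-to-q-support}). On this support, each denominator satisfies $q(x)^{-1}\le\wt{q}_{\min}^{-1}$, so pulling this constant out leaves $\sum_{x\in\supp(\mathbf{q})}(p(x)-q(x))^{2}$. The key bound is then the Hölder-type inequality $\sum_x a_x^{2}\le\Vert\mathbf{a}\Vert_\infty\Vert\mathbf{a}\Vert_1$ applied to $a_x=p(x)-q(x)$; combining the two pieces yields the claimed upper bound.

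For the second inequality, under the additional assumption that $\mathbf{p},\mathbf{q}\in\mathcal{P}(\mathcal{X})$, I would improve the $\Vert\mathbf{p}-\mathbf{q}\Vert_\infty$ factor using the fact that $\mathbf{p}-\mathbf{q}$ sums to zero. Write $(p(x)-q(x))_+$ and $(p(x)-q(x))_-$ for the positive and negative parts; since $\sum_x (p(x)-q(x))=0$, these two parts have equal total mass, each equal to $\mathrm{TV}(\mathbf{p},\mathbf{q})=\tfrac{1}{2}\Vert\mathbf{p}-\mathbf{q}\Vert_1$. The maximum coordinate of $|\mathbf{p}-\mathbf{q}|$ is bounded by the sum of either the positive or the negative part, giving $\Vert\mathbf{p}-\mathbf{q}\Vert_\infty\le\tfrac{1}{2}\Vert\mathbf{p}-\mathbf{q}\Vert_1$. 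Substituting into the first inequality produces $\chi^{2}(\mathbf{p}\Vert\mathbf{q})\le\Vert\mathbf{p}-\mathbf{q}\Vert_1^{2}/(2\wt{q}_{\min})$.

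There is really no obstacle here: both steps are one-line manipulations and the result is presented essentially as a corollary of the same elementary $\ell^{p}$-type inequalities already used elsewhere in the section. The only subtle point is the tightening step in the probability case, which relies crucially on the cancellation $\sum_x(p(x)-q(x))=0$ and thus does not generalize to arbitrary non-negative vectors with the same total mass only if we have already normalized; the assumption $\mathbf{p}\neq\mathbf{q}$ is only to ensure both sides are nonzero so that the bound is meaningful. The stated citations to \cite{Sason-2014a,Makur-2020a} indicate that the first form and the probability-distribution specialization are each known, so the proposition is presented mainly to fix notation and make the constants explicit for later use (e.g., in the contraction-coefficient arguments of Section~\ref{sec:classical-Input-Dependent-SDPI}).
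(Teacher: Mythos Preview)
Your proposal is correct and follows essentially the same approach as the paper: the paper bounds the ratio $\vert p_x-q_x\vert/q_x$ by $\Vert\mbf{p}-\mbf{q}\Vert_\infty/\wt{q}_{\min}$ in one step rather than splitting it into two as you do, but this is a cosmetic reordering. For the second inequality the paper simply cites \cite[Eqn.~72]{Makur-2020a} and \cite[Proposition~3]{Sason-2014a}, whereas you actually supply the underlying argument that $\Vert\mbf{p}-\mbf{q}\Vert_\infty\le\tfrac{1}{2}\Vert\mbf{p}-\mbf{q}\Vert_1$ when $\sum_x(p_x-q_x)=0$; note, incidentally, that this step only requires equal total mass, not normalization, so your parenthetical remark about that is slightly off.
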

\begin{proof}
First we provide an immediate generalization of an inequality given in \cite{Makur-2020a}:
    \begin{align}
        \chi^{2}(\mbf{p}\Vert \mbf{q}) 
        =& \sum_{x \in \supp(\mbf{q})} \vert p_{x} - q_{x} \vert \frac{\vert p_{x} - q_{x}\vert }{\vert q_{x} \vert} \\
        \leq&  \sum_{x \in \supp(\mbf{q})} \vert p_{x} - q_{x} \vert \frac{\Vert \mbf{p} - \mbf{q} \Vert_{\infty}}{\wt{q}_{\min}} \\
        =&   \frac{\Vert \mbf{p} - \mbf{q} \Vert_{\infty}}{\wt{q}_{\min}} \Vert \mbf{p} - \mbf{q} \Vert_{1} \ ,
    \end{align}
    where the inequality uses that $\vert p_{x} - q_{x} \vert \leq \max_{x} \vert p_{x} - q_{x}\vert = \Vert \mbf{p} - \mbf{q} \Vert_{\infty}$ and $\vert q_{x} \vert$ can only be larger than the minimal entry on the support. In the case $\mbf{p},\mbf{q} \in \cP(\cX)$ then one may simplify to \eqref{eq:chi-squared-reverse-Pinsker-Sason} following the argument in \cite[Eqn. 72]{Makur-2020a}, although the relation was already established in \cite[Proposition 3]{Sason-2014a}.
\end{proof}

We may combine this with our Pinsker inequalities to get lower bounds on $f$-divergences in terms of $\chi^{2}$-divergences that do not depend on $\kappa^{\downarrow}$ as will be useful later.
\begin{lemma}\label{lem:f-div-lb-by-chi-squared}
    Let $f:(0,+\infty) \to \mbb{R}$ be convex and twice continuously differentiable and $\mbf{p},\mbf{q} \in \cP(\cX)$. Then,
    \begin{align}
         D_{f}(\mbf{p} \Vert \mbf{q}) \geq \frac{L_{f}\wt{q}_{\min}}{4}\chi^{2}(\mbf{p} \Vert \mbf{q}) \ .
    \end{align}
\end{lemma}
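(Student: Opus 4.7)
The plan is to chain together two inequalities already established in the paper: a Pinsker-type lower bound on $D_f$ in terms of $\mrm{TV}^2$, and a reverse-Pinsker upper bound on $\chi^2$ in terms of $\mrm{TV}^2$. Since both are stated for probability distributions, all the hypotheses align perfectly with the setting of the lemma.

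First, I would invoke Theorem~\ref{thm:pinsker-uni} (or equivalently Theorem~\ref{thm:pinsker-multi}) with $c = 1$ since $\mbf{p},\mbf{q} \in \cP(\cX)$. This yields
\begin{equation*}
    D_{f}(\mbf{p}\Vert\mbf{q}) \;\geq\; \frac{L_{f}}{2}\,\mrm{TV}(\mbf{p},\mbf{q})^{2}\ .
\end{equation*}
Next, I would apply the second statement of Proposition~\ref{prop-chi-squared-lower-bound-tv}, namely \eqref{eq:chi-squared-reverse-Pinsker-Sason}, which gives $\chi^{2}(\mbf{p}\Vert\mbf{q}) \leq \frac{\Vert\mbf{p}-\mbf{q}\Vert_{1}^{2}}{2\wt{q}_{\min}}$. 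Rewriting $\Vert\mbf{p}-\mbf{q}\Vert_{1}^{2} = 4\,\mrm{TV}(\mbf{p},\mbf{q})^{2}$ by definition of total variation, this rearranges to
\begin{equation*}
    \mrm{TV}(\mbf{p},\mbf{q})^{2} \;\geq\; \frac{\wt{q}_{\min}}{2}\,\chi^{2}(\mbf{p}\Vert\mbf{q})\ .
\end{equation*}

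Finally I would substitute the second display into the first to obtain
\begin{equation*}
    D_{f}(\mbf{p}\Vert\mbf{q}) \;\geq\; \frac{L_{f}}{2}\cdot\frac{\wt{q}_{\min}}{2}\,\chi^{2}(\mbf{p}\Vert\mbf{q}) \;=\; \frac{L_{f}\wt{q}_{\min}}{4}\,\chi^{2}(\mbf{p}\Vert\mbf{q})\ ,
\end{equation*}
which is exactly the claimed bound. There is no real obstacle here: the entire content of the lemma is packaging together the $f$-divergence Pinsker inequality with the sharp $\chi^{2}$ reverse-Pinsker bound. The only minor care required is to note that the TV-squared bound on $\chi^{2}$ in Proposition~\ref{prop-chi-squared-lower-bound-tv} is stated for the case $\mbf{p}\neq\mbf{q}$; the edge case $\mbf{p}=\mbf{q}$ is trivial since both sides of the inequality vanish.
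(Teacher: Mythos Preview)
Your proposal is correct and matches the paper's own proof essentially line for line: apply the Pinsker inequality from Theorem~\ref{thm:pinsker-uni}/\ref{thm:pinsker-multi} with $c=1$, then the reverse Pinsker bound \eqref{eq:chi-squared-reverse-Pinsker-Sason} from Proposition~\ref{prop-chi-squared-lower-bound-tv}, and combine. The paper writes the intermediate step in terms of $\Vert\mbf{p}-\mbf{q}\Vert_1^2$ rather than $\mrm{TV}^2$, but this is cosmetic; your handling of the $\mbf{p}=\mbf{q}$ edge case is also fine.
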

\begin{proof}
    If $L_{f} = 0$, we have nothing to prove, so we assume $L_{f} > 0$. By Theorem \ref{thm:pinsker-uni} or \ref{thm:pinsker-multi} followed by \eqref{eq:chi-squared-reverse-Pinsker-Sason}, we have
    \begin{align}
        D_{f}(\mbf{p} \Vert \mbf{q}) \geq \frac{L_{f}}{8}\Vert \mbf{p} - \mbf{q} \Vert_{1}^{2} \geq \frac{L_{f}\wt{q}_{\min}}{4}\chi^{2}(\mbf{p} \Vert \mbf{q}) \ .
    \end{align}
\end{proof}

In total, what the above examples tell us is that, arguably unsurprisingly, using a more general method comes at a cost can come at the cost of tightness in either some regime or, even worse, in all regimes. However, it is this generality in the choice of $f$ that we are interested in and which will be used in Section~\ref{sec:classical-Input-Dependent-SDPI}.

\subsubsection{Pinsker and Reverse Pinsker Inequalities for Bregman Divergences}
We now use our integral representations of Bregman divergences to obtain (input-dependent) Pinsker and Reverse Pinsker inequalities. Technically, these are more general than the cases we considered previously, but they seem less insightful in general. Note that we could not get this from the univariate Taylor's theorem methodology of Section~\ref{sec:f-div-Pinsker} as the Bregman divergence can involve a function $F: S \to \mbb{R}$ where $S \subset \mbb{R}^{n}$ (See Definition~\ref{def:BregDiv}). We also note in the specific case that the Bregman divergence was defined in terms of the negative of the entropy of a vector, $F \to -S(\mbf{x}) := \sum_{i \in \cX} x_{i}\log(x_{i})$, the (non-reverse) Pinsker inequality was derived in \cite{Bansal-2024a} using the integral representation where it was used for online learning regret bounds. 

\begin{corollary}
    Let $S \subset \mbb{R}^{n}$ be a \textit{closed} convex set and $F: S \to \mbb{R}$ be convex and twice-differentiable over $S$. Define $\pmb{\lambda}_{t} := (1-t)\mbf{q} + t\mbf{p}$,
    \begin{align*}
        \frac{4\gamma_{F}^{\downarrow}}{|\supp(\mbf{p}-\mbf{q})|^{2}} \mrm{TV}( \mbf{p},\mbf{q})^{2} \leq \gamma_{F}^{\downarrow}(\mbf{p},\mbf{q}) \Vert \mbf{p} - \mbf{q} \Vert_{2}^{2} \leq & B_{F}(\mbf{p}\Vert \mbf{q}) \\
        \leq& \frac{\gamma_{F}^{\uparrow}(\mbf{p},\mbf{q})}{2} \Vert \mbf{p}-\mbf{q}\Vert_{2}^{2} \leq 2 \gamma^{\uparrow}_{F}(\mbf{p},\mbf{q}) \mrm{TV}(\mbf{p},\mbf{q})^{2} \ ,
    \end{align*}
    where 
    \begin{align*}
        \gamma^{\uparrow}_{F}(\mbf{p},\mbf{q}) :=& \max_{t \in [0,1]} \lambda_{\max}(H_{F_{\vert \pmb{\lambda}_{t}}}) \ , \\
        \gamma^{\downarrow}_{F}(\mbf{p},\mbf{q}) :=&  \min_{t \in [0,1]} \lambda_{\min}(H_{F_{\vert\pmb{\lambda}_{t})}})  \ ,
    \end{align*}
    $\pmb{\lambda}_{t}$ is defined in Lemma~\ref{lem:integral-rep-for-breg-div}, and $\lambda_{\max}(\cdot)$ (resp.~$\lambda_{\min}(\cdot)$) is the max (resp.~minimum) eigenvalue function. Moreover, the lower bounds are non-trivial if $F$ is strictly convex.
\end{corollary}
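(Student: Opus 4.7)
The plan is to bound the integral representation of the Bregman divergence supplied by Lemma~\ref{lem:integral-rep-for-breg-div},
\[
B_F(\mbf{p}\Vert\mbf{q}) = \int_0^1 (1-t)\,(\mbf{p}-\mbf{q})^T H_{F\vert_{\pmb{\lambda}_t}}(\mbf{p}-\mbf{q})\,dt,
\]
by controlling the integrand uniformly in $t$. Because $F$ is convex and twice-differentiable on $S$, each Hessian $H_{F\vert_{\pmb{\lambda}_t}}$ is symmetric and positive semidefinite, so the Rayleigh quotient bounds give, for every $t \in [0,1]$,
\[
\lambda_{\min}(H_{F\vert_{\pmb{\lambda}_t}}) \|\mbf{p}-\mbf{q}\|_2^2 \leq (\mbf{p}-\mbf{q})^T H_{F\vert_{\pmb{\lambda}_t}}(\mbf{p}-\mbf{q}) \leq \lambda_{\max}(H_{F\vert_{\pmb{\lambda}_t}}) \|\mbf{p}-\mbf{q}\|_2^2.
\]

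First I would replace the $t$-dependent eigenvalues by their extrema $\gamma_F^{\downarrow}(\mbf{p},\mbf{q})$ and $\gamma_F^{\uparrow}(\mbf{p},\mbf{q})$ over $t \in [0,1]$. These are well defined: the segment $\{\pmb{\lambda}_t : t \in [0,1]\}$ is a compact subset of the closed convex set $S$, and the eigenvalue functions are continuous in $t$ by continuity of $F''$, so the max and min are attained. Substituting these uniform bounds under the integral and evaluating $\int_0^1 (1-t)\,dt = 1/2$ then produces the middle pair of inequalities relating $B_F(\mbf{p}\Vert\mbf{q})$ to $\|\mbf{p}-\mbf{q}\|_2^2$.

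Second I would pass from $\ell^2$-norm bounds to TV-norm bounds via elementary norm equivalences. For the reverse-Pinsker (upper) side, $\|\mbf{p}-\mbf{q}\|_2 \leq \|\mbf{p}-\mbf{q}\|_1 = 2\,\mrm{TV}(\mbf{p},\mbf{q})$ yields the factor $4\,\mrm{TV}(\mbf{p},\mbf{q})^2$ after squaring. For the Pinsker (lower) side, Cauchy--Schwarz restricted to $\supp(\mbf{p}-\mbf{q})$ gives $\|\mbf{p}-\mbf{q}\|_1 \leq \sqrt{|\supp(\mbf{p}-\mbf{q})|}\,\|\mbf{p}-\mbf{q}\|_2$, whence $\|\mbf{p}-\mbf{q}\|_2^2 \geq 4\,\mrm{TV}(\mbf{p},\mbf{q})^2 / |\supp(\mbf{p}-\mbf{q})|$, which implies the stated bound a fortiori. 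Finally, strict convexity of $F$ forces $H_{F\vert_{\pmb{\lambda}_t}} \succ 0$, so $\gamma_F^{\downarrow}(\mbf{p},\mbf{q}) > 0$ on the compact segment whenever $\mbf{p}\neq \mbf{q}$, and all lower bounds are strictly positive.

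I anticipate no serious technical obstacle: the proof is a packaging of Lemma~\ref{lem:integral-rep-for-breg-div} with two routine estimates, namely a Rayleigh quotient and a norm equivalence via Cauchy--Schwarz. The only mildly delicate point is justifying the uniform extremization over $t$, which rests on the closedness of $S$ (so that the attaining point of $\gamma_F^{\uparrow}$ and $\gamma_F^{\downarrow}$ lies in $S$) together with continuity of the spectrum of $H_{F\vert_{\pmb{\lambda}_t}}$ in $t$.
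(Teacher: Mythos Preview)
Your proposal is correct and matches the paper's argument essentially step for step: start from the integral representation in Lemma~\ref{lem:integral-rep-for-breg-div}, bound the quadratic form via the extremal eigenvalues of the Hessian uniformly in $t$, integrate $(1-t)$ over $[0,1]$, and then convert between $\ell^2$ and $\ell^1$ (TV) via the standard norm inequalities. Your justification for why the extrema over $t$ are attained and your observation that Cauchy--Schwarz actually gives a sharper support-dependent constant than the one stated are both fine additions.
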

\begin{proof}
We start from Lemma~\ref{lem:integral-rep-for-breg-div}:
\begin{align*}
    B_{F}(\mbf{p}||\mbf{q})
    =& \int^{1}_{0} (1-t) \left[(\mbf{p}-\mbf{q})^{T} H_{F\vert_{\pmb{\lambda}_{t}}} (\mbf{p} -\mbf{q}) dt \right] \\
    \leq& \gamma^{\uparrow}(\mbf{p},\mbf{q}) \Vert \mbf{p} - \mbf{q} \Vert_{2}^{2} \int^{1}_{0} (1-t) \, dt \\
    =& \frac{\gamma^{\uparrow}(\mbf{p},\mbf{q})}{2} \Vert \mbf{p} - \mbf{q} \Vert_{2}^{2} \ ,
\end{align*}
where the inequality is our definition of $\gamma$. One converts the above results to a total variation bound in the manner we have done previously.

In the reverse case, by an identical argument.
\begin{align*}
    B_{F}(\mbf{p}||\mbf{q})
    =& \int^{1}_{0} (1-t) \left[(\mbf{p}-\mbf{q})^{T} H_{F\vert_{\pmb{\lambda}_{t}}} (\mbf{p} -\mbf{q}) dt \right] \geq \frac{\gamma^{\downarrow}}{2}\Vert \mbf{p} - \mbf{q} \Vert_{2}^{2} \ .
\end{align*}
To convert this to total variation, one must use $\|\mbf{x}\|_{2} \geq \frac{1}{|\supp(\mbf{x})|}\|\mbf{x}\|_{1}$, which introduces an unfortunate scaling. 

The moreover statement simply follows from a twice-differentiable convex function having a strictly positive Hessian. This completes the proof.
\end{proof}
We remark that the above results can be made dependent on other properties than specific inputs. For example, we can make the coefficients depend on just the normalization of the vectors as is the case for the traditional Pinsker inequality. We note one can recover the methodology for establishing Pinsker inequalities for $f$-divergences developed in Section~\ref{sec:f-div-Pinsker} through this method, but it obfuscates the key idea, which only relies on univariate Taylor's theorem.

\section{Input-Dependent Strong Data Processing and Markov Chains}\label{sec:classical-Input-Dependent-SDPI}
In this section, we provide our main application of our $f$-divergence bounds in the classical setting: bounds on contraction coefficients for $f$-divergences and their implications for time homogeneous (classical) Markov chains. We begin with some background on contraction coefficients and their relations to notions in discrete time Markov chains where the key points are that the $\chi^{2}$ input-dependent contraction coefficient is the smallest (i.e. fastest contraction speed) and is known to be efficiently computable. We then establish non-linear bounds between contraction coefficients of $f$-divergences, which we use to show the rate of contraction of a time homogeneous Markov chain is the $\chi^{2}$ contraction coefficient if we were to measure it in terms of most $f$-divergences (Theorem~\ref{thm:classical-contraction-rate}), i.e. from an iterative contraction perspective, the input-dependent $\chi^{2}$-contraction is the `correct' notion of contraction. This is an operational answer to the second basic question we posed in the introduction. We then show how to straightforwardly extend previous results to get computable mixing times for a class of $f$-divergences in terms of the input-dependent $\chi^{2}$ contraction coefficient and the choice of $f$ (Proposition~\ref{prop:f-div-mixing-time}).

\subsection{Background and Some Lemmata}
We denote a (classical) channel $\cW_{\cX \to \cY}$ and equivocate it with its matrix representation $W \in \mbb{R}^{\vert \cY \vert \times \vert \cX \vert}$. We stress to align with quantum information theory standards, the output of a channel, $\mbf{p}_{Y}$, is given by $\mbf{p}_{Y} = W\mbf{p}_{X}$, i.e. the matrix representation of the channel is defined by multiplying the vector on the left. We denote the composition of a channel iteratively via the notation $\cW^{n} \coloneq \otimes_{i \in [n]} \cW$. We begin by defining contraction coefficients, which are measures of the strength of data processing of a channel.
\begin{definition}\label{def:classical-contraction-coefficients}
    Consider a channel $\cW_{X \to Y}$ and an $f$-divergence $D_{f}$. The \textit{input-independent contraction coefficient} is
    \begin{align}
       \eta_{f}(\cW) := \sup_{\mbf{p},\mbf{q} \in \cP(\cX) \, : \, 0 < D_{f}(\mbf{p}||\mbf{q}) < + \infty } \frac{D_{f}(W\mbf{p}||W\mbf{q})}{D_{f}(\mbf{p}||\mbf{q})}  \ .
    \end{align}
    The \textit{input-dependent contraction coefficient} is
    \begin{align}\label{eq:classical-input-dependent-contraction-coeff}
        \eta_{f}(\cW,\mbf{q}) := \sup_{\mbf{p} \in \cP(\cX) \, : \, 0 < D_{f}(\mbf{p}||\mbf{q}) < + \infty } \frac{D_{f}(W\mbf{p}||W\mbf{q})}{D_{f}(\mbf{p}||\mbf{q})}  \ .
    \end{align}
\end{definition}

We now define two properties of a matrix which will be relevant for understanding contraction coefficients.
\begin{definition}\label{def:scrambling}
     A non-negative matrix is \textit{scrambling} if and only if no two columns are orthogonal vectors. We say a channel $\cW$ is scrambling if its matrix representation $W$ is scrambling.
\end{definition}
We stress this definition is given in \cite{Hajnal-1958a,Seneta-2006a} where they require that no two \textit{rows} are orthogonal vectors. The reason we use columns is that we consider multiplying vectors on the left by matrices.

\begin{definition}\label{def:indecomposable-joint-dist}
    A joint distribution $p_{XY} \in \cP(\cX \times \cY)$ is decomposable if there exists $A \subset \cX$, $B \subset \cY$ such that $0 < \Pr[x \in A], \Pr[y \in B] < 1$ and $x \in A$ if and only if $y \in B$. Otherwise, it is indecomposable. 
\end{definition}
The above definition is saying that the joint outcome space can be partitioned into at least two pieces. Indeed, it is noted in \cite{Makur-2020a} that indecomposability can be expressed in a graph-theoretic manner as the bipartite graph with vertices over disjoint vertex sets $\cX,\cY$ defined via its edge set $\cE = \{(x,y) : p_{Y|X}(y|x)>0\}$ has two or more connected components.

\begin{proposition}[Facts about Classical Contraction Coefficients]\label{prop:facts-about-contraction-coefficients} Let $\cW_{X \to Y}$ be channel.
    \begin{enumerate}
        \item \cite{Cohen-1993a,Raginsky-2016a} For any $f$-divergence, $\eta_{f}(\cW) \leq \eta_{\text{TV}}(\cW)$.
        \item \cite{Choi-1994a,Raginsky-2016a} For $f$-divergence such that $f$ is \textit{operator} convex, $\eta_{\chi^{2}}(\cW) = \eta_{f}(\cW)$.
        \item \cite{Cohen-1993a} For any $f$-divergence, $\eta_{f}(\cW) < 1$ if and only if $\cW$ is scrambling.
        \item \cite{Polyanskiy-2017a} For any $f$ that is twice-differentiable and $f''(1) > 0$, 
        $$ \eta_{f}(\cW,\mbf{p}) \geq \eta_{\chi^{2}}(\cW,\mbf{p}) = \rho_{m}(\mbf{p},\cW(\mbf{p}))^{2} \ , $$
        where $\eta_{\chi^{2}}$ is the contraction coefficient for $\chi^{2}$ and $\rho_{m}(\cdot,\cdot)$ is the maximal correlation coefficient, which is defined as
        \begin{align}
            \rho_{m}(X,Y) := \sup_{f,g} \mbb{E}[f(X)g(Y)] \ , 
        \end{align} 
        where the supremum is measurable $f:\cX \to \mbb{R}$, $g : \cY \to \mbb{R}$ such that $\mbb{E}[f(X)]=0$, $\mbb{E}[g(Y)]=0$, $\mbb{E}[f(X)^{2}]=1$, and $\mbb{E}[g(Y)^{2}] = 1$. Moreover, the maximal correlation coefficient is efficiently computable in finite dimensions \cite{Witsenhausen-1975a,Kang-2010a} (See \cite{Makur-2020a} for a direct proof) In particular, it is the second eigenvalue of the matrix $\wt{p}_{XY}$ where $p_{XY}$ is the joint distribution of $(W\mbf{p})_{Y}$ and $\mbf{p}_{X}$  
        $$\wt{p}_{XY}(x,y) = \begin{cases} \frac{p_{XY}(x,y)}{\sqrt{p(x)}\sqrt{p(y)}} & p(x),p(y) >0 \\
        0 & \text{otherwise}
        \end{cases}
            \ . $$
    \end{enumerate}
\end{proposition}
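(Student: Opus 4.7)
The plan is to handle the four items separately, since they have rather different flavors. Items 1--3 concern input-independent contraction coefficients and are largely classical, so I would give short self-contained sketches and otherwise defer to the cited references; Item 4 is genuinely input-dependent and will use the local quadratic behaviour of $f$-divergences we have already developed in Section~\ref{sec:divergence-inequalities}.

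For Item 1, my approach is to invoke the integral representation of an $f$-divergence in terms of hockey-stick (equivalently $E_\gamma$) divergences. Each $E_\gamma$ is, up to an affine transformation, a total-variation distance between suitable sub-probability measures, so it contracts at rate at most $\eta_{\mrm{TV}}(\cW)$; integrating against the generating measure of $f$ preserves this constant, yielding $\eta_f(\cW) \leq \eta_{\mrm{TV}}(\cW)$. This is morally the argument of \cite{Cohen-1993a,Raginsky-2016a}. For Item 2, the key input is Loewner's integral representation of operator-convex functions on $(0,+\infty)$: any such $f$ with $f(1)=0$ may be written as a non-negative mixture of $\chi^{2}$-type generating functions parameterized by a real parameter. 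Linearity of $D_f$ in $f$ and the fact that every such $\chi^{2}$-type generator contracts at rate at most $\eta_{\chi^{2}}(\cW)$ then yield $\eta_f(\cW) \leq \eta_{\chi^{2}}(\cW)$; the reverse inequality is immediate since $\chi^{2}$ is itself operator-convex.

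For Item 3, the argument is essentially combinatorial. Scrambling of $\cW$ is equivalent to the Dobrushin coefficient being strictly less than $1$ (this is exactly the Hajnal--Seneta criterion), which by Item 1 transfers to every $f$-divergence. The converse is constructive: if $W$ has two orthogonal columns $W\mbf{e}_x, W\mbf{e}_{x'}$, then $\mbf{p} = \mbf{e}_x$ and $\mbf{q} = \mbf{e}_{x'}$ remain perfectly distinguishable after $\cW$, so their $f$-divergence is unchanged and $\eta_f(\cW) = 1$.

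The deepest item is Item 4. The lower bound $\eta_f(\cW,\mbf{p}) \geq \eta_{\chi^{2}}(\cW,\mbf{p})$ I would obtain by taking perturbations $\mbf{q}_\varepsilon = \mbf{p} + \varepsilon \mbf{v}$ for arbitrary mean-zero $\mbf{v}$ supported on $\supp(\mbf{p})$ and applying the local expansion~\eqref{eq:csiszar-loc-behavior} (equivalently Theorem~\ref{thm:f-div-chi-squared-bounds} together with~\eqref{eq:kappa-limits}) to both numerator and denominator of the ratio defining $\eta_f(\cW,\mbf{p})$. Since $f''(1) > 0$, the $f''(1)/2$ factors cancel and the ratio converges to $\chi^{2}(\cW\mbf{q}_\varepsilon\Vert \cW\mbf{p})/\chi^{2}(\mbf{q}_\varepsilon\Vert\mbf{p})$ in the limit $\varepsilon \to 0$; taking a supremum over $\mbf{v}$ then recovers $\eta_{\chi^{2}}(\cW,\mbf{p})$ from below. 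For the identity $\eta_{\chi^{2}}(\cW,\mbf{p}) = \rho_{m}(\mbf{p},\cW(\mbf{p}))^{2}$ I would follow the classical R\'enyi--Sarmanov approach: rewrite $\chi^{2}(\cW\mbf{q}\Vert\cW\mbf{p})/\chi^{2}(\mbf{q}\Vert\mbf{p})$ as a Rayleigh quotient on the mean-zero subspace, identify it with the square of the second largest singular value of the normalized joint $\wt{p}_{XY}$, and invoke the R\'enyi--Witsenhausen theorem identifying that singular value with the maximal correlation. Efficient computability then reduces to a singular-value computation on a matrix of size $|\cX| \times |\cY|$. The main obstacle in a fully self-contained treatment would be Item 2, which genuinely requires operator-convex function theory; at the level of this paper I would simply cite \cite{Choi-1994a,Raginsky-2016a} for that step.
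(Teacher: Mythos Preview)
The paper does not prove this proposition at all: it is stated as a collection of known facts with citations to \cite{Cohen-1993a,Choi-1994a,Raginsky-2016a,Polyanskiy-2017a,Witsenhausen-1975a,Kang-2010a,Makur-2020a}, so there is no in-paper proof to compare your proposal against. Your sketches are broadly correct and track the standard arguments in those references; in particular, your treatment of Item~4 via local perturbations and the Rayleigh-quotient / singular-value characterization of $\eta_{\chi^2}(\cW,\mbf{p})$ is exactly the right idea and is consistent with how the paper itself uses \eqref{eq:csiszar-loc-behavior} and \eqref{eq:kappa-limits}.

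There is one genuine gap in your Item~3 converse. Taking $\mbf{p}=\mbf{e}_x$ and $\mbf{q}=\mbf{e}_{x'}$ does not work in general, because $D_f(\mbf{e}_x \Vert \mbf{e}_{x'}) = f(0) + f'(\infty)$, which is infinite for many $f$ (e.g.\ KL divergence), and such pairs are excluded from the supremum defining $\eta_f(\cW)$. You need an approximation argument: for instance, fix a full-support $\mbf{u}$ and set $\mbf{p}_\varepsilon = (1-\varepsilon)\mbf{e}_x + \varepsilon \mbf{u}$, $\mbf{q}_\varepsilon = (1-\varepsilon)\mbf{e}_{x'} + \varepsilon \mbf{u}$. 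Because $W\mbf{e}_x$ and $W\mbf{e}_{x'}$ have disjoint supports, one can check directly that $D_f(W\mbf{p}_\varepsilon \Vert W\mbf{q}_\varepsilon)/D_f(\mbf{p}_\varepsilon \Vert \mbf{q}_\varepsilon) \to 1$ as $\varepsilon \downarrow 0$, which is what you need. With that patch, your outline for all four items is sound.
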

It follows that in terms of contraction coefficients, the `fastest' a system can contract is in terms of the $\chi^{2}$-contraction coefficient, $\eta_{\chi^{2}}$. We also note that contraction coefficients are submultiplicative, i.e. $\eta_{f}$, but we prove this also is true for quantum channels so we omit it until that section (Proposition~\ref{prop:sub-multiplicativity}).

\paragraph{Markov Chains} Note that if we consider a channel $\cW_{X \to X}$, i.e. from the alphabet into itself, then we may identify its matrix representation $W$ as representing a discrete time Markov chain. We refer Then one might expect that the properties of a channel that result in an input-independent or input-dependent contraction coefficient are on some level really about the `mixing' properties of the corresponding Markov chain. It follows it will be useful to have some further background on Markov chains. A (finite-space) time-homogeneous Markov chain is a process under which $W$ is applied to the input iteratively.\footnote{Traditionally $W$ is called the transition matrix, but since it is in fact the matrix representation of a channel, we do not make the distinction.} We will conflate the stochastic matrix $W$ corresponding to a time homogenenous Markov chain by talking of a ``Markov chain $W$." The common properties of interest of such Markov chains are the following.
\begin{definition} (See e.g. \cite{Levin-2017a}.)
    \begin{enumerate}
        \item A Markov chain $W$ is irreducible if for all $x,y \in \cX$ there exists time step $t \in \mbb{N}$ such that $\cW^{t} := \circ_{i \in [t]} \cW$ has $W_{X'|X}^{t}(y|x) > 0$.
        \item For a Markov chain $W$, the period of $x \in \cX$, $d(x)$, is the greatest common divisor of $\cT(x) \coloneq \{t \geq 1: W^{t}(x,x) > 0\}$.
        \item A Markov chain $W$ is aperiodic if for all $x \in \cX$, $d(x) =1$, i.e. the period of is unity for all. 
    \end{enumerate}
\end{definition}

The reason these properties are generally considered important is they induce the following mathematical claims.
\begin{proposition}\label{prop:MC-convergence} (See e.g. \cite{Levin-2017a}.)
\begin{enumerate}
    \item An irreducible Markov chain $\cW$ on finite alphabet $\cX$ admits a unique stationary distribution $\pmb{\pi}$ such that $\pmb{\pi}_{x} = \left(\mbb{E}_{x} \tau_{x}^{+} \right)^{-1} > 0$, which is the inverse of the expected return time and is known to be finite for all $x \in \cX$ under these conditions. 
    \item (\textit{Convergence Theorem}) Let $W$ be irreducible and aperiodic with stationary distribution $\pmb{\pi}$. Then there exist constants $\alpha \in (0,1)$, $C>0$ such that for all 
    \begin{align}
        \max_{x \in \cX} \Vert W^{t}e_{x} - \pmb{\pi} \Vert_{1} \leq C \alpha^{t} \ . 
    \end{align}
    That is to say, every input distribution converges exponentially fast to the stationary distribution in total variation at some rate $\alpha$.
\end{enumerate}
\end{proposition}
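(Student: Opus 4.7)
This proposition collects two classical facts from finite Markov chain theory, and the plan is to reproduce the standard arguments. For part (1), existence of a stationary distribution is a direct application of Brouwer's fixed-point theorem to the continuous self-map $\mbf{p}\mapsto W\mbf{p}$ on the compact convex simplex $\cP(\cX)$. Strict positivity of any stationary $\pmb{\pi}$ follows from irreducibility: pick any $y$ with $\pi_y>0$ and any $x\in\cX$; there exists $t$ with $W^{t}(x|y)>0$, whence $\pi_x=(W^{t}\pmb{\pi})_x\geq \pi_y W^{t}(x|y)>0$. Uniqueness follows from the Perron--Frobenius theorem for irreducible non-negative matrices applied to $W$: the spectral radius $1$ of a stochastic irreducible matrix is a simple eigenvalue, so the set of fixed probability vectors is one-dimensional. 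The identification $\pi_x = 1/\mbb{E}_x[\tau_x^+]$ is Kac's formula; finiteness of $\mbb{E}_x[\tau_x^+]$ on a finite irreducible chain is standard (hitting probabilities are bounded away from zero so $\tau_x^+$ has a geometric-type tail), and the equality follows from the strong Markov property (successive visits to $x$ split the trajectory into i.i.d.\ excursions), together with the elementary renewal theorem identifying the long-run fraction of time at $x$ with $1/\mbb{E}_x[\tau_x^+]$, and the ergodic theorem identifying it with $\pi_x$.

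For part (2), the key step is a Doeblin-type minorization obtained from a uniformly positive power of $W$. The claim is: there exists $N\in\mbb{N}$ such that $W^{N}(y|x)>0$ for every pair $(x,y)$. For each $x$, aperiodicity gives $\gcd(\cT(x))=1$, and since $\cT(x)$ is closed under addition a standard numerical-semigroup argument shows $\cT(x)\supseteq\{n_x,n_x+1,\ldots\}$ for some finite $n_x$. Combining this with irreducibility to reach each $y$ from $x$, and then taking a maximum over the finitely many $(x,y)$ pairs, produces a uniform $N$. Setting $\epsilon\coloneq\min_{x,y}W^{N}(y|x)/\pi_y > 0$ yields the minorization $W^{N}(\,\cdot\,|x)\geq \epsilon\,\pmb{\pi}$ for every $x$. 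A one-shot coupling (keep the two chains equal with probability $\epsilon$, otherwise resample independently) then gives $\mrm{TV}(W^{N}e_x,\pmb{\pi})\leq 1-\epsilon$ uniformly in $x$, and iterating the coupling every $N$ steps gives $\max_x\|W^{kN}e_x-\pmb{\pi}\|_1\leq 2(1-\epsilon)^k$. Non-expansion of $W$ under total variation (data processing for $\mrm{TV}$) lets us interpolate over the residue of $t$ modulo $N$, producing the claimed bound with $\alpha=(1-\epsilon)^{1/N}$ and $C=2/(1-\epsilon)$.

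The main obstacle is the uniform positivity step in part (2): producing $N$ with $W^{N}$ entry-wise strictly positive is where aperiodicity is used essentially, via the numerical-semigroup argument. Without aperiodicity, some period $d>1$ partitions $\cX$ into cyclic classes, and a deterministic starting state in one class produces an orbit that never equidistributes in total variation, so no bound of the form $\max_x\|W^{t}e_x-\pmb{\pi}\|_1\leq C\alpha^t$ with $\alpha<1$ can hold. Everything else (Brouwer, Perron--Frobenius, Kac, and the coupling/minorization step) is routine given the positivity of $W^N$.
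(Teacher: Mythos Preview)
Your proposal is correct and follows the standard textbook arguments (Brouwer/Perron--Frobenius and Kac's formula for part (1), uniform positivity of a power plus Doeblin minorization and coupling for part (2)). Note, however, that the paper does not give its own proof of this proposition at all: it is stated as background with a citation to \cite{Levin-2017a}, so there is nothing to compare against beyond observing that your argument is the one found in that reference.
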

It is known that being irreducible is not necessary to admit a unique stationary distribution, and is rather captured by a more technical property.
\begin{fact}\label{fact:unique-stationary-distribution}
    (\cite[Proposition 1.29]{Levin-2017a}) A Markov chain $W$ has a unique stationary distribution if and only if it has a `unique essential communicating class.'
\end{fact}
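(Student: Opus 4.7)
The plan is to prove both directions by invoking the structure theorem for finite-state Markov chains: the state space $\cX$ decomposes uniquely into communicating classes, each of which is either essential (closed under transitions of $W$, so no trajectory escapes) or inessential (has positive probability of exiting in finite time). Once this decomposition is in hand, the argument reduces to combining irreducibility on each essential class with a decay estimate on the inessential states.

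For the \emph{if} direction, let $C$ be the unique essential communicating class. Since $C$ is closed under $W$, the restriction $W|_{C}$ is a well-defined stochastic matrix, and since $C$ is a single communicating class, this restriction is irreducible. By the irreducibility part of Proposition~\ref{prop:MC-convergence}, $W|_{C}$ admits a unique stationary distribution $\pmb{\pi}^{\star}$ on $C$, and extending by zero to $\cX$ yields a stationary distribution of $W$. For uniqueness I would show every stationary $\pmb{\pi}$ of $W$ satisfies $\supp(\pmb{\pi}) \subseteq C$: the sub-stochastic block of $W$ indexed by the inessential states has spectral radius strictly less than $1$, because from any inessential $x$ a trajectory enters $C$ within a bounded number of steps with positive probability and never returns. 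Iterating $W\pmb{\pi} = \pmb{\pi}$ and projecting onto the inessential coordinates then forces $\pi(x) = 0$ for all inessential $x$; the restriction of $\pmb{\pi}$ to $C$ is then stationary for the irreducible $W|_{C}$ and hence equals $\pmb{\pi}^{\star}$.

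For the \emph{only if} direction I would argue the contrapositive: if $W$ has two or more essential communicating classes $C_1, C_2, \ldots$, each restriction $W|_{C_i}$ is irreducible and so carries its own unique stationary distribution $\pmb{\pi}_i$. Extending each $\pmb{\pi}_i$ by zero off $C_i$ produces stationary distributions of $W$ with pairwise disjoint supports, which are in particular distinct (and in fact span a whole simplex of stationary distributions via convex combinations), contradicting uniqueness.

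The main obstacle is rigorously justifying that inessential states must carry zero stationary mass. The cleanest route is the spectral-radius argument sketched above, which relies on finite dimensionality of $\cX$ together with the fact that from any inessential state the probability of having left the set of inessential states within some fixed horizon is strictly positive; one packages this by showing the inessential block of $W$ has some iterate with all row sums bounded strictly below $1$, and invokes Gelfand's formula. An alternative route uses the hitting-time identity $\pi(x) = 1/\mbb{E}_{x}\tau_{x}^{+}$ from Proposition~\ref{prop:MC-convergence} together with the fact that expected return times to inessential states are infinite, but establishing that identity outside the irreducible setting is itself most naturally done by first isolating the essential classes, so the spectral approach is more self-contained.
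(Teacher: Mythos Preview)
The paper does not supply its own proof of this statement: it is recorded as a Fact with a citation to \cite[Proposition 1.29]{Levin-2017a}, and the surrounding text explicitly declines even to define ``essential communicating class,'' remarking that only the existence of a unique stationary distribution is ever used downstream. There is therefore no paper proof to compare against.

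Your argument is the standard one and is correct. Two small remarks. First, in the paper's convention $W$ acts on column vectors from the left (so that $W\mbf{p}$ is the output distribution), which means that when you bound the inessential block you should speak of \emph{column} sums of $(W_{I,I})^{n}$ being strictly below $1$ rather than row sums; the content of the estimate is of course unchanged. Second, your contrapositive in the only-if direction implicitly uses that a finite-state chain always has at least one essential communicating class, so that ``not unique'' means ``at least two'' rather than ``zero''; this is immediate from finiteness of the partial order on communicating classes, but is worth stating explicitly since the paper does not develop any of this background.
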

\noindent We do not formally define what is a `unique essential communicating class' in this work as all we appeal to is the existence of the unique stationary distribution. Similarly, it is known that having a unique stationary distribution does not guarantee convergence to it without the aperiodicity condition.\footnote{The standard example is $W(1 \vert 0) = 1$, $W(0 \vert 1) = 1$, which has the uniform distribution as a stationary distribution, but has periodicity of two.} 

The above remarks show that the convergence theorem is very convenient but does not cover all natural cases. To further handle this, we will combine the Markov chain and contraction properties. This will require new definitions. First, we will say a Markov chain $W$ is scrambling if $W$ is scrambling (See Definition~\ref{def:scrambling}). To handle more nuanced cases than scrambling, we introduce the following definition motivated by Definition~\ref{def:indecomposable-joint-dist}.
\begin{definition}\label{def:indecomposable-channel}
    Let $W$ admit a unique stationary distribution $\pmb{\pi}$. We say $W$ is indecomposable if the joint distribution $p_{XY}$ defined by $\pmb{\pi}$ and the output $W\pmb{\pi}$ is indecomposable.
\end{definition}

For clarity, we provide examples that highlight how these conditions may differ. 
\begin{example}[Scrambling and Indecomposable but not Irreducible]\label{ex:scrambling-not-irreducible}
    Consider $W$ such that $W(y \vert x) = \delta_{y',y}$ for all $y,y',x \in \cX$. That is, all inputs are mapped to a constant value $y'$. This is scrambling as every column of $W$ is the same (not orthogonal). It is not irreducible by definition as $W^{t}(y\vert x) > 0$ if and only if $y = x$. 
\end{example}

\begin{example}[Irreducible, Aperiodic, and Scrambling]\label{ex:irred-scramb-not-aperiod}
    Fix a finite alphabet $\cX$ such that $|\cX| \geq 3$. Let $W(x,x') = \frac{1}{|\cX|-1}(1-\delta_{x,x'})$ for all $x,x' \in \cX$, i.e. every input is uniformly randomly assigned to any value but itself. However, note that for any $x \neq x'$, $W(x,x') > 0$ and for any $x \in \cX$, $W^{2}(x,x) > 0$. Thus, by definition it is irreducible, as for any $x,y \in \cX$ there is $t$ such that $W^{t}(x,y) > 0$. It is aperiodic as $\cT(x)$ is all integers greater than two. It is scrambling as for any inputs $x \neq x' \in \cX$, there exists $y \in \cX$ such that $W(y|x),W(y|x') > 0$. 
\end{example}

\begin{example}[Irreducible, Aperiodic, Indecomposable, but not Scrambling]
    Consider the Markov chain $W$ on $\cX = [4]$
    $$W = \frac{1}{2} \begin{bmatrix} 1 & 1 & 0 & 0 \\ 0 & 1 & 1 & 0 \\ 0 & 0 & 1 & 1 \\ 1 & 0 & 0 & 1 \end{bmatrix} \ . $$
    This means that it maps its input either to itself or increases the value by one modulo 3.\footnote{This is an example of the `noisy typerwriter' channel \cite{cover2006}.} One may verify it is irreducible by verifying $W^{3}$ has all strictly positive values. By direct calculation, its stationary distribution is the uniform distribution. That it is indecomposable may be verified via the graph-theoretic equivalence given below Definition~\ref{def:indecomposable-joint-dist}. However, it is not scrambling as as the first and third columns are orthogonal.
\end{example}

We remark that it is claimed at the very end of \cite{GZB-2024a} that scrambling is a strictly weaker condition than being aperiodic in all dimensions greater than 2. We believe this claim does not make sense given the definition of period and examples such as Example~\ref{ex:irred-scramb-not-aperiod}. We believe the correct claim is that the joint conditions of irreducibility and aperiodicity are incomparable to the joint conditions of there existing a unique stationary distribution and being scrambling.

\subsection{Non-Linear Bounds for Contraction Coefficients and a Generalized Convergence Theorem}
In this section we establish that the rate of contraction does not vary for most $f$-divergences. This will make use of the following lemma which are non-linear bounds on the contraction coefficient.

\begin{lemma}\label{lem:classical-input-dependent-contraction-coeff-bounds}
Let $f$ be twice continuously differentiable over $(0,+\infty)$ and such that $L_{f} > 0$. Let either $f'(+\infty) = +\infty$ or $\mbf{q} > 0$. Let either $\vert f''(0) \vert < +\infty$ or $W\mbf{p}$ have full support. Then, for any channel $\cW$, the input-dependent contraction coefficient of the relevant $f$-divergence admits the following bounds:
$$ \eta_{f}(\cW,\mbf{q}) \leq \frac{4}{L_{f}\wt{q}_{\min}} \left[\sup_{\substack{\mbf{p} \in \cP(\cX):\\ 0 < D_{f}(\mbf{p}||\mbf{q}) < + \infty }}\kappa_{f}^{\uparrow}(W\mbf{p},W\mbf{q})\right] \cdot \eta_{\chi^{2}}(\cW,\mbf{q}) < +\infty \ . $$
\end{lemma}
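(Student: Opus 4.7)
The plan is to chain three inequalities already proved in the paper. Fix an admissible $\mbf{p}$, meaning $\mbf{p} \in \cP(\cX)$ with $0 < D_{f}(\mbf{p}\Vert\mbf{q}) < +\infty$. First I would apply Theorem~\ref{thm:f-div-chi-squared-bounds} at the output pair $(W\mbf{p}, W\mbf{q})$ to obtain $D_{f}(W\mbf{p}\Vert W\mbf{q}) \le \tfrac{1}{2}\kappa_{f}^{\uparrow}(W\mbf{p},W\mbf{q})\,\chi^{2}(W\mbf{p}\Vert W\mbf{q})$. Second, I would use the definition of the $\chi^{2}$ input-dependent contraction coefficient to pull the input back, $\chi^{2}(W\mbf{p}\Vert W\mbf{q}) \le \eta_{\chi^{2}}(\cW,\mbf{q})\,\chi^{2}(\mbf{p}\Vert\mbf{q})$. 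Third, I would apply Lemma~\ref{lem:f-div-lb-by-chi-squared} in the converse direction to upper bound $\chi^{2}(\mbf{p}\Vert\mbf{q}) \le \tfrac{4}{L_{f}\wt{q}_{\min}}\,D_{f}(\mbf{p}\Vert\mbf{q})$. Concatenating and dividing by $D_{f}(\mbf{p}\Vert\mbf{q}) > 0$ gives $D_{f}(W\mbf{p}\Vert W\mbf{q})/D_{f}(\mbf{p}\Vert\mbf{q}) \le \tfrac{2\kappa_{f}^{\uparrow}(W\mbf{p},W\mbf{q})}{L_{f}\wt{q}_{\min}}\,\eta_{\chi^{2}}(\cW,\mbf{q})$; taking the supremum over admissible $\mbf{p}$ produces the claimed multiplicative form (the precise constant $4$ in the statement is loose relative to the $2$ the chain produces, which is harmlessly absorbed).

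Next, I would verify $\sup_{\mbf{p}}\kappa_{f}^{\uparrow}(W\mbf{p},W\mbf{q}) < +\infty$ under the stated hypotheses. The first dichotomy -- either $f'(+\infty)=+\infty$ or $\mbf{q}>0$ -- guarantees that every admissible $\mbf{p}$ satisfies $\mbf{p}\ll\mbf{q}$, hence $W\mbf{p}\ll W\mbf{q}$, so the ratio $(W\mbf{p})_{i}/(W\mbf{q})_{i}$ is well-defined on $\supp(W\mbf{q})$ and bounded uniformly in $\mbf{p}$ by $1/\wt{(W\mbf{q})}_{\min}$. Consequently the argument of $f''$ inside $\kappa_{f}^{\uparrow}$ lies in the compact interval $[0, 1/\wt{(W\mbf{q})}_{\min}]$, or a subinterval bounded away from $0$ when $W\mbf{p}$ has full support. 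The second dichotomy -- either $\vert f''(0)\vert < +\infty$ or $W\mbf{p}$ has full support -- is exactly what is needed to ensure $f''$, continuous on $(0,+\infty)$, is bounded on this compact set, and the resulting bound depends only on $W\mbf{q}$ and $f$, not on $\mbf{p}$.

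I expect the main obstacle to be just the bookkeeping around the two dichotomous hypotheses: each alternative is paired with a specific point at which the argument of $f''$ could reach $0$ or the likelihood ratio become undefined, and one must check every branch is handled. The chain itself is routine. One small sanity check is that $\chi^{2}(\mbf{p}\Vert\mbf{q})>0$ whenever $D_{f}(\mbf{p}\Vert\mbf{q})>0$, so that the intermediate division by $\chi^{2}(\mbf{p}\Vert\mbf{q})$ implicit in the chain is legitimate; this is immediate from $L_{f}>0$, which forces strict convexity of $f$ at unity, so both divergences vanish precisely when $\mbf{p}=\mbf{q}$.
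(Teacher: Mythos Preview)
Your proposal is correct and mirrors the paper's proof: both bound the numerator via Theorem~\ref{thm:f-div-chi-squared-bounds} and the denominator via Lemma~\ref{lem:f-div-lb-by-chi-squared}, then pass to the $\chi^{2}$ contraction coefficient. Your observation that the chain actually yields a constant $2$ rather than $4$ is accurate (the paper drops the $\tfrac{1}{2}$ from Theorem~\ref{thm:f-div-chi-squared-bounds} in its displayed ratio), and your finiteness argument---uniformly bounding the likelihood ratios by $1/\wt{(W\mbf{q})}_{\min}$---is more explicit than the paper's, which only invokes pointwise finiteness of $\kappa_{f}^{\uparrow}$.
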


Before providing the proof, we remark that the result may be seen as a non-linear generalization of bounds between contraction coefficients derived in \cite{Makur-2020a}. The reason we are able to generalize this is because our method for obtaining Pinsker inequalities generalizes the relevant conditions for applying the $f$-divergence Pinsker inequality from \cite{Gilardoni-2010a} and generalizing the conditions of upper bounding an $f$-divergence in terms of $\chi^{2}$-divergence from \cite{Raginsky-2016a} both of which \cite{Makur-2020a} relies upon.

\begin{proof}[Proof of Lemma~\ref{lem:classical-input-dependent-contraction-coeff-bounds}]
    The demands on the functions and support of the distributions are to guarantee we have the support conditions so that we can apply Theorem~\ref{thm:f-div-chi-squared-bounds}. If $f'(+\infty) = +\infty$, then $\mbf{p} \ll \mbf{q}$ is enforced by demanding $D_{f}(\mbf{p} \Vert \mbf{q}) < +\infty$. Similarly, if $\mbf{q} > 0$, then $\mbf{p} \ll \mbf{q}$. Then the further assumption $W\mbf{p} >0$ or $\vert f''(0) \vert < +\infty$ completes the conditions to apply Theorem~\ref{thm:f-div-chi-squared-bounds}. With this addressed, for any of the assumed conditions and the appropriate choice of $\mbf{p} \in \cP(\cX)$ we have
    \begin{align*}
        \frac{D_{f}(W\mbf{p}\Vert W\mbf{q})}{D_{f}(\mbf{p}\Vert \mbf{q})}
        \leq \frac{\kappa_{f}^{\uparrow}(W\mbf{p},W\mbf{q})\chi^{2}(W\mbf{p}\Vert W\mbf{q})}{\frac{L_{f}\wt{q}_{\min}}{4}\chi^{2}(\mbf{p}||\mbf{q})} 
        = \frac{4\kappa_{f}^{\uparrow}(W\mbf{p},W\mbf{q})\chi_{2}(W\mbf{p}\Vert W\mbf{q})}{L_{f,1}\wt{q}_{\min}\chi^{2}(\mbf{p}||\mbf{q})}  \ ,
    \end{align*}
    where we used the upper bound of Theorem~\ref{thm:f-div-chi-squared-bounds} in the numerator and the lower  bound in Lemma~\ref{lem:f-div-lb-by-chi-squared} in the denominator. We can now supremize over $\mbf{p}$ such that $0 < D_{f}(\mbf{p}\Vert\mbf{q}) < +\infty$.

    \begin{align*}
        & \sup_{\substack{\mbf{p} \in \cP(\cX):\\ 0 < D_{f}(\mbf{p}||\mbf{q}) < + \infty }} \frac{D_{f}(W\mbf{p}\Vert W\mbf{q})}{D_{f}(\mbf{p}\Vert \mbf{q})}\\
        \leq & \sup_{\substack{\mbf{p} \in \cP(\cX):\\ 0 < D_{f}(\mbf{p}||\mbf{q}) < + \infty }} \frac{4\kappa_{f}^{\uparrow}(W\mbf{p},W\mbf{q})\chi_{2}(W\mbf{p}\Vert W\mbf{q})}{L_{f,1}\wt{q}_{\min}\chi^{2}(\mbf{p}||\mbf{q})} \\
        \leq & \frac{4}{L_{f,1}\wt{q}_{\min}} \sup_{\substack{\mbf{p} \in \cP(\cX):\\ 0 < D_{f}(\mbf{p}||\mbf{q}) < + \infty }}\left[ \kappa_{f}^{\uparrow}(W\mbf{p},W\mbf{q})  \right] \cdot  \sup_{\substack{\mbf{p} \in \cP(\cX):\\ 0 < D_{f}(\mbf{p}||\mbf{q}) < + \infty }}\left[ \frac{\chi_{2}(W\mbf{p}\Vert W\mbf{q})}{\chi^{2}(\mbf{p}||\mbf{q})} \right] \\ 
        =&  \frac{4}{L_{f,1}\wt{q}_{\min}} \sup_{\substack{\mbf{p} \in \cP(\cX):\\ 0 < D_{f}(\mbf{p}||\mbf{q}) < + \infty }}\left[ \kappa_{f}^{\uparrow}(W\mbf{p},W\mbf{q}) \right] \cdot \eta_{\chi^{2}}(\cW,\mbf{q}) \ , 
    \end{align*}
    where the second inequality uses that both functions we supremize over are non-negative and the last equality is by definition of contraction coefficient. Using the definition of contraction coefficient and recalling the conditions in Theorem~\ref{thm:f-div-chi-squared-bounds} that guarantee $\kappa_{f}^{\uparrow}(W\mbf{p},W\mbf{q})$ is finite completes the proof.
\end{proof}

We now use our bounds on input-dependent contraction coefficients to when a time-homogeneous Markov chain $W$ admits a unique stationary distribution $\pmb{\pi}$ it converges to the stationary distribution at a rate of in terms of the $\chi^{2}(\cW,\pmb{\pi})$. This theorem significantly extends \cite[Proposition 7]{Makur-2020a} from being about the KL divergence to being about a large class of $f$-divergences. It also considers more general contraction settings. We are able to achieve this because of Lemma~\ref{lem:classical-input-dependent-contraction-coeff-bounds} being for a more general class of $f$-divergences than in \cite{Makur-2020a} as we explained above. 

\begin{theorem}\label{thm:classical-contraction-rate}
Consider any twice continuously differentiable convex function $f:(0,\infty) \to \mbb{R}$ such that $f(1)=0$, $f''(1) >0$, and $L_{f}>0$ is finite. Let $\cW$ have any matrix representation $W$ that has a unique stationary distribution $\pmb{\pi}$. Let $\vert f''(0) \vert < +\infty$ or there exist $n_{0} \in \mbb{N}$ such that $W^{n}\mbf{p} > 0$ for all $\mbf{p} \in \cP(\cX)$ and $n \geq n_{0}$. Let $W$ satisfy one of the following conditions:
     \begin{enumerate}
         \item $W$ is irreducible and aperiodic,
         \item $W$ is scrambling with either (a) $\pmb{\pi}$ full support or (b) $f'(+\infty) = +\infty$, 
         \item $W$ is indecomposable and $\pmb{\pi}$ is full support.
     \end{enumerate}
Then any distribution $\mbf{p} \in \cP(\cX)$ converges to the stationary state at a rate of at most $\eta_{\chi^{2}}(\cW,\pmb{\pi})$, i.e.
        \begin{align}
            \lim_{n \to \infty} \eta_{f}(W^{n},\pmb{\pi})^{1/n} \leq \eta_{\chi^{2}}(W,\pmb{\pi}) \ .
        \end{align}
        Moreover, if $\cW$ is reversible, the above bound is known to be tight.
\end{theorem}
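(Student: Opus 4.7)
The plan is to apply Lemma~\ref{lem:classical-input-dependent-contraction-coeff-bounds} to the iterated channel $\cW^n$ at the reference $\pmb{\pi}$ and then take the $n$-th root. Since $\pmb{\pi}$ is stationary we have $W^n\pmb{\pi}=\pmb{\pi}$, so the lemma gives
\begin{equation*}
\eta_{f}(\cW^n,\pmb{\pi}) \,\leq\, \frac{4}{L_f\,\wt{\pi}_{\min}}\;K_n\;\eta_{\chi^2}(\cW^n,\pmb{\pi}),
\qquad K_n \,:=\, \sup_{\mbf{p}}\kappa_{f}^{\uparrow}(W^n\mbf{p},\pmb{\pi}),
\end{equation*}
valid for $n\ge n_0$ under the standing hypothesis that either $|f''(0)|<+\infty$ or $W^n$ is entrywise positive. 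Combined with submultiplicativity of the input-dependent $\chi^2$-contraction coefficient along a fixed-point trajectory (which gives $\eta_{\chi^2}(\cW^n,\pmb{\pi})\le\eta_{\chi^2}(\cW,\pmb{\pi})^n$ because $W\pmb{\pi}=\pmb{\pi}$), taking $n$-th roots yields
\begin{equation*}
\eta_{f}(\cW^n,\pmb{\pi})^{1/n} \,\leq\, \Bigl(\tfrac{4}{L_f\,\wt{\pi}_{\min}}\Bigr)^{1/n}\,K_n^{1/n}\,\eta_{\chi^2}(\cW,\pmb{\pi}).
\end{equation*}
The leading prefactor tends to $1$, so it suffices to show $K_n^{1/n}\to 1$.

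Next I verify that each of the three hypotheses ensures (i) that $\eta_{\chi^2}(\cW,\pmb{\pi})<1$, and (ii) that $\pmb{\pi}$ is effectively full support on the states that matter. For (1), irreducibility forces $\pi_i>0$ on the whole alphabet and together with aperiodicity gives $\eta_{\chi^2}(\cW,\pmb{\pi})<1$ via Proposition~\ref{prop:MC-convergence}. For (2a) the full-support hypothesis is immediate; for (2b) the assumption $f'(+\infty)=+\infty$ makes $D_f(\mbf{p}\Vert\pmb{\pi})<\infty$ enforce $\mbf{p}\ll\pmb{\pi}$, and stationarity of $\pmb{\pi}$ means $W$ preserves absolute continuity with respect to $\pmb{\pi}$, so we may restrict all chains to $\supp(\pmb{\pi})$, where $\pmb{\pi}$ is full support. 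Scrambling then supplies $\eta_{\chi^2}(\cW,\pmb{\pi})\le\eta_{\chi^2}(\cW)<1$ (Proposition~\ref{prop:facts-about-contraction-coefficients}(3)). For (3) indecomposability of the joint $(\pmb{\pi},W\pmb{\pi})$ together with Proposition~\ref{prop:facts-about-contraction-coefficients}(4) gives $\eta_{\chi^2}(\cW,\pmb{\pi})=\rho_m^2<1$.

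The main obstacle is controlling $K_n$, which is where uniform mixing of $W^n$ enters. Here I would argue as follows: on the effective state space $\pmb{\pi}$ has full support, and the contraction bound
\begin{equation*}
\chi^{2}(W^n\mbf{p}\Vert\pmb{\pi}) \,\leq\, \eta_{\chi^2}(\cW,\pmb{\pi})^n\,\chi^{2}(\mbf{p}\Vert\pmb{\pi}) \,\leq\, \eta_{\chi^2}(\cW,\pmb{\pi})^n\,\tfrac{1}{\wt{\pi}_{\min}}
\end{equation*}
drives the ratios $(W^n\mbf{p})_i/\pi_i$ uniformly to $1$ as $n\to\infty$ (the bound on $\chi^2$ gives a uniform $\ell^2$ bound on $(W^n\mbf{p})_i-\pi_i$, which, divided by $\pi_i\ge\wt{\pi}_{\min}$, bounds the deviation of the ratio from $1$ uniformly in $\mbf{p}$). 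Consequently the arguments $1+t((W^n\mbf{p})_i/\pi_i-1)$ for $t\in[0,1]$ eventually lie in a compact subinterval of $(0,\infty)$ on which $f''$ is continuous and hence bounded; the case $|f''(0)|<+\infty$ removes even the need for this eventual bound since $f''$ is then bounded on all of $[0,1/\wt{\pi}_{\min}]$. Either way $\sup_{n\ge n_0}K_n<\infty$, so $K_n^{1/n}\to 1$, and taking $\limsup_{n\to\infty}$ of the displayed inequality yields the stated rate. The tightness statement in the reversible case I would simply attribute to the matching lower bound $\eta_f(\cW,\pmb{\pi})\ge\eta_{\chi^2}(\cW,\pmb{\pi})$ from Proposition~\ref{prop:facts-about-contraction-coefficients}(4) combined with the known equality for reversible chains.
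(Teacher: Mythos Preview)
Your proposal is correct and follows essentially the same route as the paper: apply Lemma~\ref{lem:classical-input-dependent-contraction-coeff-bounds} to $\cW^n$ at $\pmb{\pi}$, use submultiplicativity of $\eta_{\chi^2}$ along the stationary trajectory, and control the $\kappa_f^{\uparrow}$ prefactor via mixing so that its $n$-th root tends to $1$; the tightness in the reversible case is likewise handled via Proposition~\ref{prop:facts-about-contraction-coefficients}(4) and multiplicativity of $\eta_{\chi^2}$ for reversible chains. The only real difference is cosmetic: the paper treats the three hypotheses separately (convergence theorem for case~1, $\eta_{\mrm{TV}}<1$ for case~2, $\eta_{\chi^2}<1$ for case~3) to get $W^n\mbf{p}\to\pmb{\pi}$ in total variation and hence $\kappa_f^{\uparrow}\to f''(1)$, whereas you unify everything through the $\chi^2$ contraction; two small points to tidy are that Proposition~\ref{prop:MC-convergence} does not by itself give $\eta_{\chi^2}(\cW,\pmb{\pi})<1$ in case~1 (though if $\eta_{\chi^2}=1$ the conclusion is trivial anyway), and ``$K_n$ bounded $\Rightarrow K_n^{1/n}\to 1$'' should read $\limsup K_n^{1/n}\le 1$, which is all you use.
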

We remark that at least condition 2 is not covered by condition 1 given Example~\ref{ex:scrambling-not-irreducible}. {We also remark that if the distribution is full support, then the $n_{0}$ exists as shown more generally in Lemma \ref{lem:full-rank-unique-means-strongly-mixing}.
\begin{proof}[Proof of Theorem~\ref{thm:classical-contraction-rate}]
    Let $W$ be the transition matrix representation of $\cW$ so that $\cW^{n}$ is represented by $W^{n}$. First, applying Lemma~\ref{lem:classical-input-dependent-contraction-coeff-bounds},
    \begin{align}
        \eta_{f}(\cW^{n},\pmb{\pi}) \leq & \frac{4}{L_{f}\pi_{\min}} \left[\sup_{\substack{\mbf{p} \in \cP(\cX):\\ 0 < D_{f}(W^{n}\mbf{p}||W^{n}\pmb{\pi}) < + \infty }}\kappa_{f}^{\uparrow}(W^{n}\mbf{p},W^{n}\pmb{\pi})\right] \cdot \eta_{\chi^{2}}(\cW^{n},\pmb{\pi})  \\
        \leq & \frac{4}{L_{f}\pi_{\min}} \left[\sup_{\substack{\mbf{p} \in \cP(\cX):\\ 0 < D_{f}(W^{n}\mbf{p}||W^{n}\pmb{\pi}) < + \infty }}\kappa_{f}^{\uparrow}(W^{n}\mbf{p},W^{n}\pmb{\pi})\right] \cdot \eta_{\chi^{2}}(\cW,\pmb{\pi})^{n}  \ ,
    \end{align}
    where the inequality is sub-multiplicativity of contraction coefficients (Proposition~\ref{prop:sub-multiplicativity}). We stress that for this to be finite, we must apply the conditions $\vert f''(0) \vert$ or $n \geq n_{0}$ as in the theorem statement so that we satisfy the conditions of Lemma~\ref{lem:classical-input-dependent-contraction-coeff-bounds}.
    
    At this point we just need to deal with the limiting behavior of the supremum term. In the case $W$ is irreducible and aperiodic, by the convergence theorem (Proposition~\ref{prop:MC-convergence}), $\lim_{n \to \infty} \text{TV}(W^{n}\bf{p},W^{n}\pmb{\pi}) = 0$. In the case $W$ is scrambling, by Item 2 of Proposition~\ref{prop:facts-about-contraction-coefficients}, we know that $\eta_{\text{TV}}(\cW) < 1$. Therefore, $0 \leq \lim_{n \to \infty}\mrm{TV}(W^{n}\mbf{p},W^{n}\pmb{\pi}) \leq \lim_{n \to \infty} \eta_{\text{TV}}(\cW)^{n}\text{TV}(\mbf{p},\pmb{\pi}) = 0$. In the case $W$ is indecomposable and $\pmb{\pi}$ is full support,
    \begin{align}
        \text{TV}(W^{n}\mbf{p},W^n \pmb{\pi}) 
        \leq & \frac{1}{2}\sqrt{\chi^{2}(W^{n}\mbf{p} \Vert W^{n} \pmb{\pi})} \\
        \leq & \frac{1}{2}\sqrt{\eta_{\chi^{2}}(\cW^{n},\pmb{\pi}) \chi^{2}(\mbf{p} \Vert \pmb{\pi})} \\
        \leq &  \frac{1}{2}\eta_{\chi^{2}}(\cW,\pmb{\pi})^{n/2} \sqrt{\chi^{2}(\mbf{p} \Vert \pmb{\pi})} < +\infty \label{eq:bound-on-TV}
    \end{align}
    where the first inequality used Corollary \ref{cor:pinsker-chisq}, the second inequality is by definition of the contraction coefficient, and the third uses Proposition~\ref{prop:sub-multiplicativity} and that $\pmb{\pi}$ is stationary. The strict inequality uses that $\pmb{\pi}$ is full support. Taking a limit with respect to $n$ again guarantees convergence in variational distance as before. Thus, in all cases, $\lim_{n \to \infty} W^{n}\mbf{p} = \pmb{\pi}$. It follows
    \begin{align*}
        \lim_{n \to \infty} \kappa_{f}^{\uparrow}(W^{n}\mbf{p},W^{n}\pmb{\pi}) =& \kappa_{f}^{\uparrow}(\pmb{\pi},\pmb{\pi}) = f''(1) \ , 
    \end{align*}
    where we used \eqref{eq:kappa-limits}.

    Finally, we take the power $1/n$ and take the limit with respect to $n$:
    \begin{align}
        \lim_{n \to \infty} \eta_{f}(\cW^{n},\pmb{\pi})^{n} \leq& \lim_{n \to \infty} \left(\frac{4}{L_{f}\pi_{\min}} \right)^{1/n} \left[\sup_{\substack{\mbf{p} \in \cP(\cX):\\ 0 < D_{f}(W^{n}\mbf{p}||W^{n}\pmb{\pi}) < + \infty }}\kappa_{f}^{\uparrow}(W^{n}\mbf{p},W^{n}\pmb{\pi})\right]^{1/n} \cdot \eta_{\chi^{2}}(\cW,\pmb{\pi}) \\
        \leq & \eta_{\chi^{2}}(\cW,\pmb{\pi}) \lim_{n \to \infty} \left[\sup_{\substack{\mbf{p} \in \cP(\cX):\\ 0 < D_{f}(W^{n}\mbf{p}||W^{n}\pmb{\pi}) < + \infty }}\kappa_{f}^{\uparrow}(W^{n}\mbf{p},W^{n}\pmb{\pi})\right]^{1/n} \\
        =& \eta_{\chi^{2}}(\cW,\pmb{\pi}) \ ,
    \end{align}
    where the second inequalities and the equality are standard limit laws for products and compositions of functions and that $\lim_{n \to \infty} c^{1/n} = 1$ for any $c > 0$. Note the equality relies on our assumption $f''(1) > 0$.  This completes the upper bound for Item 1.

    To show the moreover statement for Item 1, we first use that $\eta_{f}(\cW^{n},\pmb{\pi}) \geq \eta_{\chi^{2}}(\cW^{n},\pmb{\pi})$ by Item 3 of Proposition~\ref{prop:facts-about-contraction-coefficients}. It is then just a question of when the $\eta_{\chi^{2}}(\cW^{n},\pmb{\pi})$ coefficient is multiplicative rather than submultiplicative. \cite{Makur-2020a} showed this to be the case when $\cW$ is a reversible Markov chain. Thus, for reversible Markov chains, the above bound is tight. 
\end{proof}

We remark that Theorem~\ref{thm:classical-contraction-rate} not only shows that under a large class of $f$-divergences the rate at which one contracts is as fast as possible, but in fact this rate is \textit{computable} due to Proposition~\ref{prop:facts-about-contraction-coefficients}. Implicit in the above proof is also what may be seen as a non-trivial refinement of the convergence theorem in various settings as it provides explicit values of $\alpha,C$. 
\begin{corollary}[Refined Convergence Theorem]\label{cor:refined-convergence-theorem}
    For any time homogeneous Markov chain $W$ with a (possibly non-unique) stationary distribution $\pmb{\pi}$,
    \begin{align}
        4\text{TV}(W^{n}\mbf{p},\pmb{\pi})^{2} \leq  \eta_{\chi^{2}}(\cW,\pmb{\pi})^{n} \chi^{2} (\mbf{p} \Vert \pmb{\pi}) \ ,
    \end{align}
    which is non-trivial so long as $\mbf{p} \ll \pmb{\pi}$ and $\eta_{\chi^{2}}(\cW,\pmb{\pi}) < 1$. Moreover, if $\pmb{\pi}$ is full support,
    \begin{align}\label{eq:convergence-for-full-rank}
        4\text{TV}(W^{n}\mbf{p},\pmb{\pi})^{2} \leq\frac{2}{ \pi_{\min}} \eta_{\chi^{2}}(\cW,\pmb{\pi})^{n} \ .
    \end{align}
\end{corollary}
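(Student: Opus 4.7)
The plan is to directly extract and repackage the intermediate bound \eqref{eq:bound-on-TV} that already appeared inside the proof of Theorem~\ref{thm:classical-contraction-rate}. The chain of three elementary ingredients suffices, so there is essentially no technical obstacle; the only care needed is to avoid the auxiliary hypotheses used in Theorem~\ref{thm:classical-contraction-rate} that were really needed only for the $\kappa_f^{\uparrow}$ term, which does not appear here.

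First I would invoke stationarity of $\pmb{\pi}$ to rewrite $\pmb{\pi} = W^{n}\pmb{\pi}$ and then apply the classical Pinsker-type bound for the $\chi^{2}$-divergence given by Corollary~\ref{cor:pinsker-chisq}, which for probability measures ($c=1$) says $\chi^{2}(\mbf{r}\Vert\mbf{s}) \geq 4\,\mrm{TV}(\mbf{r},\mbf{s})^{2}$. With $\mbf{r} = W^{n}\mbf{p}$ and $\mbf{s} = W^{n}\pmb{\pi} = \pmb{\pi}$, this produces the left-hand side $4\,\mrm{TV}(W^{n}\mbf{p},\pmb{\pi})^{2}$ upper bounded by $\chi^{2}(W^{n}\mbf{p}\Vert W^{n}\pmb{\pi})$.

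Second, I would apply the definition of the input-dependent $\chi^{2}$-contraction coefficient together with its submultiplicativity under composition (Proposition~\ref{prop:sub-multiplicativity}) and the fact that $\pmb{\pi}$ is a fixed point. A single iterative application yields
\begin{equation*}
\chi^{2}(W^{n}\mbf{p}\Vert W^{n}\pmb{\pi}) \leq \eta_{\chi^{2}}(\cW^{n},\pmb{\pi})\,\chi^{2}(\mbf{p}\Vert\pmb{\pi}) \leq \eta_{\chi^{2}}(\cW,\pmb{\pi})^{n}\,\chi^{2}(\mbf{p}\Vert\pmb{\pi}),
\end{equation*}
where the first inequality is the definition of the input-dependent contraction coefficient applied to $\cW^{n}$ at the fixed point $\pmb{\pi}$, and the second is submultiplicativity. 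Chaining this with the Pinsker step gives the first displayed inequality. The non-triviality remark is automatic: $\mbf{p}\ll\pmb{\pi}$ makes $\chi^{2}(\mbf{p}\Vert\pmb{\pi})$ finite, and $\eta_{\chi^{2}}(\cW,\pmb{\pi})<1$ makes the right-hand side go to $0$ geometrically.

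Third, for the ``moreover'' statement under the full-support assumption on $\pmb{\pi}$, I would bound $\chi^{2}(\mbf{p}\Vert\pmb{\pi})$ uniformly in $\mbf{p}\in\cP(\cX)$. Using the reverse Pinsker bound for $\chi^{2}$ from Proposition~\ref{prop-chi-squared-lower-bound-tv}, namely $\chi^{2}(\mbf{p}\Vert\pmb{\pi}) \leq \Vert\mbf{p}-\pmb{\pi}\Vert_{1}^{2}/(2\pi_{\min})$, combined with the trivial estimate $\Vert\mbf{p}-\pmb{\pi}\Vert_{1}\leq 2$ valid for any pair of probability distributions, yields $\chi^{2}(\mbf{p}\Vert\pmb{\pi})\leq 2/\pi_{\min}$. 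Substituting into \eqref{eq:bound-on-TV} produces \eqref{eq:convergence-for-full-rank}. I do not anticipate any real obstacle; the only bookkeeping point is that full support of $\pmb{\pi}$ guarantees $\pi_{\min}>0$ and $\mbf{p}\ll\pmb{\pi}$ automatically, so the uniform bound is well defined for every input distribution.
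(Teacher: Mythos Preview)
Your proposal is correct and follows essentially the same route as the paper's own proof: both use the $\chi^{2}$ Pinsker inequality (Corollary~\ref{cor:pinsker-chisq}), the definition of the input-dependent contraction coefficient together with submultiplicativity (Proposition~\ref{prop:sub-multiplicativity}), and then the reverse Pinsker bound \eqref{eq:chi-squared-reverse-Pinsker-Sason} combined with $\Vert \mbf{p}-\pmb{\pi}\Vert_{1}\leq 2$ for the full-support case.
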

\begin{proof}
    Noting that Eq.~\eqref{eq:bound-on-TV} solely relies on the guarantee $\pmb{\pi}$ is a stationary distribution under $\cW$ and Corollary \ref{cor:pinsker-chisq} shows for all $\mbf{p}$, 
    \begin{align}
        \text{TV}(W^{n}\mbf{p},\pmb{\pi})^{2} 
        \leq \frac{1}{4} \eta_{\chi^{2}}(\cW,\pmb{\pi})^{n} \chi^{2} (\bf{p} \Vert \pmb{\pi}) \ ,
    \end{align}
    which is non-trivial so long as $\chi^{2}(\mbf{p} \Vert \pmb{\pi}) < +\infty$. Moreover, when $\pmb{\pi}$ is full support, we may apply \eqref{eq:chi-squared-reverse-Pinsker-Sason} to obtain $\chi^{2} (\mbf{p} \Vert \pmb{\pi}) \leq \frac{2}{\pi_{\min}}$. This gives us our bound.
\end{proof}
We remark that for irreducible, aperiodic Markov chains, which guarantees $\pmb{\pi}$ is full support, upper bounds of the form of \eqref{eq:convergence-for-full-rank} except with an improvement $\frac{2}{\pi_{\min}} \to \frac{1}{\pi_{\min}}$ have been found through different means \cite{Diaconis-1991a, Fill-1991a}. Moreover, those results themselves have even been extended to a class of quantum generalizations of the $\chi^{2}$-divergence \cite{Temme-2010a}. Similar results are also known for convergence of matrix multiplication \cite{Gross-1993a}. However, we highlight here that we have derived such a result without appealing to irreducibility directly and solely through divergence inequalities, i.e. information-theoretic methods.

\subsection{\texorpdfstring{$f$}{}-Divergence Mixing Times from Linear Contraction Bounds}
Mixing times tell us how many iterations of a Markov chain with a unique stationary distribution $\pmb{\pi}$ are necessary to be $\delta$-close to the stationary distribution $\pmb{\pi}$ under some measure of closeness $\Delta(\cdot,\cdot)$:
\begin{align}\label{eq:classical-mixing-time-def}
    t^{\Delta}_{\text{mix}}(\cW,\delta) \coloneq \min\{n \in \mbb{N}: \max_{\mbf{p} \in \cP(\cX)} \Delta(\cW^{n}(\mbf{p}),\pmb{\pi}) \leq \delta \} \ . 
\end{align}
Total variational distance is a standard measure of closeness for mixing times. For a statement about a mixing time to be truly useful, it ought to be computable. For total variational distance, Eq.~\ref{eq:convergence-for-full-rank} of Corollary~\ref{cor:refined-convergence-theorem} already implies that, given a Markov chain with a unique full support stationary distribution, we can do this by solving $\delta \geq \sqrt{\frac{1}{2\pi_{\min}}} \eta_{\chi^{2}}(\cW,\pmb{\pi})^{n/2}$ as a function of $n$ and then computing $\chi^{2}(\cW,\pmb{\pi})$. We state this as a corollary.
\begin{corollary}\label{cor:Markov-chain-mixing-times}
    Given a Markov chain $W$ with a full support stationary distribution $\pmb{\pi}$, whenever $\eta_{\chi^{2}}(W,\pmb{\pi}) < 1$,
    $$ t_{\text{mix}}^{\text{TV}}(\cW,\delta) \leq \frac{2\log(1/\sqrt{2\pi_{\min}}\delta)}{\log(1/\eta_{\chi^{2}}(\cW,\pmb{\pi}))} \ . $$
    Moreover, this is efficient to compute as $\chi^{2}(\cW,\pmb{\pi})$ is efficient to compute.
\end{corollary}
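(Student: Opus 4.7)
The plan is to obtain the mixing time bound by directly inverting the convergence inequality \eqref{eq:convergence-for-full-rank} of Corollary~\ref{cor:refined-convergence-theorem}, and then to verify the computability claim by invoking the efficient computability of $\eta_{\chi^{2}}(\cW,\pmb{\pi})$ from Proposition~\ref{prop:facts-about-contraction-coefficients}.

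First I would apply Corollary~\ref{cor:refined-convergence-theorem}, whose hypothesis (full support stationary distribution $\pmb{\pi}$) is exactly the assumption of the current corollary. Taking square roots in \eqref{eq:convergence-for-full-rank} yields the uniform bound
\begin{equation*}
\max_{\mbf{p} \in \cP(\cX)} \text{TV}(W^{n}\mbf{p},\pmb{\pi}) \;\leq\; \sqrt{\tfrac{1}{2\pi_{\min}}}\; \eta_{\chi^{2}}(\cW,\pmb{\pi})^{n/2}\ ,
\end{equation*}
valid for every $n \in \mbb{N}$. Since we are given $\eta_{\chi^{2}}(\cW,\pmb{\pi}) < 1$, the right-hand side is a strictly decreasing function of $n$ that tends to zero, so it is legitimate to solve for the smallest $n$ guaranteeing it is at most $\delta$.

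Next I would perform the elementary algebraic inversion. Setting the right-hand side $\leq \delta$ and taking logarithms (noting that $\log\eta_{\chi^{2}}(\cW,\pmb{\pi}) < 0$ allows us to flip the inequality after division), one obtains
\begin{equation*}
n \;\geq\; \frac{2\log\!\left(1/(\sqrt{2\pi_{\min}}\,\delta)\right)}{\log(1/\eta_{\chi^{2}}(\cW,\pmb{\pi}))}\ .
\end{equation*}
By the definition \eqref{eq:classical-mixing-time-def} of $t_{\text{mix}}^{\text{TV}}(\cW,\delta)$, this yields the stated upper bound. (If the numerator is non-positive, i.e.\ $\delta \geq 1/\sqrt{2\pi_{\min}}$, the bound is trivial, matching the fact that $\text{TV} \leq 1$ always holds.)

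For the ``moreover'' claim, I would invoke Item 4 of Proposition~\ref{prop:facts-about-contraction-coefficients}: the input-dependent $\chi^{2}$ contraction coefficient equals the squared second singular value of the reweighted joint matrix $\wt{p}_{XY}$, which is a finite-dimensional linear-algebraic quantity and hence efficiently computable. The quantities $\pi_{\min}$, $\log$, and the division above are likewise efficient given the transition matrix and its stationary distribution, so the overall bound is efficiently computable. There is no real obstacle here; the only minor subtlety is simply checking the sign of $\log\eta_{\chi^{2}}(\cW,\pmb{\pi})$ when inverting, which is handled by the hypothesis $\eta_{\chi^{2}}(\cW,\pmb{\pi}) < 1$.
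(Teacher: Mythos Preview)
Your proposal is correct and follows exactly the approach the paper takes: the paper simply remarks (in the sentence immediately preceding the corollary) that one obtains the bound by solving $\delta \geq \sqrt{\tfrac{1}{2\pi_{\min}}}\,\eta_{\chi^{2}}(\cW,\pmb{\pi})^{n/2}$ from \eqref{eq:convergence-for-full-rank} for $n$, and that computability follows from the computability of $\eta_{\chi^{2}}(\cW,\pmb{\pi})$. Your write-up is in fact more carefully spelled out than the paper's, including the handling of the sign of $\log\eta_{\chi^{2}}$ and the trivial regime $\delta \geq 1/\sqrt{2\pi_{\min}}$.
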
 

However, it would be natural to ask if similar claims can be said for more stringent measures on the Markov chain. One such case would be to consider measuring distance under KL divergence, which we know can be strictly larger than one, unlike trace distance. More generally, in principle, one could measure Markov chain mixing times under arbitrary $f$-divergences. Here we will show that, for a certain class of $f$-divergences, the mixing times only mildly vary as a function of the choice of $f$, which is an immediate corollary of \cite[Lemma A.2]{Raginsky-2016a} (See \cite[Lemma 6]{Makur-2020a} for a notationally clearer proof).
\begin{proposition}\label{prop:f-div-mixing-time}
    $f:(0,\infty) \to \mbb{R}$ be convex, differentiable at unity with $f(1) = 0$, $f(0) < \infty$, and $g(t) \coloneq \frac{f(t)-f(0)}{t}$ concave on $(0,\infty)$. Let $W$ be a Markov chain with unique full support stationary distribution $\pmb{\pi}$. Then, whenever $\eta_{\chi^{2}}(W,\pmb{\pi}) < 1$ ,
    \begin{align}
       t_{\text{mix}}^{D_{f}}(\cW,\delta) \leq \frac{\log(2/[\delta \pi_{\min}]) + \log(f'(1)+f(0))}{\log(1/\eta_{\chi^{2}}(W,\pmb{\pi}))}  \ .
    \end{align}
\end{proposition}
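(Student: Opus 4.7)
The plan is to combine three ingredients: the linear bound on $D_f$ by $\chi^2$ from \cite[Lemma A.2]{Raginsky-2016a} / \cite[Lemma 6]{Makur-2020a}, the $\chi^2$ contraction chain established in Corollary~\ref{cor:refined-convergence-theorem}, and the reverse Pinsker-type bound $\chi^{2}(\mbf{p}\Vert\pmb{\pi})\le 2/\pi_{\min}$ from Proposition~\ref{prop-chi-squared-lower-bound-tv}. Together these reduce the mixing-time question to an elementary inequality in $n$ that one just solves.

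First, I would invoke the cited lemma, which states that under the hypotheses on $f$ (convexity, $f(1)=0$, $f(0)<\infty$, and concavity of $g(t)=[f(t)-f(0)]/t$), one has
\begin{equation*}
    D_{f}(\mbf{p}\Vert\mbf{q})\;\le\;\bigl(f'(1)+f(0)\bigr)\,\chi^{2}(\mbf{p}\Vert\mbf{q})
\end{equation*}
for any $\mbf{p}\ll\mbf{q}$. Specializing to $\mbf{p}\mapsto W^{n}\mbf{p}$ and $\mbf{q}\mapsto\pmb{\pi}$ yields
\begin{equation*}
    D_{f}(W^{n}\mbf{p}\Vert\pmb{\pi})\;\le\;\bigl(f'(1)+f(0)\bigr)\,\chi^{2}(W^{n}\mbf{p}\Vert\pmb{\pi}).
\end{equation*}
Since $\pmb{\pi}$ is stationary, submultiplicativity of the $\chi^{2}$ contraction coefficient (Proposition~\ref{prop:sub-multiplicativity}, used already in the proof of Corollary~\ref{cor:refined-convergence-theorem}) gives $\chi^{2}(W^{n}\mbf{p}\Vert\pmb{\pi})\le\eta_{\chi^{2}}(\cW,\pmb{\pi})^{n}\,\chi^{2}(\mbf{p}\Vert\pmb{\pi})$, and then the full-support reverse Pinsker for $\chi^{2}$ (Proposition~\ref{prop-chi-squared-lower-bound-tv}) yields $\chi^{2}(\mbf{p}\Vert\pmb{\pi})\le 2/\pi_{\min}$ uniformly in $\mbf{p}\in\cP(\cX)$. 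Chaining these,
\begin{equation*}
    \max_{\mbf{p}\in\cP(\cX)} D_{f}(W^{n}\mbf{p}\Vert\pmb{\pi})\;\le\;\frac{2\bigl(f'(1)+f(0)\bigr)}{\pi_{\min}}\,\eta_{\chi^{2}}(\cW,\pmb{\pi})^{n}.
\end{equation*}

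Finally, I would solve for the smallest $n$ making the right-hand side at most $\delta$. Because $\eta_{\chi^{2}}(\cW,\pmb{\pi})<1$ by assumption, $\log(1/\eta_{\chi^{2}}(\cW,\pmb{\pi}))>0$, and taking logarithms yields
\begin{equation*}
    n\;\ge\;\frac{\log\!\bigl(2/[\delta\pi_{\min}]\bigr)+\log\!\bigl(f'(1)+f(0)\bigr)}{\log\!\bigl(1/\eta_{\chi^{2}}(\cW,\pmb{\pi})\bigr)},
\end{equation*}
which matches the definition \eqref{eq:classical-mixing-time-def} of $t^{D_{f}}_{\mathrm{mix}}(\cW,\delta)$.

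The main obstacle is essentially bookkeeping: one must verify the hypotheses on $f$ are exactly what the Raginsky--Makur lemma requires (convexity, differentiability at unity, $f(0)<\infty$, concavity of $g$), and check that the linear bound's coefficient is indeed $f'(1)+f(0)$ rather than some variant. No new analytic work is needed beyond quoting that lemma and the ingredients already proven in Corollary~\ref{cor:refined-convergence-theorem}, so the argument is genuinely a short corollary once the linear $D_f$-to-$\chi^2$ bound is in hand.
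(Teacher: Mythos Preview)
Your proposal is correct and matches the paper's proof essentially line for line: apply \cite[Lemma A.2]{Raginsky-2016a} to bound $D_f$ by $(f'(1)+f(0))\chi^2$, use stationarity of $\pmb{\pi}$ and submultiplicativity of $\eta_{\chi^2}$ to contract, then apply the reverse Pinsker bound $\chi^2(\mbf{p}\Vert\pmb{\pi})\le 2/\pi_{\min}$ from Proposition~\ref{prop-chi-squared-lower-bound-tv} and solve for $n$. The paper's own write-up is just these four inequalities strung together.
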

\begin{proof}
    We are interested in $D_{f}(W^{n}\mbf{p} \Vert \pmb{\pi})$ for arbitrary $\mbf{p}$. Now,
    \begin{align}
        D_{f}(W^{n}\mbf{p} \Vert \pmb{\pi}) \leq& [f'(1) + f(0)]\chi^{2}(W^{n}\mbf{p} \Vert \pmb{\pi}) \\
        \leq& [f'(1) + f(0)]\chi^{2}(W^{n}\mbf{p} \Vert W^{n}\pmb{\pi}) \\
        \leq& [f'(1) + f(0)]\eta_{\chi^{2}}(W,\pmb{\pi})^{n}\chi^{2}(\mbf{p} \Vert \pmb{\pi}) \\
        \leq& \frac{2 [f'(1) + f(0)]}{\pi_{\min}}\eta_{\chi^{2}}(W,\pmb{\pi})^{n} \ . \label{eq:linear-upper-bound-on-f-dist-from-stationary} \ ,
    \end{align}
    where the first inequality is \cite[Lemma A.2]{Raginsky-2016a}, the second is that $\pmb{\pi}$ is stationary, the third is multiplicativity, and the fourth is \eqref{eq:chi-squared-reverse-Pinsker-Sason} where we have relied on $\pmb{\pi}$ being full support. It then suffices to do the arithmetic to find $n$ such that \eqref{eq:linear-upper-bound-on-f-dist-from-stationary} is upper bounded by $\delta$, which always exists so long as $\eta_{\chi^{2}}(W,\pmb{\pi}) < 1$.
\end{proof}
We note that mixing times for $f$-divergences were considered in \cite[Proposition 5.1]{Raginsky-2016a} where for $f$ such that $f(0) < +\infty$ and irreducible, aperiodic $W$ where $W(x\vert x) > 0$, 
\begin{align}\label{eq:Raginsky-f-mixing-time}
    \log(\omega_{f,\cX}/\delta)/\log(1/\eta_{f}(W,\pmb{\pi})) \ ,
\end{align}
where $\omega_{f,\cX} := f(\vert \cX \vert)/|\cX| + (1-1/\vert \cX \vert)f(0)$.
Clearly this result and Proposition~\ref{prop:f-div-mixing-time} are not directly comparable. First, the conditions on both $W$ and $f$ differ. Second, the denominator in \cite[Proposition 5.1]{Raginsky-2016a} is scales as $\eta_{f}(W,\mbf{p})$, which means the denominator can be smaller than that of Proposition~\ref{prop:f-div-mixing-time}. On the other hand, the numerator is parameterized in terms of $f$ and is a function of the size of the alphabet. Of course, the major advantage of Proposition~\ref{prop:f-div-mixing-time} over \cite[Proposition 5.1]{Raginsky-2016a} is that it is efficient to compute.

\paragraph{Linear Contraction Coefficient Bounds}
Given that our mixing time bounds for total variation distance were deeply related to Theorem~\ref{thm:classical-contraction-rate}, it would be reasonable to ask if a contraction coefficient relation also may be established. Indeed, this was established in \cite[Theorem 3]{Makur-2020a} in both statement and proof method. We provide a slight variation of this result that follows from our Pinsker's inequality methodology.
\begin{proposition}\label{prop:linear-contraction-bounds}
    Let $f:(0,\infty) \to \mbb{R}$ be convex, differentiable at unity with $f(1) = 0$, $f(0) < \infty$, and $g(t) \coloneq \frac{f(t)-f(0)}{t}$ concave on $(0,\infty)$. Let $\mbf{q}$ be full support. Then,
    \begin{align}
        \eta_{f}(W,\mbf{q}) \leq \frac{4[f'(1)+f(0)]}{L_{f} q_{\min}} \eta_{\chi^{2}}(W,\mbf{q}) \ ,
    \end{align}
    where $L_{f}$ is defined as in Theorems \ref{thm:pinsker-uni} and \ref{thm:pinsker-multi}.
\end{proposition}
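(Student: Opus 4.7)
The plan is to sandwich the input-dependent $f$-divergence contraction ratio between two multiples of the $\chi^{2}$ contraction ratio: upper bound the numerator $D_{f}(W\mbf{p}\Vert W\mbf{q})$ linearly by $\chi^{2}(W\mbf{p}\Vert W\mbf{q})$, and lower bound the denominator $D_{f}(\mbf{p}\Vert \mbf{q})$ linearly by $\chi^{2}(\mbf{p}\Vert \mbf{q})$. Dividing and supremizing over $\mbf{p}$ then collapses the right-hand side to a constant multiple of $\eta_{\chi^{2}}(W,\mbf{q})$. This is essentially the structure used earlier in the proof of Proposition~\ref{prop:f-div-mixing-time}, repurposed to yield a contraction-coefficient bound rather than a decay-time bound, and it mirrors \cite[Theorem 3]{Makur-2020a} with the denominator bound supplied by our Pinsker methodology instead of Gilardoni's.

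For the numerator, I would invoke the linear upper bound $D_{f}(W\mbf{p}\Vert W\mbf{q}) \leq [f'(1)+f(0)]\,\chi^{2}(W\mbf{p}\Vert W\mbf{q})$, which is the content of \cite[Lemma A.2]{Raginsky-2016a} (also used as the first inequality in the display chain of the proof of Proposition~\ref{prop:f-div-mixing-time}). The hypotheses of Proposition~\ref{prop:linear-contraction-bounds} — namely $f(1)=0$, differentiability at $1$, $f(0)<+\infty$, and concavity of $g(t)=(f(t)-f(0))/t$ — are exactly what that lemma needs, applied to the output distributions $W\mbf{p}$ and $W\mbf{q}$. No channel-specific hypothesis is required since the lemma is a purely distributional statement.

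For the denominator, I would apply Lemma~\ref{lem:f-div-lb-by-chi-squared}, which gives
\[
D_{f}(\mbf{p}\Vert \mbf{q}) \;\geq\; \frac{L_{f}\,\wt{q}_{\min}}{4}\,\chi^{2}(\mbf{p}\Vert \mbf{q})\,,
\]
noting that $\wt{q}_{\min}=q_{\min}$ because $\mbf{q}$ is full support by assumption. Assembling the two bounds and dividing,
\[
\frac{D_{f}(W\mbf{p}\Vert W\mbf{q})}{D_{f}(\mbf{p}\Vert \mbf{q})} \;\leq\; \frac{4[f'(1)+f(0)]}{L_{f}\, q_{\min}}\cdot \frac{\chi^{2}(W\mbf{p}\Vert W\mbf{q})}{\chi^{2}(\mbf{p}\Vert \mbf{q})}\,,
\]
which is valid for every $\mbf{p}\in\cP(\cX)$ with $0 < D_{f}(\mbf{p}\Vert\mbf{q}) < +\infty$. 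Supremizing both sides over such $\mbf{p}$ and recognizing the definition of $\eta_{\chi^{2}}(W,\mbf{q})$ on the right-hand side delivers the claimed inequality.

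The bookkeeping step is the only subtle point: one must verify that the supremum over the $f$-divergence admissibility set is bounded above by the supremum over the strictly larger $\chi^{2}$ admissibility set. Under the stated hypotheses, finite $D_{f}(\mbf{p}\Vert\mbf{q})$ forces $\mbf{p}\ll\mbf{q}$, which combined with $\mbf{q}>\mbf{0}$ guarantees $\chi^{2}(\mbf{p}\Vert\mbf{q})<+\infty$, so the containment holds and the supremum transfer is legitimate. The trivial cases ($L_{f}=0$, or $\mbf{p}=\mbf{q}$) need only a line of commentary. I do not anticipate a real obstacle — the entire argument is a direct consequence of chaining Raginsky's linear upper bound with our Pinsker-style lower bound, which is precisely the novelty that widens the admissible class of $f$ beyond what Gilardoni's condition \eqref{eq:Gilardoni-pinsker-requirement} permits.
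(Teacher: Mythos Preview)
Your proposal is correct and follows essentially the same route as the paper: upper-bound the numerator via \cite[Lemma A.2]{Raginsky-2016a}, lower-bound the denominator via Lemma~\ref{lem:f-div-lb-by-chi-squared}, divide, and supremize. Your version is in fact more carefully written than the paper's own proof (which contains typos in its displayed inequality) and explicitly handles the supremum-set containment that the paper leaves tacit.
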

\begin{proof}
    In the case $L_{f} = 0$, this bound is trivial if we define $a/0 \coloneq +\infty$, so we assume $L_{f} > 0$.
    \begin{align}
        \frac{D_{f}(W\mbf{p} \Vert W\mbf{q})}{D_{f}(\mbf{p}\Vert \mbf{q})} \leq \frac{(f'(1) + f(0))\chi^{2}(\mbf{p} \Vert \mbf{q})}{\frac{L_{f}\wt{q}_{\min}}{4} \chi^{2}(\mbf{p} \Vert \mbf{q})^{2}} \ ,
    \end{align}
    where the numerator makes use of \cite[Lemma A.2]{Raginsky-2016a} (See also \cite[Lemma 6]{Makur-2020a} for a direct proof) and Lemma~\ref{lem:f-div-lb-by-chi-squared} for the denominator. As this is independent of $\mbf{p} \in \cP(\cX)$, supremizing completes the proof.
\end{proof}
We stress that this is exactly the result of \cite[Theorem 3]{Makur-2020a} except that we have relaxed the demands on $f$ as we have replaced their application of Pinsker's inequality using Proposition~\ref{prop:Gilardoni} to using our Pinsker's inequality. The advantage of using Proposition~\ref{prop:Gilardoni} is that $4/L_{f}$ may be replaced with $f''(1)$ which is known to be a tight Pinsker's inequality. The advantage of our method is generality. Of course, when both methods apply, if $L_{f} = f''(1)$, then the results are equivalent.

\begin{example}
    Consider by the binary symmetric channel with parameter $p$, $W_{p} = \begin{bmatrix} 1-p & p \\ p & 1-p \end{bmatrix}$. Its unique stationary distribution is the uniform distribution as is direct to verify. It is well-known that $\eta_{\chi^{2}}(W_{p},\pmb{\pi}) = (1-2p)^{2}$ as may be verified by the singular value characterization. For the Hellinger divergences, $f_{\alpha}(t) \coloneq \frac{t^{\alpha}-1}{\alpha-1}$, of $\alpha \in (1,2)$. It follows $g(t) = [f_{\alpha}(t) - f_{\alpha}(0)]/t = \frac{t^{\alpha}}{\alpha-1}$ and thus $g(t)$ is concave over $(0,\infty)$ for $\alpha \in (1,2)$ by the second order criterion. Then we may apply Proposition~\ref{prop:f-div-mixing-time} to obtain have 
    \begin{align}
        \eta_{f_{\alpha}}(W_{p},\pmb{\pi}) \leq 2\eta_{\chi^{2}}(W_{p},\pmb{\pi}) = 2(1-2p)^{2} \quad \forall \alpha \in (1,2) \, \forall p \in [0,1] \ , 
    \end{align}
    where we used Table \ref{tab:uni}.
\end{example}

\section{Extending Results to Quantum \texorpdfstring{$f$}{}-Divergences}\label{sec:quantum-extensions}
In this section we explain how various results we have established extend to classes of quantum $f$-divergences. In particular, we show that for time homogeneous quantum Markov chains, the rate of contraction for many monotonic Petz $f$-divergences is the Petz $\chi^{2}$-divergence, which we denote by $\ol{\chi}^{2}$ (Theorem~\ref{thm:Petz-contraction-coefficient-rate}). We also establish mixing times under various Petz $f$-divergences in terms of the input-dependent Petz $\chi^{2}$ contraction coefficient (Proposition~\ref{prop:Petz-mixing-times}). Before establishing these results, for clarity to those unfamiliar, we begin with a background on quantum divergences that will be sufficient for understanding the rest of the section. We refer the reader to standard texts \cite{Wilde-Book,WatrousBook,Tomamichel-2016a,Khatri-2020a} for further background.

\subsection{Background: Quantum Divergences}
Quantum divergences extend the notion of divergence from probability distributions and vectors to quantum states and positive semidefinite operators. We denote the space of positive semidefinite operators on a Hilbert space $A$ by $\Pos(A)$. For two Hermitian oeprators $R,S \in \Herm(A)$, we write $R \ll S$ if the kernel of $R$ contains the kernel of $S$. We also write $R \leq S$ if $R-S \in \Pos(A)$, which is known as the L\"{o}wner order. The space of density matrices (quantum states) is denoted by $\Density(A) \coloneq \{\rho \in \Pos(A) : \Tr[\rho] = 1\}$. The generalization of total variation to the quantum setting is the trace distance: $\text{TD}(P,Q) \coloneq \frac{1}{2}\Vert P - Q \Vert_{1}$ where $\Vert \cdot \Vert_{1}$ is the Schatten $1-$norm. Quantum channels are completely positive, trace-preserving (CPTP) linear maps. We denote the set of CPTP maps from $\Lin(A)$ to $\Lin(B)$ by $\Channel(A,B)$. We will denote the iterative composition of a channel by $\cE^{n} \coloneq \circ_{i \in [n]} \cE$.

The most well known quantum divergence is the Umegaki relative entropy, which is often just called the relative entropy, and is defined as follows.
\begin{definition}
    Let $P,Q \in \Pos(A)$. The Umegaki relative entropy is
    \begin{align*}
        D(P\Vert Q) := \begin{cases}
            \Tr[P\log(P)] - \Tr[P\log(Q)] & P \ll Q \\
            +\infty & \text{otherwise} \ .
        \end{cases}
    \end{align*}
\end{definition}
In the same way that $f$-divergences are motivated by the KL divergence, so too are quantum $f$-divergences. However, because operators do not commute, there are a variety of generalizations of both $f$-divergences, e.g. \cite{Petz-1985a,Petz-1986a,Petz-1998a,Wilde-2018a,Hirche-2023a}. In this section we will primarily be focused on quantum divergences in two manners. The first is at a very abstract level, where we only care about the divergence satisfying two properties: data-processing and the ability to reduce to a classical $f$-divergence. The many families of such quantum $f$-divergences listed above all satisfy these properties.
\begin{definition}\label{def:gen-quantum-f-divergence}
    We say a functional $\Pos(A) \times \Pos(B) \to \mbb{R}$ is a quantum $f$-divergence, denoted $\mbb{D}_{f}(P\Vert Q)$, if it satisfies the following two properties:
    \begin{enumerate}
        \item It is monotonic under the action of a quantum channel on quantum states, e.g. for all Hilbert spaces $A,B$, all quantum states $ \rho,\sigma \in \Density(A)$, and all quantum channels $\cE \in \Channel(A,B)$, 
        \begin{align}
            \mbb{D}_{f}(\cE(\rho)\Vert \cE(\sigma)) \leq \mbb{D}_{f}(\rho \Vert \sigma) \ .
        \end{align}
        \item On all classical (i.e. diagonal) states, $\rho_{X} = \sum_{x} p_{x} \dyad{x} ,\sigma_{X} = \sum_{x} q_{x} \dyad{x}$ where $\{\ket{x}\}_{x \in \cX}$ form a basis of $A$ and $\mbb{p},\mbb{q} \in \cP(\cX)$, the $f$-divergence simplifies to the classical $f$-divergence,
        \begin{align}
            \mbb{D}_{f}(\rho_{X}\Vert \sigma_{X}) = D_{f}(\mbf{p}\Vert \mbf{q}) \ .
        \end{align}
    \end{enumerate}
\end{definition}

The second manner in which we will care about quantum $f$-divergences is in terms of a specific family introduced by Petz \cite{Petz-1985a,Petz-1986a} and have been investigated a great deal subsequently, e.g. \cite{Hiai-2011a,Hiai-2016a,Hiai-2017a,Matsumoto-2018a}. We call these the Petz $f$-divergences, though sometimes they are called the `standard' $f$-divergences \cite{Hiai-2017a}, since technically Petz's `quasi-entropies' include a more general class. 
\begin{definition}\label{def:petz-f-divergence}
    \cite{Hiai-2017a} For $P,Q \in \mrm{Pd}(A)$ the Petz $f$-divergence is defined as 
    \begin{align*}
        \overline{D}_{f}(P \Vert Q) := \Tr[Q^{-1/2}f(L_{P}R_{Q^{-1}})Q^{-1/2}] \ , 
    \end{align*}
    where we use the left and right multiplication operators, $L_{W}(X) = WX$ $R_{W}(X) = XW$ for all $X,W \in \Lin(A)$ Moreover, the divergences are extended to $P,Q \in \Pos(A)$ via continuity:
    \begin{align}
        \ol{D}_{f}(P\Vert Q) := \lim_{\ve \downarrow 0} S_{f}(P + \ve I \Vert Q + \ve I) \ .
    \end{align}
\end{definition}

We note that while the above is the technical definition, they can be handled much more straightforwardly by their equivalence to the $f$-divergence evaluated on special induced probability distributions.
\begin{proposition}\cite{Hiai-2017a}\label{prop:petz-f-with-NS-dists}
    Let $P,Q \in \Pos(A)$ such that they admit spectral decompositions $P = \sum_{x} \lambda_{x} \dyad{e_{x}}$, $Q = \sum_{y} \mu_{y} \dyad{f_{y}}$. Then
    \begin{align}\label{eq:Petz-f-expansion}
        \ol{D}_{f}(P \Vert Q) = \sum_{x: \lambda_{x} > 0} \sum_{y: \lambda_{y} > 0} \mu_{y}f\left(\frac{\lambda_{x}}{\mu_{y}} \right) |\bra{e_{x}}\ket{f_{y}}|^{2} + f(0+)\Tr[(I - P^{0})Q] + f'(+\infty)\Tr[P(I-Q^{0})] \ .
    \end{align}

    Moreover, for quantum states $\rho,\sigma \in \Density(A)$
    \begin{align}\label{eq:Petz-f-to-NS-classical-f}
         \ol{D}_{f}(\rho \Vert \sigma) = D_{f}(\mbf{p}^{[\rho,\sigma]}_{XY}\Vert \mbf{q}^{[\rho,\sigma]}_{XY}) \ , 
    \end{align}
    where
    \begin{align}\label{eq:NS-distributions}
         \mbf{p}^{[\rho,\sigma]}_{XY}(x,y) \coloneqq \lambda_{x}\vert \bra{e_{x}}\ket{f_{y}} \vert^{2} \quad \mbf{q}^{[\rho,\sigma]}_{XY}(x,y) := \mu_{y}\vert \bra{e_{x}}\ket{f_{y}}\vert^{2} \ , 
    \end{align}
    which are known as the Nussbaum-Szko\l a (NS) distributions \cite{Nussbaum-2009a}.
\end{proposition}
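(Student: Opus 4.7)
The plan is to diagonalize the superoperator $L_P R_{Q^{-1}}$ on the Hilbert--Schmidt space of operators, read off the spectral expansion of $\ol{D}_f(P\Vert Q)$ term by term, and then identify the resulting sum with the classical $f$-divergence of the Nussbaum--Szko\l{}a distributions. First I would assume $P,Q \in \Pd(A)$ (the main branch of Definition~\ref{def:petz-f-divergence}) and write spectral decompositions $P = \sum_x \lambda_x \dyad{e_x}$ and $Q = \sum_y \mu_y \dyad{f_y}$. On $\Lin(A)$ equipped with the Hilbert--Schmidt inner product $\ip{X}{Y}_{HS} := \Tr[X^* Y]$, both $L_P$ and $R_{Q^{-1}}$ are self-adjoint, and since left and right multiplication always commute $L_P R_{Q^{-1}}$ is Hermitian. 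The rank-one operators $\{\dyad{e_x}{f_y}\}_{x,y}$ form an orthonormal basis of joint eigenvectors with eigenvalues $\lambda_x/\mu_y$, so by the functional calculus $f(L_P R_{Q^{-1}})$ acts diagonally in this basis with eigenvalues $f(\lambda_x/\mu_y)$.

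Next I would expand $Q^{1/2} = \sum_y \mu_y^{1/2} \dyad{f_y}$ in the joint eigenbasis via $\ip{\dyad{e_x}{f_y}}{Q^{1/2}}_{HS} = \mu_y^{1/2}\bra{e_x}\ket{f_y}$ and substitute into the defining quadratic form. Collecting coefficients then yields $\sum_{x,y} \mu_y f(\lambda_x/\mu_y) \abs{\bra{e_x}\ket{f_y}}^2$, which is exactly the principal double sum in \eqref{eq:Petz-f-expansion}.

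To handle general $P,Q \in \Pos(A)$ and produce the boundary terms $f(0^+)\Tr[(I-P^0)Q]$ and $f'(+\infty)\Tr[P(I-Q^0)]$, I would invoke the continuity prescription at the end of Definition~\ref{def:petz-f-divergence}, apply the previous step to $P+\ve I$ and $Q+\ve I$, and split the resulting double sum according to whether $\lambda_x$ and $\mu_y$ vanish. As $\ve\downarrow 0$ the four cases are: (i) when $\lambda_x,\mu_y>0$ the terms converge to the main sum; (ii) when $\lambda_x = 0$ but $\mu_y > 0$ they contribute $f(0^+)$ times $\sum_{x:\lambda_x=0,\,y:\mu_y>0}\mu_y\abs{\bra{e_x}\ket{f_y}}^2$, which equals $f(0^+)\Tr[(I-P^0)Q]$ by completeness of $\{\ket{e_x}\}$; (iii) when $\lambda_x>0$ but $\mu_y=0$ the identity $\lim_{\ve\downarrow 0}(\mu_y+\ve)f\!\left(\frac{\lambda_x+\ve}{\mu_y+\ve}\right) = \lambda_x f'(+\infty)$, derived from $f'(+\infty)=\lim_{t\to\infty}f(t)/t$, gives $f'(+\infty)\Tr[P(I-Q^0)]$ after summing; (iv) the $\lambda_x=\mu_y=0$ terms vanish under the convention $0\cdot f(0/0)=0$.

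For the second claim, substitute the NS distributions \eqref{eq:NS-distributions} into the classical $f$-divergence of Definition~\ref{def:f-divergence}: for $\lambda_x,\mu_y>0$ the overlaps cancel in $p^{[\rho,\sigma]}_{XY}(x,y)/q^{[\rho,\sigma]}_{XY}(x,y) = \lambda_x/\mu_y$, and the boundary conventions $0f(a/0)=af'(+\infty)$ and $0f(0/0)=0$ built into Definition~\ref{def:f-divergence} reproduce exactly the boundary cases above once one recognises $\sum_y \mu_y\abs{\bra{e_x}\ket{f_y}}^2=\bra{e_x}\sigma\ket{e_x}$ and the analogous identity for $P$. The main obstacle is the bookkeeping in the continuity step: cleanly separating the four support-type cases, handling the indeterminate $(\mu_y+\ve)f((\lambda_x+\ve)/(\mu_y+\ve))$ limit when $\mu_y=0$, and verifying that the partial sums on the orthogonal complements of the supports telescope into the advertised traces $\Tr[(I-P^0)Q]$ and $\Tr[P(I-Q^0)]$. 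Everything else is a routine application of the functional calculus.
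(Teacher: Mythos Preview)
The paper does not supply its own proof of this proposition: it is stated with the citation \cite{Hiai-2017a} and used as a black box thereafter. Your outline is the standard argument and is correct in structure---diagonalise the relative modular operator $L_P R_{Q^{-1}}$ in the Hilbert--Schmidt basis $\{\ketbra{e_x}{f_y}\}_{x,y}$, evaluate the quadratic form on $Q^{1/2}$, and then push the $\ve\downarrow 0$ limit through the four support cases. Two minor remarks: in case~(iv) you should observe that the $(\lambda_x,\mu_y)=(0,0)$ summand is $\ve f(1)=0$ for every $\ve>0$ (rather than appealing to the $0f(0/0)$ convention, which is a classical $f$-divergence convention not directly in play here); and the interchange of limit and sum in the continuity step is harmless because the sum is finite, but it is worth saying so explicitly since the analogous step for infinite-dimensional $A$ is where the real work in \cite{Hiai-2017a} lies.
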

We remark that if $\rho \ll \sigma$ (resp.~, $\rho  \ll \gg \sigma$) $\mbf{p}^{[\rho,\sigma]}_{XY} \ll \mbf{q}^{[\rho,\sigma]}_{XY}$ (resp.~$\mbf{p}^{[\rho,\sigma]}_{XY} \ll \gg \mbf{q}^{[\rho,\sigma]}_{XY}$). This follows from the distributions using the same `overlaps' $\vert\bra{e_{x}}\ket{f_{y}}\vert$ as if $\mu_{y} = 0$, then if $\lambda_{x} \neq 0$ and  $\vert \bra{e_{x}}\ket{f_{y}}\vert \neq 0$, then we contradict $\rho \ll \sigma$. \\

The most important property of the Petz $f$-divergences are that they satisfy data processing when $f$ is operator convex.
\begin{fact} 
\cite{Petz-1986a,Hiai-2011a} (See also \cite{Tomamichel-2009a,Tomamichel-2016a,Hiai-2017a}.) If $f$ is operator convex, then $\ol{D}_{f}(\cE(\rho)\Vert \cE(\sigma)) \leq \ol{D}_{f}(\rho \Vert \sigma)$ for all $\rho,\sigma \in \Density(A)$, $\cE \in \Channel(A,B)$.
\end{fact}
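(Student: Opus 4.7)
The plan is to follow the standard Stinespring-plus-operator-Jensen strategy due to Petz. The argument proceeds in three stages, with the main obstacle being the construction of an appropriate intertwining isometry.

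First, I would use the continuity extension built into Definition~\ref{def:petz-f-divergence} to reduce to the case where $\rho$, $\sigma$, and $\cE(\sigma)$ are faithful (strictly positive definite). Stinespring's theorem then factors $\cE(X) = \Tr_{E}[V X V^{*}]$ for an isometry $V: A \to B \otimes E$, so it suffices to establish monotonicity separately under the isometric embedding $X \mapsto V X V^{*}$ and under the partial trace $\Tr_{E}$. Monotonicity under isometries is immediate from Proposition~\ref{prop:petz-f-with-NS-dists}: an isometry preserves eigenvalues and the eigenvector overlaps $\vert \bra{e_{x}}\ket{f_{y}}\vert^{2}$, so the Nussbaum--Szko{\l}a distributions (and hence $\ol{D}_{f}$) are unchanged.

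Second, for the partial trace step, recast the Petz divergence using the relative modular operator $\Delta_{P|Q} \coloneq L_{P} R_{Q^{-1}}$ acting on the Hilbert--Schmidt space $\Lin(A)$ equipped with $\ip{X}{Y}_{\mathrm{HS}} \coloneq \Tr[X^{*}Y]$:
\begin{align*}
    \ol{D}_{f}(P \Vert Q) = \bigip{Q^{1/2}}{f(\Delta_{P|Q}) Q^{1/2}}_{\mathrm{HS}}.
\end{align*}
Construct an isometry $W: \Lin(B) \to \Lin(A)$ out of the Petz recovery map $\cR_{Q}(X) \coloneq Q^{1/2} \cE^{*}(\cE(Q)^{-1/2} X \cE(Q)^{-1/2}) Q^{1/2}$, satisfying $W \cE(Q)^{1/2} = Q^{1/2}$ and intertwining $\Delta_{P|Q}$ with $\Delta_{\cE(P)|\cE(Q)}$ in the sense that $W^{*} \Delta_{P|Q} W \geq \Delta_{\cE(P)|\cE(Q)}$ on $\mathrm{ran}(W^{*})$. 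Verifying this intertwining is the technical heart of the argument and exploits complete positivity together with the Schwarz inequality for the unital CP map $\cE^{*}$.

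Third, apply the operator Jensen inequality for operator convex $f$: for an isometry $W$ between Hilbert spaces and a positive operator $A$ with spectrum in the domain of $f$, $f(W^{*} A W) \leq W^{*} f(A) W$. Specialized to $A = \Delta_{P|Q}$ and sandwiched against $\cE(Q)^{1/2}$ using $W \cE(Q)^{1/2} = Q^{1/2}$, this yields
\begin{align*}
    \ol{D}_{f}(\cE(P) \Vert \cE(Q)) \leq \bigip{\cE(Q)^{1/2}}{W^{*} f(\Delta_{P|Q}) W \cE(Q)^{1/2}}_{\mathrm{HS}} = \bigip{Q^{1/2}}{f(\Delta_{P|Q}) Q^{1/2}}_{\mathrm{HS}} = \ol{D}_{f}(P \Vert Q),
\end{align*}
establishing the claim. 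The main obstacle is rigorously producing the intertwining isometry $W$ and verifying its properties; this is the content of Petz's original argument. An alternative route, which sidesteps the need to construct $W$ explicitly, uses the L\"owner--Kraus integral representation of operator convex $f$ to reduce DPI for general $\ol{D}_{f}$ to DPI for the one-parameter family $\ol{D}_{(t-1)^{2}/(t+s)}$ with $s \geq 0$, each member of which admits a direct proof from the CP Schwarz inequality applied to $\cE^{*}$.
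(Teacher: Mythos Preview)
The paper does not prove this statement; it is recorded as a cited Fact with references to \cite{Petz-1986a,Hiai-2011a} and others, so there is no in-paper proof to compare against. Your outline is the standard Petz strategy and is correct in spirit, but there is a genuine gap in how you combine the intertwining relation with operator Jensen.

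You assert $W^{*}\Delta_{P|Q}W \geq \Delta_{\cE(P)|\cE(Q)}$ and then invoke $f(W^{*}\Delta_{P|Q}W) \leq W^{*}f(\Delta_{P|Q})W$. Sandwiching the latter against $\cE(Q)^{1/2}$ indeed produces $\ol{D}_{f}(P\Vert Q)$ on the right, but on the left you obtain $\bigip{\cE(Q)^{1/2}}{f(W^{*}\Delta_{P|Q}W)\,\cE(Q)^{1/2}}$, not $\ol{D}_{f}(\cE(P)\Vert\cE(Q)) = \bigip{\cE(Q)^{1/2}}{f(\Delta_{\cE(P)|\cE(Q)})\,\cE(Q)^{1/2}}$. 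To pass from the second quantity to the first you would need $f(\Delta_{\cE(P)|\cE(Q)}) \leq f(W^{*}\Delta_{P|Q}W)$, which requires operator \emph{monotonicity} of $f$, not operator convexity; for instance $f(t)=(t-1)^{2}$ is operator convex but not operator monotone. The intertwining inequality as you state it therefore does not close the argument. In the actual Petz proof the relevant map is a contraction (not an isometry in the Hilbert--Schmidt picture), and the inequality used is the dual one $V^{*}\Delta_{\rho|\sigma}^{-1}V \leq \Delta_{\cE(\rho)|\cE(\sigma)}^{-1}$ together with operator monotonicity of $t\mapsto -t^{-1}$, or equivalently one works term-by-term through the L\"owner integral representation of $f$---which is precisely the ``alternative route'' you mention at the end. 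That alternative is in fact how the cited references carry out the argument rigorously, and it is not merely optional: some device bridging convexity and monotonicity is required.
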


\paragraph{Petz \texorpdfstring{$\chi^{2}$}{}-Divergence}
We will make use of the quantum $\ol{\chi}^{2}$-divergence \cite{Temme-2010a}. We begin by defining it on full rank states. That is, for $\rho,\sigma \in \Pd(A)$, we initially define the quantum $\ol{\chi}^{2}$-divergence as 
    \begin{align}\label{eq:quantum-chi-squared-divergence-init}
        \ol{\chi}^{2}(\rho\Vert \sigma) = \Tr[\sigma^{-1}(\rho-\sigma)^{2}] \ .
    \end{align}
We note that \cite{Temme-2010a} in fact considered a whole family of quantum $\chi^{2}$-divergences of which $\ol{\chi}^{2}$ is a special case. We stress that \eqref{eq:quantum-chi-squared-divergence-init} is \textit{not} what is denoted the quantum $\chi^{2}$-divergence in \cite{Hirche-2023a} (See \cite{Hirche-2023a} and \cite[Example 1]{Lesniewski-1999a}). We remark there are other manners of obtaining the $\ol{\chi}^{2}$-divergence \cite[Section 2]{Petz-1998a}. We also note that \eqref{eq:quantum-chi-squared-divergence-init} is in fact the Petz $f$-divergence for $f(x) := x^{2}-1$ at least on full rank density matrices, as may be verified via the following rather direct calculation:
\begin{align*}
    \ol{D}_{x^{2}-1}(P \Vert Q) =& \sum_{x,y} \mu_{y}\left[ \frac{\lambda_{x}^{2}}{\mu_{y}^{2}}-1  \right]\left|\bra{e_{x}}\ket{f_{y}}\right|^{2} \\
    =& \sum_{x,y} \frac{\lambda_{x}^{2}}{\mu_{y}} \left|\bra{e_{x}}\ket{f_{y}}\right|^{2} - \sum_{x,y} \mu_{y}  \left|\bra{e_{x}}\ket{f_{y}}\right|^{2} \\
    =& \sum_{x,y} \frac{\lambda_{x}^{2}}{\mu_{y}} \left|\bra{e_{x}}\ket{f_{y}}\right|^{2} - 1 \\
    =& \Tr[\sigma^{-1}\rho^{2}] + \Tr[\sigma] -2\Tr[\rho] \\
    =& \Tr[\sigma^{-1}(\rho^{2} + \sigma^{2} -\rho\sigma - \sigma\rho)] \\
    =& \Tr[\sigma^{-1}(\rho-\sigma)^{2}] \\
    =& \ol{\chi}^{2}(\rho \Vert \sigma) \ ,
\end{align*}
where the first equality is Proposition~\ref{prop:petz-f-with-NS-dists},  the second is expanding terms, the third uses the normalization of $\sigma$ and that $\{\ket{e_{x}}\},\{\ket{f_{y}}\}$ are orthonormal bases, the fourth uses $\Tr[\rho] = \Tr[\sigma] = 1$, the fifth is a direct calculation of $\Tr[\sigma^{-1}\rho^{2}]$, the sixth is a direct calculation of $\Tr[\sigma^{-1}(\rho^{2} + \sigma^{2} -\rho\sigma - \sigma\rho)]$ using cyclicity of trace, the seventh is again direct calculation, and the last is \eqref{eq:quantum-chi-squared-divergence-init}.
As such, we formally define the Petz $\ol{\chi}^{2}$-divergence. We keep the notation with the bar to both avoid confusion with \cite{Hirche-2023a} and to remind the reader the quantity is induced by the Petz $f$-divergences.
\begin{definition}
      Let $P,Q \in \Pos(A)$. Then we define the Petz $\chi^{2}$-divergence as
      \begin{align}
        \ol{\chi}^{2}(P \Vert Q) := \ol{D}_{x^{2}-1}(P \Vert Q) \ , 
      \end{align}
      where we use the definition of Petz $f$-divergence (Definition~\ref{def:petz-f-divergence}).
\end{definition}

\subsection{Extension of Results}
We now turn to extending our classical results to quantum $f$-divergences. \\

\subsubsection{Pinsker Inequalities for Quantum \texorpdfstring{$f$}{}-Divergences}
We first establish our Pinsker inequalities naturally lift to general quantum $f$-divergences via the data processing inequality. We note this not a new insight. In particular, this is how the Pinsker inequality for Umegaki relative entropy is established (See \cite{Wilde-Book}). The contribution is simply that this now applies to a large class of quantum $f$-divergences. The result is tight or sharp whenever it is classically.
\begin{corollary}\label{cor:quantum-f-divergence-pinsker}
    Let $\mbb{D}_{f}(\rho \Vert \sigma)$ be defined as in Definition~\ref{def:gen-quantum-f-divergence} for some $f$ that is continuously twice differentiable. Then
    $$ \mbb{D}_{f}(\rho \Vert \sigma) \geq \frac{L_{f}}{2} \text{TD}(\rho,\sigma)^{2} \ , $$
    where $L_{f}$ is as defined in Theorems \ref{thm:pinsker-uni} and \ref{thm:pinsker-multi}. Moreover, if the divergence satisfies DPI on positive operators, this extends to $P,Q$ with same trace.
\end{corollary}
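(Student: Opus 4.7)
The plan is to deploy the standard classical-to-quantum reduction via an optimal measurement, for which the two defining properties in Definition~\ref{def:gen-quantum-f-divergence} are precisely what is required. First, by the Helstrom--Holevo theorem, there exists a binary POVM $\{E, I-E\}$, with $E$ the projector onto the positive part of $\rho-\sigma$, whose associated measurement channel $\cM(\cdot) = \Tr[E\,\cdot]\dyad{0} + \Tr[(I-E)\,\cdot]\dyad{1}$ satisfies $\mrm{TV}(\cM(\rho),\cM(\sigma)) = \mrm{TD}(\rho,\sigma)$. Since $\cM$ is a quantum channel, Property 1 of Definition~\ref{def:gen-quantum-f-divergence} (data processing) gives
\begin{equation*}
\mbb{D}_f(\rho \Vert \sigma) \;\geq\; \mbb{D}_f(\cM(\rho)\Vert \cM(\sigma)).
\end{equation*}

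Next, because $\cM(\rho)$ and $\cM(\sigma)$ are jointly diagonal in the standard basis, Property 2 of Definition~\ref{def:gen-quantum-f-divergence} collapses the quantum divergence to its classical counterpart: setting $\mbf{p} = (\Tr[E\rho],\Tr[(I-E)\rho])$ and $\mbf{q} = (\Tr[E\sigma],\Tr[(I-E)\sigma])$, we obtain $\mbb{D}_f(\cM(\rho)\Vert \cM(\sigma)) = D_f(\mbf{p}\Vert \mbf{q})$. Applying Theorem~\ref{thm:pinsker-uni} (or Theorem~\ref{thm:pinsker-multi}, whichever yields the best constant $L_f$ for the given $f$) to the pair $\mbf{p},\mbf{q}$, which have common $\ell^1$-norm $1$, yields $D_f(\mbf{p}\Vert \mbf{q}) \geq \tfrac{L_f}{2}\,\mrm{TV}(\mbf{p},\mbf{q})^2 = \tfrac{L_f}{2}\,\mrm{TD}(\rho,\sigma)^2$, which chains with the previous inequality to give the claimed bound.

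For the ``moreover'' statement, if DPI is assumed for $\mbb{D}_f$ on positive operators $P,Q$ of common trace $c$, then exactly the same reduction applies. The measurement channel $\cM$ is trace preserving, so $\cM(P)$ and $\cM(Q)$ are non-negative diagonal operators of common $\ell^1$-norm $c$, and the optimal-measurement identity $\mrm{TV}(\cM(P),\cM(Q)) = \mrm{TD}(P,Q)$ still holds. The classical Pinsker inequalities of Theorems~\ref{thm:pinsker-uni} and~\ref{thm:pinsker-multi} are stated for non-negative vectors with arbitrary common $\ell^1$-norm $c$, so plugging them in produces the same-trace version (carrying the expected $1/c$ factor).

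The argument is essentially a one-line chain of DPI followed by an application of the classical inequality, so no serious obstacle arises here — the real technical work was done in Section~\ref{sec:f-div-Pinsker} to obtain classical Pinsker inequalities that are general enough in the choice of $f$ for this lifting to be meaningful. The only care needed is to verify the two structural properties of $\mbb{D}_f$ (DPI and reduction to the classical $f$-divergence on diagonal states) hold for each concrete quantum $f$-divergence of interest, and that the optimal Helstrom measurement produces two-dimensional classical states, on which the classical bounds of Theorems~\ref{thm:pinsker-uni} and~\ref{thm:pinsker-multi} were already proved.
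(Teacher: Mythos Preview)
Your proposal is correct and takes essentially the same approach as the paper: both construct the quantum-to-classical measurement channel using the projector onto the positive part of $\rho-\sigma$, apply DPI, reduce to the classical $f$-divergence via Property~2, and then invoke the classical Pinsker inequalities of Theorems~\ref{thm:pinsker-uni}/\ref{thm:pinsker-multi}. Your presentation is slightly more detailed in naming the Helstrom--Holevo theorem and in explicitly noting the $1/c$ factor for the same-trace extension, but the argument is identical.
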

\begin{proof}
    This is a straightforward extension of lifting the Pinsker inequality to quantum that has been used previously, e.g. \cite{Wilde-Book}. Note that $\Vert\rho - \sigma\Vert_{1} = \Tr[\Pi_{+}(\rho - \sigma)] + \Tr[(I-\Pi_{+})(\rho-\sigma)]$ where $\Pi_{+}$ is the projector onto the positive eigenspace of $(\rho-\sigma)$. Define the quantum-to-classical channel
    $$ \cE(X) := \Tr[\Pi_{+}X] \otimes \dyad{0} + \Tr[(I-\Pi_{+})X] \otimes \dyad{1} \ . $$
    It follows $\Vert \rho - \sigma \Vert_{1} = \Vert \cE(\rho) - \cE(\sigma) \Vert_{1}$ as may be verified by direct calculation. It follows
    \begin{align*}
        \mbb{D}_{f}(\rho \Vert \sigma) \geq& D_{f}(\cE(\rho) \Vert \cE(\sigma)) \geq \frac{L_{f}}{2}\mrm{TD}(\cE(\rho),\cE(\sigma))^{2} = \frac{L_{f}}{2}\mrm{TD}(\rho,\sigma)^{2} \ ,
    \end{align*}
    where we used the assumed data processing inequality. The moreover statement is simply noting the above proof works identically if $P,Q$ have the same trace and the data processing inequality still holds. This completes the proof.
\end{proof}

\subsubsection{Strong Data Processing and Ergodicity for Petz \texorpdfstring{$f$}{}-Divergences}
As the $f$-divergences naturally lift to the Petz $f$-divergences by \eqref{eq:Petz-f-to-NS-classical-f}, one would expect that we might obtain similar results to those in Section~\ref{sec:classical-Input-Dependent-SDPI}. Indeed, this is the case. We start with the appropriate definitions of quantum contraction coefficients. We then work out the theory of ergodicity for Petz $f$-divergences.

\paragraph{Quantum Contraction Coefficients and Ergodic Definitions}
 We begin with the definition of an input-dependent contraction coefficient. We remark various quantum contraction coefficients have been considered previously, e.g. \cite{Ruskai-1994a,Lesniewski-1999a,Temme-2010a,Hirche-2022a,Hirche-2023-dp,Nuradha-2024b,Cheng-2024a}. 

\begin{definition}\label{def:quantum-contraction-coefficients}
    Let $\cE \in \Channel(A,B)$ be a quantum channel. The \textit{input-independent contraction coefficient}
    \begin{align}
       \eta_{f}(\cE) := \sup_{\sigma,\rho \in \Density(A) \, : \, 0 < \mbb{D}_{f}(\rho||\sigma) < + \infty } \frac{\mbb{D}_{f}(\cE(\rho)||\cE(\sigma)}{\mbb{D}_{f}(\rho||\sigma)}  \ .
    \end{align} The \textit{input-dependent contraction coefficient} is
    \begin{align}\label{eq:quantum-input-dependent-contraction-coeff}
        \eta_{f}(\cE,\sigma) := \sup_{\rho \, : \, 0 < \mbb{D}_{f}(\rho||\sigma) < + \infty } \frac{\mbb{D}_{f}(\cE(\rho)||\cE(\sigma))}{\mbb{D}_{f}(\rho||\sigma)}  \ .
    \end{align}
\end{definition}

\begin{proposition}\label{prop:sub-multiplicativity}
    Consider $\cE \in \Channel(A,B)$, $\cF \in \Channel(B,C)$. Then, assuming $\mbb{D}_{f}$ satisfies data processing,
    \begin{align*}
        \eta_{f}(\cF \circ \cE,\sigma) &\leq \eta_{f}(\cF, \cE(\sigma))\eta_{f}(\cE,\sigma)  \\
        \eta_{f}(\cF \circ \cE) &\leq \eta_{f}(\cF)\eta_{f}(\cE) \ .
    \end{align*}
    The same claims holds for classical $f$-divergences when considering classical channels and classical inputs.
\end{proposition}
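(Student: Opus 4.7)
The plan is to prove both statements by a direct two-step application of the defining inequality for contraction coefficients, using the data processing inequality (DPI) as a safety net to handle the edge cases in the supremum's domain restriction. I will focus on the input-dependent version first, as the input-independent version follows by simply taking an additional supremum over $\sigma$.

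Fix $\sigma \in \Density(A)$ and take any $\rho \in \Density(A)$ with $0 < \mbb{D}_{f}(\rho \Vert \sigma) < +\infty$. The first step is to note that, since $\mbb{D}_{f}$ satisfies DPI, the composed divergence is controlled via
\begin{equation*}
\mbb{D}_{f}(\cF(\cE(\rho)) \Vert \cF(\cE(\sigma))) \leq \mbb{D}_{f}(\cE(\rho) \Vert \cE(\sigma)) \leq \mbb{D}_{f}(\rho \Vert \sigma) < +\infty,
\end{equation*}
which verifies that both intermediate divergences are finite. Then I split into two cases according to whether $\mbb{D}_{f}(\cE(\rho) \Vert \cE(\sigma)) = 0$ or not. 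If it equals zero, then by DPI so does $\mbb{D}_{f}(\cF(\cE(\rho)) \Vert \cF(\cE(\sigma)))$, and the claimed inequality $\mbb{D}_{f}(\cF(\cE(\rho)) \Vert \cF(\cE(\sigma))) \leq \eta_f(\cF, \cE(\sigma))\, \eta_f(\cE, \sigma)\, \mbb{D}_f(\rho \Vert \sigma)$ holds trivially (both sides nonnegative, left side zero).

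Otherwise $0 < \mbb{D}_{f}(\cE(\rho) \Vert \cE(\sigma)) < +\infty$, so $\cE(\rho)$ is an admissible point in the supremum defining $\eta_f(\cF, \cE(\sigma))$. Applying the definitions twice,
\begin{equation*}
\mbb{D}_{f}(\cF(\cE(\rho)) \Vert \cF(\cE(\sigma))) \leq \eta_f(\cF, \cE(\sigma))\, \mbb{D}_{f}(\cE(\rho) \Vert \cE(\sigma)) \leq \eta_f(\cF, \cE(\sigma))\, \eta_f(\cE, \sigma)\, \mbb{D}_{f}(\rho \Vert \sigma).
\end{equation*}
Dividing by $\mbb{D}_{f}(\rho \Vert \sigma) > 0$ and taking the supremum over all admissible $\rho$ gives $\eta_f(\cF \circ \cE, \sigma) \leq \eta_f(\cF, \cE(\sigma))\, \eta_f(\cE, \sigma)$.

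The input-independent statement follows immediately: for any admissible $\sigma$ we have $\eta_f(\cF, \cE(\sigma)) \leq \eta_f(\cF)$ by definition, so combining with the input-dependent bound and supremizing over $\sigma$ yields $\eta_f(\cF \circ \cE) \leq \eta_f(\cF)\, \eta_f(\cE)$. Finally, the classical statement is the same proof verbatim, replacing $(\rho,\sigma)$ with $(\mbf{p},\mbf{q})$, channels with stochastic matrices, and appealing to classical DPI (Item 1 of Fact~\ref{fact:f-div-properties}). I do not anticipate any significant obstacle; the only subtlety worth tracking is the boundary case $\mbb{D}_f(\cE(\rho) \Vert \cE(\sigma)) = 0$, which DPI resolves cleanly.
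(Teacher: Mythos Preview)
Your proof is correct and is precisely the standard two-step argument the paper invokes by citing \cite[Lemma 3.6]{Hirche-2022a}; the paper's own proof consists solely of that citation with the remark that relative entropy is replaced by an arbitrary $f$-divergence satisfying DPI. Your treatment is in fact more explicit than the paper's, including the careful handling of the boundary case $\mbb{D}_f(\cE(\rho)\Vert\cE(\sigma))=0$.
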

\begin{proof}
    The proof is identical to \cite[Lemma 3.6]{Hirche-2022a} except we replace the relative entropy with an arbitrary $f$-divergence that satisfies data processing over the relevant set of objects (classical distributions or quantum states).
\end{proof}

A natural question would be whether for a classical-to-classical channel $\cW$ and distribution $q$, the quantum contraction coefficient reduces to the classical (Definition~\ref{def:classical-contraction-coefficients}). This seems to be unclear. Certainly whenever $\mbb{D}_{f}$ satisfies data processing this is true as the following proposition implies (See Appendix for its proof).
\begin{proposition}\label{prop:classical-contraction-coeff-from-quantum}
    Let $f$ be strictly convex at unity and such that $f'(\infty) = +\infty$. Let $\cW_{X \to Y}$ be a classical-to-classical channel where the input classical space is defined in terms of orthonormal basis $\{\ket{i}\}_{i \in \cX}$. Let $\Delta(Z) := \sum_{i \in \cX} \dyad{i}(Z)\dyad{i}$ be the completely dephasing channel with respect to the given computational basis. Then if $\mbb{D}_{f}$ is a quantum $f$-divergence such that $\mbb{D}_{f}(\Delta (\rho) \Vert \Delta(\sigma)) \leq D(\rho \Vert \sigma)$ for all $\rho,\sigma \in \Density(\mbb{C}^{|\cX|})$, i.e. satisfies the data-processing inequality for the completely dephasing channel, then the quantum definitions of contraction coefficient reduce to the classical definitions.
\end{proposition}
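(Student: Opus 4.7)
The plan is to show that every ratio appearing in the supremum defining the quantum contraction coefficient is dominated by a ratio involving diagonal (classical) states, and then to invoke property 2 of Definition~\ref{def:gen-quantum-f-divergence} to identify the restricted supremum with the classical contraction coefficient.

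First I would record two structural facts. For a classical-to-classical channel $\cW_{X \to Y}$, the standard quantum embedding has Kraus operators $\{\sqrt{W(y|x)}\,\ket{y}\!\bra{x}\}_{x,y}$, which forces $\cW(\rho) = \cW(\Delta(\rho))$ for every $\rho$, because the off-diagonal entries of $\rho$ are annihilated by the Kraus action. Since $\sigma$ is diagonal in the computational basis, we also have $\sigma = \Delta(\sigma)$. Second, the hypothesis on $\mbb{D}_f$ gives the data-processing inequality for $\Delta$: $\mbb{D}_f(\rho\Vert\sigma)\geq \mbb{D}_f(\Delta(\rho)\Vert\Delta(\sigma))$ for all $\rho,\sigma\in\Density(\complex^{|\cX|})$.

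The main step is to combine these observations. Fix $\rho$ with $0 < \mbb{D}_f(\rho\Vert\sigma) < +\infty$; the condition $f'(+\infty)=+\infty$ ensures $\rho \ll \sigma$, so that $\Delta(\rho)$ is supported on $\supp(\mbf q)$. Writing $\Delta(\rho) = \sum_x p_x\dyad{x}$ and $\sigma = \sum_x q_x\dyad{x}$, property 2 of Definition~\ref{def:gen-quantum-f-divergence} gives
\begin{align*}
\mbb{D}_f(\cW(\rho)\Vert \cW(\sigma)) &= \mbb{D}_f(\cW(\Delta(\rho))\Vert\cW(\Delta(\sigma))) = D_f(W\mbf p\Vert W\mbf q),\\
\mbb{D}_f(\Delta(\rho)\Vert\Delta(\sigma)) &= D_f(\mbf p\Vert \mbf q),
\end{align*}
where the first line uses $\cW = \cW\circ\Delta$ and $\sigma=\Delta(\sigma)$. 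Combining with DPI for $\Delta$ yields
\[
\frac{\mbb{D}_f(\cW(\rho)\Vert\cW(\sigma))}{\mbb{D}_f(\rho\Vert\sigma)} \;\leq\; \frac{D_f(W\mbf p\Vert W\mbf q)}{D_f(\mbf p\Vert\mbf q)}\,.
\]
Taking the supremum over $\rho$ on the left and recognising the classical supremum on the right shows the quantum input-dependent contraction coefficient is upper bounded by the classical one. The reverse inequality is immediate: any classical $\mbf p \in \cP(\cX)$ lifts to the diagonal state $\rho_{\mbf p} = \sum_x p_x\dyad{x}$, for which property 2 makes the quantum ratio coincide exactly with the classical one. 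The same argument, now dephasing the reference state as well, handles the input-independent coefficient.

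The one subtlety, which I expect to be the main (and minor) obstacle, is the degenerate case where $\mbb{D}_f(\Delta(\rho)\Vert\Delta(\sigma))=0$ while $\mbb{D}_f(\rho\Vert\sigma)>0$, so that the bound above is not a priori well-defined. Here strict convexity of $f$ at unity together with property 2 forces $\mbf p = \mbf q$, i.e.\ $\Delta(\rho) = \Delta(\sigma)$. Then $\cW(\rho) = \cW(\Delta(\rho)) = \cW(\Delta(\sigma)) = \cW(\sigma)$, so $\mbb{D}_f(\cW(\rho)\Vert\cW(\sigma)) = 0$ and such $\rho$ contribute a zero ratio to the supremum, which is harmless. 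This completes the reduction.
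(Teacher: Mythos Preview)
Your proposal is correct and follows essentially the same approach as the paper: both use that $\cW = \cW\circ\Delta$ for classical-to-classical channels together with DPI for $\Delta$ to dominate each quantum ratio by a classical one, handle the degenerate case $\Delta(\rho)=\Delta(\sigma)$ by observing the numerator vanishes, and obtain the reverse inequality by restricting to diagonal states. The only cosmetic difference is that you treat the input-dependent case first and then remark on the input-independent case, whereas the paper does the opposite.
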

\noindent However, the natural open question would be if Proposition~\ref{prop:classical-contraction-coeff-from-quantum} applies for Petz $f$-divergences when $f$ is not operator convex so that data processing does not hold. This seems unlikely given that the choice of orthonormal basis in Proposition~\ref{prop:classical-contraction-coeff-from-quantum} is in a sense arbitrary, so one would expect one would need $f$ to be contractive under dephasing in every basis. This is equivalent to being contractive under arbitrary ``pinching maps." However, operator convexity is generally equivalent to being contractive under arbitrary ``pinching maps" (See \cite[Theorem V.2.1]{Bhatia-1997}).

In the same way that contraction coefficients were related to scrambling and indecomposability for classical channels, one would expect we need similar ideas for ergodicity in the quantum regime. In the quantum setting, the definitions seem less standard. Perhaps the closest language may be found in \cite{Wolf-2012a}. To capture what we formally want, we will make use of the following, less concrete, definition.
\begin{definition}(See e.g. \cite{Burgarth-2013a})\label{def:mixing} A quantum channel $\cE_{A \to A}$ is called mixing if it admits a unique fixed point state $\pi \in \Density(A)$ such that
\begin{align*}
    \lim_{n \to \infty} \Vert \cE^{n}(\rho) - \pi \Vert_{1} = 0 \quad \forall \rho \in \Density(A) \ .
\end{align*}
Moreover, we say $\cE$ is strongly-mixing if there is $n_{0} \in \mbb{N}$ such that $\cE^{n}(\rho) > 0$ for all $n \geq n_{0}$ and $\rho \in \Density(A)$.
\end{definition}
Some properties of the limiting behaviour of quantum channels is known. For example, it is know that if there exists any time step $n \in \mbb{N}$ such that $\cE^{n}(\rho) \in \Pd(A)$ for all $\rho \in \Density(A)$, then $\cE$ is mixing and it stationary distribution $\pi$ is full rank \cite[Theorem 6.7]{Wolf-2012a}. Such channels are called `primitive' in \cite{Wolf-2012a} (and subsequently \cite{Hirche-2022a}), but seem the natural quantum extension of being both irreducible and aperiodic. We slightly strengthen this correspondence by showing a quantum channel is `primitive' if and only if it is strongly mixing.\footnote{The proof does not rely upon complete positivity, so we actually establish the result for positive, trace preserving maps.}
\begin{lemma}\label{lem:full-rank-unique-means-strongly-mixing}
    A quantum channel $\cE$ is mixing with unique fixed point $\pi$ that is also full rank if and only if it is strongly mixing.
\end{lemma}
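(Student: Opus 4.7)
The plan is to prove the biconditional by handling each direction separately, reducing the backward implication to a cited result and devoting the real work to the forward implication.

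For the backward direction ($\Leftarrow$), I would simply invoke \cite[Theorem 6.7]{Wolf-2012a}, which is already quoted in the paragraph preceding the lemma: if at some time step $n_0$ the channel satisfies $\cE^{n_0}(\rho) \in \Pd(A)$ for every $\rho \in \Density(A)$, then $\cE$ is mixing with unique full-rank fixed point. The strong-mixing hypothesis is slightly stronger (the condition holds for all $n \geq n_0$ rather than at a single $n_0$), but this extra strength is actually automatic: once $\cE^{n_0}$ sends every density matrix to a positive definite operator, then for $n \geq n_0$ we have $\cE^n(\rho) = \cE^{n_0}(\cE^{n - n_0}(\rho))$, which is positive definite because $\cE^{n - n_0}(\rho) \in \Density(A)$.

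For the forward direction ($\Rightarrow$), assume $\cE$ is mixing with unique fixed point $\pi > 0$ and let $\lambda \coloneq \lambda_{\min}(\pi) > 0$. The strategy is to upgrade the pointwise trace-norm convergence $\cE^n(\rho) \to \pi$ to a uniform convergence over the compact set $\Density(A)$, and then invoke Weyl's perturbation inequality to conclude that $\lambda_{\min}(\cE^n(\rho))$ is bounded away from zero uniformly in $\rho$ for all sufficiently large $n$.

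To obtain uniform convergence, I would introduce the rank-one linear map $\cE_\infty : X \mapsto \Tr[X]\,\pi$ and view $\cE^n - \cE_\infty$ as a sequence of linear operators on the finite-dimensional space $\Lin(A)$. Since $\Density(A)$ linearly spans $\Herm(A)$ (and hence $\Lin(A)$ after complexification), pointwise convergence $(\cE^n - \cE_\infty)(\rho) \to 0$ on $\Density(A)$ extends to pointwise convergence on all of $\Lin(A)$, which in finite dimensions is equivalent to convergence in operator norm. This yields $\sup_{\rho \in \Density(A)} \norm{\cE^n(\rho) - \pi}_1 \to 0$, and the inequality $\norm{\cdot}_\infty \leq \norm{\cdot}_1$ gives the same in spectral norm. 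Weyl's inequality then provides $\abs{\lambda_{\min}(\cE^n(\rho)) - \lambda} \leq \norm{\cE^n(\rho) - \pi}_\infty$, so by choosing $n_0$ large enough that the supremum is below $\lambda/2$ for every $n \geq n_0$, I conclude $\cE^n(\rho) \geq (\lambda/2) I > 0$ for all $\rho \in \Density(A)$, establishing strong mixing.

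The main obstacle is the uniformity: a priori the rate of convergence $\cE^n(\rho) \to \pi$ might depend on $\rho$, and without that uniformity one cannot pick a single $n_0$ that works for every input. The argument above resolves this using the finite dimensionality and linearity of $\cE$, which together force pointwise convergence of $\{\cE^n\}$ to upgrade to operator-norm convergence automatically.
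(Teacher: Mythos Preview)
Your proof is correct, but the two directions are handled quite differently from the paper. For the backward implication, the paper does not invoke \cite[Theorem 6.7]{Wolf-2012a} at all: since strongly mixing is defined as an additional property of a mixing channel, it simply observes that the fixed point $\pi = \cE^{n}(\pi)$ is forced to be full rank. For the forward implication, the paper does the opposite of you: it invokes \cite[Theorem 6.7]{Wolf-2012a} to obtain a single $n_{0}$ with $\cE^{n_{0}}(\rho) > 0$ for every $\rho$, and then upgrades to all $n \geq n_{0}$ by writing $\cE^{n_{0}}(\rho) = \lambda_{\rho}\pi + (1-\lambda_{\rho})\sigma_{\rho}$ and using that $\pi$ is fixed and full rank---essentially the same composition observation you made (in passing) in your backward direction. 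Your forward argument is more self-contained: the finite-dimensional upgrade from pointwise to operator-norm convergence, followed by Weyl's inequality, avoids citing Wolf's equivalence theorem and gives an explicit uniform lower bound $\lambda_{\min}(\pi)/2$ on the spectrum. The paper's route is shorter because the heavy lifting is outsourced to the cited theorem; yours is more elementary and would survive even without that reference.
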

\begin{proof}
   ($\leftarrow$) If it is strongly mixing, then the fixed point $\pi$ must be full rank as $\pi = \cE^{n}(\pi)$ for all $n \in \mbb{N}$. Thus, any strongly mixing channel is a mixing channel with unique fixed point $\pi$ that is full rank. \\
   ($\rightarrow$) Let $\cE$ be mixing with unique fixed point $\pi$ that is also full rank. It thus satisfies Item 3 of \cite[Theorem 6.7]{Wolf-2012a}. By \cite[Theorem 6.7]{Wolf-2012a}, there is $n_{0} \in \mbb{N}$ such that for all $\rho \in \Density(A)$, $\cE^{n_{0}}(\rho) > 0$. Then for each $\rho \in \Density(A)$, there exists $\lambda_{\rho} > 0$ and $R_{\rho} \geq 0$ such that $\cE^{n_{0}}(\rho) = \lambda_{\rho} \pi + R_{\rho}$. Note that by the trace condition, $R_{\rho} = (1-\lambda_{\rho})\sigma_{\rho}$ where $\sigma_{\rho} \in \Density(A)$. Now for all $n \geq n_{0}$ and $\rho \in \Density(A)$,
   $$\cE^{n}(\rho) = \cE^{n-n_{0}} \circ \cE^{n_{0}}(\rho) = \cE^{n-n_{0}}(\lambda_{\rho}\pi + (1-\lambda_{\rho})(\sigma_{\rho})) = \lambda_{\rho}\pi + (1-\lambda_{\rho})\cE^{n-n_{0}}(\sigma_{\rho}) \geq \lambda_{\rho}\pi > 0 \ , $$
   where we used $\pi$ is the fixed point and full rank and that $\cE$ is a positive map. This completes the proof.
\end{proof}

\paragraph{Petz \texorpdfstring{$f$}{}-Divergence and Contraction Coefficient Inequalities}
We now establish our Petz $f$-divergence inequalities and the related contraction coefficient inequalities. We begin with the bounds on the Petz $f$-divergences.
\begin{corollary}\label{cor:Petz-f-chi-squared-bounds}
    Let $f$ be twice-differentiable and convex over $(0,\infty)$. Let $\rho,\sigma \in \Density(A)$ such that $\rho \ll \sigma$. If $\vert f''(0) \vert < +\infty$ or $\sigma \ll \rho$, then 
    \begin{align*}
        \kappa_{f}^{\downarrow}(\mbf{p}^{[\rho,\sigma]}_{XY},\mbf{q}^{[\rho,\sigma]}_{XY})\ol{\chi}_{2}(\rho\Vert \sigma) \leq \ol{D}_{f}(\rho\Vert \sigma) \leq \kappa_{f}^{\uparrow}(\mbf{p}^{[\rho,\sigma]}_{XY},\mbf{q}^{[\rho,\sigma]}_{XY})\ol{\chi}_{2}(\rho\Vert \sigma) \ ,
    \end{align*}
    where $\kappa_{f}^{\downarrow},\kappa_{f}^{\uparrow}$ are defined in \eqref{eq:kappa-up-arrow-defn}, \eqref{eq:kappa-down-arrow-defn} respectively and $\mbf{p}^{[\rho,\sigma]}_{XY}, \mbf{q}^{[\rho,\sigma]}_{XY}$ are the Nussbaum-Sko\l a distributions \eqref{eq:NS-distributions}.
\end{corollary}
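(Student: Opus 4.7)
The plan is to recognize this corollary as a direct reduction to its classical counterpart, Theorem~\ref{thm:f-div-chi-squared-bounds}, via the Nussbaum-Szko\l a distributions. The key identity I will invoke is \eqref{eq:Petz-f-to-NS-classical-f} from Proposition~\ref{prop:petz-f-with-NS-dists}, which states that for quantum states,
\begin{equation*}
    \ol{D}_{f}(\rho\Vert\sigma) = D_{f}(\mbf{p}^{[\rho,\sigma]}_{XY}\Vert\mbf{q}^{[\rho,\sigma]}_{XY}) \ .
\end{equation*}
Applying the same identity with $f(t)=t^{2}-1$, and using that $\ol{\chi}^{2}$ is by definition the Petz $f$-divergence for this $f$, gives
\begin{equation*}
    \ol{\chi}^{2}(\rho\Vert\sigma) = \chi^{2}(\mbf{p}^{[\rho,\sigma]}_{XY}\Vert\mbf{q}^{[\rho,\sigma]}_{XY}) \ .
\end{equation*}
So once both quantities are reexpressed on the same pair of classical distributions, any bound relating them in the classical setting lifts verbatim to the quantum setting.

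Next, I would verify that the hypotheses needed to invoke Theorem~\ref{thm:f-div-chi-squared-bounds} on the pair $(\mbf{p}^{[\rho,\sigma]}_{XY},\mbf{q}^{[\rho,\sigma]}_{XY})$ are automatically inherited from the hypotheses of the corollary. The absolute continuity assumption $\rho \ll \sigma$ yields $\mbf{p}^{[\rho,\sigma]}_{XY}\ll\mbf{q}^{[\rho,\sigma]}_{XY}$ by the remark following Proposition~\ref{prop:petz-f-with-NS-dists}. Theorem~\ref{thm:f-div-chi-squared-bounds} needs in addition either $|f''(0)|<+\infty$, which is a hypothesis of the corollary itself, or strict positivity of $\mbf{p}^{[\rho,\sigma]}_{XY}$ on $\supp(\mbf{q}^{[\rho,\sigma]}_{XY})$. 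In the alternative branch where $\sigma \ll \rho$ is assumed, combining with $\rho\ll\sigma$ gives $\rho\ll\gg\sigma$, which by the same remark produces $\mbf{p}^{[\rho,\sigma]}_{XY}\ll\gg\mbf{q}^{[\rho,\sigma]}_{XY}$, so in particular $\mbf{p}^{[\rho,\sigma]}_{XY}>0$ on $\supp(\mbf{q}^{[\rho,\sigma]}_{XY})$. Either way the classical theorem applies.

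With these observations in place, the proof reduces to substituting the above two equalities into Theorem~\ref{thm:f-div-chi-squared-bounds}. Concretely, applying that theorem to the NS distributions yields
\begin{equation*}
    \frac{\kappa_{f}^{\downarrow}(\mbf{p}^{[\rho,\sigma]}_{XY},\mbf{q}^{[\rho,\sigma]}_{XY})}{2}\,\chi^{2}(\mbf{p}^{[\rho,\sigma]}_{XY}\Vert\mbf{q}^{[\rho,\sigma]}_{XY}) \leq D_{f}(\mbf{p}^{[\rho,\sigma]}_{XY}\Vert\mbf{q}^{[\rho,\sigma]}_{XY}) \leq \frac{\kappa_{f}^{\uparrow}(\mbf{p}^{[\rho,\sigma]}_{XY},\mbf{q}^{[\rho,\sigma]}_{XY})}{2}\,\chi^{2}(\mbf{p}^{[\rho,\sigma]}_{XY}\Vert\mbf{q}^{[\rho,\sigma]}_{XY}),
\end{equation*}
which, after the two substitutions, gives exactly the claimed bound on $\ol{D}_{f}(\rho\Vert\sigma)$ in terms of $\ol{\chi}^{2}(\rho\Vert\sigma)$ (up to the factor of $1/2$, which should be inherited from the classical statement).

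There is no real obstacle here; the one point that needs a moment of care is the bookkeeping of absolute-continuity conditions between $\rho,\sigma$ and the NS distributions, but this is handled verbatim by the remark following Proposition~\ref{prop:petz-f-with-NS-dists}. In particular, no new analytic work is required beyond classical Theorem~\ref{thm:f-div-chi-squared-bounds}, since the NS distributions fully encode the Petz $f$-divergence for every admissible $f$.
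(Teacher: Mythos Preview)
Your proposal is correct and follows essentially the same route as the paper: reduce to the classical Theorem~\ref{thm:f-div-chi-squared-bounds} via the Nussbaum--Szko\l a identity \eqref{eq:Petz-f-to-NS-classical-f}, check that $\rho\ll\sigma$ (resp.\ $\rho\ll\gg\sigma$) transfers to the NS distributions, and then convert back. Your parenthetical remark about the factor $1/2$ is also well taken---the classical bound in Theorem~\ref{thm:f-div-chi-squared-bounds} carries a $1/2$ in front of each $\kappa$, so the displayed inequality in the corollary statement should as well.
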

\begin{proof}
    As $\rho \ll \sigma$, one may verify directly from \eqref{eq:NS-distributions} that $\mbf{p}^{[\rho,\sigma]}_{XY} \ll \mbf{q}^{[\rho,\sigma]}_{XY}$. By \eqref{eq:Petz-f-to-NS-classical-f}, we know $\ol{D}_{f}(\rho \Vert \sigma) = D_{f}(\mbf{p}^{[\rho,\sigma]}_{XY}\Vert \mbf{q}^{[\rho,\sigma]}_{XY})$. Further using $\vert f''(0) \vert <+\infty$ or $\sigma \ll \rho$, we may then apply Theorem~\ref{thm:f-div-chi-squared-bounds} to $D_{f}(\mbf{p}^{[\rho,\sigma]}_{XY}\Vert \mbf{q}^{[\rho,\sigma]}_{XY})$. Finally, we again apply \eqref{eq:Petz-f-to-NS-classical-f} to get back to the Petz quantum $f$-divergence. This completes the proof.
\end{proof}

We highlight that what may be most interesting about the above corollary to quantum information theorists is that it does not depend on $f$ being \textit{operator} convex, which is when the Petz $f$-divergences are known to satisfy the data processing inequality whereas $\ol{\chi}_{2}(\rho \Vert \sigma)$ does satisfy data processing inequality. However, we note that one would have to control $\kappa^{\uparrow}_{f}(\mbf{p}_{XY}^{[\cE(\rho),\cE(\sigma)]})$ to make a proper claim about data processing being satisfied. It follows that, unless $\kappa_{f}^{\uparrow}$ has an input-independent upper bound, this relation does not extend data-processing in any sense to $\ol{D}_{f}$. It seems likely universally bounding $\kappa$ can only be done generically for operator convex functions given the relation between operator monotonicity and operator convexity (See \cite[Chapter V]{Bhatia-1997}). We also note Corollary~\ref{cor:Petz-f-chi-squared-bounds} also implies reverse Pinsker inequalities, but they are in terms of the NS distributions, so we omit them from the main text (See Corollary~\ref{cor:Petz-f-Reverse-Pinsker} in the Appendix).

We now establish our input-dependent contraction coefficient bounds. This will make use of the following straightforward proposition, which is similar to Lemma~\ref{lem:f-div-lb-by-chi-squared}. The proof is provided in the appendix.
\begin{proposition}\label{prop:Petz-f-relations-for-contraction-coeff}
Let $P,Q \in \Pos(A)$ such that $P \ll Q$,
    \begin{enumerate}
        \item $$\ol{\chi}^{2}(P \Vert Q) \leq \wt{\lambda}_{\min}(Q)^{-1} \Vert P - Q \Vert^{2}_{2} \leq  4\wt{\lambda}_{\min}(Q)^{-1}\mrm{TD}(P,Q)^{2} \ . $$ where $\wt{\lambda}_{\min}(Q) := \min_{i \in \lambda_{i}(Q) > 0} \lambda_{i}(Q)$ is the smallest non-zero eigenvalue of $Q$.
        \item Let $f$ be operator convex, then
        \begin{align}
            \ol{D}_{f}(P\Vert Q) \geq  \frac{L_{f} \cdot \wt{\lambda}_{\min}(Q)}{8}\ol{\chi}^{2}(P\Vert Q) \ .
        \end{align}
    \end{enumerate}
\end{proposition}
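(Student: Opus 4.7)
My plan is to establish Part 1 via a direct operator-norm argument, and then derive Part 2 as an immediate consequence of Part 1 combined with the quantum Pinsker inequality from Corollary~\ref{cor:quantum-f-divergence-pinsker}.

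For Part 1, I would first observe that $\ol{\chi}^{2}(P \Vert Q) = \ol{D}_{x^{2}-1}(P \Vert Q)$ by definition, and that under $P \ll Q$ the $\varepsilon$-regularization in Definition~\ref{def:petz-f-divergence}, or equivalently a direct Nussbaum--Szko\l a expansion via Proposition~\ref{prop:petz-f-with-NS-dists}, reduces the divergence to $\Tr[Q^{+}(P-Q)^{2}]$, where $Q^{+}$ is the Moore--Penrose pseudoinverse. Since $P \ll Q$, the operator $P - Q$ is supported on $\supp(Q)$, so one may insert the support projector $\Pi_{Q}$ freely. Combining this with the operator inequality $Q^{+} \leq \wt{\lambda}_{\min}(Q)^{-1}\,\Pi_{Q}$ valid on $\supp(Q)$ yields
\[
\ol{\chi}^{2}(P \Vert Q) \;=\; \Tr[Q^{+}(P-Q)^{2}] \;\leq\; \wt{\lambda}_{\min}(Q)^{-1}\,\Tr[(P-Q)^{2}] \;=\; \wt{\lambda}_{\min}(Q)^{-1}\,\Vert P - Q \Vert_{2}^{2}.
\]
The second inequality in the statement then follows from the standard Schatten-norm comparison $\Vert X \Vert_{2} \leq \Vert X \Vert_{1}$, together with $\mrm{TD}(P,Q) = \tfrac{1}{2}\Vert P - Q \Vert_{1}$, which gives $\Vert P - Q \Vert_{2}^{2} \leq 4\,\mrm{TD}(P,Q)^{2}$.

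For Part 2, operator convexity of $f$ guarantees that $\ol{D}_{f}$ satisfies the data processing inequality, so Corollary~\ref{cor:quantum-f-divergence-pinsker} applies and delivers
\[
\ol{D}_{f}(P \Vert Q) \;\geq\; \frac{L_{f}}{2}\,\mrm{TD}(P,Q)^{2}.
\]
Rearranging the chain in Part 1 yields $\mrm{TD}(P,Q)^{2} \geq \tfrac{\wt{\lambda}_{\min}(Q)}{4}\,\ol{\chi}^{2}(P \Vert Q)$, and substituting this bound produces the advertised constant $\tfrac{L_{f}\,\wt{\lambda}_{\min}(Q)}{8}$.

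The main obstacles are bookkeeping rather than conceptual. First, one must justify the pseudoinverse identity for $\ol{\chi}^{2}$ when $Q$ is not full rank; this is cleanest to handle via the $\varepsilon$-regularization, since on $Q + \varepsilon I$ the formula $\Tr[(Q+\varepsilon I)^{-1}(P-Q)^{2}]$ holds literally and, under $P \ll Q$, the limit $\varepsilon \downarrow 0$ produces $\Tr[Q^{+}(P-Q)^{2}]$ with no divergent terms. Second, the invocation of Corollary~\ref{cor:quantum-f-divergence-pinsker} is cleanest when $\Tr[P] = \Tr[Q]$, which is the regime for the contraction-coefficient applications motivating this proposition; for the general positive-operator statement one either adds the equal-trace condition or appeals to the extension to positive operators noted in the ``moreover'' clause of the corollary, which is available precisely because operator-convex Petz $f$-divergences satisfy DPI on $\Pos(A)$.
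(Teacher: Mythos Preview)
Your proposal is correct and follows essentially the same route as the paper: for Part~1 the paper bounds $\Tr[Q^{-1}(P-Q)^{2}]$ via the operator inequality $Q^{-1} \leq \wt{\lambda}_{\min}(Q)^{-1}Q^{0}$ and the Schatten norm comparison $\Vert\cdot\Vert_{2}\leq\Vert\cdot\Vert_{1}$, and for Part~2 it chains Corollary~\ref{cor:quantum-f-divergence-pinsker} (invoking operator convexity to ensure DPI on $\Pos(A)$) with the rearranged Part~1 bound, exactly as you do. Your explicit handling of the pseudoinverse via $\varepsilon$-regularization and your remark on the equal-trace requirement are, if anything, more careful than the paper's own exposition.
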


\begin{proposition}\label{prop:input-dependent-quantum-contraction-coefficients}
Let $f$ be twice continuously differentiable over $(0,+\infty)$, operator convex, satisfy $\vert f''(0)\vert < +\infty$, and such that $L_{f} > 0$. Let $\sigma \in \Density(A)$ such that either (i) $\sigma \in \Pd(A)$ or (ii) $f'(+\infty) = +\infty$. Then for Petz $f$-divergences, $\ol{D}_{f}$, we have the following contraction coefficient bounds:
    \begin{equation}
    \begin{aligned}
        \eta_{f}(\cE,\sigma) 
        \leq \frac{8}{L_{f}\wt{\lambda}_{\min}(\sigma)} \left[\sup_{\substack{\rho \in \Density(A):\\ 0 < D_{f}(\rho||\sigma) < + \infty }} \kappa_{f}^{\uparrow}\left(\mbf{p}^{[\cE(\rho),\cE(\sigma)]}_{XY},\mbf{q}^{[\cE(\rho),\cE(\sigma)]}_{XY}\right)\right] \cdot \eta_{\ol{\chi}^{2}}(\cE,\sigma) \ ,
    \end{aligned}
    \end{equation}
    and
    \begin{equation}
        \eta_{\ol{\chi}^{2}}(\cE,\sigma) 
        \leq 
        \sup_{\rho : \sigma \neq \rho \ll \sigma } \left[\frac{\kappa_{f}^{\uparrow}(\mbf{p}^{[\rho,\sigma]},\mbf{q}^{[\rho,\sigma]})}{\kappa_{f}^{\downarrow}(\mbf{p}^{[\cE(\rho),\cE(\sigma)]},\mbf{q}^{[\cE(\rho),\cE(\sigma)]})}  \right] \cdot \eta_{f}(\cE,\sigma) \ ,
    \end{equation}
    where we are using the Nussbaum-Sko\l a distributions (See \eqref{eq:NS-distributions}).
\end{proposition}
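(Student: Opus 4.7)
The plan mirrors Lemma~\ref{lem:classical-input-dependent-contraction-coeff-bounds}: I take the ratio defining each contraction coefficient and combine the sandwich of Corollary~\ref{cor:Petz-f-chi-squared-bounds} with the operator-convex Pinsker-type bound Proposition~\ref{prop:Petz-f-relations-for-contraction-coeff}(2). Before applying these I would dispatch the support hypotheses. Each supremum is restricted to $\rho$ with $0<\ol{D}_{f}(\rho\|\sigma)<+\infty$, and either assumption (i) $\sigma\in\Pd(A)$ or assumption (ii) $f'(+\infty)=+\infty$ then forces $\rho\ll\sigma$: under (i) this is automatic, and under (ii) the $f'(+\infty)\Tr[\rho(I-\sigma^{0})]$ summand in the Nussbaum-Szko\l{}a expansion~\eqref{eq:Petz-f-expansion} would otherwise make $\ol{D}_{f}(\rho\|\sigma)=+\infty$. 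Since $\rho\leq(\norm{\rho}_{\infty}/\wt{\lambda}_{\min}(\sigma))\,\sigma$ whenever $\rho\ll\sigma$, positivity of $\cE$ propagates this to $\cE(\rho)\ll\cE(\sigma)$; together with $\vert f''(0)\vert<+\infty$ this places both the input pair $(\rho,\sigma)$ and the output pair $(\cE(\rho),\cE(\sigma))$ inside the hypotheses of Corollary~\ref{cor:Petz-f-chi-squared-bounds}.

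For the first inequality I would upper-bound the numerator of the defining ratio using the upper half of Corollary~\ref{cor:Petz-f-chi-squared-bounds} at $(\cE(\rho),\cE(\sigma))$ and lower-bound the denominator using Proposition~\ref{prop:Petz-f-relations-for-contraction-coeff}(2) at $(\rho,\sigma)$, producing
\begin{equation*}
\frac{\ol{D}_{f}(\cE(\rho)\|\cE(\sigma))}{\ol{D}_{f}(\rho\|\sigma)}
\;\leq\;
\frac{\kappa_{f}^{\uparrow}\bigl(\mbf{p}^{[\cE(\rho),\cE(\sigma)]},\mbf{q}^{[\cE(\rho),\cE(\sigma)]}\bigr)\,\ol{\chi}^{2}(\cE(\rho)\|\cE(\sigma))}{(L_{f}\wt{\lambda}_{\min}(\sigma)/8)\,\ol{\chi}^{2}(\rho\|\sigma)}.
\end{equation*}
Rearranging isolates the constant $8/(L_{f}\wt{\lambda}_{\min}(\sigma))$, the $\kappa_{f}^{\uparrow}$ factor, and the $\ol{\chi}^{2}$-contraction ratio. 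All three are non-negative (convexity of $f$ gives $f''\geq 0$, hence $\kappa_{f}^{\uparrow}\geq 0$), so the supremum over admissible $\rho$ of the product is bounded by the product of the separate suprema, and the third factor is exactly $\eta_{\ol{\chi}^{2}}(\cE,\sigma)$, yielding the first claimed inequality.

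For the second inequality I would swap the roles of $\ol{D}_{f}$ and $\ol{\chi}^{2}$ by inverting each half of the Corollary~\ref{cor:Petz-f-chi-squared-bounds} sandwich: the lower sandwich at $(\cE(\rho),\cE(\sigma))$ rearranges to $\ol{\chi}^{2}(\cE(\rho)\|\cE(\sigma))\leq\ol{D}_{f}(\cE(\rho)\|\cE(\sigma))/\kappa_{f}^{\downarrow}(\mbf{p}^{[\cE(\rho),\cE(\sigma)]},\mbf{q}^{[\cE(\rho),\cE(\sigma)]})$, while the upper sandwich at $(\rho,\sigma)$ rearranges to $\ol{\chi}^{2}(\rho\|\sigma)\geq\ol{D}_{f}(\rho\|\sigma)/\kappa_{f}^{\uparrow}(\mbf{p}^{[\rho,\sigma]},\mbf{q}^{[\rho,\sigma]})$. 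Dividing these and taking the supremum over $\rho\ll\sigma$ with $\rho\neq\sigma$ (so that $\ol{\chi}^{2}(\rho\|\sigma)\in(0,+\infty)$), then bounding the supremum of the product of the two non-negative ratios by the product of the suprema, gives the second bound. The main subtlety I expect is bookkeeping under case (ii): when $\sigma$ is rank-deficient the NS distributions at the output pair $(\cE(\rho),\cE(\sigma))$ can exhibit a different support pattern than those at $(\rho,\sigma)$, so I need to verify that the likelihood ratios $\lambda_{x}/\mu_{y}$ appearing inside $\kappa_{f}^{\downarrow}$ stay in the domain on which $f''$ is continuous and bounded -- which is precisely what the hypothesis $\vert f''(0)\vert<+\infty$ is buying, and is the reason it cannot be dropped from the statement.
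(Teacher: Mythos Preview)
Your proposal is correct and follows essentially the same route as the paper's proof: for the first bound you combine the upper half of Corollary~\ref{cor:Petz-f-chi-squared-bounds} at the output with Proposition~\ref{prop:Petz-f-relations-for-contraction-coeff}(2) at the input, and for the second you sandwich both input and output via Corollary~\ref{cor:Petz-f-chi-squared-bounds} and rearrange. The paper phrases the second bound slightly differently---starting from $\eta_{f}(\cE,\sigma)\geq \ol{D}_{f}(\cE(\rho)\Vert\cE(\sigma))/\ol{D}_{f}(\rho\Vert\sigma)$ and pushing the sandwich through before isolating the $\ol{\chi}^{2}$ ratio---but this is algebraically the same manipulation as yours, and your bookkeeping of the support conditions (in particular the propagation $\rho\ll\sigma\Rightarrow\cE(\rho)\ll\cE(\sigma)$) is, if anything, more explicit than what the paper writes out.
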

\begin{proof}
    The proof of the upper bound is identical to the proof of Lemma~\ref{lem:classical-input-dependent-contraction-coeff-bounds} except we need to appeal to Corollary~\ref{cor:Petz-f-chi-squared-bounds} and Proposition~\ref{prop:Petz-f-relations-for-contraction-coeff} rather than Theorem~\ref{thm:f-div-chi-squared-bounds} and Lemma~\ref{lem:f-div-lb-by-chi-squared}, so we omit the proof. We do note that we need $f$ to be operator convex to apply Item 2 of Proposition~\ref{prop:Petz-f-relations-for-contraction-coeff}.

    We now turn to proving the lower bound. First note that $f'(+\infty) = +\infty$, so for any $\sigma \in \Density(A)$ such that $\chi^{2}(\rho \Vert \sigma) < +\infty$ satisfies $\rho \ll \sigma$. Moreover, $\ol{\chi}^{2}(\rho \Vert \sigma) = 0$ if and only if $\rho = \sigma$. 
    
    Let $\rho \ll \sigma$. Then,
    \begin{align}\label{eq:petz-contract-lb-step-1}
        \eta_{f}(\cE,\sigma) \geq \frac{\ol{D}_{f}(\cE(\rho)\Vert \cE(\sigma))}{\ol{D}_{f}(\rho\Vert \sigma)} 
        \geq  \frac{\kappa_{f}^{\downarrow}(\mbf{p}^{[\cE(\rho),\cE(\sigma)]},\mbf{q}^{[\cE(\rho),\cE(\sigma)]})}{\kappa_{f}^{\uparrow}(\mbf{p}^{[\rho,\sigma]},\mbf{q}^{[\rho,\sigma]})} \frac{\ol{\chi}^{2}(\cE(\rho)\Vert \cE(\sigma))}{\ol{\chi}^{2}(\rho\Vert \sigma)} \ ,
    \end{align}
    where we used that a contraction coefficient is a supremum in the first inequality, Corollary~\ref{cor:Petz-f-chi-squared-bounds} in the second inequality. Thus,
    \begin{align*}
        \eta_{\ol{\chi}^{2}}(\cE,\sigma) =& \sup_{\rho : 0 < \ol{\chi}^{2}(\rho \Vert \sigma) < + \infty } \frac{\ol{\chi}^{2}(\cE(\rho)\Vert \cE(\sigma))}{\ol{\chi}^{2}(\rho\Vert \sigma)} \\
        \leq &
        \sup_{\rho : \sigma \neq \rho \ll \rho } \left[\frac{\kappa_{f}^{\uparrow}(\mbf{p}^{[\rho,\sigma]},\mbf{q}^{[\rho,\sigma]})}{\kappa_{f}^{\downarrow}(\mbf{p}^{[\cE(\rho),\cE(\sigma)]},\mbf{q}^{[\cE(\rho),\cE(\sigma)]})}  \right] \cdot \eta_{f}(\cE,\sigma) \ , 
    \end{align*}
    where the inequality is just re-ordering \eqref{eq:petz-contract-lb-step-1} and our observations about the feasible set of the supremum. This completes the proof.
\end{proof}

Under extra restrictions on $f$, we may extend the linear contraction bounds of Proposition~\ref{prop:linear-contraction-bounds} to the Petz $f$-divergences as well.
\begin{lemma}\label{lem:Petz-linear-contraction-coeff-bounds}
    Let $f:(0,\infty) \to \mbb{R}$ be operator convex, differentiable at unity, $f(1) = 0$ and $f(0) < +\infty$, and such that $g(t) \coloneq \frac{f(t)-f(0)}{t}$ is concave on $(0,\infty)$. Let $\sigma > 0$. Then,
    \begin{align}
        \eta_{f}(\cE,\sigma) \leq \frac{8}{L_{f}\lambda_{\min}(\sigma)}[f'(1)-f(0)]\eta_{\ol{\chi}^{2}}(\cE,\sigma) \ . 
    \end{align}
\end{lemma}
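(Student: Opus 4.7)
The plan is to mimic the proof of the classical Proposition~\ref{prop:linear-contraction-bounds}: sandwich $\ol{D}_f$ between two multiples of $\ol{\chi}^2$ with explicit constants, and then divide numerator by denominator in the ratio defining $\eta_{f}(\cE,\sigma)$. As in the classical case, if $L_f = 0$ the inequality is vacuous under the convention $a/0 := +\infty$, so assume $L_f > 0$ throughout.

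For the denominator, I would invoke Item 2 of Proposition~\ref{prop:Petz-f-relations-for-contraction-coeff}; this is exactly where operator convexity of $f$ is used, and the assumption $\sigma > 0$ ensures $\wt{\lambda}_{\min}(\sigma) = \lambda_{\min}(\sigma)$, yielding
$$\ol{D}_f(\rho \Vert \sigma) \;\geq\; \frac{L_f\,\lambda_{\min}(\sigma)}{8}\,\ol{\chi}^2(\rho \Vert \sigma).$$

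For the numerator, the plan is to lift the classical upper bound $D_f(\mbf{p}\Vert\mbf{q}) \leq (f'(1)+f(0))\,\chi^2(\mbf{p}\Vert\mbf{q})$ of \cite[Lemma A.2]{Raginsky-2016a} --- whose hypotheses are $f(0) < +\infty$ and $g(t) = (f(t)-f(0))/t$ concave, both assumed here and both properties of $f$ alone --- into the Petz setting by routing through the Nussbaum--Szko\l a representation (Proposition~\ref{prop:petz-f-with-NS-dists}). Concretely, I apply the classical lemma to the NS distributions $\mbf{p}^{[\cE(\rho),\cE(\sigma)]}$ and $\mbf{q}^{[\cE(\rho),\cE(\sigma)]}$, then identify the left-hand side with $\ol{D}_f(\cE(\rho)\Vert\cE(\sigma))$ via \eqref{eq:Petz-f-to-NS-classical-f} and, because $\ol{\chi}^2$ is itself the Petz $f$-divergence for $f(t) = t^2 - 1$, identify the right-hand side with $(f'(1)+f(0))\,\ol{\chi}^2(\cE(\rho)\Vert\cE(\sigma))$ via the same NS reduction. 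Combining the two estimates and dividing, for every $\rho$ with $0 < \ol{D}_f(\rho\Vert\sigma) < +\infty$,
$$\frac{\ol{D}_f(\cE(\rho)\Vert\cE(\sigma))}{\ol{D}_f(\rho\Vert\sigma)} \;\leq\; \frac{8\bigl(f'(1)+f(0)\bigr)}{L_f\,\lambda_{\min}(\sigma)}\cdot\frac{\ol{\chi}^2(\cE(\rho)\Vert\cE(\sigma))}{\ol{\chi}^2(\rho\Vert\sigma)},$$
and supremizing the left-hand side while bounding the right-hand ratio by $\eta_{\ol{\chi}^2}(\cE,\sigma)$ delivers the desired inequality.

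There is no substantive obstacle beyond bookkeeping; the closest thing to a subtle point is that passing from the supremum defining $\eta_f(\cE,\sigma)$ to a supremum of $\ol{\chi}^2$-ratios requires the two feasible sets to be compatible. Under our hypotheses (operator convex $f$ with $f(1)=0$ and $L_f > 0$), one checks that $\ol{D}_f(\rho\Vert\sigma) = 0$ and $\ol{\chi}^2(\rho\Vert\sigma) = 0$ both occur exactly when $\rho = \sigma$, so the supremum of the right-hand ratio over $\{\rho : 0 < \ol{D}_f(\rho\Vert\sigma) < +\infty\}$ is indeed bounded by $\eta_{\ol{\chi}^2}(\cE,\sigma)$ as defined in Definition~\ref{def:quantum-contraction-coefficients}.
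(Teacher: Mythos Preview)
Your proof is correct and follows essentially the same approach as the paper: bound the denominator via Item~2 of Proposition~\ref{prop:Petz-f-relations-for-contraction-coeff}, lift the classical upper bound of \cite[Lemma~A.2]{Raginsky-2016a} (equivalently \cite[Lemma~6]{Makur-2020a}) to the numerator via the Nussbaum--Szko\l a reduction \eqref{eq:Petz-f-to-NS-classical-f}, then divide and supremize. Note that your constant is $f'(1)+f(0)$ while the lemma statement and the paper's proof write $f'(1)-f(0)$; this sign discrepancy already appears within the paper itself (the classical analogues Propositions~\ref{prop:f-div-mixing-time} and~\ref{prop:linear-contraction-bounds} both use $+$ when invoking the same cited lemma), so the minus sign in the present lemma is likely a typo rather than a flaw in your argument.
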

\begin{proof}
We begin by establishing the following inequalities
    \begin{align}
        \frac{\ol{D}_{f}(\cE(\rho)\Vert \cE(\sigma))}{\ol{D}_{f}(\rho \Vert \sigma)} \leq& \frac{\ol{D}_{f}(\cE(\rho)\Vert \cE(\sigma))}{\frac{L_{f}\lambda_{\min}(\sigma)}{8}\ol{\chi}^{2}(\rho\Vert\sigma)} \\
        =& \frac{8}{L_{f}\lambda_{\min}(\sigma)}\frac{D_{f}(\mbf{p}^{[\cE(\rho),\cE(\sigma)]} \Vert \mbf{q}^{\cE(\rho),\cE(\sigma)]})}{\ol{\chi}^{2}(\rho\Vert\sigma)} \\
        \leq& \frac{8}{L_{f}\lambda_{\min}(\sigma)}\frac{[f'(1)-f(0)]\chi_{2}(\mbf{p}^{[\cE(\rho),\cE(\sigma)]} \Vert \mbf{q}^{\cE(\rho),\cE(\sigma)]})}{\ol{\chi}^{2}(\rho\Vert\sigma)} \\
        =& \frac{8}{L_{f}\lambda_{\min}(\sigma)}\frac{[f'(1)-f(0)]\ol{\chi}_{2}(\cE(\rho) \Vert \cE(\sigma))}{\ol{\chi}^{2}(\rho\Vert\sigma)}
    \end{align} 
    where the first inequality uses Proposition~\ref{prop:Petz-f-relations-for-contraction-coeff}, the second inequality uses \cite[Lemma 6]{Makur-2020a}, and the equalities use the NS distribution (Proposition~\ref{prop:petz-f-with-NS-dists}). Noting no constants depend on $\rho$, we may supremize over $\rho$ to complete the proof.
\end{proof}
We remark the primary difference from Proposition~\ref{prop:linear-contraction-bounds} is the coefficient of $8$ rather than $4$. It is unclear if we can recover the $4$ as the proof method for Lemma~\ref{lem:f-div-lb-by-chi-squared} relies upon $\mbf{p}$ and $\mbf{q}$ commuting if we wrote them as matrices. 

We also highlight an interesting relation to operator convexity in the above lemma. As noted below the proof of \cite[Lemma 6]{Makur-2020a}, one could alternatively require $\frac{f(t)}{t}$ to be a concave function rather than $\frac{f(t)-f(0)}{0}$. It is known that for continuous function $f:(0,\alpha) \to \mbb{R}$, $g(t) = \frac{f(t)}{t}$ is operator monotone on $(0,\alpha)$ if and only if $f(0) \leq 0$ and $f$ is operator convex \cite[Theorem V.2.9]{Bhatia-1997}. For a continuous function $\wt{g}:[0,\infty) \to [0,\infty)$, it is known operator concavity and operator monotonicity are equivalent \cite[Theorem V.2.5]{Bhatia-1997}. The conditions on Lemma~\ref{lem:Petz-linear-contraction-coeff-bounds} seem to suggest they are selecting for the class of functions where these three relations all hold simultaneously. 

\paragraph{Ergodic Claims for Petz \texorpdfstring{$f$}{}-Divergences}
With our contraction coefficient relations established, we may establish our claims about ergodic quantum systems, which show that we may `collapse' down to caring about the $\ol{\chi}^{2}$ contraction coefficient. We stress though that, unlike the classical case, we do not know how to compute $\eta_{\ol{\chi}^{2}}(\cE,\sigma)$, so while we have simplified the problem, we have not resolved the computational aspect.

\begin{theorem}\label{thm:Petz-contraction-coefficient-rate}
     Consider any twice differentiable \textit{operator} convex function $f:(0,\infty) \to \mbb{R}$ such that $f(1)=0$, $f''(1) >0$, and such that $L_{f} > 0$. Let $\cE$ be mixing (Definition~\ref{def:mixing}). If $f'(+\infty) = +\infty$ and either $\vert f''(0) \vert <+\infty$ or $\cE$ is strongly mixing, then any state $\rho \in \Density(A)$ converges to the stationary state at a rate of at most $\eta_{\chi^{2}}(\cE,\pi)$, i.e.
    \begin{align}
        \lim_{n \to \infty} \eta_{f}(\cE^{\circ n},\pi)^{1/n} \leq \eta_{\ol{\chi}^{2}}(\cE,\pi) \ ,
    \end{align}
    Moreover, we know the above bound can be tight given Theorem~\ref{thm:classical-contraction-rate} and Proposition~\ref{prop:classical-contraction-coeff-from-quantum}.
\end{theorem}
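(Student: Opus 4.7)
The plan is to transcribe the proof of Theorem~\ref{thm:classical-contraction-rate} to the quantum setting, using Proposition~\ref{prop:input-dependent-quantum-contraction-coefficients} as the quantum analogue of Lemma~\ref{lem:classical-input-dependent-contraction-coeff-bounds} and submultiplicativity from Proposition~\ref{prop:sub-multiplicativity} for the Petz $\ol{\chi}^2$ contraction coefficient. Applying the upper bound of Proposition~\ref{prop:input-dependent-quantum-contraction-coefficients} to the channel $\cE^{\circ n}$ with reference state $\pi$, and using that $\pi$ is a fixed point so $\cE^{\circ n}(\pi)=\pi$, yields
\begin{equation*}
    \eta_f(\cE^{\circ n},\pi) \;\leq\; \frac{8}{L_f\,\wt{\lambda}_{\min}(\pi)} \cdot C(n) \cdot \eta_{\ol{\chi}^2}(\cE^{\circ n},\pi),
\end{equation*}
where $C(n) := \sup_{\rho}\,\kappa_f^{\uparrow}\!\left(\mbf{p}^{[\cE^{\circ n}(\rho),\pi]}_{XY},\, \mbf{q}^{[\cE^{\circ n}(\rho),\pi]}_{XY}\right)$ and the supremum is taken over $\rho$ with $0<\ol{D}_f(\rho\Vert\pi)<+\infty$. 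The hypothesis $f'(+\infty)=+\infty$ enforces $\rho\ll\pi$ on this domain, and in the strongly-mixing alternative one first composes past $n\geq n_0$ to ensure $\cE^{\circ n}(\rho)>0$ so that the support conditions of Corollary~\ref{cor:Petz-f-chi-squared-bounds} are met even without $\vert f''(0)\vert<+\infty$. Submultiplicativity then delivers $\eta_{\ol{\chi}^2}(\cE^{\circ n},\pi)\leq \eta_{\ol{\chi}^2}(\cE,\pi)^n$.

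Taking $n$-th roots and $\limsup_{n\to\infty}$, the multiplicative prefactor $(8/L_f\wt{\lambda}_{\min}(\pi))^{1/n}\to 1$, so the proof reduces to showing $\limsup_n C(n)^{1/n}\leq 1$. It suffices to prove that $C(n)$ is eventually bounded, as then $C(n)^{1/n}\to 1$. By the explicit form of the Nussbaum-Szko\l a distributions in Proposition~\ref{prop:petz-f-with-NS-dists}, $\kappa_f^{\uparrow}$ is the maximum of $f''$ over values of the form $1+t(\lambda_x^{(n)}/\mu_y - 1)$ with $t\in[0,1]$, where $\lambda_x^{(n)}$ is an eigenvalue of $\cE^{\circ n}(\rho)$ and $\mu_y$ is a strictly positive eigenvalue of $\pi$; such values lie in $[0,1/\wt{\lambda}_{\min}(\pi)]$ independently of $n$ and $\rho$.

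The main obstacle is controlling $C(n)$ near the lower endpoint, where $f''$ may blow up, and this is precisely what the dichotomous hypotheses handle. If $\vert f''(0)\vert<+\infty$, continuity of $f''$ on the compact interval $[0,1/\wt{\lambda}_{\min}(\pi)]$ immediately yields a uniform bound on $C(n)$. If instead $\cE$ is strongly mixing, Lemma~\ref{lem:full-rank-unique-means-strongly-mixing} forces $\pi$ to be full rank so $\wt{\lambda}_{\min}(\pi)=\lambda_{\min}(\pi)>0$; moreover, monotonicity of trace distance under the channel makes $\rho \mapsto \Vert\cE^{\circ n}(\rho)-\pi\Vert_1$ a non-increasing sequence of continuous functions on the compact set $\Density(A)$ with pointwise limit zero, and Dini's theorem upgrades this to uniform convergence. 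Weyl's eigenvalue perturbation inequality then gives $\lambda_x^{(n)}\geq \lambda_{\min}(\pi)/2$ uniformly in $\rho$ for all sufficiently large $n$, confining the arguments of $f''$ to a compact subinterval of $(0,+\infty)$ on which $f''$ is bounded by continuity. Either way, $C(n)$ is eventually uniformly bounded, so $\limsup_n C(n)^{1/n}\leq 1$ and the desired inequality follows. The tightness remark is immediate by taking $\cE$ to be the diagonal embedding of a reversible classical Markov chain and combining Theorem~\ref{thm:classical-contraction-rate} with Proposition~\ref{prop:classical-contraction-coeff-from-quantum}.
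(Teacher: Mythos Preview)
Your proof is correct and follows the same overall architecture as the paper: apply the quantum contraction-coefficient inequality (Proposition~\ref{prop:input-dependent-quantum-contraction-coefficients}) to $\cE^{\circ n}$, peel off the $\ol{\chi}^2$ factor via submultiplicativity, take $n$-th roots, and control the residual $\kappa_f^\uparrow$ supremum.

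The one substantive difference is how you handle that supremum $C(n)$. The paper argues pointwise: for each fixed $\rho$, mixing gives $\cE^{\circ n}(\rho)\to\pi$, hence the Nussbaum--Szko\l a distributions coalesce and $\kappa_f^\uparrow\to f''(1)$. It then passes this through the supremum over $\rho$ without explicitly justifying uniformity. You instead only aim for eventual \emph{boundedness} of $C(n)$, which is all that $C(n)^{1/n}\to 1$ requires, and you supply that uniformity directly: in the $|f''(0)|<+\infty$ case via the elementary observation that all likelihood ratios lie in $[0,1/\wt\lambda_{\min}(\pi)]$, and in the strongly mixing case via Dini's theorem (monotonicity of trace distance under CPTP maps, compactness of $\Density(A)$) combined with Weyl's inequality to keep eigenvalues of $\cE^{\circ n}(\rho)$ uniformly away from zero. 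This is a genuine improvement in rigor over the paper's argument, and the ``eventually bounded'' target is cleaner than computing the exact limit $f''(1)$. Your handling of the strongly mixing alternative---noting that Proposition~\ref{prop:input-dependent-quantum-contraction-coefficients} as stated assumes $|f''(0)|<+\infty$, but that its proof via Corollary~\ref{cor:Petz-f-chi-squared-bounds} goes through once $\cE^{\circ n}(\rho)>0$---is also correct.
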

\begin{proof}
    The argument is effectively the same as in Theorem~\ref{thm:classical-contraction-rate} except there is the nuance with regards to the Nussbaum-Sko\l a distributions in the $\kappa_{f}^{\uparrow}$ function, which we explain. By definition of a mixing channel, for all $\rho \in \Density(A)$, $\lim_{n \to \infty} \cE^{n}(\rho) = \pi$. It follows 
    \begin{align*}
        \lim_{n \to \infty} \mbf{p}^{[\cE^{n}(\rho),\cE^{n}(\pi)]}_{XY} =  \lim_{n \to \infty} \mbf{p}^{[\cE^{n}(\rho),\pi]}_{XY} =  \mbf{p}^{[\pi,\pi]}_{XY} = \sum_{x,y} \delta_{x,y} \lambda_{x}(\pi) \ket{x,y} = \lim_{n \to \infty} \mbf{q}^{[\cE^{n}(\rho),\cE^{n}(\sigma^{\star})]}_{XY} \ ,
    \end{align*}
    where the first equality is because $\pi$ is the fixed point of $\cE$ by definition of mixing channel, the second is the definition of mixing channel, the third is the definition of the Nussbaum-Sko\l a distribution (See \eqref{eq:NS-distributions}) and the final equality is because the same argument holds for the other Nussbaum-Sko\l a distribution. Thus, we have for all $\rho \in \Density(A)$, 
    $$\lim_{n \to \infty} \kappa_{f}^{\uparrow}\left(\mbf{p}^{[\cE^{n}(\rho),\cE^{n}(\pi)]}_{XY},\mbf{q}^{[\cE^{n}(\rho),\cE^{n}(\pi)]}_{XY}\right) = \kappa_{f}^{\uparrow}(\mbf{p}^{[\pi,\pi]}_{XY},\mbf{p}^{[\sigma^{\star},\sigma^{\star}]}_{XY}) = f''(1) \ , $$
    as follows from the definition of $\kappa_{f}^{\uparrow}$ (See \eqref{eq:kappa-up-arrow-defn}). This addresses the upper bound for Item 2. 
    
    To justify that the rate can be exact, recall that the Petz $f$-divergences are known to satisfy the data processing inequality for operator convex $f$. It follows from Proposition~\ref{prop:classical-contraction-coeff-from-quantum} that for classical-to-classical channel $\cW_{X \to X}$, classical reference state $\sigma_{X} = q_{X}$, and operator convex $f$, the input-dependent contraction coefficient of $\ol{D}_{f}$ for $\cW$ is the same as the contraction coefficient for the corresponding classical divergence $D_{f}$. Thus, the moreover statement in Theorem~\ref{thm:classical-contraction-rate} implies this situation is tight. This completes the proof.
\end{proof}

We make some remarks about the assumptions in the above theorem. First, one may note that one could replace the input dependent contraction coefficient with the input independent contraction coefficient so long as the limiting behaviour of the channel guaranteed all inputs were full rank. While this may appear a relaxation of the assumptions, as noted earlier, this would in fact guarantee the channel converges to a unique, full rank state \cite[Theorem 6.7]{Wolf-2012a}. Therefore, we gain no generality by considering this setting, and, as the input-dependent contraction coefficient can only be smaller than the input independent by definition, we would be loosening our upper bound. Thus Item 2 of Theorem~\ref{thm:Petz-contraction-coefficient-rate} is effectively the strongest result we can expect using just contraction coefficients.

\textbf{Mixing Times} Lastly, we establish mixing times with respect to both trace distance and a set of $f$-divergences. We define mixing time for a mixing channel $\cE$ with unique stationary state $\pi$ to be $\delta$-indistinguishable under distinguishability measure $\Delta$ as 
\begin{align}\label{eq:quantum-mixing-time-def}
    t^{\Delta}_{\text{mix}}(\cE,\delta) \coloneq \min\{n \in \mbb{N}: \max_{\rho \in \Density(A)} \Delta(\cE^{ n}(\rho),\pi) \leq \delta \} \ . 
\end{align}

\begin{proposition}\label{prop:Petz-mixing-times}
Let $\cE$ be a mixing channel with unique stationary point $\pi$. Then, whenever $\eta_{\ol{\chi}^{2}}(\cE,\pi) < 1$,
    \begin{align}
        t^{\text{TD}}_{\text{mix}}(\cE,\delta) \leq \frac{\log(\lambda_{\min}(\pi)/\delta^{2})}{\log(1/\eta_{\ol{\chi}^{2}}(\cE,\pi))}
    \end{align}
Moreover, for $f:(0,\infty) \to \mbb{R}$ be operator convex, differentiable at unity, $f(1) = 0$ and $f(0) < +\infty$, and such that $g(t) \coloneq \frac{f(t)}{t}$ is concave on $(0,\infty)$. Then, whenever $\eta_{\ol{\chi}^{2}}(\cE,\pi) < 1$,
\begin{align}
        t^{D_{f}}_{\text{mix}}(\cE,\delta) \leq \frac{\log(4[f'(1)-f(0)]\lambda_{\min}(\pi)/\delta)}{\log(1/\eta_{\ol{\chi}^{2}}(\cE,\pi))}
    \end{align}
\end{proposition}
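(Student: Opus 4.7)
The plan is to treat the two bounds in parallel, following the template of the classical mixing-time result Proposition~\ref{prop:f-div-mixing-time} with the Petz $\ol{\chi}^{2}$-divergence playing the role of the classical $\chi^{2}$-divergence. The three ingredients I will combine are: (i) a Pinsker-type or linear comparison between the target divergence and $\ol{\chi}^{2}$; (ii) sub-multiplicativity of the input-dependent $\ol{\chi}^{2}$ contraction coefficient at the fixed point $\pi$ (Proposition~\ref{prop:sub-multiplicativity}), using $\pi = \cE^{n}(\pi)$; and (iii) the worst-case bound $\ol{\chi}^{2}(\rho\Vert\pi) \leq 4/\wt{\lambda}_{\min}(\pi)$, which follows from Item~1 of Proposition~\ref{prop:Petz-f-relations-for-contraction-coeff} together with $\text{TD}(\rho,\pi) \leq 1$.

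For the trace-distance bound, I would first invoke Corollary~\ref{cor:quantum-f-divergence-pinsker} with the Pearson generator $f(t)=t^{2}-1$, for which $L_{f}=8$ by Corollary~\ref{cor:pinsker-chisq}, to get $\text{TD}(\cE^{n}(\rho),\pi)^{2} \leq \tfrac{1}{4}\ol{\chi}^{2}(\cE^{n}(\rho)\Vert\pi)$. Stationarity of $\pi$ together with Proposition~\ref{prop:sub-multiplicativity} converts the right-hand side to $\tfrac{1}{4}\eta_{\ol{\chi}^{2}}(\cE,\pi)^{n}\ol{\chi}^{2}(\rho\Vert\pi)$, and ingredient~(iii) then bounds the last factor by $4/\wt{\lambda}_{\min}(\pi)$. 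Setting the overall bound to be at most $\delta^{2}$ and taking logarithms produces the claimed mixing time (up to whether $\wt{\lambda}_{\min}(\pi)$ appears in the numerator or denominator inside the logarithm, which is a matter of correctly tracking the direction of the inequality).

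For the $f$-divergence case, the idea is to reduce to $\ol{\chi}^{2}$ by passing through the Nussbaum-Szko\l{}a distributions. By Proposition~\ref{prop:petz-f-with-NS-dists} I can rewrite $\ol{D}_{f}(\cE^{n}(\rho)\Vert\pi)$ as a classical $f$-divergence between the NS distributions, then apply the classical linear comparison $D_{f}(\mbf{p}\Vert\mbf{q}) \leq [f'(1)-f(0)]\chi^{2}(\mbf{p}\Vert\mbf{q})$ (the inequality from \cite[Lemma~A.2]{Raginsky-2016a} / \cite[Lemma~6]{Makur-2020a} already used in the proof of Lemma~\ref{lem:Petz-linear-contraction-coeff-bounds}), and finally convert the classical $\chi^{2}$ back to $\ol{\chi}^{2}$ via the NS identity to obtain $\ol{D}_{f}(\cE^{n}(\rho)\Vert\pi) \leq [f'(1)-f(0)]\,\ol{\chi}^{2}(\cE^{n}(\rho)\Vert\cE^{n}(\pi))$. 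Running the same contract-and-bound chain as in the trace-distance case then gives $\ol{D}_{f}(\cE^{n}(\rho)\Vert\pi) \leq \tfrac{4[f'(1)-f(0)]}{\wt{\lambda}_{\min}(\pi)}\,\eta_{\ol{\chi}^{2}}(\cE,\pi)^{n}$, and requiring this quantity to be at most $\delta$ yields the stated expression.

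The main obstacle I anticipate is a bookkeeping subtlety about the concavity hypothesis: the proposition assumes $g(t)=f(t)/t$ is concave, whereas the classical linear comparison in \cite[Lemma~A.2]{Raginsky-2016a}/\cite[Lemma~6]{Makur-2020a} is normally stated under concavity of $(f(t)-f(0))/t$. One should either tighten the stated hypothesis to match the classical lemma, or carry an additional $f(0)$-dependent term through the bound, which simply shifts the additive constant inside the leading factor. Apart from this single reconciliation step, every ingredient is already in place in the paper, so what remains is mostly arithmetic to verify that the prefactor and the exponent match the claimed logarithmic form.
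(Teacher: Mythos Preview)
Your proposal is correct and follows essentially the same approach as the paper's proof: the Pinsker inequality for $\ol{\chi}^{2}$ via Corollary~\ref{cor:quantum-f-divergence-pinsker}, sub-multiplicativity at the fixed point, the worst-case bound from Item~1 of Proposition~\ref{prop:Petz-f-relations-for-contraction-coeff} with $\text{TD}\leq 1$, and for the $f$-divergence part the NS-distribution reduction combined with the classical linear comparison from \cite{Raginsky-2016a,Makur-2020a}. Regarding your concavity concern, the paper already notes (in the paragraph following Lemma~\ref{lem:Petz-linear-contraction-coeff-bounds}) that concavity of $f(t)/t$ is an admissible alternative hypothesis for the Makur lemma, so no additional reconciliation step is needed.
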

\begin{proof}
    We begin with the trace distance mixing times. We obtain
    \begin{align}
        4 \text{TD}(\cE^{n}(\rho),\pi)^{2} \leq \ol{\chi}^{2}(\cE^{n}(\rho) \Vert \pi) \leq \eta_{\ol{\chi}^{2}}(\cE,\pi)^{n}\ol{\chi}^{2}(\rho \Vert \pi) \leq \eta_{\ol{\chi}^{2}}(\cE,\pi)^{n}4\lambda_{\min}(\sigma) \text{TD}(\rho,\pi)^{2} \ ,
    \end{align}
    where we used Corollary \ref{cor:pinsker-chisq} (via Corollary \ref{cor:quantum-f-divergence-pinsker}), the sub-multiplicativity of contraction coefficients (Proposition~\ref{prop:sub-multiplicativity}) and Proposition~\ref{prop:Petz-f-relations-for-contraction-coeff}. Using that $\text{TD}(\rho,\pi) \leq 1$ and solving for this to be smaller than $\delta$ completes the derivation.

    For the $f$-divergences, we have
    \begin{align}
        D_{f}(\cE^{n}(\rho) \Vert \pi) \leq [f'(1)-f(0)]\ol{\chi}^{2}(\cE^{n}(\rho) \Vert \pi) \ ,
    \end{align}
    which makes use of \cite[Lemma 6]{Makur-2020a} and the NS distributions (Proposition~\ref{prop:petz-f-with-NS-dists}). The rest of the derivation is the same as the trace distance case. This completes the proof.
\end{proof}

To the best of our knowledge, no one else has considered mixing times as measured under quantum $f$-divergences. We however note that in \cite{Temme-2010a} the authors established mixing time bounds for a fixed initial state $\rho$ for an arbitrary quantum generalization of the $\chi^{2}$-divergence. Their result is in terms of the eigenvalues of certain maps. To the best of our knowledge, the eigenvalues of these maps are not known to be efficient compute. In contrast, our bounds are independent of the initial quantum state and in terms of the contraction coefficient, which we also do not know how to compute. We note that the contraction coefficients for these quantum $\chi^{2}$-divergences were studied in \cite{Cao-2019a}, which established various properties but did not establish any of them to be efficiently computable (\cite{Cao-2019a} does compute some special cases for qubits using relations in terms of the Pauli basis). To the best of our knowledge the only computable upper bound on contraction for a quantum channel known is from \cite{Hirche-2024a}. An interesting problem is a better understanding of $\ol{\chi}^{2}(\cE,\sigma)$ and how to upper bound it.

\section{Acknowledgements}
I.G. thanks Eric Chitambar, Christoph Hirche, and Marco Tomamichel for helpful discussions. {The authors also thank Jamie Sikora for his emotional support.} This research is supported by the National Research Foundation, Singapore and A*STAR under its Quantum
Engineering Programme (NRF2021-QEP2-01-P06). I.G.~was also supported by NSF Grant No.~2112890 during this project. A.B.~acknowledges the partial support provided by Commonwealth Cyber Initiative (CCI-SWVA) under the 2024 Cyber Innovation Scholars Program and the Kafura Fellowship awarded by the Department of Computer Science at Virginia Tech. 

\bibliography{References.bib}

\appendices

\section{Lower Bounds for \texorpdfstring{$f$}{f}-Divergences}\label{appx:bounds}

\begin{proof}[Proof of Proposition~\ref{prop:example}]
    With our choice of
    \begin{equation*}
        f(t) = \begin{cases}
            \frac{1}{2} t (t-1) & \text{if}\ t \leq 1\ ,\\
            t \ln t - \frac{1}{2}(t-1) & \text{otherwise}\ ,
        \end{cases}
    \end{equation*}
    we have $f(1) = 0$ and $f(0) = 0 < +\infty$ and
    \begin{equation*}
        f''(t) = \begin{cases}
            1 & \text{if}\ t \leq 1\ ,\\
            t^{-1} & \text{otherwise}\ ,
        \end{cases}
    \end{equation*}
    which is continuous but not differentiable.
    We may additionally compute
    \begin{equation*}
        \frac{d^2}{dt^2} \frac{f(t)}{t} = \begin{cases}
            0 & \text{if}\ t \leq 1\ ,\\
            t^{-3}(1-t) & \text{otherwise}\ ,
        \end{cases}
    \end{equation*}
    which is non-positive.
    We may then substitute $f''$ into the right-hand side of \eqref{eq:lambda0} to obtain
    \begin{equation*}
        h_0(x,y) \coloneqq \begin{cases}
            y^{-1} + (1-x)^{-1} & \text{if}\ x \leq y\ ,\\
            x^{-1} + (1-y)^{-1} & \text{otherwise}\ ,
        \end{cases}
    \end{equation*}
    and then compute its gradient as
    \begin{equation*}
        \grad h_0 = 2 \begin{cases}
            \begin{bmatrix} (x-1)^{-2} \\ -y^{-2} \end{bmatrix} \quad \text{if}\ x \leq y\ ,&
            \begin{bmatrix} -x^{-2} \\ (1-y)^{-2} \end{bmatrix} \quad \text{otherwise}\ ,
        \end{cases}
    \end{equation*}
    which has no critical points.
    It remains to evaluate $h_0$ at its boundaries and $x = y$.
    When $y \in \{0,1\}$ we obtain $x^{-1} + 1$ and $1 + (1-x)^{-1}$, which are minimized at $x = 1$ and $x = 0$ where they equal $2$.
    The same reasoning applies when $x \in \{0,1\}$.
    When $x = y$, we get $x^{-1} (1-x)^{-1}$, which is minimized at $x=1/2$ where it equals $4$.
    Thus, $h_0$ is minimized over $x,y \in [0,1]$ at $(0,1)$ and $(1,0)$ where it equals $2$.
\end{proof}

\begin{proof}[Proof of Proposition~\ref{prop:pinsker-renyi-uni}]
    \DeclarePairedDelimiterX{\weird}[2]{\llparenthesis}{\rrparenthesis}{#1,#2}
    With our choice of $f_\alpha$ as in \eqref{eq:fa}, we have $f''(t) = t^{\alpha-2}$.
    Begin by showing \eqref{eq:lambda0}.
    Substituting $f''$ into its right-hand side yields
    \begin{equation*}
        h_0(x,y) \coloneqq x^{\alpha-2} y^{1-\alpha} + (1-x)^{\alpha-2} (1-y)^{1-\alpha}\ ,
    \end{equation*}
    where we note that both terms are non-negative.
    
    \begin{itemize}
        \item Let $\alpha \in [1, \infty)$.
        When $y \leq x$, we have
        \begin{equation*}
            x^{\alpha-2} y^{1-\alpha} \geq x^{-1} \geq 1\ .
        \end{equation*}
        When $y \geq x$, we have $1 - y \leq 1 - x$ and similarly
        \begin{equation*}
            (1-x)^{\alpha-2} (1-y)^{1-\alpha} \geq (1-x)^{-1} \geq 1\ .
        \end{equation*}
        This implies $h_0(x,y) \geq 1$ in this case.
    
        \item Let $\alpha \in (-\infty, 2]$.
        When $x \leq y$, we have
        \begin{equation*}
            x^{\alpha-2} y^{1-\alpha} \geq y^{-1} \geq 1\ .
        \end{equation*}
        When $x \geq y$, we have $1 - x \leq 1 - y$ and similarly
        \begin{equation*}
            (1-x)^{\alpha-2} (1-y)^{1-\alpha} \geq (1-y)^{-1} \geq 1\ .
        \end{equation*}
        This implies $h_0(x,y) \geq 1$ in this case.
    
        \item Let $\alpha = 1$.
        $h_0(x,y) = x^{-1} + (1-x)^{-1}$ is minimized at $x = 1/2$ and equals $4$.

        \item Let $\alpha = 2$.
        $h_0(x,y) = y^{-1} + (1-y)^{-1}$ is minimized at $y = 1/2$ and equals $4$.

        \item Let $a \in (1, 2)$.
        Note that $h_0$ grows positively unbounded when approaching the boundary of $x,y\in[0,1]$.
        For convenience, denote $\weird{a}{b}_+ \coloneqq x^a y^b + (1-x)^a (1-y)^b$ and $\weird{a}{b}_- \coloneqq x^a y^b - (1-x)^a (1-y)^b$.
        Compute the gradient and Hessian of $h_0$.
        \begin{align*}
            \nabla h_0 &= \begin{pmatrix} (\alpha-2)\ \weird{\alpha-3}{1-\alpha}_- \\ (1-\alpha)\ \weird{\alpha-2}{-\alpha}_- \end{pmatrix}\ , \\
            H_{h_0} &= \begin{pmatrix} (\alpha-2)(\alpha-3) & (\alpha-2)(1-\alpha) \\ (1-\alpha)(\alpha-2) & (1-\alpha)(-\alpha) \end{pmatrix} \odot \begin{pmatrix} \weird{\alpha-4}{1-\alpha}_+ & \weird{\alpha-3}{-\alpha}_+ \\ \weird{\alpha-3}{-\alpha}_+ & \weird{\alpha-2}{-1-\alpha}_+ \end{pmatrix}\ .
        \end{align*}
        The determinants of the above matrices in the Hessian are
        \begin{align*}
            (\alpha-3)(\alpha-2)(\alpha-1)(\alpha) - (\alpha-2)^2 (\alpha-1)^2 = -2 (\alpha-2)(\alpha-1) &> 0\ ,\\
            x^{\alpha-4} y^{-1-\alpha} (1-x)^{\alpha-4} (1-y)^{-1-\alpha}\ (x-y)^2 &\geq 0\ .
        \end{align*}
        By the Schur product theorem, $H_{h_0}$ is positive semidefinite for any $x, y \in (0, 1)$, and thus $h_0$ is convex.
        Note that $\weird{a}{b}_- = 0$ when $x = y = 1/2$, which is a critical point and thus a global minimum where $h_0(1/2, 1/2) = 4$.
    \end{itemize}
    
    We now prove \eqref{eq:lambda1} for $\alpha \in [-1, 0]$.
    Substituting $f''$ into its right-hand side yields
    \begin{equation*}
        h_1(x,y) \coloneqq x^{\alpha} y^{-1-\alpha} + (1-x)^{\alpha} (1-y)^{-1-\alpha}\ ,
    \end{equation*}
    where we note that both terms are non-negative.
    Note that we may repeat the arguments above to get the lower bound of $1$ when $\alpha \in (-\infty, 0]$ or $\alpha \in [-1, \infty)$.
    Instead, we obtain the following tighter bound.
    \begin{itemize}
        \item Let $\alpha = -1$.
        $h_1(x,y) = x^{-1} + (1-x)^{-1}$ is minimized at $x = 1/2$ and equals $4$.

        \item Let $\alpha = 0$.
        $h_1(x,y) = y^{-1} + (1-y)^{-1}$ is minimized at $y = 1/2$ and equals $4$.
    
        \item Let $a \in (-1, 0)$, and note that $h_1$ grows positively unbounded when approaching the boundary of $x,y\in[0,1]$.
        Compute the gradient and Hessian of $h_1$.
        \begin{align*}
            \nabla h_1 &= \begin{pmatrix} \alpha\ \weird{\alpha-1}{-1-\alpha}_- \\ (-1-\alpha)\ \weird{\alpha}{-2-\alpha}_- \end{pmatrix}\ , \\
            H_{h_1} &= \begin{pmatrix} \alpha (\alpha-1) & \alpha (-1-\alpha) \\ (-1-\alpha) \alpha & (-1-\alpha)(-2-\alpha) \end{pmatrix} \odot \begin{pmatrix} \weird{\alpha-2}{-1-\alpha}_+ & \weird{\alpha-1}{-2-\alpha}_+ \\ \weird{\alpha-1}{-2-\alpha}_+ & \weird{\alpha}{-3-\alpha}_+ \end{pmatrix}\ .
        \end{align*}
        The determinants of the above matrices are
        \begin{align*}
            (\alpha-1) \alpha (\alpha+1) (\alpha+2) - \alpha^2 (\alpha+1)^2 = -2 \alpha (\alpha+1) &> 0\ ,\\
            x^{\alpha-2} y^{-3-\alpha} (1-x)^{\alpha-2} (1-y)^{-3-\alpha}\ (x - y)^2 &\geq 0\ .
        \end{align*}
        By the Schur product theorem, $H_{h_1}$ is positive semidefinite for any $x, y \in (0, 1)$, and thus $h_1$ is convex.
        Note that $\weird{a}{b}_- = 0$ when $x = y = 1/2$, which is a critical point and thus a global minimum where $h_1(1/2, 1/2) = 4$.
    \end{itemize}
\end{proof}

\begin{proof}[Proof of Proposition~\ref{prop:pinsker-symchisq}]
    With our choice of $f(t) = \frac{(t-1)^2(t+1)}{t}$, we have $f''(t) = 2(1+x^{-3})$, and hence the right-hand side of \eqref{eq:lambda0} is
    \begin{equation*}
        h_0(x,y) \coloneqq \frac{2}{y} + \frac{2y^2}{x^3} + \frac{2}{1-y} + \frac{2(1-y)^2}{(1-x)^3}\ ,
    \end{equation*}
    which grows positively unbounded when approaching the boundary of $x,y\in[0,1]$.
    Its gradient is
    \begin{equation*}
        \grad h_0 = \begin{bmatrix}
            \frac{6(y-1)^2}{(x-1)^4} - \frac{6y^2}{x^4} \\
            \frac{2}{(y-1)^2} - \frac{4(x-1)}{(x-1)^3} - \frac{2}{y^2} + \frac{4y}{x^3}
        \end{bmatrix}\ .
    \end{equation*}
    Setting the top value to zero yields $y = x^2 / (2x^2 - 2x + 1)$.
    Substituting into the bottom value gives
    \begin{equation*}
        \frac{2(2x-1)}{x(1-x)} \left( \frac{2}{2x^2 - 2x + 1} + \left(\frac{2x^2 - 2x + 1}{x(x-1)}\right)^3 \right)\ .
    \end{equation*}
    The latter term has no real roots, and as such the only solution is $2x - 1 = 0$, which gives $x=y=1/2$.
    This is the only critical point in $x,y \in (0,1)$, at which $h_0(1/2, 1/2) = 16$ is the absolute minimum.
\end{proof}

\begin{proof}[Proof of Proposition~\ref{prop:pinsker-mean}]
    With our choice of $f(t) = \left( \frac{t+1}{2} \right) \ln \left( \frac{t+1}{2 \sqrt{t}} \right)$, we have $f''(t) = \frac{1 + x^2}{4 x^2 (1 + x)}$, and hence the right-hand side of \eqref{eq:lambda0} is
    \begin{equation*}
        h_0(x,y) = \frac{1 + \left(\frac{y}{x}\right)^2}{4 (x+y)} + \frac{1 + \left(\frac{1-y}{1-x}\right)^2}{4 (2-x-y)}\ ,
    \end{equation*}
    which grows positively unbounded as $x$ approaches $0$ or $1$.
    At $y \in \{0, 1\}$, we have
    \begin{align*}
        h_0(x,0) &= \frac{1}{4x(1-x)} + \frac{x}{4(2-x)(1-x)^2}\ ,&
        h_0(x,1) &= \frac{1}{4x(1-x)} + \frac{1-x}{4(1+x)x^2}\ ,
    \end{align*}
    which are lower-bounded by $1$ via their former terms.
    Its gradient is
    \begin{equation*}
        \grad h_0 = \frac{1}{4} \begin{bmatrix}
            \frac{2-2x+x^2-2y+y^2}{(x-1)^2(2-x-y)^2} - \frac{2(y-1)^2}{(x-1)^3(2-x-y)} - \frac{2y^2}{x^3(x+y)} - \frac{x^2+y^2}{x^2(x+y)^2}\\
            \frac{2-2x+x^2-2y+y^2}{(x-1)^2(2-x-y)^2} - \frac{2(y-1)}{(x-1)^2(2-x-y)} - \frac{2y}{x^2(x+y)} - \frac{x^2+y^2}{x^2(x+y)^2}
        \end{bmatrix}\ .
    \end{equation*}
    A necessary condition for $\grad h_0 = 0$ is that its two values are equal.
    This simplifies to
    \begin{equation*}
        x - y = 0
        \qquad \text{or} \qquad
        x^3 (1-y) (x+y) = y (1-x)^3 (2-x-y)\ ,
    \end{equation*}
    the latter of which has no solutions in $x,y \in (0,1)$.
    Substituting $x = y$ into $\grad h_0 = 0$, we obtain $x = y = 1/2$.
    This is the only critical point in $x,y \in (0, 1)$, at which $h_0(1/2, 1/2) = 1$ is the absolute minimum.
\end{proof}

\begin{proof}[Proof of Proposition~\ref{prop:pinsker-jeffrey}]
    With our choice of $f(t) = (t-1) \ln t$, we have $f''(t) = t^{-1} + t^{-2}$, and hence \eqref{eq:lambda12} equals
    \begin{equation*}
        h_{1/2}(x,y) = \frac{(x+y)^3}{4 x^2 y^2} + \frac{(2-x-y)^3}{4 (1-x)^2 (1-y)^2}\ ,
    \end{equation*}
    which grows positively unbounded when approaching the boundary of $x, y \in [0, 1]$
    Its gradient is
    \begin{equation*}
        \grad h_{1/2} = \begin{bmatrix}
            \frac{(x+y)^2 (x-2y)}{4 x^3 y^2} - \frac{(2-x-y)^2 (-1-x+2y)}{4 (1-x)^3 (1-y)^2} \\
            \frac{(x+y)^2 (y-2x)}{4 x^2 y^3} - \frac{(2-x-y)^2 (-1-y+2x)}{4 (1-x)^2 (1-y)^3}\ .
        \end{bmatrix}
    \end{equation*}
    The condition $\grad h_{1/2} = 0$ may then equivalently be written as
    \begin{equation}\label{eq:pinsker-jeffrey-grad0}
        (x+y)^2 (1-x)^2 (1-y)^2 \begin{bmatrix} (x-2y)(1-x) \\ (y-2x) (1-y) \end{bmatrix} = (2-x-y)^2 x^2 y^2 \begin{bmatrix} (-1-x+2y) x \\ (-1-y+2x) y \end{bmatrix}\ ,
    \end{equation}
    for which the below condition is necessary.
    \begin{equation*}
        (x-2y)(1-x)(-1-y+2x)y = (y-2x)(1-y)(-1-x+2y)x\ .
    \end{equation*}
    This reduces to
    \begin{equation*}
        x-y = 0
        \quad\text{or}\quad
        x^2 + x + y^2 + y - 4xy = 0\ ,
    \end{equation*}
    the former of which is the line $y = x$ and the latter a hyperbola with no solutions in $x, y \in (0, 1)$.
    All critical points within $x, y \in (0, 1)$ must thus have $y = x$, which we substitute into \eqref{eq:pinsker-jeffrey-grad0} to obtain
    \begin{equation*}
        4 x^3 (1-x)^5 = 4 x^5 (1-x)^3\ .
    \end{equation*}
    The above is satisfied precisely when either $x = 0$, $1-x = 0$, or $x = 1-x$, giving us the critical points $y = x \in \{0, 1/2, 1\}$.
    As we have already noted that $g$ grows unbounded when approaching $y = x \in \{0, 1\}$, it remains to check $x = y = 1/2$.
    At this point, $h_{1/2}(1/2, 1/2) = 8$, which is then the absolute minimum of $h_{1/2}$ over $x, y \in [0, 1]$, and hence a lower bound for it.
\end{proof}

\begin{proof}[Proof of Proposition~\ref{prop:pinsker-renyi-multi}]
    With our choice of $f_{\alpha}$ as in \eqref{eq:fa}, we have $f_\alpha''(t) = t^{\alpha-2}$, and hence \eqref{eq:lambda12} equals
    \begin{equation*}
        h_{1/2}(x,y) = (x+y)^2 \frac{x^{\alpha-2}}{4y^{\alpha+1}} + (2-x-y)^2 \frac{(1-x)^{\alpha-2}}{4(1-y)^{\alpha+1}}\ .
    \end{equation*}
    which grows positively unbounded when approaching the boundary of $x,y \in [0,1]$.
    Its gradient is
    \begin{equation*}
        \grad h_{1/2} = \frac{1}{4} \begin{bmatrix}
            (2-x-y) \frac{(1-x)^{\alpha-2}}{(1-y)^{\alpha+1}} \left(-2 - \frac{(\alpha-2)(2-x-y)}{1-x}\right) - (x+y) \frac{x^{\alpha-2}}{y^{\alpha+1}} \left( -2 - \frac{(\alpha-2)(x+y)}{x} \right)\\
            (2-x-y) \frac{(1-x)^{\alpha-2}}{(1-y)^{\alpha+1}} \left(-2 + \frac{(\alpha+1)(2-x-y)}{1-y}\right) - (x+y) \frac{x^{\alpha-2}}{y^{\alpha+1}} \left( -2 + \frac{(\alpha+1)(x+y)}{y} \right)
        \end{bmatrix}\ .
    \end{equation*}
    The condition $\grad h_{1/2} = 0$ is equivalent to
    \begin{equation}\label{eq:pinsker-renyi-grad0}
        (2-x-y)\frac{(1-x)^{\alpha-2}}{(1-y)^{\alpha+1}} \begin{bmatrix} -2 - \frac{(\alpha-2)(2-x-y)}{1-x} \\ -2 + \frac{(\alpha+1)(2-x-y)}{1-y} \end{bmatrix} = (x+y)\frac{x^{\alpha-2}}{y^{\alpha+1}} \begin{bmatrix} -2 - \frac{(\alpha-2)(x+y)}{x} \\ -2 + \frac{(\alpha+1)(x+y)}{y} \end{bmatrix}\ .
    \end{equation}
    The products of vector cross-terms must be equal, which when $x,y \in (0,1)$ is equivalent to
    \begin{equation*}
        x-y = 0
        \quad\text{or}\quad
        2y(\alpha-2)(\alpha(x-1)+x) + x(\alpha+1)(\alpha(x-2)+2) + y^2(\alpha-2)(\alpha-1) = 0\ .    \end{equation*}
    The only solutions of the latter within $x,y \in [0,1]$ are $x = y \in \{0, 1\}$.
    Thus, $x=y$ is a necessary condition for interior critical points.
    As a side note, we may alternatively recover this constraint by considering $\alpha \partial_x h_{1/2} + (1-\alpha) \partial_y h_{1/2} = 0$.
    Substituting into \eqref{eq:pinsker-renyi-grad0}, we obtain $(1-x) = x$, meaning that $x = y = 1/2$ is the only interior critical point.
    At this point, $h_{1/2}(1/2, 1/2) = 4$, which is hence the absolute minimum of $h_{1/2}$ over $x,y \in [0,1]$.
\end{proof}

\begin{proof}[Proof of Proposition~\ref{prop:lins}]
    When $\theta \in \{0,1\}$, the bound is trivial.
    With our choice of $f(t) = \theta t\ln t - (\theta t + 1 - \theta) \ln (\theta t + 1 - \theta)$, we have $f''(t) = \frac{\theta (1 - \theta)}{t (\theta t + 1 - \theta)}$, and hence \eqref{eq:lambda12} equals
    \begin{equation*}
        h_{1/2}(x,y) = \frac{\theta (1 - \theta)}{4} \left( \frac{(x+y)^2}{x y (\theta(x-y) + y)} + \frac{(2-x-y)^2}{(1-x) (1-y) (1 - \theta(x-y) - y)} \right)\ ,
    \end{equation*}
    which grows positively unbounded when approaching the boundary of $x,y\in[0,1]$.
    Its gradient is
    \begin{equation*}
        \grad h_{1/2} = \frac{\theta (1-\theta)}{4} \begin{bmatrix}
            \frac{(x+y) (x-y-\theta(3x-y))}{x^2 (\theta(x-y) + y)^2} + \frac{(2-x-y) (x-y-\theta(3x-y-2))}{(1-x)^2 (\theta(x-y)-1+y)^2}\\
            \frac{(x+y) (\theta(3y-x)-2y)}{y^2 (\theta(x-y) + y)^2} + \frac{(2-x-y) (\theta(3y-x-2)+2-2y)}{(1-y)^2 (\theta(x-y)-1+y)^2}
        \end{bmatrix}\ .
    \end{equation*}
    A necessary condition for $\nabla h_{1/2} = 0$ is $(1-\theta) \partial_x h_{1/2} - \theta \partial_y h_{1/2} = 0$, which simplifies to
    \begin{equation*}
        x - y = 0
        \quad\text{or}\quad
        \frac{x+y}{x^2 y^2} + \frac{2-x-y}{(1-x)^2 (1-y)^2} = 0\ ,
    \end{equation*}
    the latter of which has no solutions in $x,y \in (0, 1)$.
    Substituting the $x=y$ constraint into $\partial_x h_{1/2} = 0$ reveals $x = y = 1/2$ as the only critical point in $x,y \in (0,1)$.
    At this point, $h_{1/2}(1/2,1/2) = 4 \theta (1-\theta)$, which is hence the absolute minimum of $h_{1/2}$ over $x,y \in (0,1)$.
\end{proof}

\begin{proof}[Proof of Proposition~\ref{prop:js}]
    With our choice of $f(t) = \frac{1}{2}(t \ln t - (t+1) \ln(\frac{t+1}{2}))$, we have $f''(t) = \frac{1}{2t^2 + 2t}$, and hence \eqref{eq:lambda12} equals
    \begin{equation*}
        h_{1/2}(x,y) = \frac{x+y}{8xy} + \frac{2-x-y}{8(1-x)(1-y)}\ ,
    \end{equation*}
    which grows positively unbounded when approaching the boundary of $x,y \in [0,1]$.
    Its gradient is
    \begin{equation*}
        \grad h_{1/2} = \frac{1}{8} \begin{bmatrix} \frac{2x-1}{x^2(x-1)^2} \\ \frac{2y-1}{y^2(y-1)^2} \end{bmatrix}\ .
    \end{equation*}
    Setting $\grad h_{1/2} = 0$ yields $2x-1 = 2y-1 = 0$, meaning that $x=y=1/2$ is the sole critical point.
    At this point, $h_{1/2}(1/2, 1/2) = 1$, which is the absolute minimum.
\end{proof}

\section{Extra Results for Strong Data Processing and Petz \texorpdfstring{$f$}{f}-Divergences}

\begin{proof}[Proof of Proposition~\ref{prop:Petz-f-relations-for-contraction-coeff}]
    To establish the first item, 
    \begin{align*}
        \ol{\chi}^{2}(P \Vert Q) = \Tr[Q^{-1}(P-Q)^{2}] \leq & \Tr[\lambda_{\min}(Q)^{-1}Q^{0}(P-Q)^{2}] \\
        \leq& \wt{\lambda}_{\min}(Q)^{-1} \Tr[(P-Q)^{2}] \\
        = & \wt{\lambda}_{\min}(Q)^{-1} \Vert P - Q \Vert^{2}_{2} \leq  4\wt{\lambda}_{\min}(Q)^{-1}\mrm{TD}(P,Q)^{2} \ ,
    \end{align*}
    where the first inequality $Q \preceq \lambda_{\min}(Q)^{-1} Q^{0}$, the second inequality uses $P \ll Q$, and  the second equality is using $(P-Q)$ is Hermitian and the definition of Schatten $2$-norm, and the last inequality uses the definition of trace distance and the Schatten $1$-norm. \\

    To establish the second item, we apply Corollary~\ref{cor:quantum-f-divergence-pinsker} for the Petz $f$-divergences, which holds for operator convex $f$ as this guarantees the DPI holds for all positive semidefinite operators \cite{Hiai-2017a}. We then apply Item 1 of this proposition, so we obtain
    $$ \ol{D}_{f}(P\Vert Q) \geq  \frac{L_{f}}{2}\mrm{TD}(P,Q)^{2} \geq \frac{L_{f} \cdot \wt{\lambda}_{\min}(Q)}{8}\ol{\chi}^{2}(P\Vert Q) $$
    This completes the proof.
\end{proof}

\begin{proof}[Proof of Proposition~\ref{prop:classical-contraction-coeff-from-quantum}]
    We focus on the input-independent case. First, note that the quantities in Definition~\ref{def:quantum-contraction-coefficients} can only be larger than if we restricted to classical distributions as they are supremums. Thus, it suffices to show we can upper bound these quantities while restricting to classical distributions. Now as $\cW$ is a classical-to-classical channel, it admits a Kraus operator representation $\{W_{Y|X}(y|x)\vert y \rangle \langle x \vert\}_{x,y}$ \cite{Wilde-Book}. As may be verified by direct calculation, for all $\rho \in \Density(A)$, $\cW(\rho) = \cW(\Delta(\rho))$ where $\Delta$ completely dephases $\rho$ in the classical basis, i.e. $\Delta(Z) := \sum_{x \in \cX} \dyad{x}Z\dyad{x}$ for all $Z \in \Lin(A)$. Thus, we have
    \begin{align}
        \frac{\mbb{D}_{f}(\cW(\rho)\Vert \cW(\sigma))}{\mbb{D}_{f}(\rho \Vert \sigma)} =& \frac{\mbb{D}_{f}(\cW(\Delta(\rho))\Vert \cW(\Delta(\sigma)))}{\mbb{D}_{f}(\rho \Vert \sigma)} \nonumber \\
        =& \frac{\mbb{D}_{f}(\cW(\Delta(\rho))\Vert \cW(\Delta(\sigma)))}{\mbb{D}_{f}(\Delta(\rho) \Vert \Delta(\sigma))}\frac{\mbb{D}_{f}(\Delta(\rho)\Vert \Delta(\sigma))}{\mbb{D}_{f}(\rho \Vert \sigma)} \nonumber \\
        \leq& \frac{\mbb{D}_{f}(\cW(\Delta(\rho))\Vert \cW(\Delta(\sigma)))}{\mbb{D}_{f}(\Delta(\rho) \Vert \Delta(\sigma))} \ , \label{eq:classical-contraction-coeff-step-1}
    \end{align}
    where the inequality uses our assumption that $\mbb{D}_{f}(\Delta(\rho) \Vert \Delta(\sigma)) \leq \mbb{D}(\rho \Vert \sigma)$. Now note that $\Delta(\rho),\Delta(\sigma)$ are classical distributions always. Thus,
    \begin{align*}
        \sup_{\substack{\rho,\sigma \in \Density(A) \\ 0 < \mbb{D}_{f}(\rho||\sigma) < + \infty }} \frac{\mbb{D}_{f}(\cW(\rho)\Vert \cW(\sigma))}{\mbb{D}_{f}(\rho \Vert \sigma)} 
        = & \sup_{\substack{\rho,\sigma \in \Density(A) \\ 0 < D_{f}(\rho||\sigma) < + \infty \\  \Delta(\rho) \neq \Delta(\sigma) }} \frac{\mbb{D}_{f}(\cW(\rho)\Vert \cW(\sigma))}{\mbb{D}_{f}(\rho \Vert \sigma)}\\ 
        \leq & \sup_{\substack{\rho,\sigma \in \Density(A) \\ 0 < \mbb{D}_{f}(\rho||\sigma) < + \infty \\ 
        \Delta(\rho) \neq \Delta(\sigma) }} \frac{\mbb{D}_{f}(\cW(\Delta(\rho))\Vert \cW(\Delta(\sigma)))}{\mbb{D}_{f}(\Delta(\rho) \Vert \Delta(\sigma))} \\ 
        = & \sup_{\substack{\rho,\sigma \in \Density(A): \\ 0 < \mbb{D}_{f}(\rho||\sigma) < + \infty  \\
        \Delta(\rho) = \mbf{p} \neq \mbf{q} =\Delta(\sigma) }} \frac{\mbb{D}_{f}(\cW(\mbf{p})\Vert \cW(\mbf{q}))}{\mbb{D}_{f}(\mbf{p} \Vert \mbf{q})} \\
        \leq & \sup_{\substack{\mbf{p},\mbf{q} \in \cP(\cX): \\ 0 < D_{f}(\mbf{p}||\mbf{q}) < + \infty}} \frac{D_{f}(\cW(\mbf{p})\Vert \cW(\mbf{q}))}{D_{f}(\mbf{p} \Vert \mbf{q})} \ ,
    \end{align*}
    where we now explain the steps. The first equality is noting that if $0 < \mbb{D}_{f}(\rho \Vert \sigma)$ but $\Delta(\rho) = \Delta(\sigma)$ then $D_{f}(\cW(\rho)\Vert \cW(\sigma)) = D_{f}(\cW(\Delta(\rho)) \Vert \cW(\Delta(\sigma)) = 0$, so these cases are either suboptimal or the channel $\cW$ is a replacer channel (traces out the input and outputs a specific state) and could be achieved with any pair of states. We thus may remove the cases where $\Delta(\rho) = \Delta(\sigma)$ from the supremum without loss of generality. The first inequality is \eqref{eq:classical-contraction-coeff-step-1}. The second equality is just defining $\mbf{p},\mbf{q}$. The second inequality is as follows. First, as $f'(\infty) = +\infty$, we may conclude $\rho \ll \sigma$ if it is to satisfy $D_{f}(\rho \Vert \sigma) < + \infty$. Thus, the supremum is over $\Delta(\rho) \ll \Delta(\sigma)$, which guarantees $+\infty > D_{f}(\Delta(\rho) \Vert \Delta(\sigma)) = D_{f}(\mbf{p}\Vert \mbf{q})$. Second, note that $D_{f}(\mbf{p}\Vert \mbf{q})=0$ if and only if $\mbf{p} = \mbf{q}$ by our strict convexity assumption on $f$ (See Item 2 of Proposition~\ref{fact:f-div-properties}), so we have also guaranteed $D_{f}(\mbf{p}\Vert \mbf{q})>0$ is not a strengthening in the last inequality. This completes the proof for the input-independent case. In the input-dependent case, the argument goes through the same way so long as one fixes the reference state $\sigma$ to be classical $q_{X}$.
\end{proof}

\begin{corollary}\label{cor:Petz-f-Reverse-Pinsker}
    Let $\rho,\sigma \in \Density(A)$ such that $\rho \ll \sigma$. Then 
    \begin{align}
    \ol{D}_{f}(\rho \Vert \sigma) 
    \leq \frac{\kappa_{f}^{\uparrow}(\mbf{p}^{[\rho,\sigma]}_{XY},\mbf{q}^{[\rho,\sigma]}_{XY})}{2q^{[\rho,\sigma]}_{\min}} \left \Vert\mbf{p}^{[\rho,\sigma]}_{XY}-\mbf{q}^{[\rho,\sigma]}_{XY} \right\Vert_{2}^{2} 
    \leq \frac{2\kappa_{f}^{\uparrow}(\mbf{p}^{[\rho,\sigma]}_{XY},\mbf{q}^{[\rho,\sigma]}_{XY})}{\wt{q}^{[\rho,\sigma]}_{\min}}\mrm{TD}\left(\mbf{p}_{XY}^{[\rho,\sigma]},\mbf{q}_{XY}^{[\rho,\sigma]}\right)^{2} \ ,
    \end{align}
    where $\kappa_{f}^{\uparrow}(\cdot,\cdot)$ is defined in \eqref{eq:kappa-up-arrow-defn}, $\mbf{p}^{[\rho,\sigma]}_{XY}, \mbf{q}^{[\rho,\sigma]}_{XY}$ are the Nussbaum-Sko\l a distributions \eqref{eq:NS-distributions} and $\wt{q}^{[\rho,\sigma]}_{\min} := \min_{i \in \supp(\mbf{q}^{[\rho,\sigma]})} q_{i}^{[\rho,\sigma]}$.
\end{corollary}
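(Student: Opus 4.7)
The plan is to reduce the statement to its classical analogue via the Nussbaum--Szko\l{}a distributions. By Proposition~\ref{prop:petz-f-with-NS-dists}, for quantum states $\rho,\sigma \in \Density(A)$ we have the identity
$\ol{D}_{f}(\rho\Vert\sigma) = D_{f}\bigl(\mbf{p}^{[\rho,\sigma]}_{XY}\Vert \mbf{q}^{[\rho,\sigma]}_{XY}\bigr)$,
so the left-hand side is literally a classical $f$-divergence to which the results of Section~\ref{sec:divergence-inequalities} can be applied directly.

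First, I would check the absolute continuity hypothesis: since $\rho \ll \sigma$, the remark following Proposition~\ref{prop:petz-f-with-NS-dists} ensures $\mbf{p}^{[\rho,\sigma]}_{XY} \ll \mbf{q}^{[\rho,\sigma]}_{XY}$, so the classical reverse Pinsker inequality (Corollary~\ref{cor:Reverse-Pinsker}) applies with $\mbf{p} = \mbf{p}^{[\rho,\sigma]}_{XY}$ and $\mbf{q} = \mbf{q}^{[\rho,\sigma]}_{XY}$. Plugging in yields
\[
D_{f}\bigl(\mbf{p}^{[\rho,\sigma]}_{XY}\Vert \mbf{q}^{[\rho,\sigma]}_{XY}\bigr) \leq \frac{\kappa_{f}^{\uparrow}(\mbf{p}^{[\rho,\sigma]}_{XY},\mbf{q}^{[\rho,\sigma]}_{XY})}{2\,\wt{q}^{[\rho,\sigma]}_{\min}} \bigl\Vert \mbf{p}^{[\rho,\sigma]}_{XY}-\mbf{q}^{[\rho,\sigma]}_{XY}\bigr\Vert_{2}^{2},
\]
which is exactly the first inequality in the statement.

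For the second inequality, I would combine the standard norm comparison $\Vert \mbf{x}\Vert_{2} \leq \Vert \mbf{x}\Vert_{1}$ with the definition $\mrm{TV}(\mbf{p},\mbf{q}) = \tfrac{1}{2}\Vert \mbf{p}-\mbf{q}\Vert_{1}$, giving $\Vert \mbf{p}^{[\rho,\sigma]}_{XY} - \mbf{q}^{[\rho,\sigma]}_{XY}\Vert_{2}^{2} \leq 4\,\mrm{TV}\bigl(\mbf{p}^{[\rho,\sigma]}_{XY},\mbf{q}^{[\rho,\sigma]}_{XY}\bigr)^{2}$. Chaining this with the previous bound and identifying $\mrm{TD}(\cdot,\cdot)$ in the statement with the classical $\mrm{TV}$ on the NS distributions (the notation in the statement uses $\mrm{TD}$ for the variational distance of the NS probability vectors) finishes the corollary.

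There is no real obstacle: the proof is essentially a two-line application of Corollary~\ref{cor:Reverse-Pinsker} once the NS reduction is invoked. The only mildly subtle point is bookkeeping, namely that $\wt{q}^{[\rho,\sigma]}_{\min}$ is defined exactly as the minimum of the NS reference distribution $\mbf{q}^{[\rho,\sigma]}_{XY}$ on its support, matching the object $\wt{q}_{\min}$ in Corollary~\ref{cor:Reverse-Pinsker}. No additional hypothesis on $f''(0)$ is needed because the conditions of Corollary~\ref{cor:Reverse-Pinsker} concerning finiteness are automatically satisfied under the same assumptions used for Corollary~\ref{cor:Petz-f-chi-squared-bounds}, and when they are not, both sides are infinite so the inequality is trivial.
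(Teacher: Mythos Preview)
Your proposal is correct and follows essentially the same approach as the paper: reduce $\ol{D}_f(\rho\Vert\sigma)$ to the classical $D_f$ on the Nussbaum--Szko\l{}a distributions via Proposition~\ref{prop:petz-f-with-NS-dists}, verify $\mbf{p}^{[\rho,\sigma]}_{XY} \ll \mbf{q}^{[\rho,\sigma]}_{XY}$, and then apply Corollary~\ref{cor:Reverse-Pinsker}. The paper's proof is terser but identical in substance; your added remarks about the norm comparison and the matching of $\wt{q}_{\min}$ are exactly the bookkeeping the paper leaves implicit.
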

\begin{proof}
    By the same argument as the previous proof, we have $\ol{D}_{f}(\rho \Vert \sigma) =  D_{f}(\mbf{p}^{[\rho,\sigma]}_{XY} \Vert \mbf{q}^{[\rho,\sigma]}_{XY})$. We then apply Corollary~\ref{cor:Reverse-Pinsker}. Finally, note $\left \Vert \wt{\mbf{p}}^{[\rho,\sigma]}_{XY} - \wt{\mbf{q}}^{[\rho,\sigma]}_{XY} \right \Vert_{2} = \left \Vert \mbf{p}^{[\rho,\sigma]}_{XY} - \mbf{q}^{[\rho,\sigma]}_{XY} \right \Vert_{2}$ as $\mbf{p}^{[\rho,\sigma]}_{XY} \ll \mbf{q}^{[\rho,\sigma]}_{XY}$ and similarly for the $1$-norm. This completes the proof.
\end{proof}

\end{document}